\definecolor{lightblue}{RGB}{230, 240, 255}
\definecolor{lightgray}{RGB}{240, 240, 240}
\pgfplotsset{compat=1.18}
\tikzstyle{startstop} = [rectangle, rounded corners, minimum width=3cm, minimum height=1cm, text centered, draw=black, fill=gray!20]
\tikzstyle{process}   = [rectangle, minimum width=3cm, minimum height=1cm, text centered, draw=black, fill=blue!10]
\tikzstyle{decision}  = [diamond, minimum width=3cm, minimum height=1cm, text centered, draw=black, fill=red!10]
\tikzstyle{arrow}     = [thick,->,>=stealth]
\definecolor{headerblue}{RGB}{25,55,92}
\definecolor{rowblue1}{RGB}{240,245,255}
\definecolor{rowblue2}{RGB}{225,235,250}
\definecolor{accentblue}{RGB}{65,105,225}
\renewcommand{\arraystretch}{1.5}
\bfseries\vspace{-1em}%
\definecolor{airforceblue}{rgb}{0.1, 0.31, 0.59}
\newcommand{\CComment}[1]{\hfil \textcolor{airforceblue}{\Comment{#1}}}
\newtheorem{lemma}{Lemma}
\newtheorem{theorem}{Theorem}[section]
\newtheorem{observation}{Observation}
\newtheorem{definition}{Definition}
\newtheorem*{claim*}{Claim}
\newtheorem{example}{Example}
\newtheorem{remark}{Remark}
\crefname{ineq}{Inequality}{Inequalities}
\def\input@path{{../Dropbox/MMS-10-13/Draft 5.1.1/main}}
\definecolor{darkblue}{rgb}{0.03, 0.27, 0.49}
\definecolor{olivegreen}{rgb}{0.33, 0.42, 0.18}
\newcommand{\etal}{\textit{et al.~}}
\newcommand{\SharifUniversity}{Sharif University of Technology, Tehran, Iran}
\newcommand{\TeIAS}{Tehran Institute for Advanced Studies (TeIAS), Tehran, Iran}
\newcommand{\lbnone}{\frac{\Pnumb}{\sqrt2}}
\newcommand{\riki}[5]{(#1, #2, #3, #4, #5)}
\newcommand{\Srik}[2]{S(#2)}
\newcommand{\mms}{\Psi}
\newcommand{\MMS}{\textsf{MMS}}
\newcommand{\allmms}[1]{\Psi^n_{#1}(M)}
\newcommand{\dotmms}[1]{\dot{\Psi}_{#1}}
\newcommand{\ddotmms}[1]{\ddot{\Psi}_{#1}}
\newcommand{\hatmms}[1]{\hat{\Psi}_{#1}}
\newcommand{\checkmms}[1]{\check{\Psi}_{#1}}
\newcommand{\valu}{v}
\newcommand{\prinstance}{\hat{\mathcal I}}
\newcommand{\pragents}{\hat{N}}
\newcommand{\pragent}{\hat{a}}
\newcommand{\prgoods}{\hat{M}}
\newcommand{\prvalu}{\hat{v}}
\newcommand{\prnumb}{\hat{n}}
\newcommand{\prgood}{\hat{g}}
\newcommand{\prgoodrat}[1]{\hat{g}_{#1}}
\newcommand{\prit}{\hat{m}}
\newcommand{\prreduc}{\mathfrak{R}}
\newcommand{\prauxmms}[3]{\mms^{#3}_{\prauxval{#1}}(#2)}
\newcommand{\prauxval}[1]{({#1}\star \prvalu)}
\newcommand{\prauxfun}[2]{({#1}\star \prvalu_{#2})}
\newcommand{\praux}[2]{({#1}\star \prvalu)(#2)}
\newcommand{\seinstance}{\check{\mathcal I}}
\newcommand{\seagents}{\check{N}}
\newcommand{\segoods}{\check{M}}
\newcommand{\senumb}{\check{n}}
\newcommand{\Pinstance}{\mathcal I}
\newcommand{\Pagents}{N}
\newcommand{\Pgoods}{M}
\newcommand{\Pgood}{g}
\newcommand{\Pnumb}{n}
\newcommand{\Pit}{m}
\newcommand{\Jvalu}{{v_i}^{\textsf{norm}}}
\newcommand{\Jnumb}{n'}
\newcommand{\Jgoods}{M'}
\newcommand{\Gvalu}{\auxwal{\vF_{s_\agenti}}^{\textsf{norm}}}
\newcommand{\Gnumb}{n'}
\newcommand{\Ggoods}{M'}
\newcommand{\instance}{\dot{\mathcal I}}
\newcommand{\agents}{\dot{N}}
\newcommand{\goods}{\dot{M}}
\newcommand{\good}{\dot{g}}
\newcommand{\goodrat}[1]{{{\good}_{#1}}}
\newcommand{\numb}{\dot{n}}
\newcommand{\It}{\dot{m}}
\newcommand{\Nyek}{\Pagents^g}
\newcommand{\Ndo}{\Pagents^r}
\newcommand{\Sinstance}{\ddot{\mathcal I}}
\newcommand{\Sagents}{\ddot N}
\newcommand{\Sgoods}{\ddot M}
\newcommand{\Sgood}{\ddot{g}}
\newcommand{\Sgoodrat}[1]{{{\Sgood}_{#1}}}
\newcommand{\Snumb}{\ddot{n}}
\newcommand{\Sit}{\ddot{m}}
\newcommand{\reduction}{\rho}
\newcommand{\reductiontype}{\mathcal R}
\newcommand{\bundle}{B}
\newcommand{\allocation}{A}
\newcommand{\agenti}{i}
\newcommand{\agent}{a}
\newcommand{\vF}{f}
\newcommand{\vG}{h}
\newcommand{\vGF}{\vG\star\FJ}
\newcommand{\vH}{w}
\newcommand{\vHA}{w}
\newcommand{\vHB}{z}
\newcommand{\FJ}{\mathring \vF}
\newcommand{\LM}{\vF_\lambda}
\newcommand{\rstar}{\widetilde\reductiontype}
\newcommand{\AN}[1]{\vHA_{#1}}
\newcommand{\auxval}[1]{({#1}\star \valu)}
\newcommand{\auxagent}[3]{({#1}\star \valu_{#3})(#2)}
\newcommand{\auxfun}[2]{({#1}\star \valu_{#2})}
\newcommand{\auxwal}[1]{({#1}\star \valu_\agenti)}
\newcommand{\ff}[1]{\auxagent{\FJ}{#1}{\agenti}}
\newcommand{\gf}[1]{\auxagent{\vGF}{#1}{\agenti}}
\newcommand{\fs}[1]{\auxagent{\vF_{s_\agenti}}{#1}{\agenti}}
\newcommand{\mmsfrac}[2]{\tfrac{#1}{#2}}
\newcommand{\finib}{B}
\newcommand{\halt}{\gamma}
\newcounter{MasoudCommentCounter}
\newcommand{\MasoudCommentList}{}
\newcommand{\MasoudComment}[1]{%
	\stepcounter{MasoudCommentCounter}%
	\expandafter\gdef\expandafter\MasoudCommentList\expandafter{%
		\MasoudCommentList
		\par \textcolor{magenta}{\textbf{Masoud~\theMasoudCommentCounter:}} #1 (See \hyperlink{MasoudComment\theMasoudCommentCounter}{here}.)%
	}%
	\textcolor{magenta}{\textbf{\hypertarget{MasoudComment\theMasoudCommentCounter}{}Masoud~\theMasoudCommentCounter:} #1}%
}
\newcounter{AlirezaCommentCounter}
\newcommand{\AlirezaCommentList}{}
\newcommand{\AlirezaComment}[1]{%
	\stepcounter{AlirezaCommentCounter}%
	\expandafter\gdef\expandafter\AlirezaCommentList\expandafter{%
		\AlirezaCommentList
		\par \textcolor{cyan}{\textbf{Alireza~\theAlirezaCommentCounter:}} #1 (See \hyperlink{AlirezaComment\theAlirezaCommentCounter}{here}.)%
	}%
	\textcolor{cyan}{\textbf{\hypertarget{AlirezaComment\theAlirezaCommentCounter}{}Alireza~\theAlirezaCommentCounter:} #1}%
}
\newcounter{AmShZCommentCounter}
\newcommand{\AmShZCommentList}{}
\newcommand{\AmShZComment}[1]{%
	\stepcounter{AmShZCommentCounter}%
	\expandafter\gdef\expandafter\AmShZCommentList\expandafter{%
		\AmShZCommentList
		\par \textcolor{olivegreen}{\textbf{AmShZ~\theAmShZCommentCounter:}} #1 (See \hyperlink{AmShZComment\theAmShZCommentCounter}{here}.)%
	}%
	\textcolor{olivegreen}{\textbf{\hypertarget{AmShZComment\theAmShZCommentCounter}{}AmShZ~\theAmShZCommentCounter:} #1}%
}
\newcounter{EhsanCommentCounter}
\newcommand{\EhsanCommentList}{}
\newcommand{\EhsanComment}[1]{%
	\stepcounter{EhsanCommentCounter}%
	\expandafter\gdef\expandafter\EhsanCommentList\expandafter{%
		\EhsanCommentList
		\par \textcolor{brown}{\textbf{Ehsan~\theEhsanCommentCounter:}} #1 (See \hyperlink{EhsanComment\theEhsanCommentCounter}{here}.)%
	}%
	\textcolor{brown}{\textbf{\hypertarget{EhsanComment\theEhsanCommentCounter}{}Ehsan~\theEhsanCommentCounter:} #1}%
}
\title{Improved Maximin Share Guarantee for Additive Valuations}
\author{
		   Ehsan Heidari\thanks{\SharifUniversity}\\ 
		   \texttt{\small eh3an1383@gmail.com} 
		   \and 
		   Alireza Kaviani\footnotemark[1]\\ 
		   \texttt{\small akaviani05@gmail.com} 
		   \and 
		   Masoud Seddighin\thanks{\TeIAS}\\ 
		   \texttt{\small m.seddighin@teias.institute} 
		   \and 
		   AmirMohammad Shahrezaei\footnotemark[1]\\ 
		   \texttt{\small s.a.m.shahrezaei@gmail.com}
}
\date{}
\begin{document}
	\maketitle
	
	\begin{toappendix}
		\section{Table of Frequently Used Notation} \label{sec:notationstable}

\begin{table}[H]
	\centering
	\renewcommand{\arraystretch}{1.5}
	\begin{tabular}{|>{\centering\arraybackslash}p{2.25cm}|>{\centering\arraybackslash}p{2.25cm}|>{\centering\arraybackslash}p{2.25cm}|>{\centering\arraybackslash}p{2.25cm}|>{\centering\arraybackslash}p{2.25cm}|>{\centering\arraybackslash}p{2.25cm}|}
		\hline
		\textbf{Label} & \textbf{Instance} & \textbf{Agents} & \textbf{Goods} & \textbf{Good} & \textbf{MMS} \\
		\hline
		Initial    & $\Pinstance$ & $\Pagents$ & $\Pgoods$ & $\Pgood_k$ & $\allmms{v}$ \\
		\hline
		Primary    & $\instance$ & $\agents$ & $\goods$ & $\good_k$ & $\dotmms{v}$ \\
		\hline
		Secondary  & $\Sinstance$ & $\Sagents$ & $\Sgoods$ & $\Sgood_k$ & $\ddotmms{v}$ \\
		\hline
		$\Nyek$     & \multicolumn{5}{>{\centering\arraybackslash}p{13cm}|}{Green agents. $\{ \agent_i \in \Pagents \mid \valu_\agenti(\{\good_{2\numb+1}\}) \ge 1 - \alpha \}$} \\
		\hline
		$\Ndo$      & \multicolumn{5}{>{\centering\arraybackslash}p{13cm}|}{Red agents. $\{ \agent_i \in \Pagents \mid \valu_\agenti(\{\good_{2\numb+1}\}) < 1 - \alpha \}$} \\
		\hline
		$\finib_k$  & \multicolumn{5}{>{\centering\arraybackslash}p{13cm}|}{Bags. Initialized with $\{\Sgood_k, \Sgood_{\Snumb+k}, \Sgood_{3\Snumb+1-k}\}$ in Case~1, and $\{\good_k, \good_{\numb+k}\}$ in Case~2.} \\
		\hline
		$\FJ$      & \multicolumn{5}{>{\centering\arraybackslash}p{13cm}|}{$f_{\tfrac{4}{3} \alpha - 1}$} \\
		\hline
		$\reductiontype^1$      & \multicolumn{2}{>{\centering\arraybackslash}p{5cm}|}{$\{\prnumb, x\}$} & $\reductiontype^2$      & \multicolumn{2}{>{\centering\arraybackslash}p{5cm}|}{$\{2\prnumb - 1, 2\prnumb, x\}$}\\
		\hline
		$\reductiontype^3$      & \multicolumn{2}{>{\centering\arraybackslash}p{5cm}|}{$\{3\prnumb-2, 3\prnumb - 1, 3\prnumb, x\}$} & $\reductiontype^4$      & \multicolumn{2}{>{\centering\arraybackslash}p{5cm}|}{$\{4\prnumb - 3, 4\prnumb-2, 4\prnumb-1, 4\prnumb, x\}$}\\
		\hline
		$\rstar^1$      & \multicolumn{2}{>{\centering\arraybackslash}p{5cm}|}{$\{1, x\} \quad x \ge 2\prnumb + 1$} & $\rstar^2$      & \multicolumn{2}{>{\centering\arraybackslash}p{5cm}|}{$\{1, x\} \quad x \ge 2$}\\
		\hline
	\end{tabular}
	\caption{Notation Table}
	\label{tab:mms-blank}
\end{table}
		\newpage
		\newpage
\section{Examples} \label{sec:examples}

\newcommand{\partitionA}[2]{\textcolor{red}{\nicefrac{#1}{#2}}}
\newcommand{\partitionB}[2]{\textcolor{blue}{\nicefrac{#1}{#2}}}
\newcommand{\partitionC}[2]{\textcolor{teal}{\nicefrac{#1}{#2}}}
\newcommand{\partitionD}[2]{\textcolor{olive}{\nicefrac{#1}{#2}}}
\newcommand{\partitionES}[1]{\textcolor{violet}{#1}}
\newcommand{\partitionE}[2]{\textcolor{violet}{\nicefrac{#1}{#2}}}

\begin{example}[Reductions]
	We visualize three of our modified reduction rules \( \reductiontype^1 \), \( \reductiontype^2 \), and \( \rstar^1 \) and illustrate how their outcomes differ from the corresponding classical reductions. Consider an instance with $2$ agents and $7$ goods, whose valuations are given in the table below:
	
	\[
	\begin{array}{c|ccccccc}
		\text{Good} & g_1 & g_2 & g_3 & g_4 & g_5 & g_6 & g_7 \\
		\hline
		\text{Agent 1} & \partitionA{7}{13} & \partitionB{7}{13} & \partitionB{4}{13} & \partitionA{3}{13} & \partitionA{3}{13} & \partitionB{1}{13} & \partitionB{1}{13} \\
		\text{Agent 2} & \partitionA{8}{13} & \partitionA{5}{13} & \partitionB{5}{13} & \partitionB{3}{13} & \partitionB{2}{13} & \partitionB{2}{13} & \partitionB{1}{13}
	\end{array}
	\]
	
	For \( \alpha = \nicefrac{10}{13} \):
	\begin{itemize}
		\item The classical rule $\mathsf{R}^1$ assigns the bundle \( \{g_2, g_3\} \), while our modified rule selects \( \{g_2 , g_5\} \).
		\item The classical rule $\mathsf{R}^2$ assigns \( \{g_3, g_4, g_5\} \), whereas our version picks \( \{g_3, g_4, g_6\} \).
		\item The classical rule $\widetilde{\mathsf{R}}^1$ assigns \( \{g_1, g_5\} \), while our modified rule selects \( \{g_1, g_6\} \).
	\end{itemize}
	
	Note that unlike the classical reductions, we do not allocate these bundles immediately and defer the matching process.
\end{example}

\begin{figure}[ht]
	\centering
	
	\begin{tikzpicture}[x=1.3cm, y=1.3cm, font=\small]
		\foreach \i/\label in {1/$g_1$,2/$g_2$,3/$g_3$,4/$g_4$,5/$g_5$,6/$g_6$,7/$g_7$} {
			\node[draw, rectangle, rounded corners=3pt, minimum width=1cm, minimum height=1cm, inner sep=2pt] (good\i) at (\i,0) {\label};
		}
		
		\node[font=\footnotesize\bfseries] at ([shift={(-45pt,10pt)}]good1.north) {Agent 1};
		\foreach \i/\val in {1/7,2/7,3/4,4/3,5/3,6/1,7/1} {
			\node[above=4pt, font=\scriptsize] at (good\i.north) {$\nicefrac{\val}{13}$};
		}
		
		\node[font=\footnotesize\bfseries] at ([shift={(-45pt,-13pt)}]good1.south) {Agent 2};
		\foreach \i/\val in {1/8,2/5,3/5,4/3,5/2,6/2,7/1} {
			\node[below=4pt, font=\scriptsize] at (good\i.south) {$\nicefrac{\val}{13}$};
		}
		
		\draw[fill=blue!20, fill opacity=0.3, draw=blue!60, thick, rounded corners=4pt]
		(1.5,0.9) -- (5.45,0.9) -- (5.45,-0.45) -- (4.55,-0.45) -- (4.55,0.45) -- (2.45,0.45) -- (2.45,-0.45) -- (1.5,-0.45) -- cycle;
		
		\draw[fill=red!20, fill opacity=0.3, draw=red!60, thick, rounded corners=4pt]
		(1.55,0.45) -- (3.45,0.45) -- (3.45,-0.45) -- (1.55,-0.45) -- cycle;
	\end{tikzpicture}
	
	\hfill
	
	\begin{tikzpicture}[x=1.3cm, y=1.3cm, font=\small]
		\foreach \i/\label in {1/$g_1$,2/$g_2$,3/$g_3$,4/$g_4$,5/$g_5$,6/$g_6$,7/$g_7$} {
			\node[draw, rectangle, rounded corners=3pt, minimum width=1cm, minimum height=1cm, inner sep=2pt] (good\i) at (\i,0) {\label};
		}
		
		\node[font=\footnotesize\bfseries] at ([shift={(-45pt,10pt)}]good1.north) {Agent 1};
		\foreach \i/\val in {1/7,2/7,3/4,4/3,5/3,6/1,7/1} {
			\node[above=4pt, font=\scriptsize] at (good\i.north) {$\nicefrac{\val}{13}$};
		}
		
		\node[font=\footnotesize\bfseries] at ([shift={(-45pt,-13pt)}]good1.south) {Agent 2};
		\foreach \i/\val in {1/8,2/5,3/5,4/3,5/2,6/2,7/1} {
			\node[below=4pt, font=\scriptsize] at (good\i.south) {$\nicefrac{\val}{13}$};
		}
		
		\draw[fill=blue!20, fill opacity=0.3, draw=blue!60, thick, rounded corners=4pt]
		(2.5,-0.9) -- (6.45,-0.9) -- (6.45,0.45) -- (5.55,0.45) -- (5.55,-0.45) -- (4.45,-0.45) -- (4.45,0.45) -- (2.5,0.45) -- cycle;
		
		\draw[fill=red!20, fill opacity=0.3, draw=red!60, thick, rounded corners=4pt]
		(2.55,0.45) -- (5.45,0.45) -- (5.45,-0.45) -- (2.55,-0.45) -- cycle;
	\end{tikzpicture}
	
	\hfill
	
	\begin{tikzpicture}[x=1.3cm, y=1.3cm, font=\small]
		\foreach \i/\label in {1/$g_1$,2/$g_2$,3/$g_3$,4/$g_4$,5/$g_5$,6/$g_6$,7/$g_7$} {
			\node[draw, rectangle, rounded corners=3pt, minimum width=1cm, minimum height=1cm, inner sep=2pt] (good\i) at (\i,0) {\label};
		}
		
		\node[font=\footnotesize\bfseries] at ([shift={(-45pt,10pt)}]good1.north) {Agent 1};
		\foreach \i/\val in {1/7,2/7,3/4,4/3,5/3,6/1,7/1} {
			\node[above=4pt, font=\scriptsize] at (good\i.north) {$\nicefrac{\val}{13}$};
		}
		
		\node[font=\footnotesize\bfseries] at ([shift={(-45pt,-13pt)}]good1.south) {Agent 2};
		\foreach \i/\val in {1/8,2/5,3/5,4/3,5/2,6/2,7/1} {
			\node[below=4pt, font=\scriptsize] at (good\i.south) {$\nicefrac{\val}{13}$};
		}
		
		\draw[fill=blue!20, fill opacity=0.3, draw=blue!60, thick, rounded corners=4pt]
		(0.5,0.9) -- (6.45,0.9) -- (6.45,-0.45) -- (5.55,-0.45) -- (5.55,0.45) -- (1.45,0.45) -- (1.45,-0.45) -- (0.5,-0.45) -- (0.5,0.45) -- cycle;
		
		\draw[fill=red!20, fill opacity=0.3, draw=red!60, thick, rounded corners=4pt]
		(0.5,-0.9) -- (5.45,-0.9) -- (5.45,0.45) -- (4.55,0.45) -- (4.55,-0.45) -- (1.45,-0.45) -- (1.45,0.45) -- (0.5,0.45) -- (0.5,-0.45) -- cycle;
	\end{tikzpicture}
	\caption{Comparison of classical and our modified reductions. Red boxes indicate the bundle chosen by the classical reductions, while blue boxes indicate the bundle chosen by our reductions.}
\end{figure}

\newpage

\begin{example}[Primary Reductions]
	Consider an instance with $5$ agents and $17$ goods. We illustrate the primary reductions on this instance step by step. In this example, we consider $\alpha = \nicefrac{10}{13}$. The valuation functions are given in the table below: 
	\footnote{Given $\alpha = \nicefrac{10}{13}$ \textsf{(i) } $\nicefrac{13}{17} < \alpha < \nicefrac{14}{17}$, \textsf{(ii) } $\nicefrac{14}{19} < \alpha < \nicefrac{15}{19}$, \textsf{(iii) } $\nicefrac{16}{21} < \alpha < \nicefrac{17}{21}$.}
	\[
	\hspace{-0.7cm}
	\begin{array}{c|ccccccccccccccccc}
		\text{Good} & g_1 & g_2 & g_3 & g_4 & g_5 & g_6 & g_7 & g_8 & g_9 & g_{10} & g_{11} & g_{12} & g_{13} & g_{14} & g_{15} & g_{16} & g_{17}\\
		\hline
		\text{Agent 1} & \partitionA{9}{13} & \partitionB{8}{13} & \partitionC{7}{13} & \partitionE{6}{13} & \partitionD{5}{13} & \partitionA{4}{13} & \partitionB{4}{13} & \partitionC{4}{13} & \partitionD{4}{13} & \partitionD{4}{13} & \partitionC{2}{13} & \partitionE{2}{13} & \partitionE{2}{13} & \partitionE{1}{13} & \partitionE{1}{13} & \partitionE{1}{13} & \partitionB{1}{13}\\
		\text{Agent 2} & \partitionA{9}{17} & \partitionB{9}{17} & \partitionA{8}{17} & \partitionB{8}{17} & \partitionC{8}{17} & \partitionC{5}{17} & \partitionC{4}{17} & \partitionD{4}{17} & \partitionD{4}{17} & \partitionE{4}{17} & \partitionE{4}{17} & \partitionD{3}{17} & \partitionD{3}{17} & \partitionD{3}{17} & \partitionE{3}{17} & \partitionE{3}{17} & \partitionE{3}{17}\\
		\text{Agent 3} & \partitionA{10}{19} & \partitionB{10}{19} & \partitionA{9}{19} & \partitionB{9}{19} & \partitionC{9}{19} & \partitionC{5}{19} & \partitionC{5}{19} & \partitionD{4}{19} & \partitionD{4}{19} & \partitionD{4}{19} & \partitionD{4}{19} & \partitionE{4}{19} & \partitionE{4}{19} & \partitionE{4}{19} & \partitionE{4}{19} & \partitionD{3}{19} & \partitionE{3}{19}\\ 
		\text{Agent 4} & \partitionA{11}{21}& \partitionB{11}{21} & \partitionC{11}{21} & \partitionD{11}{21} & \partitionE{11}{21} & \partitionA{5}{21} & \partitionA{5}{21} & \partitionB{5}{21} & \partitionB{5}{21} & \partitionC{5}{21} & \partitionC{5}{21} & \partitionD{4}{21} & \partitionD{4}{21} & \partitionE{4}{21} & \partitionE{3}{21} & \partitionE{3}{21} & \partitionD{2}{21}\\ 
		\text{Agent 5} & \partitionA{7}{13} & \partitionB{7}{13} & \partitionC{5}{13} & \partitionC{5}{13} & \partitionD{5}{13} & \partitionE{4}{13} & \partitionA{3}{13} & \partitionA{3}{13} & \partitionB{3}{13} & \partitionB{3}{13} & \partitionC{3}{13} & \partitionD{3}{13} & \partitionD{3}{13} & \partitionE{3}{13} & \partitionE{3}{13} & \partitionE{3}{13} & \partitionD{2}{13} \\
	\end{array}
	\]

Initially, rule $\reductiontype^1$ is not applicable since no one values the bundle $\{g_5, g_6\}$ at a value of at least $\alpha$. However, rule $\reductiontype^2$ is applicable, and the modified version of this rule selects the goods $\{g_9, g_{10}, g_{13}\}$. Note that the bundle is not allocated and is only considered in the matching.

\begin{figure}[h!]
	\centering
	
	\begin{tikzpicture}[good node/.style={draw, rectangle, rounded corners=3pt,
			minimum width=0.8cm, minimum height=0.8cm}]
		\foreach \i/\label in {1/$g_1$,2/$g_2$,3/$g_3$,4/$g_4$,5/$g_5$,
			6/$g_6$,7/$g_7$,8/$g_8$,9/$g_9$,10/$g_{10}$,11/$g_{11}$,12/$g_{12}$,13/$g_{13}$, 14/$g_{14}$, 15/$g_{15}$, 16/$g_{16}$,17/$g_{17}$} {
			\node [good node] (good\i) at (\i,0) {\label};
		}

		\node[good node, fill=cyan!30] at (9,0) {$g_9$};
		\node[good node, fill=cyan!30] at (10,0) {$g_{10}$};
		\node[good node, fill=cyan!30] at (13,0) {$g_{13}$};
		
		\draw[dashed] (5.5,-0.6) -- (5.5,0.6);
		\draw[dashed] (10.5,-0.6) -- (10.5,0.6);
		\draw[dashed] (15.5,-0.6) -- (15.5,0.6);
	\end{tikzpicture}
	
\end{figure}

In the second step, we consider the following reductions: $\reductiontype^1$ with the bundle $\{g_4, g_5\}$, $\reductiontype^2$ with $\{g_7, g_8, g_{11}\}$, and $\rstar^1$ with $\{g_1, g_{11}\}$. Rule $\reductiontype^1$ is applicable, and adding the bundle ${g_4, g_5}$ results in a matching of size two. However, we cannot select the bundle ${g_4, g_6}$ since, although Agent 1 still values this bundle, no matching of size two exists for it. \footnote{A saturating matching is shown with blue edges; otherwise, the Hall-violating set is marked in yellow.}

\begin{figure}[h!]
	\centering
	
	\begin{tikzpicture}[good node/.style={draw, rectangle, rounded corners=3pt,
			minimum width=0.8cm, minimum height=0.8cm}]
		\foreach \i/\label in {1/$g_1$,2/$g_2$,3/$g_3$,4/$g_4$,5/$g_5$,
			6/$g_6$,7/$g_7$,8/$g_8$,9/$g_{11}$,10/$g_{12}$,11/$g_{14}$, 12/$g_{15}$, 13/$g_{16}$,14/$g_{17}$} {
			\node [good node] (good\i) at (\i,0) {\label};
		}

		\node[good node, fill=cyan!30] at (4,0) {$g_4$};
		\node[good node, fill=cyan!30] at (5,0) {$g_5$};
		
		\draw[dashed] (4.5,-0.6) -- (4.5,0.6);
		\draw[dashed] (8.5,-0.6) -- (8.5,0.6);
		\draw[dashed] (12.5,-0.6) -- (12.5,0.6);
	\end{tikzpicture}
	
\end{figure}

\begin{figure}[h!]
	\centering
	\begin{subfigure}[t]{0.31\textwidth}
		\centering
		\resizebox{\textwidth}{!}{%
\begin{tikzpicture}[vertex/.style={draw, rectangle, rounded corners=3pt, font=\small}, top vertex/.style={vertex, minimum width=12mm, minimum height=8mm}, bottom vertex/.style={vertex, minimum width=8mm, minimum height=8mm}]
  \node[top vertex, fill=white] (t1) at (-1.00,1.50) {$g_{9}, g_{10}, g_{13}$};
  \node[top vertex, fill=white] (t2) at (1.00,1.50) {$g_{4}, g_{5}$};
  \node[bottom vertex, fill=white] (b1) at (-3.00,-1.50) {1};
  \node[bottom vertex, fill=white] (b2) at (-1.50,-1.50) {2};
  \node[bottom vertex, fill=white] (b3) at (0.00,-1.50) {3};
  \node[bottom vertex, fill=white] (b4) at (1.50,-1.50) {4};
  \node[bottom vertex, fill=white] (b5) at (3.00,-1.50) {5};
  \draw[blue, ultra thick] (t1) -- (b1);
  \draw (t2) -- (b1);
  \draw (t2) -- (b2);
  \draw (t2) -- (b3);
  \draw (t2) -- (b4);
  \draw[blue, ultra thick] (t2) -- (b5);
\end{tikzpicture}
		}
		\caption{Rule $\reductiontype^1$ is applicable.}
	\end{subfigure}
	\hspace{1cm} 
	\begin{subfigure}[t]{0.31\textwidth}
		\centering
		\resizebox{\textwidth}{!}{%
\begin{tikzpicture}[vertex/.style={draw, rectangle, rounded corners=3pt, font=\small}, top vertex/.style={vertex, minimum width=12mm, minimum height=8mm}, bottom vertex/.style={vertex, minimum width=8mm, minimum height=8mm}]
  \node[top vertex, fill=yellow!30] (t1) at (-1.00,1.50) {$g_{9}, g_{10}, g_{13}$};
  \node[top vertex, fill=yellow!30] (t2) at (1.00,1.50) {$g_{4}, g_{6}$};
  \node[bottom vertex, fill=yellow!30] (b1) at (-3.00,-1.50) {1};
  \node[bottom vertex, fill=white] (b2) at (-1.50,-1.50) {2};
  \node[bottom vertex, fill=white] (b3) at (0.00,-1.50) {3};
  \node[bottom vertex, fill=white] (b4) at (1.50,-1.50) {4};
  \node[bottom vertex, fill=white] (b5) at (3.00,-1.50) {5};
  \draw (t1) -- (b1);
  \draw (t2) -- (b1);
\end{tikzpicture}
		}
		\captionsetup{width=1.2\textwidth} 
		\caption{Bundle $\{g_4, g_6\}$ cannot be selected.}
	\end{subfigure}
\end{figure}

In the third step, we consider the following reductions: $\reductiontype_1$ with the bundle $\{g_3, g_6\}$, $\reductiontype_2$ with $\{g_7, g_8, g_{11}\}$, and $\rstar^1$ with $\{g_1, g_{11}\}$. The rules $\reductiontype^1$ and $\reductiontype^2$ are not applicable, and the matching does not exist for these rules. However, $\rstar^1$ is applicable and can select the goods $g_1$ and $g_{16}$. For the bundle $\{g_1, g_{17}\}$, no matching exists.

\begin{figure}[H]
	\centering
	
	\begin{tikzpicture}[good node/.style={draw, rectangle, rounded corners=3pt,
			minimum width=0.8cm, minimum height=0.8cm}]
		\foreach \i/\label in {1/$g_1$,2/$g_2$,3/$g_3$,
			4/$g_6$,5/$g_7$,6/$g_8$,7/$g_{11}$,8/$g_{12}$,9/$g_{14}$, 10/$g_{15}$, 11/$g_{16}$,12/$g_{17}$} {
			\node [good node] (good\i) at (\i,0) {\label};
		}

		\node[good node, fill=cyan!30] at (1,0) {$g_1$};
		\node[good node, fill=cyan!30] at (11,0) {$g_{16}$};
		
		\draw[dashed] (3.5,-0.6) -- (3.5,0.6);
		\draw[dashed] (6.5,-0.6) -- (6.5,0.6);
		\draw[dashed] (9.5,-0.6) -- (9.5,0.6);
	\end{tikzpicture}
	
\end{figure}

\begin{figure}[H]
	\centering
	\begin{subfigure}[t]{0.31\textwidth}
		\centering
		\resizebox{\textwidth}{!}{%
\begin{tikzpicture}[vertex/.style={draw, rectangle, rounded corners=3pt, font=\small}, top vertex/.style={vertex, minimum width=12mm, minimum height=8mm}, bottom vertex/.style={vertex, minimum width=8mm, minimum height=8mm}]
  \node[top vertex, fill=yellow!30] (t1) at (-2.00,1.50) {$g_{9}, g_{10}, g_{13}$};
  \node[top vertex, fill=white] (t2) at (0.00,1.50) {$g_{4}, g_{5}$};
  \node[top vertex, fill=yellow!30] (t3) at (2.00,1.50) {$g_{3}, g_{6}$};
  \node[bottom vertex, fill=yellow!30] (b1) at (-3.00,-1.50) {1};
  \node[bottom vertex, fill=white] (b2) at (-1.50,-1.50) {2};
  \node[bottom vertex, fill=white] (b3) at (0.00,-1.50) {3};
  \node[bottom vertex, fill=white] (b4) at (1.50,-1.50) {4};
  \node[bottom vertex, fill=white] (b5) at (3.00,-1.50) {5};
  \draw (t1) -- (b1);
  \draw (t2) -- (b1);
  \draw (t2) -- (b2);
  \draw (t2) -- (b3);
  \draw (t2) -- (b4);
  \draw (t2) -- (b5);
  \draw (t3) -- (b1);
\end{tikzpicture}
		}
		\caption{Rule $\reductiontype^1$ is not applicable.}
	\end{subfigure}
	\hspace{1cm} 
	\begin{subfigure}[t]{0.31\textwidth}
		\centering
		\resizebox{\textwidth}{!}{%
\begin{tikzpicture}[vertex/.style={draw, rectangle, rounded corners=3pt, font=\small}, top vertex/.style={vertex, minimum width=12mm, minimum height=8mm}, bottom vertex/.style={vertex, minimum width=8mm, minimum height=8mm}]
  \node[top vertex, fill=yellow!30] (t1) at (-2.00,1.50) {$g_{9}, g_{10}, g_{13}$};
  \node[top vertex, fill=white] (t2) at (0.00,1.50) {$g_{4}, g_{5}$};
  \node[top vertex, fill=yellow!30] (t3) at (2.00,1.50) {$g_{7}, g_{8}, g_{11}$};
  \node[bottom vertex, fill=yellow!30] (b1) at (-3.00,-1.50) {1};
  \node[bottom vertex, fill=white] (b2) at (-1.50,-1.50) {2};
  \node[bottom vertex, fill=white] (b3) at (0.00,-1.50) {3};
  \node[bottom vertex, fill=white] (b4) at (1.50,-1.50) {4};
  \node[bottom vertex, fill=white] (b5) at (3.00,-1.50) {5};
  \draw (t1) -- (b1);
  \draw (t2) -- (b1);
  \draw (t2) -- (b2);
  \draw (t2) -- (b3);
  \draw (t2) -- (b4);
  \draw (t2) -- (b5);
  \draw (t3) -- (b1);
\end{tikzpicture}
		}
		\caption{Rule $\reductiontype^2$ is not applicable.}
	\end{subfigure}
\end{figure}

\begin{figure}[H]
	\centering
	\begin{subfigure}[t]{0.31\textwidth}
		\centering
		\resizebox{\textwidth}{!}{%
\begin{tikzpicture}[vertex/.style={draw, rectangle, rounded corners=3pt, font=\small}, top vertex/.style={vertex, minimum width=12mm, minimum height=8mm}, bottom vertex/.style={vertex, minimum width=8mm, minimum height=8mm}]
  \node[top vertex, fill=white] (t1) at (-2.00,1.50) {$g_{9}, g_{10}, g_{13}$};
  \node[top vertex, fill=white] (t2) at (0.00,1.50) {$g_{4}, g_{5}$};
  \node[top vertex, fill=white] (t3) at (2.00,1.50) {$g_{1}, g_{11}$};
  \node[bottom vertex, fill=white] (b1) at (-3.00,-1.50) {1};
  \node[bottom vertex, fill=white] (b2) at (-1.50,-1.50) {2};
  \node[bottom vertex, fill=white] (b3) at (0.00,-1.50) {3};
  \node[bottom vertex, fill=white] (b4) at (1.50,-1.50) {4};
  \node[bottom vertex, fill=white] (b5) at (3.00,-1.50) {5};
  \draw[blue, ultra thick] (t1) -- (b1);
  \draw (t2) -- (b1);
  \draw (t2) -- (b2);
  \draw (t2) -- (b3);
  \draw[blue, ultra thick] (t2) -- (b4);
  \draw (t2) -- (b5);
  \draw (t3) -- (b1);
  \draw[blue, ultra thick] (t3) -- (b5);
\end{tikzpicture}
		}
		\caption{Rule $\rstar^1$ is applicable.}
	\end{subfigure}
	\hfill
	\begin{subfigure}[t]{0.31\textwidth}
		\centering
		\resizebox{\textwidth}{!}{%
\begin{tikzpicture}[vertex/.style={draw, rectangle, rounded corners=3pt, font=\small}, top vertex/.style={vertex, minimum width=12mm, minimum height=8mm}, bottom vertex/.style={vertex, minimum width=8mm, minimum height=8mm}]
  \node[top vertex, fill=white] (t1) at (-2.00,1.50) {$g_{9}, g_{10}, g_{13}$};
  \node[top vertex, fill=white] (t2) at (0.00,1.50) {$g_{4}, g_{5}$};
  \node[top vertex, fill=white] (t3) at (2.00,1.50) {$g_{1}, g_{16}$};
  \node[bottom vertex, fill=white] (b1) at (-3.00,-1.50) {1};
  \node[bottom vertex, fill=white] (b2) at (-1.50,-1.50) {2};
  \node[bottom vertex, fill=white] (b3) at (0.00,-1.50) {3};
  \node[bottom vertex, fill=white] (b4) at (1.50,-1.50) {4};
  \node[bottom vertex, fill=white] (b5) at (3.00,-1.50) {5};
  \draw[blue, ultra thick] (t1) -- (b1);
  \draw (t2) -- (b1);
  \draw (t2) -- (b2);
  \draw (t2) -- (b3);
  \draw[blue, ultra thick] (t2) -- (b4);
  \draw (t2) -- (b5);
  \draw (t3) -- (b1);
  \draw[blue, ultra thick] (t3) -- (b5);
\end{tikzpicture}
		}
		\captionsetup{width=1.2\textwidth} 
		\caption{Bundle $\{g_1, g_{16}\}$ is valid.}
	\end{subfigure}
	\hfill
	\begin{subfigure}[t]{0.31\textwidth}
		\centering
		\resizebox{\textwidth}{!}{%
\begin{tikzpicture}[vertex/.style={draw, rectangle, rounded corners=3pt, font=\small}, top vertex/.style={vertex, minimum width=12mm, minimum height=8mm}, bottom vertex/.style={vertex, minimum width=8mm, minimum height=8mm}]
  \node[top vertex, fill=yellow!30] (t1) at (-2.00,1.50) {$g_{9}, g_{10}, g_{13}$};
  \node[top vertex, fill=white] (t2) at (0.00,1.50) {$g_{4}, g_{5}$};
  \node[top vertex, fill=yellow!30] (t3) at (2.00,1.50) {$g_{1}, g_{17}$};
  \node[bottom vertex, fill=yellow!30] (b1) at (-3.00,-1.50) {1};
  \node[bottom vertex, fill=white] (b2) at (-1.50,-1.50) {2};
  \node[bottom vertex, fill=white] (b3) at (0.00,-1.50) {3};
  \node[bottom vertex, fill=white] (b4) at (1.50,-1.50) {4};
  \node[bottom vertex, fill=white] (b5) at (3.00,-1.50) {5};
  \draw (t1) -- (b1);
  \draw (t2) -- (b1);
  \draw (t2) -- (b2);
  \draw (t2) -- (b3);
  \draw (t2) -- (b4);
  \draw (t2) -- (b5);
  \draw (t3) -- (b1);
\end{tikzpicture}
		}
		\captionsetup{width=1.2\textwidth} 
		\caption{Bundle $\{g_1, g_{17}\}$ cannot be selected.}
	\end{subfigure}
\end{figure}

In the final step, none of the remaining reduction rules is applicable, so we complete the primary reduction phase. At this stage, we divide all agents into two groups, $N^g$ and $N_r$, based on their valuation of the good $g_8$, where $N^g$ consists of agents who value $g_8$ at least $\nicefrac{3}{13}$, and $N^r$ includes the remaining agents. In this case, $N^g = \{1, 2, 4, 5\}$ and $N^r = \{3\}$ and $|N^g| \ge \nicefrac{n}{\sqrt{2}}$. We then select a matching that maximizes the number of matched agents from $N^r$, and prioritize them in both primary reductions and the subsequent steps of the algorithm.

\begin{figure}[H]
	\centering
	
	\begin{tikzpicture}[good node/.style={draw, rectangle, rounded corners=3pt,
			minimum width=0.8cm, minimum height=0.8cm}]
		\foreach \i/\label in {1/$g_2$,2/$g_3$,
			3/$g_6$,4/$g_7$,5/$g_8$,6/$g_{11}$,7/$g_{12}$,8/$g_{14}$, 9/$g_{15}$, 10/$g_{17}$} {
			\node [good node] (good\i) at (\i,0) {\label};
		}
		
		\draw[dashed] (2.5,-0.6) -- (2.5,0.6);
		\draw[dashed] (4.5,-0.6) -- (4.5,0.6);
		\draw[dashed] (6.5,-0.6) -- (6.5,0.6);
	\end{tikzpicture}
	
\end{figure}

\begin{figure}[H]
	\centering
	\begin{subfigure}[t]{0.31\textwidth}
		\centering
		\resizebox{\textwidth}{!}{%
\begin{tikzpicture}[vertex/.style={draw, rectangle, rounded corners=3pt, font=\small}, top vertex/.style={vertex, minimum width=12mm, minimum height=8mm}, bottom vertex/.style={vertex, minimum width=8mm, minimum height=8mm}]
  \node[top vertex, fill=yellow!30] (t1) at (-3.00,1.50) {$g_{9}, g_{10}, g_{13}$};
  \node[top vertex, fill=white] (t2) at (-1.00,1.50) {$g_{4}, g_{5}$};
  \node[top vertex, fill=white] (t3) at (1.00,1.50) {$g_{1}, g_{16}$};
  \node[top vertex, fill=yellow!30] (t4) at (3.00,1.50) {$g_{3}, g_{6}$};
  \node[bottom vertex, fill=yellow!30] (b1) at (-3.00,-1.50) {1};
  \node[bottom vertex, fill=white] (b2) at (-1.50,-1.50) {2};
  \node[bottom vertex, fill=white] (b3) at (0.00,-1.50) {3};
  \node[bottom vertex, fill=white] (b4) at (1.50,-1.50) {4};
  \node[bottom vertex, fill=white] (b5) at (3.00,-1.50) {5};
  \draw (t1) -- (b1);
  \draw (t2) -- (b1);
  \draw (t2) -- (b2);
  \draw (t2) -- (b3);
  \draw (t2) -- (b4);
  \draw (t2) -- (b5);
  \draw (t3) -- (b1);
  \draw (t3) -- (b5);
  \draw (t4) -- (b1);
\end{tikzpicture}
		}
		\caption{Rule $\reductiontype^1$ is not applicable.}
	\end{subfigure}
	\hfill
	\begin{subfigure}[t]{0.31\textwidth}
		\centering
		\resizebox{\textwidth}{!}{%
\begin{tikzpicture}[vertex/.style={draw, rectangle, rounded corners=3pt, font=\small}, top vertex/.style={vertex, minimum width=12mm, minimum height=8mm}, bottom vertex/.style={vertex, minimum width=8mm, minimum height=8mm}]
  \node[top vertex, fill=yellow!30] (t1) at (-3.00,1.50) {$g_{9}, g_{10}, g_{13}$};
  \node[top vertex, fill=white] (t2) at (-1.00,1.50) {$g_{4}, g_{5}$};
  \node[top vertex, fill=white] (t3) at (1.00,1.50) {$g_{1}, g_{16}$};
  \node[top vertex, fill=yellow!30] (t4) at (3.00,1.50) {$g_{6}, g_{7}, g_{8}$};
  \node[bottom vertex, fill=yellow!30] (b1) at (-3.00,-1.50) {1};
  \node[bottom vertex, fill=white] (b2) at (-1.50,-1.50) {2};
  \node[bottom vertex, fill=white] (b3) at (0.00,-1.50) {3};
  \node[bottom vertex, fill=white] (b4) at (1.50,-1.50) {4};
  \node[bottom vertex, fill=white] (b5) at (3.00,-1.50) {5};
  \draw (t1) -- (b1);
  \draw (t2) -- (b1);
  \draw (t2) -- (b2);
  \draw (t2) -- (b3);
  \draw (t2) -- (b4);
  \draw (t2) -- (b5);
  \draw (t3) -- (b1);
  \draw (t3) -- (b5);
  \draw (t4) -- (b1);
\end{tikzpicture}
		}
		\caption{Rule $\reductiontype^2$ is not applicable.}
	\end{subfigure}
	\hfill
	\begin{subfigure}[t]{0.31\textwidth}
		\centering
		\resizebox{\textwidth}{!}{%
\begin{tikzpicture}[vertex/.style={draw, rectangle, rounded corners=3pt, font=\small}, top vertex/.style={vertex, minimum width=12mm, minimum height=8mm}, bottom vertex/.style={vertex, minimum width=8mm, minimum height=8mm}]
  \node[top vertex, fill=yellow!30] (t1) at (-3.00,1.50) {$g_{9}, g_{10}, g_{13}$};
  \node[top vertex, fill=white] (t2) at (-1.00,1.50) {$g_{4}, g_{5}$};
  \node[top vertex, fill=yellow!30] (t3) at (1.00,1.50) {$g_{1}, g_{16}$};
  \node[top vertex, fill=yellow!30] (t4) at (3.00,1.50) {$g_{2}, g_{8}$};
  \node[bottom vertex, fill=yellow!30] (b1) at (-3.00,-1.50) {1};
  \node[bottom vertex, fill=white] (b2) at (-1.50,-1.50) {2};
  \node[bottom vertex, fill=white] (b3) at (0.00,-1.50) {3};
  \node[bottom vertex, fill=white] (b4) at (1.50,-1.50) {4};
  \node[bottom vertex, fill=yellow!30] (b5) at (3.00,-1.50) {5};
  \draw (t1) -- (b1);
  \draw (t2) -- (b1);
  \draw (t2) -- (b2);
  \draw (t2) -- (b3);
  \draw (t2) -- (b4);
  \draw (t2) -- (b5);
  \draw (t3) -- (b1);
  \draw (t3) -- (b5);
  \draw (t4) -- (b1);
  \draw (t4) -- (b5);
\end{tikzpicture}
		}
		\caption{Rule $\rstar^1$ is not applicable.}
	\end{subfigure}
\end{figure}

\begin{figure}[H]
	\centering
	\begin{subfigure}[t]{0.45\textwidth}
		\centering
		\resizebox{\textwidth}{!}{%
\begin{tikzpicture}[vertex/.style={draw, rectangle, rounded corners=3pt, font=\small}, top vertex/.style={vertex, minimum width=12mm, minimum height=8mm}, bottom vertex/.style={vertex, minimum width=8mm, minimum height=8mm}]
  \node[top vertex, fill=white] (t1) at (-2.00,1.50) {$g_{9}, g_{10}, g_{13}$};
  \node[top vertex, fill=white] (t2) at (0.00,1.50) {$g_{4}, g_{5}$};
  \node[top vertex, fill=white] (t3) at (2.00,1.50) {$g_{1}, g_{16}$};
  \node[bottom vertex, fill=green!30] (b1) at (-3.00,-1.50) {1};
  \node[bottom vertex, fill=green!30] (b2) at (-1.50,-1.50) {2};
  \node[bottom vertex, fill=red!30] (b3) at (0.00,-1.50) {3};
  \node[bottom vertex, fill=green!30] (b4) at (1.50,-1.50) {4};
  \node[bottom vertex, fill=green!30] (b5) at (3.00,-1.50) {5};
  \draw[blue, ultra thick] (t1) -- (b1);
  \draw (t2) -- (b1);
  \draw (t2) -- (b2);
  \draw[blue, ultra thick] (t2) -- (b3);
  \draw (t2) -- (b4);
  \draw (t2) -- (b5);
  \draw (t3) -- (b1);
  \draw[blue, ultra thick] (t3) -- (b5);
\end{tikzpicture}
		}
	\end{subfigure}
	\caption{The final result of primary reductions.}
\end{figure}

\end{example}


		\section{Bounds on $\MMS$ Values for Calibrated Valuations} \label{apx:adjusting}

\begin{lemma}\label{lem:virt-mms-before}
	Let $\prgoods$ be a set of goods, $d$ be a constant, and let $\prvalu$ be a valuation function such that $\mms^d_{\prvalu}(\prgoods) \ge 1$, and for all $\prgood \in \prgoods$ we have $\prvalu(\{\prgood\})\le 1$. Then for every $0\le\lambda \le  \tfrac{4\alpha}{3}-1$ we have $\prauxmms{\LM}{\prgoods}{d}  \ge 1 - 3\lambda.$
\end{lemma}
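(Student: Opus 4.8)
\emph{Proof plan.} The plan is to reuse an optimal maximin partition for $\prvalu$ and to bound, part by part, how much value each part loses under the transformation $\prauxval{\LM}$. Since $\mms^{d}_{\prvalu}(\prgoods)\ge 1$, fix a partition $(\RX_1,\dots,\RX_d)$ of $\prgoods$ with $\prvalu(\RX_j)\ge 1$ for every $j$. Now $\prauxval{\LM}$ is a valuation over the same goods $\prgoods$ and is again additive, with $\prauxval{\LM}(\{\prgood\})=f_\lambda\big(\prvalu(\{\prgood\})\big)$ for each good $\prgood$; hence this very partition already certifies
\[
	\prauxmms{\LM}{\prgoods}{d}\ \ge\ \min_{1\le j\le d}\ \sum_{\prgood\in\RX_j} f_\lambda\big(\prvalu(\{\prgood\})\big),
\]
so it remains to show $\sum_{\prgood\in\RX_j} f_\lambda\big(\prvalu(\{\prgood\})\big)\ge 1-3\lambda$ for a fixed $j$.

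Put $x_{\prgood}:=\prvalu(\{\prgood\})\in[0,1]$ for $\prgood\in\RX_j$; then $\sum_{\prgood\in\RX_j} x_{\prgood}\ge 1$, and the task becomes a numerical claim about $\vF_\lambda$: for every finite family $(x_i)\subseteq[0,1]$ with $\sum_i x_i\ge 1$ one has $\sum_i f_\lambda(x_i)\ge 1-3\lambda$. I would prove this from the properties of $\vF_\lambda$ established earlier, using two of them: $\vF_\lambda$ fixes small single-good values (those at most $\tfrac13$) and lowers any value by at most $\lambda$, i.e. $f_\lambda(x)\ge x-\lambda$ for all $x\in[0,1]$ (and $f_\lambda\ge 0$). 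Call a good \emph{heavy} if $x_{\prgood}>\tfrac13$ and let $B$ be the number of heavy goods in $\RX_j$. Summing over all goods of $\RX_j$,
\[
	\sum_{\prgood\in\RX_j} f_\lambda(x_{\prgood})\ \ge\ \Big(\sum_{\prgood\in\RX_j} x_{\prgood}\Big)-B\lambda\ \ge\ 1-B\lambda,
\]
while keeping only the heavy goods,
\[
	\sum_{\prgood\in\RX_j} f_\lambda(x_{\prgood})\ \ge\ \sum_{\prgood\ \text{heavy}}\big(x_{\prgood}-\lambda\big)\ >\ B\big(\tfrac13-\lambda\big).
\]
If $B\le 3$, the first bound is already $\ge 1-3\lambda$; if $B\ge 4$, the second is $>\tfrac43-4\lambda\ge 1-3\lambda$, the last step using $\lambda\le\tfrac{4\alpha}{3}-1\le\tfrac13$ (as $\alpha\le 1$). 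This settles the numerical claim, and hence the lemma.

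The crux is the numerical claim; concretely, the non-routine part is to read off from the definition of $\vF_\lambda$ exactly the two features used above — where its ``kink'' sits and the uniform $\lambda$-loss bound — and to check that $3$ (rather than a smaller constant) is what the situation forces: the estimate is tight, witnessed by a part consisting of three goods of value $\tfrac13$ each. Once those features of $\vF_\lambda$ are in hand, the remaining steps (reusing the maximin partition and the light/heavy split) are routine.
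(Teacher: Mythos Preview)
Your overall plan (reuse an optimal partition and bound the loss in each part) is the same as the paper's, but the two properties of $\vF_\lambda$ you rely on are both false, and this breaks the argument.

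First, $\vF_\lambda$ does \emph{not} fix all values at most $\tfrac13$. From the definition, $\vF_\lambda(x)=x$ only for $x<\tfrac{\alpha}{3}-\lambda$, and with $\alpha=\tfrac{10}{13}$ this threshold is at most $\tfrac{10}{39}<\tfrac13$. For instance, at $x=\tfrac13$ one has $\vF_\lambda(\tfrac13)=\tfrac13-\lambda$, not $\tfrac13$. Second, and more seriously, the uniform bound $\vF_\lambda(x)\ge x-\lambda$ is false: in the third range the loss is up to $\tfrac{3\lambda}{2}$, and in the top range $[\,1-\tfrac{\alpha}{3}-\tfrac{\lambda}{2},\,1\,]$ the loss is up to $3\lambda$ for a single good. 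So your ``$1-B\lambda$'' bound is invalid; with $B=3$ the actual total loss could a priori be as large as $9\lambda$.

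The paper's proof uses the same partition-and-bound skeleton but with the correct thresholds. It defines $S=\{\prgood\in P_j:\prvalu(\{\prgood\})\ge \tfrac{\alpha}{3}-\lambda\}$, dispatches $|S|\ge 4$ via $4(\tfrac{\alpha}{3}-\lambda)\ge 1-3\lambda$ (this is exactly where $\lambda\le\tfrac{4\alpha}{3}-1$ is used), and for $|S|\le 3$ does a finer case split according to which of the ranges $[\tfrac{\alpha}{3}-\lambda,\,1-\tfrac{2\alpha}{3})$, $[1-\tfrac{2\alpha}{3},\,1-\tfrac{\alpha}{3}-\tfrac{\lambda}{2})$, $[1-\tfrac{\alpha}{3}-\tfrac{\lambda}{2},\,1]$ the heavy goods fall into, exploiting both the per-range loss caps ($\lambda$, $\tfrac{3\lambda}{2}$, $3\lambda$) and the per-range floor values of $\vF_\lambda$. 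That case analysis is what your simple counting argument was trying to shortcut, and it cannot be avoided given the actual shape of $\vF_\lambda$.
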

\begin{proof}
	Since $\mms^d_{\prvalu}(\prgoods) \geq 1$, we can partition $\prgoods$ into $(P_1, \dots, P_d)$  such that each subset $P_j$ satisfies $\prvalu(P_j) \geq 1$. It suffices to show that  for every $1\leq j \leq d$,  
	$
	\praux{\LM}{P_j} \geq 1 - 3\lambda,
	$
	which directly implies $\prauxmms{\LM}{\hat M}{d} \geq 1 - 3\lambda$.
	Let $S \;=\;\{\,\prgood\in P_j\ \mid\,\prvalu(\{\prgood\})\ge\tfrac{\alpha}{3}-\lambda\}.$\\
	If ${\lvert S\rvert \ge4}$, we have 
	\begin{align*}
		\praux{\LM}{P_j} 
		&\;\ge\; 4\cdot\biggl(\frac{\alpha}{3} - \lambda\biggr) \\
		&\;=\; \frac{4\alpha}{3} - 4\lambda \\
		&\;\ge\; 1 - 3\lambda  & \lambda \le \frac{4\alpha}{3} - 1.
	\end{align*}
	Therefore, assume  ${\lvert S\rvert \le3}$. Note that for every good $\prgood \in P_j \setminus S$, we have $\prvalu(\{\prgood\})<\tfrac{\alpha}{3}-\lambda$, and thus $\LM(\prvalu(\{\prgood\}))=\prvalu(\{\prgood\})$. We consider two cases.
	\begin{itemize}
		\item
		\textbf{At least one good \( \prgood \) in $S$ has value at least $1 - \tfrac{\alpha}{3} - \tfrac{\lambda}{2}$:} 
		In particular, for this good, we have
		$
		\LM(\prvalu(\{\prgood\}))=\max(1 - \tfrac{\alpha}{3} - 2\lambda, \prvalu(\{\prgood\}) - 3\lambda)
		$,  
		therefore, the transformation \(\LM\) reduces the original value by at most  \( 3\lambda \).  
		Now If \( \prgood \) is the only good in \( S \), we get \( \praux{\LM}{P_j} \geq 1 - 3\lambda \).  Otherwise, if there is another good in \( P_j \) with value at least \( \tfrac{\alpha}{3} - \lambda \), then combining both goods ensures:  
		\begin{align*}
			\praux{\LM}{P_j} &\geq \Bigl(1 - \frac{\alpha}{3} - 2\lambda\Bigr) + \Bigl(\frac{\alpha}{3} - \lambda\Bigr) \\
			&= 1 - 3\lambda.
		\end{align*}
		
		Thus, in both cases, we have \( \praux{\LM}{P_j} \geq 1 - 3\lambda \).
		\item
		\textbf{All goods in $S$ have values below \( 1 - \tfrac{\alpha}{3} - \tfrac{\lambda}{2} \):} 	
		If there are at most two such goods in \( S \), the transformation \(\LM\) reduces their original values by at most  \( \tfrac{3\lambda}{2} \).
		Therefore the calibrated value satisfies \( \praux{\LM}{P_j} \geq 1 - 3\lambda \).  
		Now, suppose there are exactly three such goods with value at least  $\tfrac{\alpha}{3} - \lambda$. If at least one of them has a value of at least \( 1 - \tfrac{2\alpha}{3} \), then grouping it with the other two ensures:  
		\begin{align*}
			\praux{\LM}{P_j} 
			&\geq \Bigl(1 - \frac{2\alpha}{3} - \lambda\Bigr) + 2\Bigl(\frac{\alpha}{3} - \lambda\Bigr) \\
			&= 1 - 3\lambda.
		\end{align*}

		Otherwise, if all three goods have values below \( 1 - \tfrac{2\alpha}{3} \), the transformation \(\LM\) reduces their original values by at most \( \lambda \), leading to a total loss of at most \( 3\lambda \), which again guarantees \( \praux{\LM}{P_j} \geq 1 - 3\lambda \).  
	\end{itemize}
	Thus, in all cases, the bound holds.
\end{proof}

\begin{lemma}\label{lem:virtG-mms}
	Let $\prvalu$ be a valuation function on $\prgoods$ with $\mms^d_{\prvalu}(\prgoods)\ge 4(1-\alpha)$ and for all $\prgood \in \prgoods$ we have $\prvalu(\{\prgood\})\le 1$. Then $\prauxmms{\vG}{\prgoods}{d} \;\;\ge\; 4(2-\tfrac{7\alpha}{3}).$
\end{lemma}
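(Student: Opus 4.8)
The plan is to mirror the proof of Lemma~\ref{lem:virt-mms-before}. Since $\mms^d_{\prvalu}(\prgoods)\ge 4(1-\alpha)$, I would first fix a $d$-partition $(P_1,\dots,P_d)$ of $\prgoods$ with $\prvalu(P_j)\ge 4(1-\alpha)$ for every $j$. It then suffices to prove the per-part bound $\praux{\vG}{P_j}\ge 4\bigl(2-\tfrac{7\alpha}{3}\bigr)$, since this particular partition already witnesses $\prauxmms{\vG}{\prgoods}{d}\ge 4\bigl(2-\tfrac{7\alpha}{3}\bigr)$. Fix one part $P$ with $\prvalu(P)\ge 4(1-\alpha)$, write $\theta$ for the value below which $\vG$ is the identity (so every $\prgood\in P$ with $\prvalu(\{\prgood\})<\theta$ keeps its full value in $\praux{\vG}{P}$, just as for $P_j\setminus S$ in Lemma~\ref{lem:virt-mms-before}), and set $S=\{\prgood\in P\mid \prvalu(\{\prgood\})\ge\theta\}$. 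The arithmetic fact that drives everything is $4(1-\alpha)-4\lambda=4\bigl(2-\tfrac{7\alpha}{3}\bigr)$ with $\lambda=\tfrac{4\alpha}{3}-1\ge 0$: the hypothesis hands $P$ a surplus of exactly $4\lambda$ over the target, which is the ``loss budget'' available for $\vG$.

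I would then branch on $\lvert S\rvert$. If $\lvert S\rvert\ge 4$, then --- using, as in Lemma~\ref{lem:virt-mms-before}, that $\vG$ never drops a value $\ge\theta$ below $\theta$ --- the four goods of $S$ alone contribute $\praux{\vG}{P}\ge 4\theta\ge 4\bigl(2-\tfrac{7\alpha}{3}\bigr)$, the last step being the bound $\theta\ge 2-\tfrac{7\alpha}{3}$, which one verifies for the relevant threshold (for instance $\theta=1-\alpha$ with $\alpha\ge\tfrac34$). So assume $\lvert S\rvert\le 3$. Here I would replay the inner case analysis of Lemma~\ref{lem:virt-mms-before} with the breakpoints of $\vG$ in place of those of $\LM$: if some good of $S$ lies in the high range where $\vG$ performs its deepest cut, pair it with a second good of $S$ --- or argue directly if it is the only good of $S$ --- so that the transformed value of the pair is bounded below by a single clean expression, exactly the analogue of pairing the $\ge 1-\tfrac\alpha3-\tfrac\lambda2$ good with a $\ge\tfrac\alpha3-\lambda$ good in Lemma~\ref{lem:virt-mms-before}; otherwise every good of $S$ sits lower and is displaced by only a small, controlled amount, and these at most three displacements sum to at most $4\lambda$. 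In each branch one concludes $\praux{\vG}{P}\ge \prvalu(P)-4\lambda\ge 4(1-\alpha)-4\lambda=4\bigl(2-\tfrac{7\alpha}{3}\bigr)$.

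I expect the $\lvert S\rvert\le 3$ branch to be the real obstacle. A single good can already be displaced by nearly the full width of $\vG$'s deepest cut, which by itself consumes most of the $4\lambda$ budget, so the argument cannot merely add up per-good losses; it must, as in Lemma~\ref{lem:virt-mms-before}, group a deeply-cut good with a partner in $S$ and bound the pair in one shot. Enumerating the configurations of the at most three values in $S$ relative to the thresholds of $\vG$, choosing the right grouping in each, and checking that the finitely many resulting numeric inequalities hold for every admissible $\alpha$ (that is, $\alpha\ge\tfrac34$) is where the work lies; the reduction to the per-part claim and the $\lvert S\rvert\ge 4$ case are routine.
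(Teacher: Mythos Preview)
Your approach matches the paper's: fix a maximin partition, prove the per-part bound $\praux{\vG}{P}\ge 4\bigl(2-\tfrac{7\alpha}{3}\bigr)$, split on $\lvert S\rvert$ with $S=\{\prgood\in P:\prvalu(\{\prgood\})\ge\theta\}$, and use the budget identity $4(1-\alpha)-4\lambda=4\bigl(2-\tfrac{7\alpha}{3}\bigr)$ for $\lambda=\tfrac{4\alpha}{3}-1$. One small correction: the identity threshold of $\vG$ is $\theta=2-\tfrac{7\alpha}{3}$ itself, not $1-\alpha$, so the $\lvert S\rvert\ge 4$ step is an equality rather than an inequality needing $\alpha\ge\tfrac34$.

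The pairing analogy from Lemma~\ref{lem:virt-mms-before}, however, does not transfer as you sketch it, and this is where your proposed case split would fail. There, one high good plus one partner already hits the target $1-3\lambda$; here, one good in the top piece of $\vG$ (value $\ge 2-\tfrac{13\alpha}{6}$, transformed to $\ge 3-\tfrac{7\alpha}{2}$) plus one partner from $S$ (transformed to $\ge 2-\tfrac{7\alpha}{3}$) gives only $5-\tfrac{35\alpha}{6}$, strictly below $4\bigl(2-\tfrac{7\alpha}{3}\bigr)$ whenever $\alpha<\tfrac67$. The paper's split inside $\lvert S\rvert\le 3$ is instead first by cardinality: for $\lvert S\rvert\le 2$ the pure loss bound already suffices (each good loses at most $\tfrac{8\alpha}{3}-2=2\lambda$, so $\le 4\lambda$ total); only for $\lvert S\rvert=3$ does one further split on the count of goods in the top piece. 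With at most one such good the losses are $\le 2\lambda+2\cdot\lambda=4\lambda$; with at least two, the loss argument overshoots ($5\lambda$ or $6\lambda$) and one must instead sum all three transformed values, obtaining $2\bigl(3-\tfrac{7\alpha}{2}\bigr)+\bigl(2-\tfrac{7\alpha}{3}\bigr)=4\bigl(2-\tfrac{7\alpha}{3}\bigr)$ exactly. So the right dichotomy is not ``some high vs.\ all low'' but ``$\lvert S\rvert\le 2$, or $\lvert S\rvert=3$ with few high, vs.\ $\lvert S\rvert=3$ with many high''.
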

\begin{proof}
	Since $\mms^d_{\prvalu}(\prgoods) \geq 4(1-\alpha)$, there exists a partition $(P_1, \dots, P_d)$ of $\prgoods$ such that $\prvalu(P_j) \geq 4(1-\alpha)$ for each $j$. We aim to show that for every $P_j$, 
	$
	\praux{\vG}{P_j} \ge  4(2-\tfrac{7\alpha}{3})
	$
	which directly implies $\prauxmms{\vG}{\prgoods}{d} \geq 4(2-\tfrac{7\alpha}{3})$.
	Let $S \;=\;\{\,\prgood\in P_j\ \mid\,\prvalu(\{\prgood\})\ge2-\tfrac{7\alpha}{3}\}.$\\
	Note that if $\boldsymbol{\lvert S\rvert \ge4}$, we have:
	$
		\praux{\vG}{P_j} 
		\ge 4(2-\frac{7\alpha}{3}).
	$
	For the case that  $\boldsymbol{\lvert S\rvert \le3}$, by definition of $S$, there are at most three goods in $P_j$ with $\prvalu(\{\prgood\})\ge 2-\tfrac{7\alpha}{3}$, all other goods in $P_j$ have $\prvalu(\{\prgood\})<2-\tfrac{7\alpha}{3}$, and thus $\vG(\prvalu(\{\prgood\}))=\prvalu(\{\prgood\})$. If $\boldsymbol{\lvert S\rvert \le2}$, then the transformation \(\vG\) reduces their original values by at most $\tfrac{8\alpha}{3}-2$, therefore
	\begin{align*}
		\sum_{\prgood\in P_j} \praux{\vG}{\{\prgood\}}
		&\ge \sum_{\prgood\in P_j} v(\{\prgood\}) - 2\Bigl(\frac{8\alpha}{3}-2\Bigr) \\
		&\ge 4(1-\alpha) - 4\Bigl(\frac{4\alpha}{3}-1\Bigr) \\
		&=4\Bigl(2-\frac{7\alpha}{3}\Bigr).
	\end{align*}

	For $\boldsymbol{\lvert S\rvert =3}$ we consider two cases.
	\begin{itemize}
		\item
		\textbf{At most one good in $S$ has value at least $2-\tfrac{13\alpha}{6}$:}  
		Then by definition of $\vG$,the transformation \(\vG\) reduces the original value of one good by at most $\tfrac{8\alpha}{3}-2$, and two goods by at most $\tfrac{4\alpha}{3}-1$. Hence:
		\begin{align*}
			\sum_{\prgood\in P_j} \praux{\vG}{\{\prgood\}}
			&\ge \sum_{g\in P_j} v(\{\prgood\}) - \Bigl(\frac{8\alpha}{3}-2\Bigr)-2\Bigl(\frac{4\alpha}{3}-1\Bigr)\\
			&\ge 4(1-\alpha) - 4\Bigl(\frac{4\alpha}{3}-1\Bigr) \\
			&=4\Bigl(2-\frac{7\alpha}{3}\Bigr).
		\end{align*}
		
		\item
		\textbf{At least two goods in $S$ have value at least $2-\tfrac{13\alpha}{6}$:} 		
		Therefore there are two goods with value at least  $2-\tfrac{13\alpha}{6}$ and one good with value at least $2-\tfrac{7\alpha}{3}$, ensures: 
		\begin{align*}
			\sum_{\prgood\in P_j} \praux{\vG}{\{\prgood\}}
			&\ge 2\;\vG\Bigl(2-\frac{13\alpha}{6}\Bigr) + \vG\Bigl(2-\frac{7\alpha}{3}\Bigr)\\[1ex]
			&\ge 2\Bigl(3-\frac{7\alpha}{2}\Bigr) + \Bigl(2-\frac{7\alpha}{3}\Bigr) \\
			&= 4\Bigl(2-\frac{7\alpha}{3}\Bigr).
		\end{align*}
		
	\end{itemize}
	Thus $\prauxmms{\vG}{\prgoods}{d} \;\;\ge\; 4(2-\tfrac{7\alpha}{3})$, as desired.
\end{proof}


\begin{lemma}\label{lem:HA-mms-before}
	Let $\prgoods$ be a set of goods, $d$ be a constant, and let $\prvalu$ be a valuation function such that $\mms^d_{\prvalu}(\prgoods) \ge 1$ and for all $\prgood \in \prgoods$ we have $\prvalu(\{\prgood\})\le 1$. Then we have $\prauxmms{\vHA_\lambda}{\prgoods}{d}  \ge 1 - 2\lambda.$
\end{lemma}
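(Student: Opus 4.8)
The plan is to reuse the template of the proof of Lemma~\ref{lem:virt-mms-before}. Since $\mms^d_{\prvalu}(\prgoods)\ge 1$, fix a partition $(P_1,\dots,P_d)$ of $\prgoods$ with $\prvalu(P_j)\ge 1$ for every $j$. The calibration operator acts on each good separately, so $(\vHA_\lambda\star\prvalu)$ is additive, and it suffices to prove, for each fixed $j$, that $\praux{\vHA_\lambda}{P_j}=\sum_{\prgood\in P_j}\vHA_\lambda(\prvalu(\{\prgood\}))\ge 1-2\lambda$; the same partition then witnesses $\prauxmms{\vHA_\lambda}{\prgoods}{d}\ge 1-2\lambda$. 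Fix one part $P_j$ from here on.

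Two facts about $\vHA_\lambda$ drive the argument: it is the identity on values below its lowest breakpoint, and it never lowers a value by more than $2\lambda$, with the sharper guarantee that a value in its lower active band loses at most $\lambda$ while only values in the top band (those close to $1$) can lose the full $2\lambda$. Set $S=\{\prgood\in P_j:\prvalu(\{\prgood\})\ge c\}$, where $c$ is $\vHA_\lambda$'s lowest breakpoint shifted down by $\lambda$, so that every $\prgood\in P_j\setminus S$ contributes its value unchanged and the entire loss on $P_j$ is borne by the goods of $S$.

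I would then split on $|S|$, exactly as in Lemma~\ref{lem:virt-mms-before}. If $|S|$ is large, the guaranteed post-calibration contributions of the goods of $S$ already sum to at least $1-2\lambda$ and we are done. Otherwise $S$ has only boundedly many goods — at most two or three, the bound coming from $\prvalu(\{\prgood\})\le 1$, $\prvalu(P_j)\ge 1$ and the spacing of $\vHA_\lambda$'s bands — and I run a secondary case analysis on their magnitudes in the usual pattern: when a good of $S$ would suffer the worst-case loss $2\lambda$ it sits in the top band, hence is heavy enough that pairing it with a second good of value $\ge c$ (which loses at most $\lambda$) still clears the bound; when every good of $S$ lies in the lower band each loses at most $\lambda$, so up to two of them cost at most $2\lambda$; and the mixed sub-case — one top-band good together with two lower-band goods — is closed using the exact value of the second breakpoint, precisely the role the ``$1-\tfrac{2\alpha}{3}$'' step plays in Lemma~\ref{lem:virt-mms-before}. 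In every case $\sum_{\prgood\in P_j}\vHA_\lambda(\prvalu(\{\prgood\}))\ge\prvalu(P_j)-2\lambda\ge 1-2\lambda$, which finishes the proof.

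The only real work is the bookkeeping in this last case analysis: choosing the threshold $c$ and the sub-cases so the arithmetic closes at precisely $2\lambda$, and in particular verifying the extremal sub-case where a single part meets the maximum possible number of active bands, where the exact location of the second breakpoint recovers the final sliver of the bound. I expect no idea beyond Lemma~\ref{lem:virt-mms-before}: the looser target $1-2\lambda$ versus $1-3\lambda$ is exactly the slack gained from $\vHA_\lambda$ having one fewer active band than $\vF_\lambda$. (If instead $\vHA_\lambda$ pointwise dominates $\vF_{2\lambda/3}$, the bound is immediate from Lemma~\ref{lem:virt-mms-before} together with monotonicity of $\mms$ in the valuation, but the direct argument avoids needing that comparison.)
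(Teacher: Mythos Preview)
Your overall plan is exactly the paper's: fix a maximin partition, show each part $P_j$ has calibrated value at least $1-2\lambda$, and split on the size of $S=\{\prgood\in P_j:\prvalu(\{\prgood\})\ge \tfrac{1}{2}-\lambda\}$.

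However, you are overcomplicating the case analysis because you have misread the shape of $\vHA_\lambda$. Unlike $\vF_\lambda$, the function $\vHA_\lambda$ has only \emph{one} breakpoint, at $\tfrac{1}{2}-\lambda$; there is no ``lower active band that loses at most $\lambda$'' and no ``second breakpoint''. The paper's proof is therefore a two-line dichotomy: if $|S|\ge 2$, then two goods in $S$ each have calibrated value at least $\tfrac{1}{2}-\lambda$, giving $\praux{\vHA_\lambda}{P_j}\ge 1-2\lambda$ directly; if $|S|\le 1$, then at most one good is touched at all and it loses at most $2\lambda$, so $\praux{\vHA_\lambda}{P_j}\ge \prvalu(P_j)-2\lambda\ge 1-2\lambda$. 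All the sub-cases you anticipate (mixed bands, three goods in $S$, using a second breakpoint to recover slack) simply do not arise. Your threshold ``lowest breakpoint shifted down by $\lambda$'' is also off; the natural choice is the breakpoint $\tfrac{1}{2}-\lambda$ itself, since below it $\vHA_\lambda$ is the identity.
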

\begin{proof}
	Since $\mms^d_{\prvalu}(\prgoods) \geq 1$, there exists a partition $(P_1, \dots, P_d)$ of $\prgoods$ such that $\prvalu(P_j) \geq 1$ for each $1\leq j \leq d$. We want to show that for every $1\leq j \leq d$,  
	$
	\praux{\vHA_\lambda}{P_j} \geq 1 - 2\lambda,
	$
	which directly implies $\prauxmms{\vHA_\lambda}{\prgoods}{d} \geq 1 - 2\lambda$.
	Let $S \;=\;\{\,\prgood\in P_j\ \mid\, \prvalu(\{\prgood\})\ge \tfrac{1}{2} - \lambda.\}$\\
	We consider two cases:
	\begin{itemize}
		\item $|S| < 2$: By definition of $\vHA_\lambda$, the transformation $\vHA_\lambda$ reduces the original value of one good by at most $2\lambda$. Hence:
		\begin{align*}
			\sum_{\prgood\in P_j} \praux{\vHA_\lambda}{\{\prgood\}}
			&\ge \sum_{\prgood\in P_j} v(\{\prgood\}) - 2\lambda\\
			&\ge 1 - 2\lambda.
		\end{align*}
		
		\item $|S| \ge 2$: In this case for each $\prgood \in S$  we have $\vHA_\lambda(\prvalu(\{\prgood\})) \ge \tfrac{1}{2} - \lambda$. Therefore:
		\begin{align*}
			\sum_{\prgood\in P_j} \praux{\vHA_\lambda}{\{\prgood\}}
			&\ge 2(\frac{1}{2} - \lambda)\\
			&= 1 - 2\lambda.
		\end{align*}
	\end{itemize}
\end{proof}

\begin{lemma}\label{lem:HB-mms-before}
	Let $\prgoods$ be a set of goods, $d$ be a constant, and let $\prvalu$ be a valuation function such that $\mms^d_{\prvalu}(\prgoods) \ge 4(1-\alpha)$ and for all $\prgood \in \prgoods$ we have $\prvalu(\{\prgood\})\le 1$. Then we have $\prauxmms{\vHB_\lambda}{\prgoods}{d}  \ge 4(1-\alpha) - 2\lambda.$
\end{lemma}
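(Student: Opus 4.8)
The plan is to mirror the proof of Lemma~\ref{lem:HA-mms-before}, with the guarantee $1$ replaced by $4(1-\alpha)$ and the breakpoint $\tfrac12-\lambda$ replaced by $2(1-\alpha)-\lambda$. Since $\mms^d_{\prvalu}(\prgoods)\ge 4(1-\alpha)$, first fix a partition $(P_1,\dots,P_d)$ of $\prgoods$ with $\prvalu(P_j)\ge 4(1-\alpha)$ for every $j$. Because $\prauxmms{\vHB_\lambda}{\prgoods}{d}\ge\min_{1\le j\le d}\praux{\vHB_\lambda}{P_j}$, it suffices to prove $\praux{\vHB_\lambda}{P_j}\ge 4(1-\alpha)-2\lambda$ for each fixed $j$.

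Set $S=\{\,\prgood\in P_j\mid \prvalu(\{\prgood\})\ge 2(1-\alpha)-\lambda\,\}$, and use the three properties of the calibration $\vHB_\lambda$ coming directly from its definition: $\vHB_\lambda$ is the identity on values below $2(1-\alpha)-\lambda$; $\vHB_\lambda(x)\ge x-2\lambda$ for every $x$; and $\vHB_\lambda(x)\ge 2(1-\alpha)-\lambda$ whenever $x\ge 2(1-\alpha)-\lambda$. If $|S|\le 1$, every good of $P_j$ outside $S$ keeps its value and the at most one good in $S$ loses at most $2\lambda$, so
\begin{align*}
	\praux{\vHB_\lambda}{P_j}
	&=\sum_{\prgood\in P_j}\praux{\vHB_\lambda}{\{\prgood\}}\\
	&\ge\sum_{\prgood\in P_j}\prvalu(\{\prgood\})-2\lambda\\
	&\ge 4(1-\alpha)-2\lambda .
\end{align*}
If $|S|\ge 2$, pick two goods $\prgood,\prgood'\in S$; each is mapped by $\vHB_\lambda$ to a value at least $2(1-\alpha)-\lambda$, so discarding the remaining goods,
\begin{align*}
	\praux{\vHB_\lambda}{P_j}
	&\ge\praux{\vHB_\lambda}{\{\prgood\}}+\praux{\vHB_\lambda}{\{\prgood'\}}\\
	&\ge 2\bigl(2(1-\alpha)-\lambda\bigr)\\
	&= 4(1-\alpha)-2\lambda .
\end{align*}
In both cases the bound holds for $P_j$, and taking the minimum over $j$ gives the claim.

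The only delicate point — and the one I would verify first — is that the numeric breakpoint $2(1-\alpha)-\lambda$ is exactly the ``half-guarantee minus $\lambda$'' built into the definition of $\vHB_\lambda$ (just as $\tfrac12-\lambda$ is for $\vHA_\lambda$), together with the three structural properties of $\vHB_\lambda$ quoted above. I expect no finer sub-case analysis to be needed here, in contrast to Lemma~\ref{lem:virtG-mms}: $\vHB_\lambda$ behaves like $\vHA_\lambda$, a simple clamp with a single breakpoint and per-good loss at most $2\lambda$, and the target $4(1-\alpha)-2\lambda$ sits ``one level down'' from $4(1-\alpha)$ in precisely the way $1-2\lambda$ sits below $1$. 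Should $\vHB_\lambda$ in fact carry an extra intermediate piece, the remedy is the familiar one used for $\vG$: subdivide the $|S|\in\{2,3\}$ case and bound the loss on each good piecewise, exactly as in the proof of Lemma~\ref{lem:virtG-mms}.
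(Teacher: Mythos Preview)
Your proof is correct and essentially identical to the paper's: both fix a maximin partition, set $S=\{\prgood\in P_j\mid \prvalu(\{\prgood\})\ge 2(1-\alpha)-\lambda\}$, and split into the cases $|S|\le 1$ (total loss at most $2\lambda$) and $|S|\ge 2$ (two goods already contribute $2(2(1-\alpha)-\lambda)$). Your anticipation that no finer sub-case analysis is needed is correct.
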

\begin{proof}
	Since $\mms^d_{\prvalu}(\prgoods) \geq 4(1-\alpha)$, there exists a partition $(P_1, \dots, P_d)$ of $\prgoods$ such that $\prvalu(P_j) \geq 4(1-\alpha)$ for each $1\leq j \leq d$. We want to show that for every $1\leq j \leq d$,  
	$
	\praux{\vHB_\lambda}{P_j} \geq 4(1-\alpha) - 2\lambda,
	$
	which directly implies $\prauxmms{\vHB_\lambda}{\prgoods}{d} \geq 4(1-\alpha) - 2\lambda$.
	Let $S \;=\;\{\,\prgood\in P_j\ \mid\,  \prvalu(\{\prgood\})\ge 2(1-\alpha) - \lambda \}.$\\
	We consider two cases:
	\begin{itemize}
		\item $|S| < 2$: By definition of $\vHB_\lambda$, the transformation $\vHB_\lambda$ reduces the original value of one good by at most $2\lambda$. Hence:
		\begin{align*}
			\sum_{\prgood\in P_j} \praux{\vHB_\lambda}{\{\prgood\}}
			&\ge \sum_{\prgood\in P_j} v(\{\prgood\}) - 2\lambda\\
			&\ge 4(1-\alpha) - 2\lambda.
		\end{align*}
		
		\item $|S| \ge 2$: In this case for each $\prgood \in S$  we have $\vHB_\lambda(\prvalu(\{\prgood\})) \ge 2(1-\alpha) - \lambda$. Therefore:
		\begin{align*}
			\sum_{\prgood\in P_j} \praux{\vHB_\lambda}{\{\prgood\}}
			&\ge 2(2(1-\alpha) - \lambda)\\
			&= 4(1-\alpha) - 2\lambda.
		\end{align*}
	\end{itemize}
\end{proof}
	\end{toappendix}
	
	\begin{abstract}
	The maximin share ($\MMS$) is the most prominent share-based fairness notion in the fair allocation of indivisible goods. Recent years have seen significant efforts to improve the approximation guarantees for $\MMS$ for different valuation classes, particularly for additive valuations. For the additive setting, it has been shown that for some instances, no allocation can guarantee a factor better than $1-\tfrac{1}{n^4}$ of maximin share value to all agents. However, the best currently known algorithm achieves an approximation guarantee of \( \tfrac{3}{4} + \tfrac{3}{3836} \) for $\MMS$.  
	In this work, we narrow this gap and improve the best-known approximation guarantee for $\MMS$ to \( \tfrac{10}{13} \).
\end{abstract}
	{\newpage\onehalfspacing\tableofcontents\newpage}
	\section{Introduction}

Fair allocation is a fundamental problem that spans multiple disciplines, including mathematics, social sciences, economics, and computer science. Given \( m \) goods and \( n \) agents, each agent has a valuation function \( \valu_{i} \) that assigns a non-negative value to every subset of goods. The goal is to allocate the goods fairly. In this paper, we focus on the setting where the valuations are \textit{additive}. 

What does it mean for an allocation to be fair? How can fairness be measured and ensured? These questions have been extensively studied for over eight decades. The foundation of modern fair division theory dates back to Hugo Steinhaus’ seminal work \cite{steinhaus1949division} in 1949, where he provided a mathematically rigorous definition of the \textit{cake-cutting problem}—a fair allocation problem involving a continuous, heterogeneous resource. Since then, numerous fairness criteria have been proposed, which can be broadly classified into two main categories:

\begin{itemize}
\item \textbf{Envy-based}: 	
Agents evaluate fairness by comparing their own bundle to either the entire bundle or a subset of another agent’s bundle. Examples include \emph{envy-freeness}~\cite{foley1966resource}, \emph{envy-freeness up to one good}~\cite{budish2011combinatorial}, and \emph{envy-freeness up to any good}~\cite{caragiannis2019unreasonable}.
	\item \textbf{Share-based}: An agent evaluates fairness based on the value they receive, independently of others' allocations. Examples include \emph{maximin share}~\cite{budish2011combinatorial} and \emph{proportionality}~\cite{steinhaus1949division}.  
\end{itemize}

In this paper, we focus on one of the most well-studied share-based fairness notions in recent years: the \textit{maximin share} (\(\MMS\)) \cite{budish2011combinatorial}.
Suppose  we aim to define a share-based notion of fairness by setting a threshold $\tau_i$ for each agent $\agent_i$ to determine whether their share is fair. A reasonable expectation is \( \tau _i\leq \tfrac{\valu_{i}(\Pgoods)}{n} \), since if all agents have similar valuation functions, guaranteeing a larger value to every agent would be impossible. This quantity, \( \valu_{i}(\Pgoods)/n \), is called the \textit{proportional share} and has been extensively studied in the literature of fair allocation \cite{steinhaus1949division}. 
When goods are \textit{divisible}, proportionality can always be guaranteed \cite{dubins1961cut}. However, with \textit{indivisible} goods, this is no longer the case. Consider a simple example: if there are two agents and one indivisible good, one agent will receive the good while the other gets nothing—far below their proportional share.

A natural alternative is the \textit{maximin share} (\(\MMS\)), which provides a more flexible fairness benchmark. To define \(\MMS\), consider a different way to set an upper bound for \( \tau_i \). We ask agent \( \agent_i \) to divide the goods into \( n \) bundles in a way that maximizes the value of the least valuable bundle. The value of this least valuable bundle is called the \textit{maximin share} (\(\MMS\))  value of agent \( \agent_i \).

By definition, an agent's maximin share value is always at most their proportional share. They coincide when an agent can partition the goods into \( n \) bundles of equal value.  Moreover, \(\MMS\) value serves as an upper bound for \( \tau_i \); if all agents have similar valuations, at least one agent receives a bundle worth at most their maximin share value. This naturally leads to a question:  
	\textit{Can we guarantee that every agent receives a bundle which she values as much as her maximin share value?}

Unfortunately, the answer to this question is negative; there exist instances where no allocation can ensure that every agent receives a bundle with value at least as their maximin share value \cite{procaccia2014fair}. However, unlike proportionality, there always exist allocations that guarantee each agent a constant fraction of their maximin share value. 

Over the past decade, significant efforts have been made to improve approximation guarantees for the maximin share problem in the additive setting  \cite{procaccia2014fair,barman2020approximation,ghodsi2018fair,amanatidis2017approximation,garg2020improved,akrami2023simplification,akrami2024breaking}. A \( (\mmsfrac{1}{2}) \)-approximation guarantee is easy to achieve, but the first nontrivial bound of \( \mmsfrac{2}{3} \) was introduced by Procaccia and Wang \cite{procaccia2014fair}. This was later improved to \( \mmsfrac{3}{4} \) by Ghodsi \etal~\cite{ghodsi2018fair}. Subsequent work slightly improved this bound to $ (\mmsfrac{3}{4} + \mmsfrac{1}{12n})$-$\MMS$ and $(\mmsfrac{3}{4}+\min(\mmsfrac{1}{36},\mmsfrac{3}{16n-4}))$-$\MMS$ \cite{garg2020improved,akrami2023simplification}, but no breakthrough occurred until the recent work of Akrami and Garg \cite{akrami2024breaking}, who improved the approximation guarantee to \( \mmsfrac{3}{4} + \mmsfrac{3}{3836} \). 

Although the result of Akrami and Garg \cite{akrami2024breaking} breaks the \( (\mmsfrac{3}{4}) \)-approximation barrier, the improvement remains small. In this paper, we  advance the frontier by proving that a \( (\mmsfrac{10}{13}) \)-approximation of \(\MMS\) can always be guaranteed for all agents (See \Cref{fig:figure1}).

\begin{figure}
	\centering
	\includegraphics[width=0.75\linewidth]{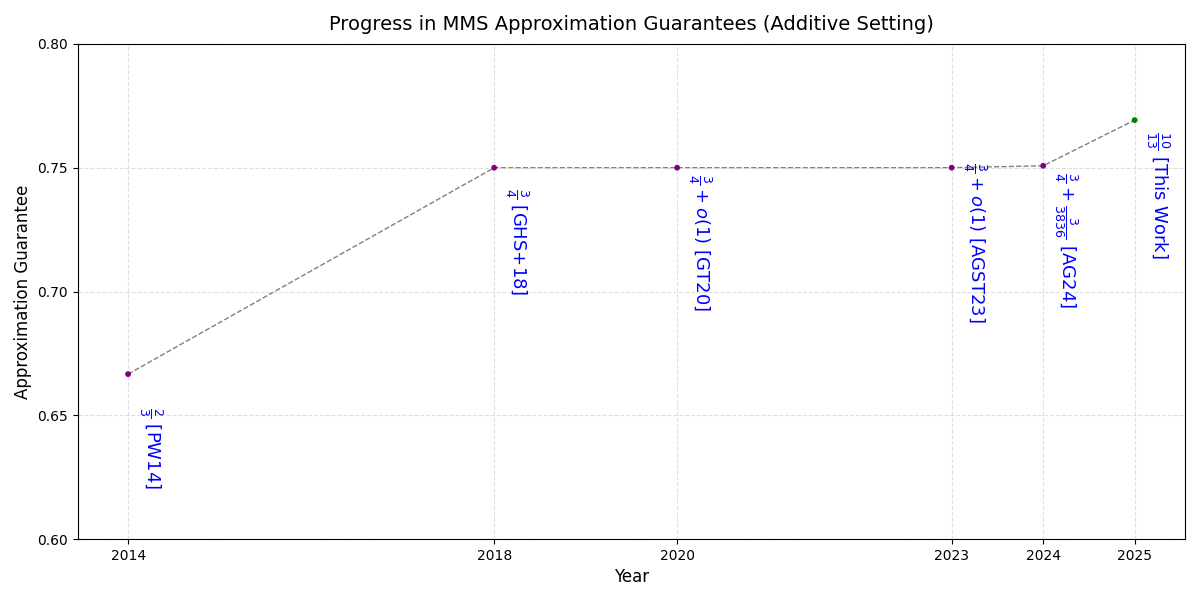}
	\caption{Recent progress on approximation guarantees for $\MMS$ in the additive setting.}
	\label{fig:figure1}
\end{figure}

	\subsection{Further Related Work}  

Much of the work on $\MMS$ approximations for \textit{additive valuations} has been discussed in the introduction. Here, we briefly mention additional results and focus on broader settings. 

On the impossibility side, $\MMS$ allocations do not always exist~\cite{procaccia2014fair}. 
Kurokawa et al.~\cite{kurokawa2016can} showed this even when $m \le 3n + 4$, 
and Feige et al.~\cite{feige2021tight} proved an upper bound of $1 - \mmsfrac{1}{n^4}$ 
and also established a bound of $\mmsfrac{39}{40}$ for three agents.

Beyond additive valuations, $\MMS$ has been studied under \textit{submodular, fractionally subadditive, and subadditive valuations}.  
For submodular valuations, Barman and Krishnamurthy~\cite{barman2020approximation} initiated this direction with a $0.21$-approximation. Ghodsi et al.~\cite{ghodsi2018fair} later improved the bound to $\mmsfrac{1}{3}$, and more recently, Uziahu and Feige~\cite{uziahu2023fair} achieved a $\mmsfrac{10}{27}$-approximation. The best-known upper bound remains $\mmsfrac{3}{4}$~\cite{ghodsi2018fair}.  For fractionally subadditive valuations, Ghodsi et al.~\cite{ghodsi2018fair} provided an initial $\mmsfrac{1}{5}$ approximation with an upper bound of $\mmsfrac{1}{2}$. Seddighin and Seddighin~\cite{seddighin2024improved} improved this to $\mmsfrac{1}{4.6}$, and Akrami et al.~\cite{akrami2023randomized} further improved the bound to $\mmsfrac{3}{13}$.  
For subadditive valuations, Ghodsi et al.~\cite{ghodsi2018fair} proved an $\Omega(\mmsfrac{1}{ \log m })$ approximation. Seddighin and Seddighin~\cite{seddighin2024improved} improved the lower bound to $\Omega(\mmsfrac{1}{\log n \log \log n})$. Subsequently, Feige and Huang~\cite{feige2025concentration} improved the approximation to $\Omega(\mmsfrac{1}{\log n})$, which was further improved  to $\Omega(\mmsfrac{1}{\log \log n^2})$ by Seddighin and Seddighin~\cite{seddighin2025beating}, and to $\Omega(\mmsfrac{1}{\log \log n})$ by Feige~\cite{feige2025multi}.

For a \textit{small number of agents or goods}, $\MMS$ allocations exist for two agents. For three agents, successive improvements have raised the best-known approximation guarantee from $\mmsfrac{7}{8}$~\cite{amanatidis2017approximation} to $\mmsfrac{8}{9}$~\cite{gourves2019maximin} and later to $\mmsfrac{11}{12}$~\cite{feige2022improved}. Ghodsi et al.~\cite{ghodsi2018fair} established a $\mmsfrac{4}{5}$ guarantee for four agents. Further existence results hold when $m \leq n+3$~\cite{amanatidis2017approximation} or $m \leq n+5$~\cite{feige2021tight}. Recently, Garg and Shahkar~\cite{garg2025improved} provided better  guarantees for two and three types of agents.

For \textit{chores} (undesirable goods), Aziz et al.~\cite{aziz2017algorithms} extended the definition of \(\MMS\) and provided a \(2\)-approximation, which was improved to \(\mmsfrac{4}{3}\) by Barman and Krishnamurthy~\cite{barman2020approximation} and further to \(\mmsfrac{11}{9}\) by Huang and Lu~\cite{huang2021algorithmic}, and Huang et al.~\cite{huang2023reduction} subsequently provided a \(\mmsfrac{13}{11}\)-approximation for chores.
\(\MMS\) guarantees have also been studied in \textit{ordinal settings}~\cite{hosseini2022ordinal,akrami2024improving, garg2024improved} and in \textit{weighted} \(\MMS\)~\cite{aziz2019weighted, farhadi2019fair}, for which Wang, Li, and Lu~\cite{ijcai2024p334} achieved an \(O(\log n)\)-approximation.

	\section{Basic Notations} 
\label{sec:prelim}
We denote the input instance of our $\MMS$ allocation Algorithm by $\Pinstance =  (\Pagents, \Pgoods )$, where $\Pagents$ is the set of agents and $\Pgoods$ is the set of goods. Also, we have $|\Pagents| = \Pnumb$ and $|\Pgoods| = \Pit$.  For each agent \( \agent_i \), we denote their valuation function by \( \valu_\agenti: {2^{\Pgoods}} \to \mathbb{R}^{\geq 0} \), which assigns a non-negative value to every subset of goods. We assume valuations are additive, meaning that for any disjoint subsets $S, T \subseteq \Pgoods$,  
$
\valu_\agenti(S \cup T) = \valu_\agenti(S) + \valu_\agenti(T).
$  
Thus, the valuation of a subset $S$ simplifies to  
$
\valu_\agenti(S) = \sum_{\Pgood \in S} \valu_\agenti(\{\Pgood\}).
$

A key assumption we make is that the input instance is \textbf{ordered}, meaning all agents rank the goods in the same order according to their values. Barman and Krishnamurthy \cite{barman2020approximation} showed that any $\MMS$ allocation instance with additive valuations can be reduced to an ordered instance.

\begin{theorem}[Theorem 3.2 of \cite{barman2020approximation}(Restated)]
	\label{order_reduction}
	For every instance $\prinstance$, there exists an ordered instance $\seinstance$ such that any $\alpha$-$\MMS$ allocation for $\seinstance$ can be converted into an $\alpha$-$\MMS$ allocation for $\prinstance$.
\end{theorem}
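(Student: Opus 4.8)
The plan is to construct $\seinstance$ by handing every agent a common, sorted list of abstract goods whose singleton values are that agent's own singleton values in non-increasing order, and then to recover an allocation for $\prinstance$ from any allocation for $\seinstance$ via a greedy ``take your most-valued remaining real good'' procedure.

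First I would define the ordered instance. Write $\prinstance = (\pragents, \prgoods)$ with $|\pragents| = \prnumb$ and $|\prgoods| = \prit$, keep the agent set unchanged ($\seagents = \pragents$), and introduce fresh goods $\segoods = \{\segood_1, \dots, \segood_{\prit}\}$. For each agent $\agent_i$, fix a permutation $\pi_i$ of $[\prit]$ with $\prvalui{i}(\{\prgood_{\pi_i(1)}\}) \ge \prvalui{i}(\{\prgood_{\pi_i(2)}\}) \ge \cdots$, and set $\sevalui{i}(\{\segood_k\}) := \prvalui{i}(\{\prgood_{\pi_i(k)}\})$, extended additively. By construction each $\sevalui{i}$ is non-increasing in $k$, so all agents rank $\segood_1, \dots, \segood_{\prit}$ in the identity order and $\seinstance$ is ordered.

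Next I would observe that the transformation preserves maximin shares. Since valuations are additive, the value $\mms^{\prnumb}_{v}(M)$ of an $\prnumb$-partition instance depends only on the multiset $\{v(\{\prgood\}) : \prgood \in M\}$ of singleton values, because a partition of the goods into bundles is just a partition of this multiset of numbers into $\prnumb$ parts. Agent $\agent_i$ has exactly the same multiset of singleton values in $\prinstance$ and in $\seinstance$ (we only relabeled goods), hence $\mms^{\prnumb}_{\sevalui{i}}(\segoods) = \mms^{\prnumb}_{\prvalui{i}}(\prgoods)$ for every agent.

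Finally comes the conversion. Given an $\alpha$-$\MMS$ allocation $(B_1, \dots, B_{\prnumb})$ of $\seinstance$, so that $\sevalui{i}(B_i) \ge \alpha \cdot \mms^{\prnumb}_{\sevalui{i}}(\segoods)$, I process $\segood_1, \segood_2, \dots, \segood_{\prit}$ in increasing order of index; when $\segood_k$ is processed and $\segood_k \in B_i$, I award agent $\agent_i$ the good of $\prgoods$ not yet allocated that she values most. Since $|\prgoods| = |\segoods|$, this single greedy pass outputs an allocation of $\prinstance$. Fix an agent $\agent_i$ with $B_i = \{\segood_{k_1}, \dots, \segood_{k_t}\}$ where $k_1 < \cdots < k_t$: at the step handling $\segood_{k_\ell}$, only $\segood_1, \dots, \segood_{k_\ell - 1}$ have been processed, so at most $k_\ell - 1$ real goods have been handed out and at least one of agent $\agent_i$'s $k_\ell$ most-valued real goods survives; thus she receives a real good worth at least $\prvalui{i}(\{\prgood_{\pi_i(k_\ell)}\}) = \sevalui{i}(\{\segood_{k_\ell}\})$. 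Summing over $\ell$ and using additivity, agent $\agent_i$'s bundle in $\prinstance$ is worth at least $\sevalui{i}(B_i) \ge \alpha \cdot \mms^{\prnumb}_{\sevalui{i}}(\segoods) = \alpha \cdot \mms^{\prnumb}_{\prvalui{i}}(\prgoods)$, so the converted allocation is $\alpha$-$\MMS$ for $\prinstance$. I expect the only delicate point to be exactly this counting step inside the conversion (together with the trivial bookkeeping when some $B_i$ is empty); both the ordered-ness of $\seinstance$ and the MMS invariance fall out immediately from additivity.
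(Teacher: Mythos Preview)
Your proof is correct and is exactly the standard argument from the cited source \cite{barman2020approximation}; the present paper does not give its own proof of this theorem but simply restates it as a known reduction. There is nothing to compare against here beyond noting that your construction of $\seinstance$ and the greedy conversion step match the original proof.
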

By \Cref{order_reduction}, we assume all agents rank the goods in a common order. Hence, we denote $\Pgoods = \langle \Pgood_1, \Pgood_2, \ldots, \Pgood_{\Pit} \rangle$, where the goods are sorted in non-increasing order of their values for all agents, i.e.,  for every agent $\agent_i \in \Pagents$ and every $ j < k \leq \Pit$, we have $\valu_\agenti(\{\Pgood_j\}) \geq \valu_\agenti(\{\Pgood_k\}).
$

Given a constant $d$, a valuation function $\valu$, and a set $S$ of goods, the {maximin share value} of $\valu$ with respect to $d$ and $S$ is defined as  
$$
\mms_{\valu}^d(S) = \max_{\langle \pi_1,\pi_2, \ldots, \pi_d \rangle  \in \Pi_d(S)} \min_{j = 1}^d \valu(\pi_j),
$$  
where $\Pi_d(S)$ is the set of all partitionings of $S$ into $d$ bundles. For an agent $\agent_i \in \Pagents$, we refer to $\allmms{\valu_\agenti}$ as her maximin share value. Our goal is to compute an allocation that guarantees each agent a constant-factor approximation of her maximin share value.
 In \Cref{def:appxMMS}, we formally define approximate maximin share allocations.

 \begin{definition}\label{def:appxMMS}
 	For a constant $\alpha$, we say
 	an allocation that allocates a distinct bundle $\allocation_\agenti$ to each agent $\agent_i$ is $\alpha$-$\MMS$, if for every agent $\agent_i$, $\valu_\agenti(\allocation_\agenti)\geq \alpha \allmms{\valu_{\agenti}}$.
 \end{definition}

Our goal in this paper is to prove the existence of a $(\tfrac{10}{13})$-$\MMS$ allocation. For this purpose, we set $\alpha = \tfrac{10}{13}$. A key property of the maximin share is that it is {scale-free}. That is, an agent’s maximin share depends only on her valuations, so multiplying or dividing all values by a constant factor does not affect the approximation guarantee of an allocation. Hence, we suppose without loss of generality that for every agent $\agent_{\agenti}$, their maximin share is scaled such that $\allmms{\valu_{\agenti}} = 1$. The goal is then to allocate each agent a bundle with value at least $\alpha$ according to their valuation.

\subsection{Algorithm Structure and Notation}
In \Cref{fig:flowchart}, we present a flowchart of our algorithm. As mentioned, the input instance is denoted by $\Pinstance = (\Pagents, \Pgoods)$. Our algorithm proceeds as follows:

\begin{itemize}
 \item \textbf{Primary Reductions:} We apply a set of primary reductions to the input instance $\Pinstance$. The output of this step is denoted by $\instance = (\agents, \goods)$, where $\agents \subseteq \Pagents$ is the remaining set of agents,  and $\goods \subseteq \Pgoods$ is the remaining set of goods.  

\item \textbf{Secondary Reductions for Case 1:}  After primary reductions, the algorithm branches into two cases. For the first case, we apply a set of secondary reductions. The output of these reductions is denoted by $\Sinstance = (\Sagents, \Sgoods)$, where
$\Sagents \subseteq \agents$ is the set of agents after the reductions, and  
$\Sgoods \subseteq \goods $ is the set of goods after the reductions.  
\end{itemize}

Afterwards, the algorithm executes a Bag-filling process tailored to each case. Further details on these reductions and Bag-filling procedures are provided in \Cref{sec:resultstech}.
For convenience, we assume the following notations for the instances:  
$|\Pagents| = \Pnumb$, $|\agents| = \numb$, and $|\Sagents| = \Snumb$ for the agents,  
$|\Pgoods| = \Pit$, $|\goods| = \It$, and $|\Sgoods| = \Sit$ for the goods.  

All three instances ($\Pinstance$, $\instance$, and $\Sinstance$) are assumed to be ordered. Specifically $\Sgoods = \langle \Sgood_1, \Sgood_2, \ldots, \Sgood_{\Sit} \rangle$,   $\goods = \langle \good_1, \good_2, \ldots, \good_{\It} \rangle$,  and $\Pgoods = \langle \Pgood_1, \Pgood_2, \ldots, \Pgood_{\Pit} \rangle$,  where the goods are sorted in non-increasing order of their values for all agents. Recall that for every agent $\agent_i$, we have $\mms_{\valu_i}^{\Pnumb}(\Pgoods)=1$. 
For convenience, for a function $\valu$, we denote  $\mms_{\valu}^{\numb}(\goods)$ by $\dotmms{\valu}$, and $\mms_{\valu}^{\Snumb}(\Sgoods)$ by $\ddotmms{\valu}$. 

To state and prove some of our lemmas and theorems in a general setting, we occasionally consider arbitrary instances, which we denote by \(\prinstance = (\pragents, \prgoods)\) and \(\seinstance = (\seagents, \segoods)\). Following our notational convention, we let \(|\pragents| = \prnumb\), \(|\prgoods| = \prit\), \(|\seagents| = \senumb\), and define \(\hatmms{\prvalu} = \mms_{\prvalu}^{\prnumb}(\prgoods)\) and \(\checkmms{\valu} = \mms_{\valu}^{\senumb}(\segoods)\).

\begin{figure}[t]
	\centering
	\tikzset{every picture/.style={line width=0.75pt}} 
	\scalebox{0.7}{
		\begin{tikzpicture}[x=0.75pt,y=0.75pt,yscale=-0.8,xscale=0.8]
			\draw  [color={rgb, 255:red, 128; green, 128; blue, 128 } ][fill={green!10}] (330,38) .. controls (330,33.58) and (333.58,30) .. (338,30) -- (486,30) .. controls (493.42,30) and (494,33.58) .. (494,38) -- (494,62) .. controls (494,66.42) and (490.42,70) .. (486,70) -- (338,70) .. controls (333.58,70) and (330,66.42) .. (330,62) -- cycle ;
			\draw    (420,67) -- (420,105) ;
			\draw [shift={(420,107)}, rotate = 270] [color={rgb, 255:red, 0; green, 0; blue, 0 }  ][line width=0.75]    (10.93,-3.29) .. controls (6.95,-1.4) and (3.31,-0.0) .. (0,0) .. controls (3.31,0.3) and (6.95,1.4) .. (10.93,3.29)   ;
			\draw  [color={rgb, 255:red, 128; green, 128; blue, 128 }  ][fill={blue!5} ] (330,117) .. controls (330,112.58) and (333.58,109) .. (338,109) -- (490,109) .. controls (494.42,109) and (498,112.58) .. (498,117) -- (498,187) .. controls (498,191.42) and (494.42,195) .. (490,195) -- (338,195) .. controls (333.58,195) and (330,191.42) .. (330,187) -- cycle ;
			\draw    (420,195) -- (420,235) ;
			\draw [shift={(420,237)}, rotate = 270] [color={rgb, 255:red, 0; green, 0; blue, 0 }  ][line width=0.75]    (10.93,-3.29) .. controls (6.95,-1.4) and (3.31,-0.3) .. (0,0) .. controls (3.31,0.3) and (6.95,1.4) .. (10.93,3.29)   ;
			\draw  [color={rgb, 255:red, 128; green, 128; blue, 128 } ][fill={green!10} ] (330,247) .. controls (330,242.58) and (333.58,239) .. (338,239) -- (490,239) .. controls (494.42,239) and (498,242.58) .. (498,247) -- (498,287) .. controls (498,291.42) and (494.42,295) .. (490,295) -- (338,295) .. controls (333.58,295) and (330,291.42) .. (330,287) -- cycle ;
			\draw    (420,295) -- (420,340) ;
			\draw [shift={(420,340)}, rotate = 270] [color={rgb, 255:red, 0; green, 0; blue, 0 }  ][line width=0.75]    (10.93,-3.29) .. controls (6.95,-1.4) and (3.31,-0.3) .. (0,0) .. controls (3.31,0.3) and (6.95,1.4) .. (10.93,3.29)   ;
			\draw  [fill={red!10}  ] (420,340) -- (510,399.5) -- (420,459) -- (330,399.5) -- cycle ;
			\draw    (510,399.5) -- (578,399.99) ;
			\draw [shift={(580,400)}, rotate = 180.41] [color={rgb, 255:red, 0; green, 0; blue, 0 }  ][line width=0.75]    (10.93,-3.29) .. controls (6.95,-1.4) and (3.31,-0.3) .. (0,0) .. controls (3.31,0.3) and (6.95,1.4) .. (10.93,3.29)   ;
			\draw    (670,458) -- (670,519) ;
			\draw [shift={(670,521)}, rotate = 270] [color={rgb, 255:red, 0; green, 0; blue, 0 }  ][line width=0.75]    (10.93,-3.29) .. controls (6.95,-1.4) and (3.31,-0.3) .. (0,0) .. controls (3.31,0.3) and (6.95,1.4) .. (10.93,3.29)   ;
			\draw    (670,450) -- (670,505) ;
			\draw [shift={(670,505)}, rotate = 270] [color={rgb, 255:red, 0; green, 0; blue, 0 }  ][line width=0.75]    (10.93,-3.29) .. controls (6.95,-1.4) and (3.31,-0.0) .. (0,0) .. controls (3.31,0.3) and (6.95,1.4) .. (10.93,3.29)   ;
			\draw    (580,615) -- (520,615) -- (450,615) -- (450,675) ;
			\draw [shift={(450,675)}, rotate = 270] [color={rgb, 255:red, 0; green, 0; blue, 0 }  ][line width=0.75]    (10.93,-3.29) .. controls (6.95,-1.4) and (3.31,-0.3) .. (0,0) .. controls (3.31,0.3) and (6.95,1.4) .. (10.93,3.29)   ;
			\draw    (280,615) -- (380,615) -- (380,650) -- (380,675) ;
			\draw [shift={(380,675)}, rotate = 270] [color={rgb, 255:red, 0; green, 0; blue, 0 }  ][line width=0.75]    (10.93,-3.29) .. controls (6.95,-1.4) and (3.31,-0.3) .. (0,0) .. controls (3.31,0.3) and (6.95,1.4) .. (10.93,3.29)   ;
			\draw  [color={rgb, 255:red, 128; green, 128; blue, 128 }  ][fill={green!10}  ] (331,685) .. controls (331,680.58) and (334.58,677) .. (339,677) -- (497,677) .. controls (501.42,677) and (505,680.58) .. (505,685) -- (505,739) .. controls (505,743.42) and (501.42,747) .. (497,747) -- (339,747) .. controls (334.58,747) and (331,743.42) .. (331,739) -- cycle ;
			\draw  [color={rgb, 255:red, 128; green, 128; blue, 128 }  ][fill={blue!5} ] (580,372) .. controls (580,360) and (588.95,352) .. (600,352) -- (740,352) .. controls (751.05,352) and (760,360) .. (760,372) -- (760,422) .. controls (760,443) and (751.05,452) .. (740,452) -- (600,452) .. controls (588.95,452) and (580,443) .. (580,432) -- cycle ;
			\draw    (190,580) -- (190,400) -- (330,399.5) ;
			\draw [shift={(190,580)}, rotate = 270] [color={rgb, 255:red, 0; green, 0; blue, 0 }  ][line width=0.75]    (10.93,-3.29) .. controls (6.95,-1.4) and (3.31,-0.3) .. (0,0) .. controls (3.31,0.3) and (6.95,1.4) .. (10.93,3.29)   ;
			\draw  [color={rgb, 255:red, 128; green, 128; blue, 128 } ][fill={green!10}] (585,515) .. controls (585,510) and (588.58,507) .. (593,507) -- (741,507) .. controls (745.42,507) and (749,511) .. (749,515) -- (749,529) .. controls (749,533) and (745.42,537) .. (741,537) -- (593,537) .. controls (588.58,537) and (585,533) .. (585,529) -- cycle ;
			\draw    (670,537) -- (670,580) ;
			\draw [shift={(670,580)}, rotate = 270] [color={rgb, 255:red, 0; green, 0; blue, 0 }  ][line width=0.75]    (10.93,-3.29) .. controls (6.95,-1.4) and (3.31,-0.0) .. (0,0) .. controls (3.31,0.3) and (6.95,1.4) .. (10.93,3.29)   ;
			\draw  [color={rgb, 255:red, 128; green, 128; blue, 128 }  ][fill={blue!5} ] (580,600) .. controls (580,588) and (588.95,580) .. (600,580) -- (740,580) .. controls (751.05,580) and (760,589) .. (760,600) -- (760,660) .. controls (760,671) and (751.05,680) .. (740,680) -- (600,680) .. controls (588.95,680) and (580,681) .. (580,660) -- cycle ;
			\draw  [color={rgb, 255:red, 128; green, 128; blue, 128 }  ][fill={blue!5}  ] (100,600) .. controls (100,588) and (108.95,580) .. (120,580) -- (260,580) .. controls (271.05,580) and (280,588) .. (280,600) -- (280,660) .. controls (280,671) and (271.05,680) .. (260,680) -- (120,680) .. controls (108.95,680) and (100,671) .. (100,660) -- cycle ;
			\draw (360,40) node [anchor=north west][inner sep=0.75pt]    {$\Pinstance \ =\ (\Pagents,\ \Pgoods)$};
			\draw (365,140) node [anchor=north west][inner sep=0.75pt]  [font=\small]  {$\reductiontype^0, \reductiontype^1, \reductiontype^2, \rstar^1{\displaystyle \textcolor[rgb]{0.82,0.01,0.11}{}}$};
			\draw (350,120) node [anchor=north west][inner sep=0.75pt]  [font=\small]  {${\textstyle \mathsf{Primary\ Reductions}}$};
			\draw (365,245) node [anchor=north west][inner sep=0.75pt]  [font=\small]  {$\instance \  = \  (\agents,\  \goods)$};
			\draw (350,270) node [anchor=north west][inner sep=0.75pt]  [font=\small]  {Define $\Nyek$ and $\Ndo$};
			\draw (375,390) node [anchor=north west][inner sep=0.75pt]    {$|\Nyek|\ \geq \frac{\Pnumb}{\sqrt{2}} \ $};
			\draw (600,397) node [anchor=north west][inner sep=0.75pt]  [font=\small]  {$\reductiontype^1, \reductiontype^2,\reductiontype^3, \reductiontype^4, \rstar^2{\displaystyle \textcolor[rgb]{0.82,0.01,0.11}{}}$};
			\draw (596,370) node [anchor=north west][inner sep=0.75pt]  [font=\small]  {${\textstyle \mathsf{Secondary\ Reductions}}$};
			\draw (147,595) node [anchor=north west][inner sep=0.75pt]  [font=\small]  {Bag-filling};
			\draw (632,595) node [anchor=north west][inner sep=0.75pt]  [font=\small]  {Bag-filling};
			\draw (261,380) node [anchor=north west][inner sep=0.75pt]   [align=left] {No};
			\draw (523,380) node [anchor=north west][inner sep=0.75pt]   [align=left] {Yes};
			\draw (580,620) node [anchor=north west][inner sep=0.75pt]  [font=\small]  {$B_{k}\!=\!\!\{\Sgood_{k}, \Sgood_{\Snumb+k},\Sgood_{3\Snumb-k+1}\}$};
			\draw (335,687) node [anchor=north west][inner sep=0.75pt][font=\small]    {$(\nicefrac{10}{13})$-$\mathsf{MMS}$ $ \mathsf{ Allocation}$};
			\draw (370,717) node [anchor=north west][inner sep=0.75pt][font=\small]    {\cref{thm:mms}};
			\draw (631,430) node [anchor=north west][inner sep=0.75pt]  [font=\small]  {\cref{algo:N11}};
			\draw (631,650.4) node [anchor=north west][inner sep=0.75pt]  [font=\small]  {\cref{algo:1}};
			\draw (143,650) node [anchor=north west][inner sep=0.75pt]  [font=\small]  {\cref{algo:N2}};
			\draw (121,620) node [anchor=north west][inner sep=0.75pt]  [font=\small]  {$ \ B_{k} \ =\ \{\good_{k} ,\ \good_{\numb+k}\}$};
			\draw (375,170) node [anchor=north west][inner sep=0.75pt]  [font=\small]  {\cref{thm:main}};
			\draw (618,512) node [anchor=north west][inner sep=0.75pt]  [font=\small]  {$\Sinstance \  = \  (\Sagents, \  \Sgoods)$};
		\end{tikzpicture}
	}  
	\caption{A flowchart of our algorithm.}
	\label{fig:flowchart}
\end{figure}

Finally, to simplify the analysis and presentation, we make a few standard assumptions about the input. First, to ensure that the indices of goods used in reductions or during the Bag-filling process do not exceed the total number of goods, we assume that the number of goods is at least \( 5n \).
 This can always be ensured by adding dummy goods that have value $0$ for all agents. Second, we assume that no good has a value greater than $1$ to any agent. This assumption is common in prior work and has been shown to be without loss of generality~\cite{ghodsi2018fair}. In fact, it is immediately justified by the first reduction we introduce.
  
\begin{remark}For the reader's convenience, we provide \cref{tab:mms-blank}, which summarizes notations that are frequently used throughout the paper.\end{remark}

\section{Highlights of Techniques} \label{sec:resultstech}
\subsection{Algorithmic Overview}

 Our approach follows the classical framework adopted in much of the prior work. This framework consists of two main components: \textbf{a set of reductions} and \textbf{a Bag-filling process}.  
The reduction phase focuses on allocating \emph{large} goods. Roughly speaking, a reduction identifies a small subset of goods (typically of size fewer than four) each valued at least $\alpha$ to an agent\footnote{We present this section under the assumption that the goal is to find an \(\alpha\)-\(\MMS\) allocation. While this is not the objective in prior studies, the assumption does not affect the description of the underlying ideas.} and (almost) preserving the maximin share value of the remaining agents for the remaining goods. If such a subset exists, the algorithm allocates it to that agent and reduces the problem to a smaller instance with fewer goods and agents. 

As a simple example, the most basic form of a reduction checks whether there exists a good valued at least \(\alpha\) by some agent. If so, we allocate it to that agent, remove both the good and the agent, and recurse on the rest. This works because removing one good and one agent does not decrease the maximin share values of the remaining agents for the remaining goods.
 Moreover, this reduction ensures that in the residual instance—where this reduction is no longer applicable—each good has a value of at most \( \alpha \) to every agent.  A similar---though more complex---principle applies to other reductions. These reductions impose a rich collection of upper bounds and structural constraints on the values that agents assign to various goods.

When no further reductions apply, the instance consists of goods that hold relatively low value for the agents. At this stage, the algorithm invokes a Bag-filling process to allocate the remaining goods. While the Bag-filling procedure can be intricate, its core idea draws inspiration from the classic moving-knife method in cake-cutting: starting with an empty bag, goods are added one by one until an agent among those still participating calls “STOP!”—indicating that the current bundle meets the targeted approximation guarantee of the agent’s maximin share.
The bundle is then given to that agent, and they are removed from the process. 

The core idea of the Bag-filling process is as follows: when a bundle is allocated to an agent who shouts "STOP!", this bundle must have value less than $\alpha$ for any agent who has not yet shouted. Our goal is to upper bound the value that each allocated bundle holds for the remaining agents. 

However, complications arise when multiple agents shout ``STOP!'' at the same time. If we give the bundle to one of them, the value of that bundle to the others may exceed $\alpha$ (or sometimes even $1$). Note that since no one shouted before the last good was added, this excess value is bounded by the value of a single remaining good.

To handle this, several key strategies can be employed. First, bags may be initialized with higher-value remaining goods to ensure these goods are evenly distributed and do not end up among the last goods added. Second, a priority order among agents can be used to break ties when multiple agents shout simultaneously. This priority typically favors agents who are more likely to face difficulties later in the Bag-filling process.

Together, these strategies—along with the bounds established through the reductions and a careful analysis of bundle values—yield \cref{thm:mms}, which is the main result of the paper. We note that deriving some of these bounds is highly nontrivial and relies heavily on the structure of the agents' maximin share partitions. 

\begin{restatable}{theorem}{maintheorem}\label{thm:mms}
	The allocation returned by \cref{alg:main} is $(\mmsfrac{10}{13})$-$\MMS$. 
\end{restatable}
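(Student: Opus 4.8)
The plan is to assemble \cref{thm:mms} from the stage-by-stage guarantees laid out in \cref{fig:flowchart}. Recall we have normalized so that $\mms^{\Pnumb}_{\valu_i}(\Pgoods)=1$ for every agent, so the target is to hand each agent a bundle of value at least $\alpha=\mmsfrac{10}{13}$. First I would invoke \cref{thm:main}: after the primary rules $\reductiontype^0,\reductiontype^1,\reductiontype^2,\rstar^1$ are applied exhaustively, every agent removed in this phase has already received a bundle worth at least $\alpha$, and the residual instance $\instance=(\agents,\goods)$ still satisfies $\dotmms{\valu_i}=\mms^{\numb}_{\valu_i}(\goods)\ge 1$ for every surviving agent $\agent_i$. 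The key consequence I would extract is that, since these rules no longer fire on $\instance$, the structural upper bounds collected in the reduction table hold: no single good is worth more than $\alpha$ to a surviving agent, no surviving agent overvalues the relevant top pairs/triples, and the matching-based rule $\rstar^1$ is infeasible. I then split $\agents$ into the green set $\Nyek=\{\agent_i:\valu_i(\{\good_{2\numb+1}\})\ge 1-\alpha\}$ and the red set $\Ndo$, and case on whether $|\Nyek|\ge \Pnumb/\sqrt2$.

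In the case $|\Nyek|\ge \Pnumb/\sqrt2$, I would first run the secondary reductions $\reductiontype^1,\dots,\reductiontype^4,\rstar^2$ to obtain $\Sinstance=(\Sagents,\Sgoods)$, using the calibrated-valuation machinery of \cref{lem:virt-mms-before}, \cref{lem:virtG-mms}, \cref{lem:HA-mms-before} and \cref{lem:HB-mms-before} --- instantiated with the transformations $\LM$, $\vG$, $\FJ$ and $\vHA_\lambda,\vHB_\lambda$ --- to certify that the reductions preserve the appropriately scaled maximin target for the surviving agents while satisfying everyone removed. I would then analyze the Bag-filling of \cref{algo:1}: each bag is initialized as $B_k=\{\Sgood_k,\Sgood_{\Snumb+k},\Sgood_{3\Snumb+1-k}\}$ and filled one good at a time, an agent shouting STOP once her bag reaches value $\alpha$, ties broken by the priority order that favors red agents. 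Two things must be shown: (i) the filling never runs out of goods, by a counting argument against the $\Snumb$-part maximin partition (each bag already holds three of the largest remaining goods); and (ii) a bag handed to an agent who shouted STOP has value at most $1$ to every still-unserved agent, i.e.\ at most $\alpha$ times that agent's maximin value --- the only overflow being a single last good, whose value is controlled by the good-size bounds together with the structure of the green agents' maximin partitions extracted from the reduction constraints.

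In the complementary case $|\Nyek| < \Pnumb/\sqrt2$, a $(1-1/\sqrt2)$-fraction of the surviving agents are red, i.e.\ satisfy $\valu_i(\{\good_{2\numb+1}\}) < 1-\alpha$. Here I would skip the secondary reductions, initialize $B_k=\{\good_k,\good_{\numb+k}\}$, and run \cref{algo:N2}. Because the red condition makes every filler good $\good_{2\numb+1},\good_{2\numb+2},\dots$ small for red agents, an overflowing bundle exceeds $\alpha$ by strictly less than $1-\alpha$ for a red agent and so stays below $1$; for the few surviving green agents, the initial pairs together with the classical ``a bag is worth at most one maximin share'' inequalities close the accounting. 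As before, non-exhaustion is checked separately by counting against the $\numb$-part maximin partition, and the priority order is used so that the agents who are hardest to satisfy are served while goods remain.

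The high-level case analysis is routine; the real content --- carried by \cref{thm:main} and the analyses of \cref{algo:1}, \cref{algo:N11} and \cref{algo:N2} --- is twofold. First, the modified reductions defer allocation and commit to a bundle only through a Hall/matching condition rather than allocating it immediately, so proving they still leave maximin value at least $1$ (or the scaled target) for every survivor requires reasoning about the \emph{shape} of each agent's maximin partition, not just its total value; this is exactly where the calibrated-valuation lemmas enter, and making their constants line up with $\alpha=\mmsfrac{10}{13}$ is the delicate part. Second, in the Bag-filling analysis the simultaneous-STOP situation forces a tight bound on how far a single last good can push a bundle past $\alpha$ for an agent who just missed shouting, and making this bound compatible with the $1/\sqrt2$ threshold in \emph{both} branches is precisely what pins down the constant $\mmsfrac{10}{13}$ --- I expect the red/green dichotomy and the exact choice of priority order to be the subtle ingredients there.
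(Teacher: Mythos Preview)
Your high-level two-case split and the role of the $\Pnumb/\sqrt{2}$ threshold are right, but there is a genuine gap at the very first step. You invoke \cref{thm:main} to conclude that every surviving agent has $\dotmms{\valu_i}\ge 1$ after the primary reductions. That is false: \cref{thm:main} is the deferred-matching lemma (it lets you reallocate the reduction bundles so as to prioritize $\Nyek$ or $\Ndo$); it says nothing about MMS preservation. And in fact $\rstar^1$ \emph{can} drop a surviving agent's maximin share strictly below $1$ --- this is exactly the content of \cref{lem:N-minus}, and handling that drop is the central difficulty the calibration functions are designed for. The paper does \emph{not} have a single invariant ``$\dotmms{\valu_i}\ge 1$''; instead it splits the green agents in Case~1 into four subgroups according to whether $\dotmms{\valu_i}\ge 1$ and whether $\ddotmms{\valu_i}$ (resp.\ $\ddotmms{\auxfun{\FJ}{\agenti}}$) meets its target, and each subgroup gets its own lemma (\cref{plus-one}, \cref{minus-one}, \cref{plus-two}, \cref{minus-two}). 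The same dichotomy ($\dotmms{\valu_i}\ge 1$ vs.\ $<1$) drives Case~2 for red agents (\cref{lem:N21J}, \cref{lem:N22J}). Your proposal collapses all of this.

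A second, related misconception is in the Bag-filling analysis. Your item~(ii) --- ``a bag handed out has value at most $1$ to every still-unserved agent'' --- is not the argument the paper runs for the \emph{same-color} agents. For a green agent in Case~1 (and a red agent in Case~2) the proof is a totals argument: one exhibits a calibrated valuation $\prvalu$ and checks the hypotheses of \cref{lem:way}, \cref{lem:way2} (resp.\ \cref{lem:N2bag}), which say that if the agent were never served then $\sum_k \prvalu(B_k)$ would fall strictly below $\Snumb\cdot\ddotmms{\prvalu}$ (resp.\ $\Jnumb\cdot\mms^{\Jnumb}_{\prvalu}$), a contradiction. The ``overflow by at most one good'' bound you describe \emph{is} used, but only in the cross-color lemmas (\cref{lem:N2-in-N1}, \cref{lem:N1-in-N2}), where the point is that bundles given to the prioritized color are worth $<\alpha$ to the unserved agent while bundles given to her own color are worth $<2\alpha$ (Case~1) or $<4\alpha-2$ (Case~2); combining these with $|\Nyek|\gtrless \Pnumb/\sqrt{2}$ is what pins down $\alpha\le(4+\sqrt{2})/7$ and ultimately $\alpha=\mmsfrac{10}{13}$.
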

\subsection{Techniques}

 While the overall approach shares similarities with that of Akrami and Garg~\cite{akrami2024breaking}, our method introduces several key improvements and novel insights, which can be summarized as follows. We note that the analysis of Akrami and Garg \cite{akrami2024breaking} is tight for their algorithm.  

\begin{itemize}  
	\item \textbf{Dynamic reductions:} In previous approaches, a reduction is typically defined by fixing the indices of the goods considered in the ordered list of goods. These indices remain static for each reduction. In contrast, we introduce more flexible reductions: we allow the index of the smallest allocated good (i.e., the good with the largest index) to be determined dynamically based on the input—specifically, we let it be as large as possible. Although this flexibility might seem minor, it plays a crucial role in uncovering useful patterns in the valuations of agents whose $\MMS$ values decrease during the reduction process, which in turn allow us to partially offset this decrease in later steps. 
	 Further details are provided in \Cref{sec:ideas-reduction}.
	
	\item \textbf{Deferred matching:} We introduce a more flexible bundle allocation strategy in the reduction phase. When multiple agents are eligible for a bundle, we initially assign it to one agent temporarily, but keep the option to reassign it later, after the reduction phase is complete. This deferred matching approach allows us to make more informed allocation decisions, based on agents’ valuations over the remaining goods. We elaborate on this in \Cref{sec:ideas-reduction}.
	
	\item \textbf{Calibration functions:} A key conceptual innovation in our approach is the introduction of \emph{calibration functions}. These functions streamline analysis and enable a more precise approach that improves the approximation guarantee. Unlike prior work, which often treated reductions as a black box—focusing only on the maximin share values after reduction—our method preserves additional structural details through these calibration functions. They capture how good values change during a reduction, allowing us to analyze complex cases with greater accuracy that would otherwise be challenging to handle. More detail is given in Section \ref{sec:contribute-adjfunc}.

	\item \textbf{Bundle initialization:} Another key difference from the approach of Akrami and Garg~\cite{akrami2024breaking} lies in how we initialize the bundles in the Bag-filling process. This modification leads to a stronger approximation guarantee in both cases we consider, especially in the second case. More detail is given in \Cref{sec:ideas-bag}.
\end{itemize}  

 Below, we discuss our techniques in more detail and highlight how they compare to previous approaches. We emphasize that, for clarity and ease of presentation, the notation and arguments in this section have been simplified and are not fully rigorous.

\subsection{Reductions}\label{sec:ideas-reduction}
As discussed earlier, a reduction simplifies the problem by allocating large goods. Previous studies have introduced several useful types of reductions, which here we denote by $\textsf{R}^0$ to $\textsf{R}^3$. 

Let us first review these reductions. Consider an ordered instance $\prinstance =  (\pragents, \prgoods )$ where $\prgoods = \langle\prgood_1,\prgood_2,\ldots, \prgood_{\prit} \rangle$. Reduction $\textsf{R}^0$ checks whether $\prgoodrat{1}$ is valued at least $\alpha$ by some agent. If so, we allocate $\prgoodrat{1}$ to that agent and recursively solve the problem for the remaining agents and goods. As mentioned earlier, it has been shown that allocating in this manner does not reduce the maximin share values of the remaining agents for the remaining goods. Consequently, any approximation guarantee achieved for the new instance also applies to the original instance. 

Reductions $\textsf{R}^1$, $\textsf{R}^2$, and $\textsf{R}^3$ follow a similar approach for specific subsets of goods. $\textsf{R}^1$ considers subset $\{\prgoodrat{\prnumb}, \prgoodrat{\prnumb+1}\}$, $\textsf{R}^2$ considers $\{\prgoodrat{2\prnumb-1}, \prgoodrat{2\prnumb}, \prgoodrat{2\prnumb+1}\}$, and $\textsf{R}^3$ considers
  $\{\prgoodrat{3\prnumb-2}, \prgoodrat{3\prnumb-1}, \prgoodrat{3\prnumb}, \prgoodrat{3\prnumb+1}\}$. Each rule checks whether these goods together have value at least $\alpha$ for some agent and, if so, allocates them accordingly and solves the problem recursively for the remaining goods and agents.

There are also two special reductions $\widetilde{\mathsf{R}}^1$ and $\widetilde{\mathsf{R}}^2$ introduced respectively by Garg and Taki \cite{garg2020improved}, and  Akrami and Garg \cite{akrami2024breaking}. These reductions check whether bundles \( \{\prgoodrat{1}, \prgoodrat{2\prnumb+1}\} \) and \( \{\prgoodrat{1}, \prgoodrat{2}\} \), respectively, are worth more than \( \alpha \) to some agent. If so, the bundle is allocated to that agent, and the agent is removed from the instance.  
What sets these reductions apart from the previous ones (which is why they are denoted by tilde) is that they may slightly decrease the maximin share value for some agents. However, they show that this decrease is limited and remains within a tolerable range.

\paragraph{Our reductions.} 
In this paper, we introduce a new reduction and refine the existing ones---\( \textsf{R}^0, \textsf{R}^1, \textsf{R}^2, \textsf{R}^3, \widetilde{\mathsf{R}}^1 \), and \( \widetilde{\mathsf{R}}^2 \). We refer to our set of reductions as \( \reductiontype^0, \reductiontype^1, \reductiontype^2, \reductiontype^3, \reductiontype^4, \rstar^1 \), and \( \rstar^2 \).
The modification we make to obtain \( \reductiontype^0, \reductiontype^1, \reductiontype^2, \reductiontype^3, \rstar^1 \), and \( \rstar^2 \) is simple yet effective: rather than fixing in advance which goods are used in each reduction, we determine the rightmost index dynamically, based on the instance: we allow its index to be shifted as far to the right as possible. For example, rule \( \reductiontype^1 \) identifies the largest index \( x > \prnumb \) (if it exists) such that set \( \{\prgoodrat{\prnumb}, \prgoodrat{x}\} \) has value \( \alpha \) for some agent.   
Or reduction \( \rstar^2 \) identifies the largest index \( x \) for which the bundle \( \{\prgoodrat{1}, \prgoodrat{x}\} \) is worth at least \( \alpha \) to some agent. If such an index exists, the instance is updated accordingly. Note that $\rstar^1,\rstar^2$ may slightly decrease the maximin share value for the remaining agents.  

Also, reduction \( \reductiontype^4 \) follows the same pattern as the previous rules. It identifies the largest index \( x > 4\prnumb \) (if it exists) such that set \( \{\prgoodrat{4\prnumb-3}, \prgoodrat{4\prnumb-2}, \prgoodrat{4\prnumb-1}, \prgoodrat{4\prnumb}, \prgoodrat{x}\} \) values at least $\alpha$ to some agent.  

Shifting the last index to the right as far as possible serves two main purposes.  For reductions \( \reductiontype^0 \) to \( \reductiontype^4 \), this modification helps establish a tighter bound on the value of the allocated bundle during the reduction process.   For $\rstar^1$ and $\rstar^2$, this adjustment reveals a key structural property: If applying one of these reductions causes an agent's maximin share value to drop below $1$, it must be because the agent values the selected bundle significantly—that is, $\prvalu_i(\{\prgoodrat{1}\}) + \prvalu_i(\{\prgoodrat{x}\}) > 1$. At the same time, since $\prgoodrat{x}$ is the rightmost good that could be included in the reduction, replacing it with $\prgoodrat{x+1}$ would not satisfy the reduction condition; thus, we also have $\prvalu_i(\{\prgoodrat{1}\}) + \prvalu_i(\{\prgoodrat{x+1}\}) < \alpha$. This implies a value gap of at least $1 - \alpha$ between $\prgoodrat{x}$ and $\prgoodrat{x+1}$ from the agent's perspective.

This value gap leads to an important consequence: If applying $\rstar^1$ causes an agent's maximin share value to drop below $1$, and considering that $\prvalu_i(\{\prgoodrat{1}\}) < \alpha$ (since $\reductiontype^0$ is not applicable) and $\prvalu_i(\{\prgoodrat{x}\}) < \alpha/3$ (since $\reductiontype^2$ is not applicable), we can deduce:
$
\prvalu(\{\prgoodrat{x}\}) > 1 - \alpha$ and $\prvalu(\{\prgoodrat{x+1}\}) < \tfrac{4\alpha}{3} - 1.
$
Therefore, the agent has no good valued in the interval $[\tfrac{4\alpha}{3} - 1, 1 - \alpha].$ A similar argument holds for reduction $\rstar^2$. These insights plays critical role in our analysis.

\paragraph{Deferred matching.} In the primary reductions, we apply the reductions with priority $\reductiontype^0 \succ \reductiontype^1 \succ \reductiontype^2 \succ \rstar^1$ until none is applicable. The obtained irreducible instance has $\numb$ agents.  We color the agents in $\Pagents$ (the original set of agents before reductions) into two categories: green or red. An agent is green if their value for $\goodrat{2\numb+1}$ (the good ranked $2\numb+1$ in the reduced instance) is at least $1 - \alpha$; otherwise, they are red. We then prioritize either green  or red agents based on their sizes:  if the number of green  agents in $\Pagents$ is at least $\tfrac{\Pnumb}{\sqrt{2}}$, we proritize red agents; otherwise, we prioritize green agents. This prioritization guides the Bag-filling process.

The intuition behind deferred matching is as follows: To achieve a better approximation guarantee, the prioritization must also be considered during the reduction process itself. At first glance, this may seem paradoxical since prioritization depends on the outcome of reduction sequence. However, we show that it is indeed achievable.
To address this, the allocations in the reduction sequence remain temporary—we may reallocate the bundle to another agent once the sequence of reductions is complete. Suppose we reach an irreducible instance  after \( \Pnumb -\numb \) reductions. At this point, we can determine whether each agent in $\Pagents$ is green or red based on their value for $\goodrat{2\numb+1}$. Now, we finalize the reduction by constructing a bipartite graph \( G \) as follows:  
\begin{itemize}
	\item For each bundle allocated during the reduction phase, we add a vertex to the first part of \( G \).  
	\item For each agent in $\Pagents$, we add a vertex to the second part of \( G \).  
	\item We draw an edge between the corresponding vertices of an agent and a bundle if the agent values that bundle at least \( \alpha \).  
\end{itemize}
Note that, by construction, the reduction sequence corresponds to a matching that saturates all bundle vertices. On the other hand we show that, by selecting a specific sequence of reductions, called perfect sequence of reductions, each maximum matching in \( G \) corresponds to a valid reduction sequence. Among all possible maximum matchings, we select one that maximizes the number of agents from the prioritized color (either red or green) . We prove that such a matching incorporates the prioritization into the reduction process. See \Cref{fig:mathc}.
\begin{figure}
	\begin{center}
\usetikzlibrary{positioning, shapes.geometric, arrows.meta}

	\begin{tikzpicture}[
		x=1.2cm,  
		y=1cm,
		bundle/.style={
			rectangle, 
			draw=black, 
			fill=blue!5, 
			minimum width=1.2cm,
			minimum height=1cm,
			font=\small
		},
		agent/.style={
			circle, 
			draw=black,
			minimum size=0.9cm,
			font=\small
		},
		agent1/.style={
			agent,
			fill=green!10
		},
		agent2/.style={
			agent,
			fill=red!10
		},
		edge/.style={
			->,
			>=Stealth,
			semithick,
			gray!60,
			shorten >=1pt,
			shorten <=1pt
		},
		reduction-edge/.style={
			->,
			>=Stealth,
			very thick,
			blue!50,
			shorten >=1pt,
			shorten <=1pt
		},
		optimal-edge/.style={
			->,
			>=Stealth,
			very thick,
			magenta!70,
			densely dashed,
			shorten >=1pt,
			shorten <=1pt
		}
		]
		
		\node[bundle] (B1) at (0,4) {$B_1$};
		\node[bundle] (B2) at (3,4) {$B_2$};
		\node[bundle] (B3) at (6,4) {$B_3$};
		\node[bundle] (B4) at (9,4) {$B_4$};
		\node[bundle] (B5) at (12,4) {$B_5$};
		
		\node[agent1] (A1) at (0,0) {$\agent_1$};
		\node[agent1] (A2) at (1.5,0) {$\agent_2$};
		\node[agent2] (A3) at (3,0) {$\agent_3$};
		\node[agent1] (A4) at (4.5,0) {$\agent_4$};
		\node[agent1] (A5) at (6,0) {$\agent_5$};
		\node[agent1] (A6) at (7.5,0) {$\agent_6$};
		\node[agent2] (A7) at (9,0) {$\agent_7$};
		\node[agent1] (A8) at (10.5,0) {$\agent_8$};
		\node[agent2] (A9) at (12,0) {$\agent_9$};
		
		\foreach \b/\a in {B1/A1, B1/A2, B1/A3,
			B2/A2, B2/A3, B2/A4,
			B3/A3, B3/A5, B3/A6,
			B4/A4, B4/A7, B4/A8,
			B5/A5, B5/A7, B5/A9}
		\draw[edge] (\b) -- (\a);
		
		\draw[reduction-edge] (B1) -- (A1);
		\draw[reduction-edge] (B2) -- (A4);
		\draw[reduction-edge] (B3) -- (A6);
		\draw[reduction-edge] (B4) -- (A8);
		\draw[reduction-edge] (B5) -- (A9);
		
		\draw[optimal-edge] (B1) -- (A1);  
		\draw[optimal-edge] (B2) -- (A3);  
		\draw[optimal-edge] (B3) -- (A5);  
		\draw[optimal-edge] (B4) -- (A7);  
		\draw[optimal-edge] (B5) -- (A9);  
		
	\end{tikzpicture}
\end{center}
\caption{Bipartite graph demonstrating: (1) Reduction matching (solid blue) covering all bundles $\{B_i\}_{i=1}^5$, and (2) A matching (dashed magenta) prioritizing red over green agents. In both matchings, $B_1$ is paired with agent $\agent_1$ and $B_5$ is paired with agent $\agent_9$.}
\label{fig:mathc}
\end{figure}

\subsection{Bag-filling}\label{sec:ideas-bag}
As mentioned earlier, there are two key components we can adjust in a Bag-filling process: how we initialize the bundles and how we prioritize agents when multiple agents simultaneously shout ``STOP." Based on the number of green and red agents, we run two different versions of Bag-filling:

\begin{itemize}
	\item \textbf{Case (i): The number of green agents is at least $\tfrac{\Pnumb}{\sqrt{2}} $.}  
After applying the primary reductions, we obtain the instance $\instance = (\agents, \goods)$. We then apply the secondary reductions with priority $\reductiontype^1 \succ \reductiontype^2 \succ \reductiontype^3 \succ \reductiontype^4 \succ \rstar^2 $ until we reach an irreducible instance $\Sinstance = (\Sagents, \Sgoods)$. Next, we create $\Snumb$ bags, where the $k$-th bag initially contains the goods $\{\Sgoodrat{k}, \Sgoodrat{\Snumb+k}, \Sgoodrat{3\Snumb-k+1}\}$.  
We then perform a Bag-filling process, giving priority to red agents.

	\item \textbf{Case (ii): Otherwise.}  
		We create \( \numb \) bags, where the \( k \)-th bag initially contains goods \( \{\good_k, \good_{\numb+k}\} \).  
		We then perform a Bag-filling process, giving priority to green agents.  
	\end{itemize}
	Note that some bags may already exceed the \( \alpha \) threshold for some agents before any additional goods are added.		

A key innovation of our algorithm  lies in how we initialize the bags in both cases. This distinction is especially crucial in the second case: while previous studies including ~\cite{akrami2024breaking} pair goods as \( \{g_k, g_{2\numb+1-k}\} \), we pair them as \( \{g_k, g_{\numb+k}\} \). Our pairing ensures a consistent ordinal ranking of bundles for all agents—for instance, \( \{g_1, g_{\numb+1}\} \) is more valuable than \( \{g_2, g_{\numb+2}\} \) for every agent. Though subtle, this modification plays an important role in the guarantee of our algorithm.

\subsection{Calibration Functions} \label{sec:contribute-adjfunc}
To facilitate and advance our analysis, we introduce a set of functions, which we call calibration functions, designed to systematically modify agents valuations. These functions do not affect the actual allocation process but serve as theoretical tools to simplify our arguments and also handle more complex cases.

One key challenge in our framework is that reductions $\rstar^1$ and $\rstar^2$ may reduce an agent’s maximin share, making it harder to guarantee that they receive a sufficiently valuable bundle during the Bag-filling steps. Calibration functions help address this issue by carefully modifying valuations so that the agent’s maximin share, when computed under the calibrated valuation, does not decrease. Moreover, these functions are designed to ensure that the maximin share value under the calibrated valuation stays sufficiently close to its original counterpart. This allows us to establish a meaningful lower bound on the value each agent receives in the final allocation.

These functions play an important role in providing more in-depth analysis of what happens in  reduction steps. Since calibration functions comprise a family of functions parameterized differently, we can define a suitable calibrated value for each agent. This allows us to perform more accurate analysis suitable for an agents valuations.
 Moreover, because the definition of calibration functions is deterministic, we always have access to both the original and the calibrated values, allowing us to employ both simultaneously for a more refined analysis.

For a formal definition and detailed properties of these functions, see \cref{sec:virtual-value}. Also, see \Cref{fig:plot-fh} and \Cref{fig:plot-wz} for a visual representation of these functions. 

\subsection{Organization of the Paper}
The remainder of the paper is organized as follows. In \Cref{sec:reduc}, we describe our reduction steps. \Cref{sec:virtual-value} introduces the calibration functions, along with key properties that are used throughout the paper. The formal proofs of the $\MMS$ bounds for calibrated valuations are deferred to the appendix.
\Cref{sec:mainalg} outlines the overall structure of our algorithm, including the initial reductions and introduces the two main cases. The first case is presented in \Cref{sec:alg-N1}, while \Cref{sec:alg-N2} addresses the second case.

To keep the presentation focused and accessible, we defer full proofs of Lemmas~\ref{plus-one}, \ref{minus-one}, \ref{plus-two}, \ref{minus-two}, \ref{lem:N21J}, and \ref{lem:N22J} to the appendix, and include only brief proof sketches in the main text. Furthermore, to aid intuition and readability, we include several supporting tables, figures, and examples. \Cref{table:main} summarizes which lemmas establish $\MMS$ guarantees for different subsets of agents. Detailed examples illustrating key steps of the algorithm and its analysis can be found in \Cref{sec:examples}. We also provide a table of notations in \Cref{sec:notationstable}, listing essential definitions used throughout the algorithm.

\section{Reductions} \label{sec:reduc}
Similar to most previous studies on maximin share, especially in the additive setting, our algorithm begins with a series of reductions. However, in the previous studies on maximin share in the additive setting, the reduction process was often treated as an unstructured procedure, where a set of reduction rules were applied sequentially with some priority over the reductions to the input until no further reduction is possible. In contrast, our approach introduces a more clever way to apply these reductions. Hence, here we need a more rigorous definition of a reduction. 

In this section, we consider a fixed ordered instance \( \prinstance = (\pragents, \prgoods) \) where the set of agents is \( \pragents = \{\pragent_1, \pragent_2, \ldots, \pragent_{\prnumb}\} \) and the set of goods is \( \prgoods = \{\prgood_1, \prgood_2, \ldots, \prgood_{\prit}\} \). For each agent \( \pragent_i \), let \( \prvalu_i \) denote the valuation function of $\pragent_i$. In what follows, we will provide our definitions based on this setup.

\textbf{Reduction patterns} are typically defined as subsets of indices in the sorted sequence of goods. Previous studies have introduced several fixed reduction patterns, such as:
\begin{align*}
	& \{1\},\\
	& \{\prnumb,\, \prnumb+1\},\\
	& \{2\prnumb-1,\, 2\prnumb,\, 2\prnumb+1\},\\
	& \{3\prnumb-2,\, 3\prnumb-1,\, 3\prnumb,\, 3\prnumb+1\},\\
	& \{4\prnumb-3,\, 4\prnumb-2,\, 4\prnumb-1,\, 4\prnumb,\, 4\prnumb+1\}.
\end{align*}
For instance, the reduction pattern \(\{\prnumb,\, \prnumb+1\}\) refers to the goods at positions \(\prnumb\) and \(\prnumb+1\) in the sorted list.
As seen above, these patterns are fully determined by the number of agents. In contrast, our approach introduces a more flexible notion of reduction patterns by allowing the last index to extend further to the right, depending on the values of the goods.
Specifically, we define a reduction pattern as \(\reductiontype^k\) where it has \textbf{static part}  
\[
\Srik{\prnumb}{\reductiontype^k}=\left\{k(\prnumb - 1) + 1,\, k(\prnumb - 1) + 2,\, \dots,\, k\prnumb\right\}
\]
and \textbf{dynamic index} \( x \geq k\prnumb+1 \), which is the largest index satisfying the condition that some agent values the set of goods  
$
\{\prgood_{k(\prnumb-1)+1}, \prgood_{k(\prnumb-1)+2}, \dots, \prgood_{k\prnumb}, \prgood_x\}
$
at least \(\alpha\). Note that, it might be the case that no such index exists. In that case, we say that $\reductiontype^k$ is \textbf{not applicable}.

In this paper, we introduce two additional reduction patterns, denoted by \(\rstar^1\) and \(\rstar^2\). These reductions extend the following fixed reduction patterns originally proposed by Garg and Taki~\cite{garg2020improved}, and Akrami and Garg~\cite{akrami2024breaking}:
$
\{1,\, 2\prnumb+1\},$ and  $\{1,\, 2\}.
$

As with earlier reduction patterns, we allow the last index in each set to shift dynamically based on the values of the goods. Specifically, we define \(\rstar^1,\rstar^2 \) as follows:

\begin{itemize}
	\item  For \(\rstar^1\), we set \textbf{static part} $\Srik{\prnumb}{\rstar^1}=\{1\}$, and \textbf{dynamic index} \(x \geq 2\prnumb+1\), which is the largest index such that some agent values the set \(\{\prgoodrat{1},\, \prgoodrat{x}\}\) at least \(\alpha\).
	\item For \(\rstar^2\), we set \textbf{static part} $\Srik{\prnumb}{\rstar^2}=\{1\}$, and \textbf{dynamic index} \(x \geq 2\), which is the largest index such that some agent values the set \(\{\prgoodrat{1},\, \prgoodrat{x}\}\) at least \(\alpha\).
\end{itemize}

Note that each reduction comes with a lower bound on its dynamic index.
These reductions play a central role in improving the approximation guarantee. The notion of \emph{inapplicability} naturally extends to \(\rstar^1\) and \(\rstar^2\) as well.

Now, we define a reduction in \cref{def:red}.
\begin{definition}\label{def:red}
	A \emph{reduction} is denoted by  
	$
	\reduction = \riki{\prinstance}{\reductiontype}{x}{\pragent_i}{\seinstance}
	$, where $\reductiontype \in \{\reductiontype^0,\,\reductiontype^1,\ldots,\,\reductiontype^4,\,\rstar^1,\,\rstar^2\}$ is a \emph{reduction pattern} and $x$ is dynamic index of $\reductiontype$ and \( \pragent_i \in \pragents \) is the agent to whom the reduction is applied.
	The resulting instance \( \seinstance = (\seagents, \segoods) \) is defined as:
	$
	\seagents = \pragents \setminus \{\pragent_i\},$ and $
	\segoods = \prgoods \setminus \{\prgood_k \mid k \in \Srik{\prnumb}{\reductiontype}\cup\{x\}\} .
	$
	Reduction \( 	\reduction \) is \textbf{valid} if the following conditions hold:
	\begin{itemize}
		\item 	$x$ satisfies the lower bound determined for the dynamic index of   $\reductiontype$,
		\item 	$
		\prvalu_\agenti\left(\{\prgood_k \mid k \in \Srik{\prnumb}{\reductiontype} \cup \{x\}\}\right) \geq \alpha,
		$ 
		\item $x=\prit$ or there is no agent  $\pragent_j \in \pragents $ such that	$\prvalu_j\left(\{\prgood_k \mid k \in \Srik{\prnumb}{\reductiontype}\cup\{x+1\}\}\right) \geq \alpha.
		$
	\end{itemize} 
\end{definition}

\Cref{reductionlem} follow from standard arguments used in previous studies~\cite{akrami2024breaking,akrami2023simplification,garg2019approximating,garg2020improved}, which show that $\reductiontype^0, \reductiontype^1, \reductiontype^2, \reductiontype^3$ and $\reductiontype^4$ do not decrease the maximin share value of the remaining agents over the remaining goods.

\begin{observation}\label{reductionlem}
	Let $\reduction=\riki{\prinstance}{\reductiontype}{x}{\pragent_i}{\seinstance}$ be a valid reduction, such that $\reductiontype \in \{\reductiontype^0,\reductiontype^1,\ldots,\reductiontype^4\}$. Then, for every agent $\pragent_j \in \seagents$, we have
	$
	\checkmms{\prvalu_{j}} \geq \hatmms{\prvalu_{j}}.
	$ 
\end{observation}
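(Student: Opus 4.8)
The plan is to prove \Cref{reductionlem} by first reducing the dynamic reduction $\reductiontype^k$ to its classical fixed-index form, and then running a bundle-surgery argument on an optimal MMS partition. Fix an agent $\pragent_j \in \seagents$ and write $\mu = \hatmms{\prvalu_j} = \mms^{\prnumb}_{\prvalu_j}(\prgoods)$; let $k \in \{0,1,\dots,4\}$ be such that $\reductiontype = \reductiontype^k$. By \Cref{def:red} the reduction removes the $k{+}1$ goods $T = \{\prgood_r : r \in \{k(\prnumb-1)+1,\dots,k\prnumb\}\cup\{x\}\}$ with $x \ge k\prnumb+1$, together with $\pragent_i$, so $\segoods = \prgoods \setminus T$ and $\senumb = \prnumb-1$. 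Since the goods are sorted and $x \ge k\prnumb+1$, we have $\prvalu_j(\{\prgood_x\}) \le \prvalu_j(\{\prgood_{k\prnumb+1}\})$; hence, starting from any $(\prnumb-1)$-partition of $\prgoods \setminus T'$, where $T' = \{\prgood_r : r \in \{k(\prnumb-1)+1,\dots,k\prnumb,k\prnumb+1\}\}$ is the classical reduction set, and replacing $\prgood_x$ by $\prgood_{k\prnumb+1}$ in its bundle only raises that bundle's value, so $\mms^{\prnumb-1}_{\prvalu_j}(\prgoods \setminus T) \ge \mms^{\prnumb-1}_{\prvalu_j}(\prgoods \setminus T')$. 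It therefore suffices to prove $\mms^{\prnumb-1}_{\prvalu_j}(\prgoods \setminus T') \ge \mu$.

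For the classical set $T'$, I would take an MMS partition $(\pi_1,\dots,\pi_{\prnumb})$ of $\prgoods$ for $\prvalu_j$ with $\prvalu_j(\pi_\ell)\ge\mu$ for all $\ell$. The $k\prnumb+1$ largest goods $\prgood_1,\dots,\prgood_{k\prnumb+1}$ lie in $\prnumb$ bundles, so by pigeonhole some bundle $\pi^*$ contains at least $k{+}1$ of them, say $\prgood_{a_1},\dots,\prgood_{a_{k+1}}$ with $a_1<\dots<a_{k+1}\le k\prnumb+1$; then $a_i \le k\prnumb+1-(k+1-i) = k(\prnumb-1)+i$, so $\prvalu_j(\{\prgood_{a_i}\}) \ge \prvalu_j(\{\prgood_{k(\prnumb-1)+i}\})$, i.e. the $i$-th largest of these ``donor'' goods weakly dominates the $i$-th largest good of $T'$ (whose goods are $\prgood_{k(\prnumb-1)+1},\dots,\prgood_{k\prnumb},\prgood_{k\prnumb+1}$). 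Using this index-by-index domination, for each good $r \in T'$ not already in $\pi^*$ I would swap $r$ into $\pi^*$ in exchange for an as-yet-unused donor good of value at least $\prvalu_j(\{r\})$: the bundle that previously held $r$ only gains value, hence stays at value $\ge\mu$, while we simply ignore $\pi^*$. After all swaps $T' \subseteq \pi^*$; discarding $\pi^*$ and distributing its remaining goods $\pi^* \setminus T' \subseteq \segoods$ arbitrarily among the other $\prnumb-1$ bundles yields a partition of $\prgoods \setminus T'$ into $\prnumb-1$ bundles, each still worth at least $\mu$ to $\pragent_j$. This gives $\checkmms{\prvalu_j} = \mms^{\prnumb-1}_{\prvalu_j}(\segoods) \ge \mu = \hatmms{\prvalu_j}$, as required; the case $k=0$ is the degenerate one in which $T'=\{\prgood_1\}$, $\pi^*$ is just the bundle containing $\prgood_1$, and no swap is needed.

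The step that requires care is the matching in the swap phase: one must pair the goods of $T'$ lying outside $\pi^*$ with donor goods inside $\pi^*$ of at least their value, while correctly handling goods of $T'$ that already sit inside $\pi^*$ (these should remain where they are, not be swapped out). The clean way is to set up a bipartite matching between $T'$ and $k{+}1$ suitably chosen goods of $\pi^*$ among $\prgood_1,\dots,\prgood_{k\prnumb+1}$ (chosen to contain $T'\cap\pi^*$), with a good of $\pi^*$ eligible for $r \in T'$ when its $\prvalu_j$-value is at least $\prvalu_j(\{r\})$; the index bound $a_i \le k(\prnumb-1)+i$ guarantees that for the $j$-th largest good $r_j$ of $T'$ at least $j$ of the chosen $\pi^*$-goods have value $\ge \prvalu_j(\{r_j\})$, which is exactly Hall's condition, so the matching exists. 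This is the point at which the specific static positions $k(\prnumb-1)+1,\dots,k\prnumb$ (the $k$ smallest among the top $k\prnumb$) and the constraint $x \ge k\prnumb+1$ on the dynamic index are genuinely used, and it is what reproduces the standard reasoning of \cite{garg2020improved,akrami2023simplification,akrami2024breaking,garg2019approximating}. Everything else — the monotonicity reduction to $T'$, the pigeonhole count, and the fact that adding goods to a bundle never decreases its additive value — is routine.
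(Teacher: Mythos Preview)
Your argument is correct and fills in what the paper leaves to the literature (the paper gives no proof of \Cref{reductionlem}, only citing \cite{akrami2024breaking,akrami2023simplification,garg2019approximating,garg2020improved}). The two-stage structure---first reducing the dynamic index $x$ to the classical $k\prnumb+1$ via ordered monotonicity, then running the pigeonhole-plus-swap argument on $T'$---is exactly the standard reasoning, with the clean extra observation that the dynamic version follows immediately from the classical one.

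One simplification worth noting: the Hall's-theorem step is more machinery than you need, and as written it does not quite guarantee that the matching fixes $T'\cap\pi^*$ pointwise (Hall on $T'\to D$ gives \emph{some} perfect matching, not one respecting that constraint). The direct route avoids this entirely. Once $D\supseteq T'\cap\pi^*$, the donors you actually use for swaps are $D\setminus T'$, and since $D\subseteq P=\{\prgood_1,\dots,\prgood_{k\prnumb+1}\}$ while $T'$ is exactly the bottom $k{+}1$ elements of $P$, every good of $D\setminus T'$ has index $\le k(\prnumb-1)$ and hence dominates every good of $T'$. Moreover $|D\setminus T'|=(k{+}1)-|T'\cap\pi^*|=|T'\setminus\pi^*|$. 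So \emph{any} bijection $T'\setminus\pi^*\to D\setminus T'$ is value-dominating, no Hall needed, and goods of $T'\cap\pi^*$ are never touched.
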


In contrast to rules \( \reductiontype^0 \) to \( \reductiontype^4 \),  reductions \( \rstar^1,\rstar^2 \) may decrease an agent's maximin share value. However, under certain conditions, these reductions also preserve the agents’ maximin share values. \cref{reductionlem2} introduces one of these conditions.

\begin{observation}\label{reductionlem2}
	Let $\prgoods$ be a set of goods, let $\prvalu$ be a valuation function on $\prgoods$, and let $\prgoodrat{x},\prgoodrat{y}$ be two distinct goods in $\prgoods$ such that $\prvalu(\{\prgoodrat{x},\prgoodrat{y}\}) \le \mms^{d}_{\prvalu}(\prgoods)$. Then $\mms^{d-1}_{\prvalu}\!\bigl(\prgoods \setminus \{\prgoodrat{x},\prgoodrat{y}\}\bigr) \ge \mms^{d}_{\prvalu}(\prgoods)$.
\end{observation}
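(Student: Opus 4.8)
The plan is to start from an optimal $d$-partition of $\prgoods$ witnessing the value $\mms^{d}_{\prvalu}(\prgoods)$, and surgically remove the two goods $\prgoodrat{x},\prgoodrat{y}$ to produce a $(d-1)$-partition of $\prgoods\setminus\{\prgoodrat{x},\prgoodrat{y}\}$ each of whose parts still has value at least $\mms^{d}_{\prvalu}(\prgoods)$; since the max-min value over $(d-1)$-partitions is at least the min value of any particular $(d-1)$-partition, this yields the claim. Concretely, fix a partition $(P_1,\dots,P_d)$ of $\prgoods$ with $\prvalu(P_j)\ge \mms^{d}_{\prvalu}(\prgoods)$ for all $j$.

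First I would split into cases according to where $\prgoodrat{x}$ and $\prgoodrat{y}$ land. If they lie in the same part, say $P_d$, then $(P_1,\dots,P_{d-1})$ is already a $(d-1)$-partition of the reduced good set with every part of value $\ge \mms^{d}_{\prvalu}(\prgoods)$, and we are done. Otherwise, \wLoG\ $\prgoodrat{x}\in P_{d-1}$ and $\prgoodrat{y}\in P_d$. Now merge the leftovers: form $Q=(P_{d-1}\setminus\{\prgoodrat{x}\})\cup(P_d\setminus\{\prgoodrat{y}\})$, and consider the $(d-1)$-partition $(P_1,\dots,P_{d-2},Q)$ of $\prgoods\setminus\{\prgoodrat{x},\prgoodrat{y}\}$. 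The parts $P_1,\dots,P_{d-2}$ are untouched, so each has value $\ge \mms^{d}_{\prvalu}(\prgoods)$. For $Q$, additivity gives
\begin{align*}
\prvalu(Q) &= \prvalu(P_{d-1}) + \prvalu(P_d) - \prvalu(\{\prgoodrat{x}\}) - \prvalu(\{\prgoodrat{y}\})\\
&\ge 2\,\mms^{d}_{\prvalu}(\prgoods) - \prvalu(\{\prgoodrat{x},\prgoodrat{y}\})\\
&\ge 2\,\mms^{d}_{\prvalu}(\prgoods) - \mms^{d}_{\prvalu}(\prgoods) = \mms^{d}_{\prvalu}(\prgoods),
\end{align*}
where the last inequality uses the hypothesis $\prvalu(\{\prgoodrat{x},\prgoodrat{y}\}) \le \mms^{d}_{\prvalu}(\prgoods)$. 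Hence every part of $(P_1,\dots,P_{d-2},Q)$ has value at least $\mms^{d}_{\prvalu}(\prgoods)$, so $\mms^{d-1}_{\prvalu}(\prgoods\setminus\{\prgoodrat{x},\prgoodrat{y}\}) \ge \mms^{d}_{\prvalu}(\prgoods)$.

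There is essentially no obstacle here—the statement is a clean ``merge two bins after deleting'' argument and the only input needed is additivity of $\prvalu$ plus the hypothesis bounding $\prvalu(\{\prgoodrat{x},\prgoodrat{y}\})$. The one point to state carefully is the trivial edge case where $d=1$ would be vacuous (the hypothesis already forces the reduced set to have the claimed value trivially, as its complement is empty), but for the intended use $d\ge 2$ and the merging step is the whole content. I would also remark that this is exactly the property that licenses applying $\rstar^1,\rstar^2$ whenever the selected pair is cheap enough for a particular agent, which is how it gets used downstream.
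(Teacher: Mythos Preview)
Your overall strategy is exactly the standard one the paper has in mind (the paper states this as an observation without proof), and the ``different parts'' case is handled correctly. However, the ``same part'' case as written has a small gap: if $\prgoodrat{x},\prgoodrat{y}\in P_d$ but $P_d$ contains other goods, then $(P_1,\dots,P_{d-1})$ is a partition of $\prgoods\setminus P_d$, not of $\prgoods\setminus\{\prgoodrat{x},\prgoodrat{y}\}$. The fix is trivial---add the leftover $P_d\setminus\{\prgoodrat{x},\prgoodrat{y}\}$ to any other part, say $P_{d-1}$; since values are nonnegative this can only increase $\prvalu(P_{d-1})$---or simply note that the merging argument from the second case applies verbatim here too (take $Q=(P_{d-1})\cup(P_d\setminus\{\prgoodrat{x},\prgoodrat{y}\})$). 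With that correction the proof is complete.
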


We define a total order \(\succ\) over reduction patterns based on their static and dynamic indices as follows. We have
$
\reductiontype^0 \;\succ\; \reductiontype^1 \;\succ\; \reductiontype^2 \;\succ\; \reductiontype^3 \;\succ\; \reductiontype^4 \;\succ\; \rstar^1 \;\succ\; \rstar^2,
$
meaning that $\reductiontype^0$ has the highest priority and $\rstar^2$ the lowest.

\subsection{Reduction Sequence}
Typically, a reduction is viewed as an independent process, where the order and choice of reductions do not matter—only that the instance eventually becomes irreducible, meaning no further reductions can be applied. However, in this paper, we take a different approach by carefully considering the sequence of reductions in our analysis. Among the various ways to reduce the problem to an irreducible instance, we select a specific sequence that follows a structured pattern. This allows us to improve the performance of our algorithm.

\begin{definition}\label{validseq}
	Let \(\prinstance = (\pragents, \prgoods)\) be an ordered instance, and let $\prreduc \subseteq \{\reductiontype^0,\,\reductiontype^1,\ldots,\,\reductiontype^4,\,\rstar^1,\,\rstar^2\}$. 
	A sequence of valid reductions 
	$
	\boldsymbol{\rho} = (\rho_1, \dots, \rho_r)
	$
	on \(\prinstance\), where each reduction uses a pattern from \(\prreduc\), is called a \textbf{perfect sequence of reductions} (with respect to $\prreduc$), if the corresponding sequence 
	$
	\bigl\langle \reductiontype_1, x_1, \reductiontype_2, x_2, \ldots, \reductiontype_r, x_r \bigr\rangle
	$
	is lexicographically maximum (over all such sequence of valid reductions). \footnote{Here lexicographically maximum means comparing each reduction pattern \(\reductiontype_\ell\) according to the order \(\succ\), and comparing each index \(x_\ell\) by its numerical value.}  Here, \(\reductiontype_\ell\) denotes the reduction pattern of $\rho_\ell$, and $x_\ell$ denotes the dynamic index of $\rho_\ell$. 
\end{definition}

\begin{observation}\label{obs:perseq-bundles}
	Let \(\prinstance = (\pragents, \prgoods)\) be an ordered instance, and let \(\prreduc \subseteq \{\reductiontype^0, \reductiontype^1, \ldots, \reductiontype^4, \rstar^1, \rstar^2\}\).  
	Suppose \(\boldsymbol{\rho} = (\rho_1, \dots, \rho_r)\) is a perfect sequence of reductions (with respect to \(\prreduc\)). Then the following hold:
	\begin{enumerate}
		\item For every \(1 \le \ell \le r\), there does not exist a reduction \(\rho'\) such that \((\rho_1, \rho_2, \ldots, \rho_{\ell-1}, \rho')\) is a sequence of valid reductions on \(\prinstance\) and the reduction type of \(\rho'\) has strictly higher priority than that of \(\rho_\ell\).
		\item If \(\boldsymbol{\rho'} = (\rho'_1, \dots, \rho'_{r'})\) is another perfect sequence of reductions (with respect to \(\prreduc\)), then \(r = r'\), and for every \(1 \le \ell \le r\), the reduction \(\rho_\ell\) allocates exactly the same bundle as \(\rho'_\ell\).
	\end{enumerate}
\end{observation}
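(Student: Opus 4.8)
The plan is to read both claims straight out of the lexicographic maximality in \cref{validseq}, the only genuine work being an induction that reconciles two perfect sequences even when they allocate bundles to different agents. Throughout, ``sequence of valid reductions'' means one whose patterns all lie in $\prreduc$, as in \cref{validseq}.

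For part (1), I would argue by contradiction. Fix $\ell$ with $1\le \ell\le r$ and suppose there is a valid reduction $\rho'$ (with pattern in $\prreduc$) whose pattern is strictly above that of $\rho_\ell$ in the order $\succ$, and such that $\boldsymbol{\sigma}:=(\rho_1,\dots,\rho_{\ell-1},\rho')$ is a sequence of valid reductions on $\prinstance$. Since $\boldsymbol{\sigma}$ and $\boldsymbol{\rho}$ share their first $\ell-1$ reductions, the associated tuples $\langle \reductiontype_\bullet,x_\bullet\rangle$ agree in their first $2(\ell-1)$ coordinates; at coordinate $2\ell-1$ (the reduction pattern of the $\ell$-th reduction) $\boldsymbol{\sigma}$ has an entry strictly $\succ$-larger than that of $\boldsymbol{\rho}$. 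By definition of lexicographic comparison this makes the tuple of $\boldsymbol{\sigma}$ strictly larger than that of $\boldsymbol{\rho}$ (regardless of the two lengths), contradicting that $\boldsymbol{\rho}$ is a perfect sequence, i.e.\ lexicographically maximum over all sequences of valid reductions.

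For part (2), I would first observe that ``lexicographically larger'' is a \emph{total} order on the set $\mathcal T$ of tuples $\langle \reductiontype_1,x_1,\dots,\reductiontype_r,x_r\rangle$ coming from sequences of valid reductions on $\prinstance$: the underlying alphabet (reduction patterns ordered by $\succ$, interleaved with indices ordered by $\le$) is totally ordered, and the lexicographic extension to finite words over a totally ordered alphabet, with a proper prefix declared smaller, is again total. Hence $\mathcal T$ has a unique maximum, which both $\boldsymbol{\rho}$ and $\boldsymbol{\rho'}$ realize by definition of a perfect sequence; therefore $\langle \reductiontype_1,x_1,\dots,\reductiontype_r,x_r\rangle = \langle \reductiontype'_1,x'_1,\dots,\reductiontype'_{r'},x'_{r'}\rangle$, so $r=r'$ and $\reductiontype_\ell=\reductiontype'_\ell$, $x_\ell=x'_\ell$ for all $\ell$. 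To upgrade this to ``same allocated bundle'', I would induct on $\ell$ on the claim that the set of goods remaining after the first $\ell$ reductions is identical for $\boldsymbol{\rho}$ and $\boldsymbol{\rho'}$: for $\ell=0$ both equal $\prgoods$; for the step, each reduction deletes exactly one agent (\cref{def:red}), so just before the $\ell$-th reduction both sequences face instances with the same number $\prnumb-\ell+1$ of agents, hence the pattern $\reductiontype_\ell$ (whose static part depends only on that count) together with the common dynamic index $x_\ell$ selects exactly the same positions in the common sorted list of remaining goods; so $\rho_\ell$ and $\rho'_\ell$ delete and allocate the same bundle, and the remaining-goods sets agree after step $\ell$ as well.

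The main obstacle — and the reason the statement deserves to be recorded — is the gap between what \cref{validseq} compares (patterns and dynamic indices only, not agents) and what we want to conclude (equality of \emph{bundles}, i.e.\ of positions in the current ordered list of goods). Closing it hinges on the two observations used above: the number of surviving agents at each step is the same for every reduction sequence, so the static parts of the patterns coincide step by step irrespective of which agents were removed; and equality of the remaining-goods sets then propagates by induction. The agent assignments themselves may legitimately differ between two perfect sequences, so ``same bundles'' is the strongest uniqueness one can hope for, and part (1) should be proved exactly as a direct unwinding of lexicographic maximality with no appeal to any structural property of the reductions beyond $\succ$ being a total order.
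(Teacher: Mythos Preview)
Your proposal is correct and follows essentially the same approach as the paper: both parts are read directly off lexicographic maximality, with part (2) using uniqueness of the lex-maximum tuple to force $r=r'$ and equality of the $(\reductiontype_\ell,x_\ell)$ data. Your induction showing that the remaining-goods set (and hence the bundle at each step) coincides is a careful justification of a step the paper's proof simply asserts; it is the right observation and the right place to spend the extra lines.
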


\begin{proof}
	By \cref{validseq}, the sequence \(\boldsymbol{\rho}\) corresponds to a sequence
	$
	\langle \reductiontype_1, x_1, \reductiontype_2, x_2, \ldots, \reductiontype_r, x_r \rangle, 
	$
	that is lexicographically maximum among all sequences of valid reductions.  
	
	For the first part, if a reduction \(\rho'\) as described existed, then replacing \(\rho_\ell\) with \(\rho'\) would yield a sequence whose corresponding sequence is lexicographically larger, contradicting maximality.  
	
For the second part, suppose $\boldsymbol{\rho'}$ is another perfect sequence. Since both sequences correspond to lexicographically maximum sequences, their lengths must coincide ($r = r'$). Moreover, by the uniqueness of the lexicographically maximum sequence, the tuples coincide entry by entry. Hence, for every position $\ell$, the reduction types and dynamic indices of $\rho_\ell$ and $\rho'_\ell$ must be the same, and therefore $\rho_\ell$ and $\rho'_\ell$ allocate the same bundle.
\end{proof}


Intuitively, after applying a sequence of reductions, we obtain  a reduced instance of the problem. At this point, we analyze how the reductions have affected the agents' values for the good in rank $2\numb+1$ and classify them into two groups. We then choose between two algorithms based on the sizes of these two groups. Each of these algorithms prioritizes one of the groups.
The main challenge, however, is that for our approximation guarantee to hold, this prioritization must be considered not only in the second stage but also during the reduction process itself.  In other words, when multiple reduction choices are available, we must select agents based on this prioritization. At first glance, this may seem paradoxical, as the prioritization depends on the reduction sequence. Surprisingly, we demonstrate that a carefully designed selection strategy allows us to achieve this goal. This idea is built upon \Cref{thm:main}, which we state below and prove in this section.

\begin{lemma}\label{thm:main}
	Let $\prinstance = (\pragents, \prgoods)$ be an ordered instance and $\prreduc \subseteq \{\reductiontype^0,\,\reductiontype^1,\ldots,\,\reductiontype^4,\,\rstar^1,\,\rstar^2\}$. Let 
	$
	\boldsymbol{\rho} = (\rho_1, \rho_2, \dots, \rho_r)
	$
	be a perfect sequence of reductions (with respect to $\prreduc$) on $\prinstance$.
	Additionally, let $\pragents^1$ and $\pragents^2$ be a partition of $\pragents$ into two subsets. Then, there exists another perfect sequence of reductions (with respect to $\prreduc$), 
	$
	\boldsymbol{\rho'} = (\rho'_1, \rho'_2, \dots, \rho'_r),
	$
	such that, after applying $\boldsymbol{\rho'}$, for any \(\pragent_x\in\pragents^1\) who has not received a bundle and any \(\pragent_y\in\pragents^2\) who has, we have:
	$
	{\prvalu}_{x}(\bundle^{y}) < \alpha,
	$
	where $\bundle^{y}$ is the bundle allocated to agent $\pragent_y$ during the reduction process.
\end{lemma}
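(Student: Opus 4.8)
\textbf{Proof plan for \Cref{thm:main}.}

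The plan is to think of every perfect sequence of reductions as a maximum matching in a bipartite graph, and then argue that we can rearrange the matching to favor $\pragents^1$ without breaking perfectness. First I would set up the bipartite graph $G$: one side has a vertex for each bundle $\bundle^1, \dots, \bundle^r$ allocated during $\boldsymbol{\rho}$ (note by \Cref{obs:perseq-bundles}(2) these bundles are the same for every perfect sequence, so $G$ is well-defined), and the other side has a vertex for each agent in $\pragents$; we put an edge between $\bundle^\ell$ and $\pragent_j$ exactly when ${\prvalu}_j(\bundle^\ell) \ge \alpha$. The sequence $\boldsymbol{\rho}$ itself induces a matching $M$ saturating every bundle vertex. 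The key structural claim, which I would prove first, is that \emph{any} matching $M'$ in $G$ that saturates all bundle vertices gives rise to a perfect sequence of reductions: process the bundles in the same order $\bundle_1, \dots, \bundle_r$ in which $\boldsymbol{\rho}$ allocated them, but hand $\bundle_\ell$ to the agent $M'$ matches it to. One must check this is a \emph{valid} reduction sequence — that at step $\ell$ the static indices and the dynamic index $x_\ell$ are still exactly those used by $\boldsymbol{\rho}$. This holds because the set of remaining goods after $\ell-1$ steps depends only on which bundles have been removed, not on which agents received them, and the validity conditions in \Cref{def:red} are the same (the agent-dependent condition $\prvalu_\agenti(\cdot)\ge\alpha$ is exactly membership of the matching edge, and the "no agent can extend to $x+1$" condition is a property of the whole agent set, which is unchanged). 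Consequently the associated tuple $\langle \reductiontype_1, x_1, \ldots, \reductiontype_r, x_r\rangle$ is identical to that of $\boldsymbol{\rho}$, hence still lexicographically maximum, so the new sequence is perfect.

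Next I would exploit the freedom in choosing the matching. Among all matchings of $G$ that saturate every bundle vertex, select one, call it $M^\star$, that maximizes the number of matched vertices in $\pragents^1$ (equivalently, minimizes the number of bundles handed to $\pragents^2$-agents). Let $\boldsymbol{\rho'}$ be the corresponding perfect sequence from the previous paragraph. I claim $\boldsymbol{\rho'}$ has the desired property. Suppose not: there is an agent $\pragent_x \in \pragents^1$ who received no bundle and an agent $\pragent_y \in \pragents^2$ with ${\prvalu}_x(\bundle^y) \ge \alpha$, where $\bundle^y$ is the bundle $M^\star$ assigns to $\pragent_y$. Then in $G$ there is an edge from $\bundle^y$ to $\pragent_x$, and $\pragent_x$ is unmatched by $M^\star$. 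Replacing the edge $(\bundle^y, \pragent_y)$ by $(\bundle^y, \pragent_x)$ yields another matching that still saturates all bundle vertices but has strictly more matched $\pragents^1$-vertices (we added $\pragent_x$, and if $\pragent_y\in\pragents^2$ was removed that does not decrease the $\pragents^1$ count; a one-vertex alternating-path swap suffices), contradicting the maximality of $M^\star$. This contradiction establishes the claim.

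The main obstacle I anticipate is the structural claim in the first paragraph: verifying rigorously that swapping the agent assigned to a bundle, while keeping the bundle order and the dynamic indices fixed, always yields a \emph{valid} perfect sequence. The subtlety is that "validity" of reduction $\rho_\ell$ and the identity of the dynamic index $x_\ell$ are conditions on the instance obtained after the first $\ell-1$ reductions, and one must confirm this residual instance is exactly the same for $\boldsymbol{\rho}$ and for the rearranged sequence — which comes down to observing that the goods removed depend only on the multiset of patterns-with-dynamic-indices used so far, and all three bullet conditions of \Cref{def:red} either are edge-membership facts (unchanged, since we only use edges of $G$) or are universally quantified over the full, unchanged agent set $\pragents$. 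Once this is pinned down, the rest is the short matching-exchange argument above. A secondary, more minor point to handle carefully is that the order in which we \emph{list} the reductions must be chosen so that the corresponding tuple is the lexicographically maximum one; since \Cref{obs:perseq-bundles}(2) guarantees the bundles and their indices are forced, listing them in the same order as $\boldsymbol{\rho}$ automatically reproduces that maximal tuple.
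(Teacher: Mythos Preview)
Your plan is essentially the paper's proof: build the bipartite bundle--agent graph, observe that any bundle-saturating matching yields a perfect sequence, then pick a saturating matching maximizing the number of matched $\pragents^1$-agents and finish with a single-edge swap. The one point that needs repair is your handling of the third bullet of \Cref{def:red}. You write that this condition is ``universally quantified over the full, unchanged agent set $\pragents$,'' but in fact the instance at step~$\ell$ has agent set equal to the original $\pragents$ minus the $\ell-1$ agents already served, and that set \emph{does} depend on which agents the matching assigned to $\bundle_1,\dots,\bundle_{\ell-1}$. Concretely, an agent who was removed by $\boldsymbol{\rho}$ before step~$\ell$ but is left unmatched by your $M^\star$ could, in principle, witness a strictly larger dynamic index at step~$\ell$, violating bullet~(iii) for $\rho'_\ell$.

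The paper closes this via lex-maximality rather than by a direct comparison of residual instances: if $\rho'_\ell$ is the first invalid step of $\boldsymbol{\rho'}$, then (since the goods side at step~$\ell$ is identical to that of $\boldsymbol{\rho}$ and bullets~(i)--(ii) hold by construction) the failure must be bullet~(iii), so some still-present agent admits a valid reduction of pattern $\reductiontype_\ell$ with dynamic index $x''>x_\ell$. Then $(\rho'_1,\dots,\rho'_{\ell-1},\rho'')$ is a valid reduction sequence whose tuple $\langle\reductiontype_1,x_1,\dots,\reductiontype_{\ell-1},x_{\ell-1},\reductiontype_\ell,x''\rangle$ strictly dominates that of $\boldsymbol{\rho}$ lexicographically, contradicting perfectness of $\boldsymbol{\rho}$. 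With this patch in place, your argument coincides with the paper's.
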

\begin{proof}
For each $1 \le j \le r$, let $\reductiontype_j$ be the reduction pattern, $x_j$ the dynamic index, and $\bundle_j$ the bundle associated with reduction $\rho_j$. 
We construct a bipartite graph where the first set of nodes represents these bundles, and the second set of nodes represents the agents in $\pragents$. An edge exists between a bundle node \( \bundle_j \) and an agent node \( \pragent_i \) if \(\prvalu_\agenti(\bundle_j) \geq \alpha \).  
	
A perfect sequence of reductions \( \boldsymbol{\rho} \) corresponds to a matching that saturate all nodes in the first part of the graph. 
We now show that any matching that saturate all nodes in the first part of the graph, yields a perfect sequence of reductions. Let \( \boldsymbol{\rho}' = (\rho'_1, \dots, \rho'_r) \) be the sequence obtained from this matching. By \cref{validseq}, it is enough to show that each reduction \( \rho'_j \) is valid. The remaining conditions for  a perfect sequence depend only on the reduction patterns and dynamic indices, and not on the specific agents involved. 

Note that for each $\ell$, the reduction pattern and the dynamic index of $\rho'$ remain the same as those of $\rho$, since we only reallocate bundles while keeping the goods in the bundles unchanged. 
Suppose, for contradiction, that some reduction in the sequence is not valid. Let \( \rho'_\ell \) be the first such reduction (i.e., the one with the smallest index \( \ell \)). 

The invalidity of \( \rho'_\ell \) implies that there exists another valid reduction \( \rho'' \) with the same reduction pattern as \( \rho'_\ell \) but with a strictly larger dynamic index \( x'' > x_\ell \), applied to the instance obtained after executing \( \rho'_1, \dots, \rho'_{\ell-1} \). Since \( \rho'_\ell \) is the first invalid reduction in the sequence, all reductions in the updated sequence \( (\rho'_1, \dots, \rho'_{\ell-1}, \rho'') \) are valid. However the corresponding sequence 
$
\bigl\langle \reductiontype_1, x_1, \reductiontype_2, x_2, \ldots, \reductiontype_{\ell-1}, x_{\ell-1}, \reductiontype_{\ell}, x'' \bigr\rangle
$
is lexicographically larger than 
$
\bigl\langle \reductiontype_1, x_1, \reductiontype_2, x_2, \ldots, \reductiontype_{r}, x_r \bigr\rangle
$
contradicting the definition of a perfect sequence.

Among all possible matchings that saturate all nodes in the first part of the graph, we choose one that maximizes the number of matched agents in $\pragents^1$. We argue that this matching satisfies the desired properties of \Cref{thm:main}.

Suppose for contradiction that there exists an agent $\pragent_x\in\pragents^1$ who has not received a bundle while there exists $\pragent_y\in\pragents^2$ who has received one, with ${\prvalu}_{x}(\bundle^{y}) \ge \alpha$. Replacing $\pragent_y$ by $\pragent_x$ yields another matching that saturate all nodes in the first part of the graph, covering more agents in $\pragents^1$, contradicting our construction.
\end{proof}

	\section{\boldmath Calibration}\label{sec:virtual-value}

To simplify the analysis, we define \emph{calibration functions} for certain agents. These functions make slight adjustments to the agents’ valuation functions, while ensuring that the change in their maximin share remains small and within a known bound. These modifications are purely analytical and do not affect the actual execution of the algorithm.

\begin{definition}  \label{def:adj}
Let \(\prvalu\) be a valuation over goods \(\prgoods\) such that for all $\prgood \in \prgoods$, we have $\prvalu(\{\prgood\})\le 1$, and let \(\hat\vF: [0, 1] \to [0, 1]\) be a non-decreasing function with \(\hat\vF(x) \le x\) for all \(x\). The \textbf{calibration} of \(\prvalu\) by \(\hat\vF\), denoted by \(\prauxval{\hat\vF}\), is the \textbf{additive} valuation defined by
$
\praux{\hat\vF}{\{\prgood\}} = \hat\vF(\prvalu(\{\prgood\}))$ for all  $\prgood \in \prgoods$.

\end{definition}  

Throughout our analysis, different agents may use different calibration functions. Here we define the specific calibration functions used in this paper.

\begin{definition}\label{def:F}
	Define the functions:
	\[
	\begin{array}{@{}lll}
		(\text{For } 0\leq \lambda \leq \tfrac{4\alpha}{3}-1) \quad &\vF_\lambda(x) = &
		\begin{cases}
			x, & x \in [0,\, \frac{\alpha}{3} - \lambda) \\[0.6em]
			\max(\frac{\alpha}{3} - \lambda,\, x - \lambda), & x \in [\frac{\alpha}{3} - \lambda,\, 1 - \frac{2\alpha}{3}) \\[0.6em]
			\max(1 - \frac{2\alpha}{3} - \lambda,\, x - \frac{3\lambda}{2}), & x \in [1 - \frac{2\alpha}{3},\, 1 - \frac{\alpha}{3} - \frac{\lambda}{2}) \\[0.6em]
			\max(1 - \frac{\alpha}{3} - 2\lambda,\, x - 3\lambda), & x \in [1 - \frac{\alpha}{3} - \frac{\lambda}{2},\, 1]
		\end{cases}
		\\[4em]
		\quad
		&\vG(x) = &
		\begin{cases}
			x, & x \in [0,\, 2 - \frac{7\alpha}{3}) \\[0.6em]
			\max(2 - \frac{7\alpha}{3},\, x - \frac{4\alpha}{3} + 1), & x \in [2 - \frac{7\alpha}{3},\, 2 - \frac{13\alpha}{6}) \\[0.6em]
			\max(3 - \frac{7\alpha}{2},\, x - \frac{8\alpha}{3} + 2), & x \in [2 - \frac{13\alpha}{6},\, 1]
		\end{cases}
\\[3em]
	(\text{For } 0\le\lambda \le \tfrac{1}{2}) \quad
	&\vHA_\lambda(x) =&
	\begin{cases}
		x, & x \in [0,\, \frac{1}{2} - \lambda) \\[0.6em]
		\max(\frac{1}{2} - \lambda,\, x - 2\lambda), \qquad \qquad& x \in [\frac{1}{2} - \lambda,\, 1]
	\end{cases}
\\[2em]
(\text{For } 0\le\lambda \le 2(1-\alpha)) &\vHB_\lambda(x) =&
	\begin{cases}
		x, & x \in [0,\, 2-2\alpha - \lambda) \\[0.6em]
		\max(2-2\alpha - \lambda,\, x - 2\lambda), & x \in [2-2\alpha - \lambda,\, 1] 
	\end{cases}
			\end{array}
	\]
	For convenience, we denote \( \vF_{\tfrac{4\alpha}{3} - 1} \) by $\FJ$ throughout the remainder of the paper.
\end{definition}
\begin{figure}[t]
	\centering
	\scalebox{1.3}{
		\begin{minipage}[b]{0.4\linewidth}
			\includegraphics[width=\linewidth]{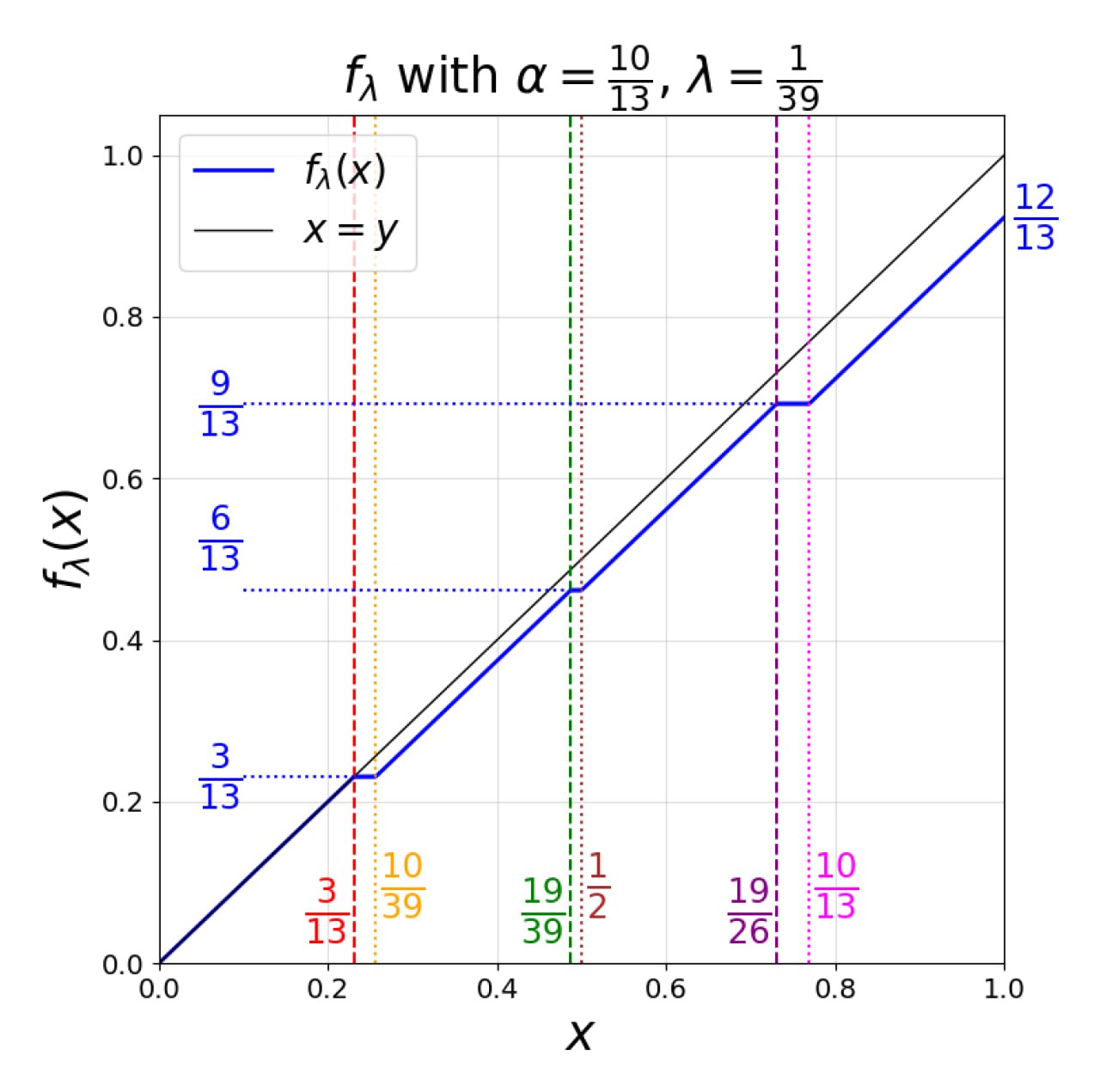}
		\end{minipage}
		\hfill
		\begin{minipage}[b]{0.385\linewidth}
			\includegraphics[width=\linewidth]{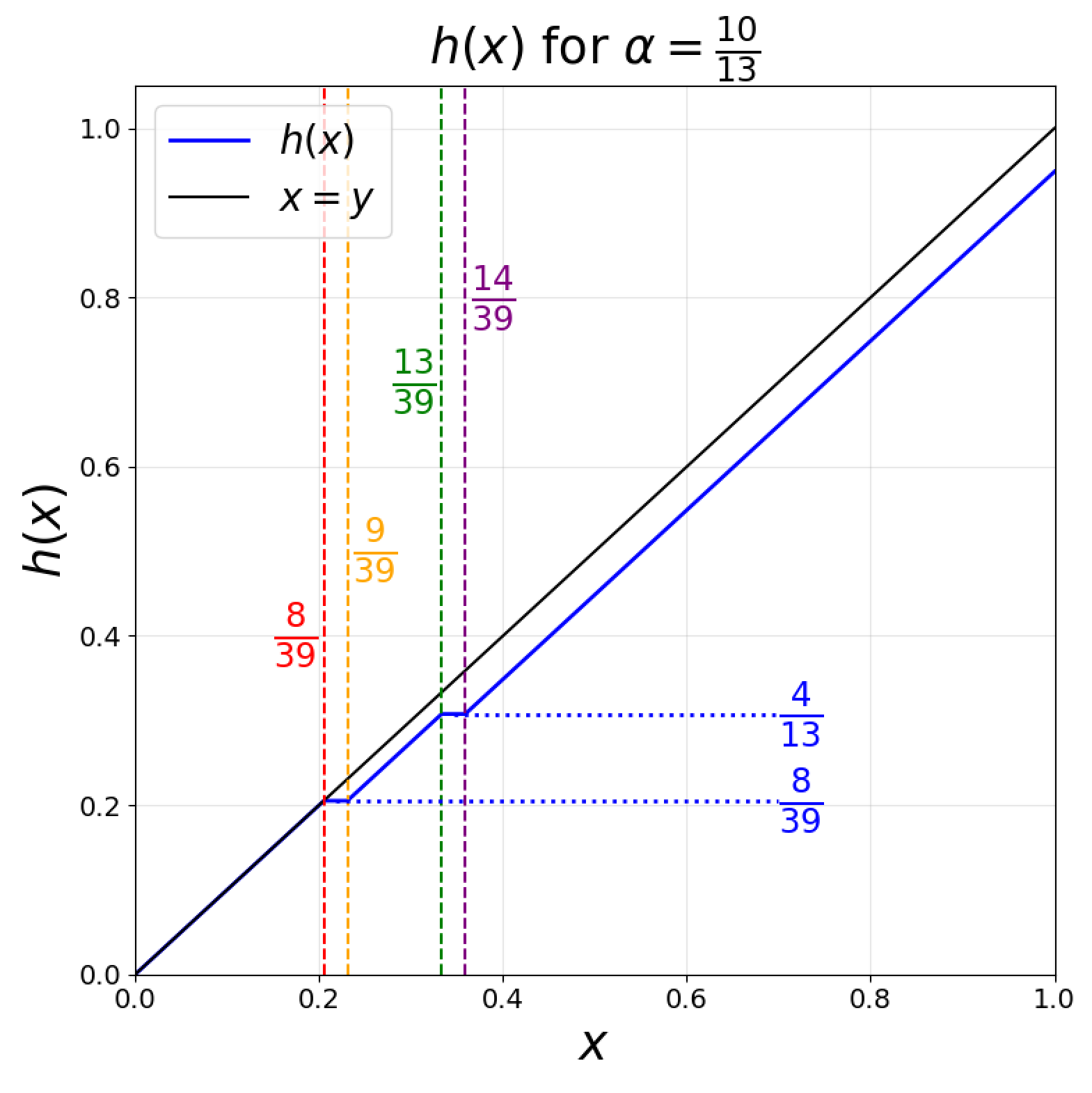}
		\end{minipage}
	}
	\caption{Plot of the calibration functions $f_\lambda$ and $h$ for $\alpha = \tfrac{10}{13}$.}
	\label{fig:plot-fh}
\end{figure}

\begin{figure}[t]
	\centering
	\scalebox{1.3}{
		\hspace{-0.8cm}
		\begin{minipage}[b]{0.4\linewidth}
			\includegraphics[width=\linewidth]{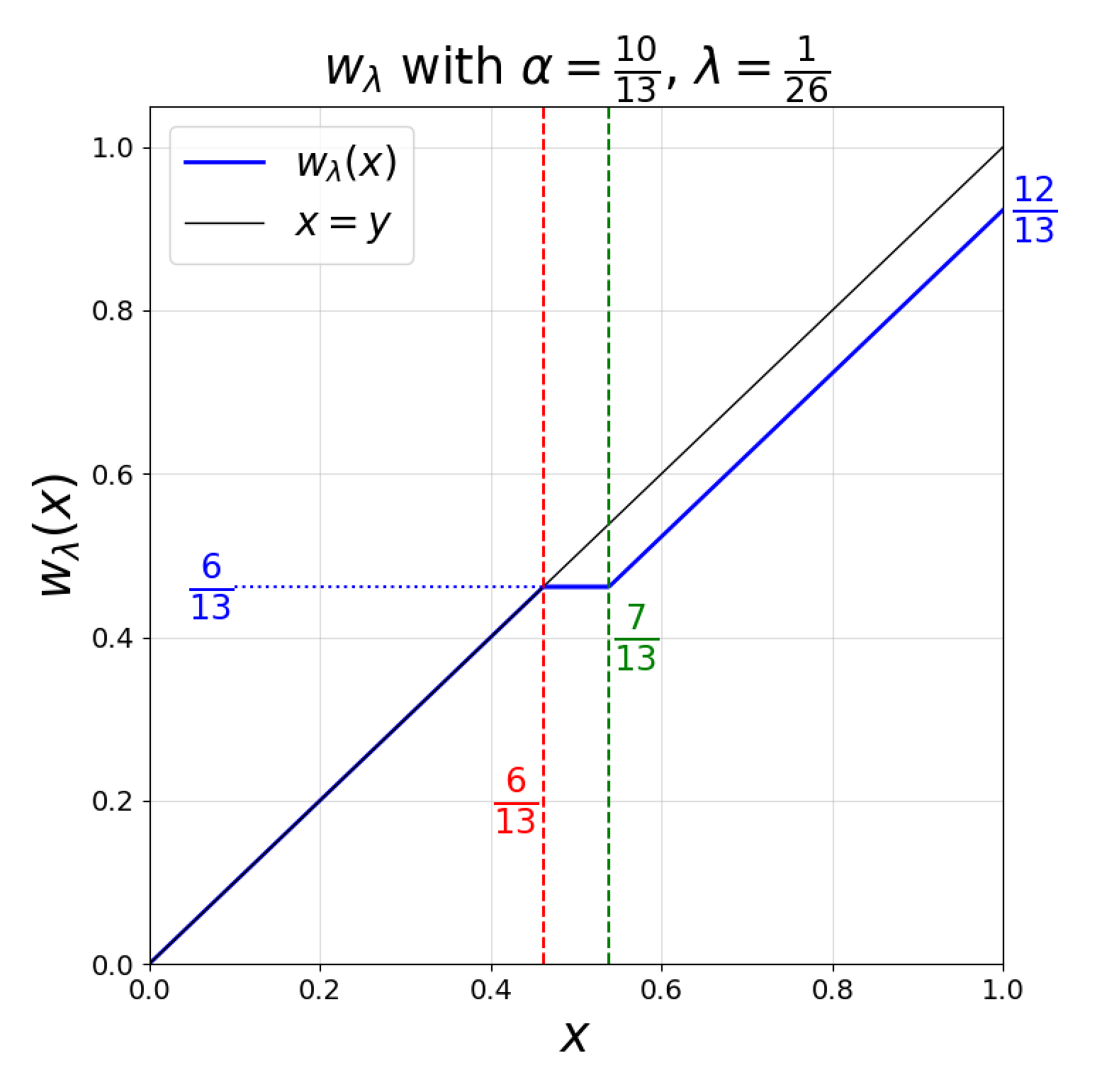}
		\end{minipage}
		\hfill
		\begin{minipage}[b]{0.41\linewidth}
			\includegraphics[width=\linewidth]{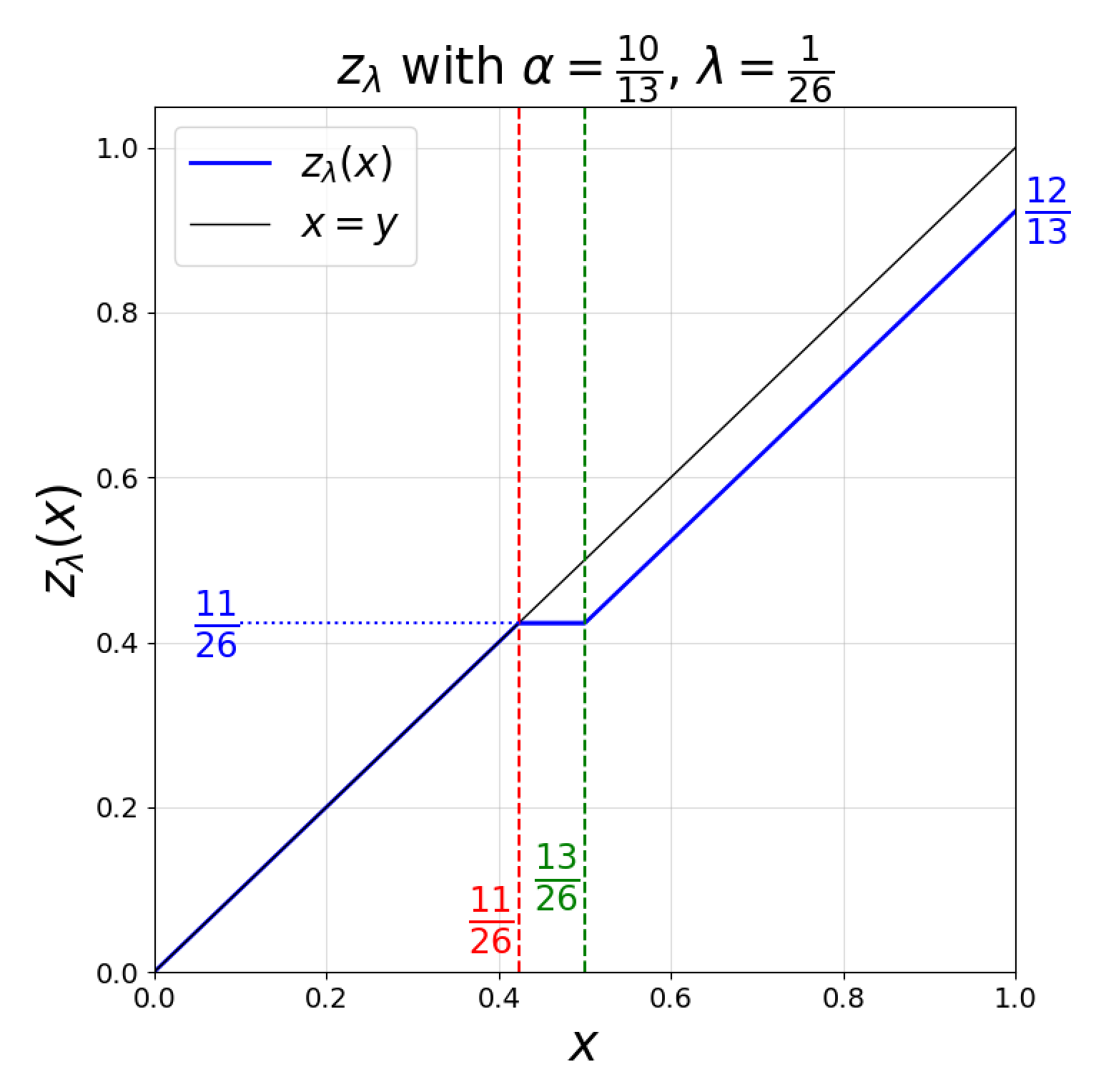}
		\end{minipage}
	}
	\caption{Plot of the calibration functions $w_\lambda$ and $z_\lambda$ for $\alpha = \tfrac{10}{13}$.}
	\label{fig:plot-wz}
\end{figure}

\newpage
\Cref{fig:plot-fh,fig:plot-wz} illustrates the calibration functions. Note that all functions satisfy $\hat\vF(x) \le x$ for every \( x \in [0,1] \). In \cref{apx:adjusting}, we present lemmas establishing lower bounds on the maximin share values under these calibration functions. These bounds are summarized in \Cref{tab:mms-d-bounds,tab:calibrated-mms-bounds}.

\begin{table}[H]
	\centering
	\renewcommand{\arraystretch}{1,6}
	\begin{tabular}{>{\raggedright\arraybackslash}p{1.6cm}>{\centering\arraybackslash}p{2cm}>{\raggedright\arraybackslash}p{5cm}>{\raggedright\arraybackslash}p{5cm}}
		\toprule
		\rowcolor{headerblue!60}
		\textbf{Lemma} & \textbf{Function} & \textbf{Precondition} & \textbf{$\MMS$ Bound} \\ 
		\midrule
		\rowcolor{rowblue1}\cref{lem:virt-mms-before} & 
		$\vF_\lambda$ & 
		$\mms^d_{\prvalu} (\prgoods) \geq 1$ & 
		$\prauxmms{\vF_\lambda}{d}{\prgoods} \geq 1 - 3\lambda$ \\ 
		\rowcolor{rowblue2} \cref{lem:virtG-mms} & 
		$\vG$ & 
		$\mms^d_{\prvalu} (\prgoods) \geq 4(1-\alpha)$ & 
		$\prauxmms{\vG}{\prgoods}{d} \geq 4(2 - \frac{7\alpha}{3})$ \\ 
		\rowcolor{rowblue1} \cref{lem:HA-mms-before} & 
		$\vHA_\lambda$ & 
		$\mms^d_{\prvalu} (\prgoods) \geq 1$ & 
		$\prauxmms{\vHA_\lambda}{d}{\prgoods} \geq 1 - 2\lambda$ \\ 
		\rowcolor{rowblue2} \cref{lem:HB-mms-before} & 
		$\vHB_\lambda$ & 
		$\mms^d_{\prvalu} (\prgoods) \geq 4(1-\alpha)$ & 
		$\prauxmms{\vHB_\lambda}{\prgoods}{d} \geq 4(1 - \alpha) - 2\lambda$ \\ 
		\bottomrule
	\end{tabular}
	\caption{Bounds on the maximin share (MMS) under calibration functions. For any instance satisfying the given preconditions, the corresponding lemma establishes the stated lower bound.}
	\label{tab:mms-d-bounds}
\end{table}

\begin{lemrep}\label{lem:calib-reduction}
	Let $\prinstance = (\pragents,\prgoods)$ be an ordered instance, and let $		 \prreduc_1 = [\reductiontype^0 \succ \reductiontype^1 \succ \reductiontype^2 \succ \rstar^1]$ and $		\prreduc_2 = [\reductiontype^1 \succ \reductiontype^2 \succ \reductiontype^3 \succ \reductiontype^4 \succ\rstar^2]$.  Assume that $\seinstance=(\seagents,\segoods)$ is the result of applyinng a sequence of valid reductions with respect to either $\prreduc_1$ or $\prreduc_2$. Then, the conditions shown in \cref{tab:calibrated-mms-bounds} satisfy. 
	\begin{table}[h!]
		\centering
		\renewcommand{\arraystretch}{2}
\scalebox{0.95}{	
		\begin{tabular}{>{\centering\arraybackslash}p{0.8cm}>{\centering\arraybackslash}p{0.8cm}>{\raggedright\arraybackslash}p{2.2cm}>{\raggedright\arraybackslash}p{2.5cm}>{\raggedright\arraybackslash}p{4cm}>{\raggedright\arraybackslash}p{4.5cm}}
			\toprule
			\rowcolor{headerblue!60}
			\textbf{Func} & \textbf{\small{Red}} & \textbf{Prec 1} & \textbf{Prec 2}&\textbf{Prec 3} & \textbf{MMS Guarantee} \\ 
			\midrule
			\rowcolor{rowblue1}
			$\vF_\lambda$ & $\prreduc_1$ &
			$ \lambda \le \tfrac{4\alpha}{3}-1$ &
			$\hatmms{\prvalu_\agenti} \ge 1$  & 
			$\prvalu_\agenti(\{\prgoodrat{1}\}) \le 1 - \tfrac{\alpha}{3} + \lambda$ &
			$\checkmms{\prauxfun{\vF_\lambda}{\agenti}} \ge 1 - 3\lambda$ \\ 
			\rowcolor{rowblue2} 
			$\vHA_\lambda$ & $\prreduc_2$ &
			$\lambda \le \tfrac{1}{2}$ & 
			$\hatmms{\prvalu_\agenti} \ge 1$
				&$\prvalu_\agenti(\{\prgoodrat{1}\}) \le \tfrac{1}{2} + \lambda$ & 
			$\checkmms{\prauxfun{\vHA_\lambda}{\agenti}} \ge 1 - 2\lambda$ \\ 
			
			\rowcolor{rowblue1} 
			$\vHB_\lambda$ & $\prreduc_2$ &
			$\lambda \le 2(1-\alpha)$ 				&$\hatmms{\prvalu_\agenti} \ge 4(1-\alpha)$ &$\prvalu_\agenti(\{\prgoodrat{1}\}) \le 2(1-\alpha) + \lambda$ & 
			$\checkmms{\prauxfun{\vHB_\lambda}{\agenti}} \ge 4(1-\alpha) - 2\lambda$ \\ 
			\bottomrule
		\end{tabular}}
		\caption{Calibrated $\MMS$ bounds under various reduction sequences. If an instance satisfies both preconditions, the stated guarantee holds for the calibrated $\MMS$ after applying the reductions.}
		\label{tab:calibrated-mms-bounds}
	\end{table}
\end{lemrep}
\begin{appendixproof}
	
\textbf{Case 1:} Since $\hatmms{\hat{v_i}} \ge 1$, by \cref{lem:virt-mms-before}, we have $\hatmms{\prauxfun{\vF_\lambda}{\agenti}} \ge 1 - 3\lambda$. Therefore it suffices to show that no reduction makes the $\MMS$ under $\prauxfun{\vF_\lambda}{\agenti}$ less than $1 - 3 \lambda$. By \cref{reductionlem}, the only reduction that can decrease the $\MMS$ value of an agent is $\rstar^1$. Suppose that the rule allocates goods $\prgood_1$ and $\prgood_x$, where $x \ge 2 \prnumb + 1$. As Precondition 2 holds, we have $\prvalu_\agenti(\prgood_1) \le 1 - \tfrac{\alpha}{3} + \lambda$, and $\vF_\lambda(1 - \tfrac{\alpha}{3} + \lambda) = 1 - \tfrac{\alpha}{3} - 2\lambda$. Furthermore, since $\reductiontype^2$ is not applicable, we have $\prvalu_\agenti(\prgood_x) \le \tfrac{\alpha}{3}$, and $\vF_\lambda(\tfrac{\alpha}{3}) = \tfrac{\alpha}{3} - \lambda$. Hence, $\prauxfun{\vF}{\agenti}(\{g_1, g_x\}) \le 1 - \tfrac{\alpha}{3} - 2\lambda + \tfrac{\alpha}{3} - \lambda = 1 - 3\lambda$, and by \cref{reductionlem2}, this reduction does not decrease $\MMS$ under $\prauxfun{\vF_\lambda}{\agenti}$. Therefore, in the final instance $\segoods$, we have $\checkmms{\prauxfun{\vF_\lambda}{\agenti}} \ge 1 - 3\lambda$.

\textbf{Case 2:} Since $\hatmms{\hat{v_i}} \ge 1$, by \cref{lem:HA-mms-before}, we have $\hatmms{\prauxfun{\vH_\lambda}{\agenti}} \ge 1 - 2\lambda$. Therefore it suffices to show that no reduction makes the $\MMS$ under $\prauxfun{\vH_\lambda}{\agenti}$ less than $1 - 2\lambda$. By \cref{reductionlem}, the only reduction that can decrease the $\MMS$ value of an agent is $\rstar^2$. Suppose that the rule allocates goods $\prgood_1$ and $\prgood_x$, where $x \ge 2$. As Precondition 2 holds, we have $\prvalu_\agenti(\prgood_1) \le \tfrac{1}{2} + \lambda$, and $\vH_\lambda(\tfrac{1}{2} + \lambda) = \tfrac{1}{2} - \lambda$. Hence, $\prauxfun{\vH}{\agenti}(\{g_1, g_x\}) \le 2 (\tfrac{1}{2} - \lambda) = 1 - 2\lambda$, and by \cref{reductionlem2}, this reduction does not decrease $\MMS$ under $\prauxfun{\vH_\lambda}{\agenti}$. Therefore, in the final instance $\segoods$, we have $\checkmms{\prauxfun{\vH_\lambda}{\agenti}} \ge 1 - 2\lambda$.

\textbf{Case 3:} Since $\hatmms{\hat{v_i}} \ge 4(1 - \alpha)$, by \cref{lem:HB-mms-before}, we have $\hatmms{\prauxfun{\vHB_\lambda}{\agenti}} \ge 4(1 - \alpha) - 2\lambda$. Therefore it suffices to show that no reduction makes the $\MMS$ under $\prauxfun{\vHB_\lambda}{\agenti}$ less than $4(1 - \alpha) - 2\lambda$. By \cref{reductionlem}, the only reduction that can decrease the $\MMS$ value of an agent is $\rstar^2$. Suppose that the rule allocates goods $\prgood_1$ and $\prgood_x$, where $x \ge 2$. As Precondition 2 holds, we have $\prvalu_\agenti(\prgood_1) \le 2(1 - \alpha) + \lambda$, and $\vHB_\lambda(2(1 - \alpha) + \lambda) = 2(1 - \alpha) - \lambda$. Hence, $\prauxfun{\vHB}{\agenti}(\{g_1, g_x\}) \le 4(1 - \alpha) - 2\lambda$, and by \cref{reductionlem2}, this reduction does not decrease $\MMS$ under $\prauxfun{\vHB_\lambda}{\agenti}$. Therefore, in the final instance $\segoods$, we have $\checkmms{\prauxfun{\vHB_\lambda}{\agenti}} \ge 4(1 - \alpha) - 2\lambda$.
\end{appendixproof}

Finally, we introduce a calibration function with a structure that differs from the earlier ones. This function will be used in \cref{sec:alg-N2} to simplify the analysis.

\begin{definition}\label{def:norm}
Let \( v \) be a valuation over \(\prgoods\) such that \(\mms^d_v(\prgoods) \geq \lambda\), and let \( P = (P_1, \ldots, P_d) \) be a maximin partition of \(\prgoods\) under \(v\). Define the multiset
\[
\mathcal{M} = \left\{ \frac{v(\{\prgood\}) \cdot \lambda}{v(P_j)} \;\middle|\; 1 \le j \le d,\; \prgood \in P_j \right\}.
\]
Now define \(\text{normalized}^{\,d}_{\lambda}(v, \prgoods)\) as the valuation obtained by assigning the values in \(\mathcal{M}\) to the goods in \(\prgoods\), preserving their original order under \(v\). That is, if \(\prgood_i\) is the \(i\)-th highest-valued good under \(v\), then value of \(\{\prgood_i\}\) in \(\text{normalized}^{\,d}_{\lambda}(v, \prgoods)\)  is the \(i\)-th highest value in \(\mathcal{M}\).
\end{definition}
\begin{observation}\label{lem:norm}
	Let $v$ be a valuation over \(\prgoods\) such that \(\mms^d_v(\prgoods) \geq \lambda\). Then  \(\nu = \text{normalized}^{\,d}_{\lambda}(v, \prgoods)\) satisfies the following conditions: 
	\begin{enumerate}
		\item There exists a partition \((P_1, \ldots, P_d)\) of \(\prgoods\) such that \(\nu (P_j) = \lambda\) for every \(1 \leq j \leq d\).
		\item  \(\nu (\{\prgood\}) \leq v(\{\prgood\})\) for every \(\prgood \in \prgoods\).
		\item  $v$ and $\nu$ rank the goods in the same order.
	\end{enumerate} 

\end{observation}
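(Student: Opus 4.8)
The plan is to check the three claimed properties directly against the construction in \cref{def:norm}. Fix the maximin partition $P = (P_1, \dots, P_d)$ of $\prgoods$ used to define $\nu$, and for each good $\prgood$ let $j(\prgood)$ be the index with $\prgood \in P_{j(\prgood)}$, and set $\phi(\prgood) = v(\{\prgood\})\cdot \lambda / v(P_{j(\prgood)})$, so that $\mathcal{M} = \{\phi(\prgood) : \prgood \in \prgoods\}$ as a multiset. The one structural fact I would extract at the outset is that, since $P$ attains $\mms^d_v(\prgoods)$, we have $v(P_j) \ge \mms^d_v(\prgoods) \ge \lambda$ for every $j$; in particular $\lambda / v(P_j) \le 1$, so $\phi(\prgood) \le v(\{\prgood\})$ pointwise.

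For property 3, observe that if $\prgood_1, \dots, \prgood_m$ list the goods in non-increasing order of $v$-value and $\mu_1 \ge \dots \ge \mu_m$ are the sorted elements of $\mathcal{M}$, then by construction $\nu(\{\prgood_i\}) = \mu_i$, which is non-increasing; hence this same ordering witnesses that $v$ and $\nu$ rank the goods identically. For property 2, I would invoke the elementary fact that pointwise domination of one finite family of reals by another is inherited by their order statistics (for any threshold $t$, the goods with $\phi(\prgood) \ge t$ form a subset of those with $v(\{\prgood\}) \ge t$, so the $i$-th largest $\phi$-value is at most the $i$-th largest $v$-value). Applied to the families $\{\phi(\prgood)\}$ and $\{v(\{\prgood\})\}$ this gives $\mu_i \le v(\{\prgood_i\})$, i.e. $\nu(\{\prgood\}) \le v(\{\prgood\})$ for every $\prgood$.

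For property 1, the key point is that the multiset of $\nu$-values over all of $\prgoods$ is exactly $\mathcal{M}$ — the construction merely relabels $\mathcal{M}$ onto the goods. Now $\mathcal{M}$ is the disjoint union of the sub-multisets $\mathcal{M}_j = \{\phi(\prgood) : \prgood \in P_j\}$, and each satisfies $\sum \mathcal{M}_j = \lambda \cdot v(P_j)/v(P_j) = \lambda$. Distributing the copies of each distinct value among the $d$ classes according to how many copies lie in each $\mathcal{M}_j$ produces a partition $(Q_1, \dots, Q_d)$ of $\prgoods$ with $\{\nu(\{\prgood\}): \prgood \in Q_k\} = \mathcal{M}_k$ for each $k$, and therefore $\nu(Q_k) = \lambda$.

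The argument is short and I do not expect a genuine obstacle; the only places requiring a bit of care are the multiset bookkeeping in property 1 (handling repeated values when pulling the partition $\{\mathcal{M}_j\}$ back to the goods) together with the order-statistics inequality used for property 2, both of which are routine.
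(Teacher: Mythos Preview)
Your argument is correct; the paper states this observation without proof, so there is nothing to compare against. The three checks you outline---order preservation by construction, the order-statistics inequality from the pointwise bound $\phi(\prgood)\le v(\{\prgood\})$, and pulling back the multiset partition $\{\mathcal{M}_j\}$ to a partition of the goods---are exactly the natural verifications and go through without issue.
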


	\clearpage
\section{A $(\tfrac{10}{13})$-$\MMS$ Allocation Algorithm} \label{sec:mainalg}

Algorithm \ref{alg:main} provides the pseudocode for our allocation algorithm. By the scaling assumption, the maximin share of all the agents in the initial instance is equal to $1$. After applying the reductions, we denote the resulting instance by \( \instance = (\agents, \goods) \), which is irreducible with respect to reduction rules \( \reductiontype^0, \reductiontype^1, \reductiontype^2 \) and \( \rstar^1 \).   Along with the reductions, we also divide the agents into two subsets: green agents (\( \Nyek \)) and red agents (\( \Ndo \)). Depending on the size of these subsets, we consider two cases: whether \( |\Nyek| \geq \tfrac{\Pnumb}{\sqrt 2} \) or not. For each case, we design a separate algorithm and prove that it guarantees an \( \alpha \)-\( \MMS \) allocation.

\begin{algorithm}
	\caption{ $(\tfrac{10}{13})$-$\MMS$ Allocation}
	\textbf{Input:} $\Pinstance = (\Pagents, \Pgoods)$ \\
	\textbf{Output:} Allocation satisfying  $(\tfrac{10}{13})$-$\MMS$
	\begin{algorithmic}[1]
		\State $\instance,\Nyek,\Ndo \leftarrow$ \textsc{Primary-Reductions}($\Pinstance$) \CComment{See \Cref{algo:pr}}
		\If{$|\Nyek| \geq \tfrac{\Pnumb}{\sqrt 2}$}
		\State Run \textsc{Algorithm-Case1}$(\instance, \Nyek,\Ndo)$  \CComment{See \Cref{alg:allocation1}}
		\Else
		\State Run \textsc{Algorithm-Case2}$(\instance, \Nyek,\Ndo)$  \CComment{See \cref{algo:N2}}
		\EndIf
	\end{algorithmic}
	\label{alg:main}
\end{algorithm}
\vspace{-0.5cm}
\subsection{ Primary Reductions}\label{sec:init}

In the primary reductions, we first find a perfect sequence of reductions with respect to 
$
\reductiontype^0 \succ \reductiontype^1 \succ \reductiontype^2 \succ \rstar^1.
$ Let $\goods$ denote the set of goods obtained after applying all reductions, and let $\agents$ be the resulting set of agents. We partition the agents in $\Pagents$ into two subsets, $\Nyek$ and $\Ndo$, as follows:
\[
\Nyek = \left\{ \agent_i \in \Pagents \mid \valu_\agenti(\{\good_{2\numb+1}\}) \ge 1 - \alpha \right\}
\quad \text{and} \quad
\Ndo = \left\{ \agent_i \in \Pagents \mid \valu_\agenti(\{\good_{2\numb+1}\}) < 1 - \alpha \right\}.
\]

Note that we partition agents in $\Pagents$, not $\agents$.  
Using \cref{thm:main}, we can modify the reduction sequence so that the primary reductions give priority to one of \( \Nyek \) or \( \Ndo \), based on the size of \( \Nyek \) as follows:  
(i) If \( |\Nyek| \geq \tfrac{\Pnumb}{\sqrt{2}} \) we follow a perfect reduction sequence that prioritizes agents in \( \Ndo \).
(ii) If \( |\Nyek| < \tfrac{\Pnumb}{\sqrt{2}} \), we choose a perfect reduction sequence that prioritizes agents in \( \Nyek \).  
\cref{algo:pr} presents a pseudo code of our algorithm for the primary reductions.
\begin{algorithm}[h]
	\caption{$\textsc{Primary-Reductions}$}
	\label{algo:pr}
	\textbf{Input:} $\Pinstance = (\Pagents, \Pgoods)$
	
	\textbf{Output:} $\instance, \Nyek, \Ndo$
	\begin{algorithmic}[1]
	    \State $\prreduc \leftarrow [\reductiontype^0 \succ \reductiontype^1 \succ \reductiontype^2 \succ \rstar^1]$ \CComment{Reduction order}
		\State $\boldsymbol{\rho} \leftarrow$ Perfect sequence of reductions on $\Pinstance$ with respect to $\prreduc$  \CComment{See \cref{validseq}}
		\State $\goods \leftarrow$ Set of goods after applying $\boldsymbol{\rho}$ on $\Pinstance$ 
		\State $\numb \leftarrow \Pnumb - |\boldsymbol{\rho}|$ \CComment{Number of remaining agents}
		\State $\Nyek \leftarrow \{\agent_i \in \Pagents \;\;|\;\; \valu_\agenti(\{\good_{2\numb+1}\}) \ge 1 - \alpha\}$ 
		\State $\Ndo \leftarrow \{\agent_i \in \Pagents \;\;|\;\; \valu_\agenti(\{\good_{2\numb+1}\}) < 1 - \alpha\}$ 
		\If{$|\Nyek| \ge \frac{\Pnumb}{\sqrt{2}}$}
		\State $\boldsymbol{\rho}' \leftarrow$ Reallocate $\boldsymbol{\rho}$ with maximum number of red agents 
		\Else
		\State $\boldsymbol{\rho}' \leftarrow$ Reallocate $\boldsymbol{\rho}$ with maximum number of green agents
		\EndIf \CComment{See \cref{thm:main}}
		\State $\instance \leftarrow$ Output of $\boldsymbol{\rho}'$ on $\Pinstance$. 
		\State \Return $\instance, \Nyek, \Ndo$ 
	\end{algorithmic}
\end{algorithm}

Recall that, after the reductions, in instance~$\instance$ the agents’ maximin share values may no longer be at least $1$, since $\rstar^1$ can decrease these values. We prove \Cref{lem:N-minus} for the agents whose maximin share values decrease due to the primary reductions.

\begin{lemma}\label{lem:N-minus}
Let $\agent_i$ be an agent in $\agents$ whose $\dotmms{\valu_\agenti} < 1$ after the primary reductions. Then there exists a real number $s_i > 0$ such that the following conditions hold:
	\begin{align}
		&\valu_\agenti(\{\good_1\}) - (1 - \frac{\alpha}{3}) \le s_\agenti < \frac{4\alpha}{3} - 1,\label[ineq]{conds1}\\
		\forall \good \in \goods,\qquad\qquad &\valu_\agenti(\{\good\}) \notin \Bigl[\frac{4\alpha}{3} - 1 - s_i,\; \frac{\alpha}{3} - s_i\Bigr],\label{conds2}\\
		\mbox{$\forall\,\lambda \in [s_{\agenti},\,\tfrac{4\alpha}{3}-1]$},\qquad\qquad &\dotmms{\auxfun{\LM}{\agenti}} \ge 1 - 3\lambda. \label[ineq]{conds3}
	\end{align}
\end{lemma}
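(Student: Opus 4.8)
The plan is to isolate the single $\rstar^1$ step of the primary reduction sequence that is responsible for $\agent_i$'s $\MMS$ dropping below $1$, and to read $s_i$ off the pair of goods it removes. I work with the reallocated perfect sequence $\rho_1,\dots,\rho_r$ that \textsc{Primary-Reductions} actually applies; since $\agent_i\in\agents$, agent $\agent_i$ never receives a bundle, so $\agent_i$ belongs to the agent set of every intermediate instance and \Cref{reductionlem,reductionlem2} may be applied to $\valu_i$ at each step. First I would show that some $\rstar^1$ step removes a pair $\{h,\hat g\}$ — with $h$ the leading good of the current instance and $\hat g$ its dynamic‑index good (whose rank lies beyond twice the current number of agents) — such that $\valu_i(\{h\})+\valu_i(\{\hat g\})>1$: otherwise an induction along $\rho_1,\dots,\rho_r$ shows the $\MMS$ of $\agent_i$ never decreases ($\reductiontype^0,\reductiontype^1,\reductiontype^2$ steps by \Cref{reductionlem}; an $\rstar^1$ step because its removed pair has $\valu_i$‑value at most $1$, hence at most the current $\MMS$, so \Cref{reductionlem2} applies), contradicting $\dotmms{\valu_i}<1$. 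Let $\rho_\ell$ be the \emph{first} such step, with removed pair $\{h,\hat g\}$, and set $s_i:=\valu_i(\{h\})-(1-\tfrac{\alpha}{3})$.

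Verifying \cref{conds1} and \cref{conds2} is then bookkeeping from the state at $\rho_\ell$. Inapplicability of $\reductiontype^0$ gives $\valu_i(\{h\})<\alpha$; inapplicability of $\reductiontype^2$ gives that the two static‑part goods of $\reductiontype^2$ together with $\hat g$ have total $\valu_i$‑value $<\alpha$, and since $\hat g$ ranks strictly below both, $3\,\valu_i(\{\hat g\})<\alpha$, i.e. $\valu_i(\{\hat g\})<\tfrac{\alpha}{3}$. Combined with $\valu_i(\{h\})+\valu_i(\{\hat g\})>1$ these yield $0<s_i<\tfrac{4\alpha}{3}-1$. As $\goods$ lies inside the ordered instance present just before $\rho_\ell$, whose top good is $h$, we get $\valu_i(\{\good_1\})\le\valu_i(\{h\})$, the lower bound of \cref{conds1}. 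For \cref{conds2}: in that instance every good ranked at or above $\hat g$ has value $\ge\valu_i(\{\hat g\})>1-\valu_i(\{h\})=\tfrac{\alpha}{3}-s_i$, while every good below $\hat g$ has value at most that of the good immediately following $\hat g$, which by the ``largest‑index'' clause in the validity of $\rho_\ell$ satisfies $\valu_i(\{h\})+(\text{that value})<\alpha$, hence is $<\alpha-\valu_i(\{h\})=\tfrac{4\alpha}{3}-1-s_i$ (if $\hat g$ is the last good there is nothing below it and \cref{conds2} is immediate). Since $\goods$ only loses goods, no good of $\goods$ has value in $[\tfrac{4\alpha}{3}-1-s_i,\tfrac{\alpha}{3}-s_i]$.

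For \cref{conds3}, fix $\lambda\in[s_i,\tfrac{4\alpha}{3}-1]$. By \Cref{lem:virt-mms-before} the calibrated instance begins with $\mms^{\Pnumb}_{\auxfun{\LM}{\agenti}}(\Pgoods)\ge1-3\lambda$, and (by the argument behind \Cref{reductionlem}, since $\LM$ preserves the common order) $\reductiontype^0,\reductiontype^1,\reductiontype^2$ steps never lower the calibrated $\MMS$; so it suffices to show $\LM(\valu_i(\{h_k\}))+\LM(\valu_i(\{\hat g_k\}))\le1-3\lambda$ at every $\rstar^1$ step $\rho_k$ removing $\{h_k,\hat g_k\}$, since then \Cref{reductionlem2} applied to $\auxfun{\LM}{\agenti}$ keeps the calibrated $\MMS$ at least $1-3\lambda$ across $\rho_k$. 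Always $\LM(\valu_i(\{\hat g_k\}))\le\tfrac{\alpha}{3}-\lambda$ because $\valu_i(\{\hat g_k\})<\tfrac{\alpha}{3}$. If $\valu_i(\{h_k\})\le1-\tfrac{\alpha}{3}+\lambda$, monotonicity of $\LM$ gives $\LM(\valu_i(\{h_k\}))\le\LM(1-\tfrac{\alpha}{3}+\lambda)=1-\tfrac{\alpha}{3}-2\lambda$, and the sum is $\le1-3\lambda$. Otherwise $\valu_i(\{h_k\})>1-\tfrac{\alpha}{3}+\lambda\ge\valu_i(\{h\})$; since the leading‑good value is non‑increasing along the $\rstar^1$ steps, $\rho_k$ must occur strictly before $\rho_\ell$, so by minimality of $\rho_\ell$ we have $\valu_i(\{h_k\})+\valu_i(\{\hat g_k\})\le1$, and on this part of its domain $\LM(\valu_i(\{h_k\}))=\valu_i(\{h_k\})-3\lambda$, whence the sum is $\le(\valu_i(\{h_k\})+\valu_i(\{\hat g_k\}))-3\lambda\le1-3\lambda$. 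Hence $\dotmms{\auxfun{\LM}{\agenti}}\ge1-3\lambda$.

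The crux is this last case: bounding each calibrated value by its own ``allowance'' is too lossy exactly when $\agent_i$'s leading good is close to $\alpha$, and the way out is to exploit that every $\rstar^1$ bundle occurring \emph{before} $\rho_\ell$ has total $\valu_i$‑value at most $1$ — which is precisely why $\rho_\ell$ is taken to be the first ``bad'' $\rstar^1$ step and why monotonicity of the leading‑good values along the sequence is needed to place any over‑valuable leading good before $\rho_\ell$. I also expect to have to note that one cannot shortcut \cref{conds3} via \Cref{lem:calib-reduction}, since that lemma would require a bound on the \emph{original} leading good $\Pgood_1$, which may be far more valuable to $\agent_i$ than $h$; and that \Cref{obs:perseq-bundles} is what certifies $\reductiontype^0,\reductiontype^1,\reductiontype^2$ are inapplicable exactly at the $\rstar^1$ steps, which is used repeatedly above.
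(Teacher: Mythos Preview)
Your argument is correct; parts \eqref{conds1} and \eqref{conds2} match the paper almost verbatim. For \eqref{conds3} you take an unnecessarily long route.

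The paper dispatches \eqref{conds3} in one line by invoking \Cref{lem:calib-reduction}, applied not to the original instance $\Pinstance$ but to the \emph{intermediate} instance $\prinstance$ just before $\rho_\ell$ (with $\seinstance=\instance$). At that point $\hatmms{\valu_i}\ge 1$, and since $\valu_i(\{\prgood_1\})=1-\tfrac{\alpha}{3}+s_i\le 1-\tfrac{\alpha}{3}+\lambda$, both preconditions of the lemma hold, yielding $\dotmms{\auxfun{\LM}{\agenti}}\ge 1-3\lambda$ immediately. Your remark that one ``cannot shortcut'' via \Cref{lem:calib-reduction} because it would require a bound on $\Pgood_1$ is a misreading: the lemma takes an \emph{arbitrary} ordered instance $\prinstance$ as input, and the tail of the perfect sequence is still a sequence of valid reductions from $\prinstance$ to $\instance$. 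Your hand-rolled induction---in particular the extra ``second case'' for early $\rstar^1$ steps with a large leading good---is only necessary because you insist on tracking the calibrated $\MMS$ all the way from $\Pinstance$; starting from $\prinstance$ makes those earlier steps irrelevant.

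A minor variation: you pick $\rho_\ell$ as the first $\rstar^1$ step whose removed pair has $\valu_i$-value exceeding $1$, whereas the paper picks the first step at which $\agent_i$'s $\MMS$ drops below $1$. Your choice may occur strictly earlier, but your own induction shows $\hatmms{\valu_i}\ge 1$ still holds just before your $\rho_\ell$, so either choice feeds the paper's shortcut equally well.
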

\begin{proof}
	Consider the first reduction $\riki{\prinstance}{\reductiontype}{x}{\agent_j}{\seinstance}$ such that $\agent_i$'s maximin share drops below 1. That is, before the reduction, $\hatmms{v_\agenti} \geq 1$, but after the reduction, $\checkmms{v_\agenti} < 1$.
	By \cref{reductionlem}, among all reduction patterns, only $\rstar^1$ can decrease the maximin share of an agent. So we assume this reduction allocated two goods, $\prgoodrat{1}$ and $\prgoodrat{x}$, and define 
	$
	s_i = \valu_\agenti(\{\prgoodrat{1}\}) - \left(1 - \tfrac{\alpha}{3}\right).
	$
	
	We first show that  \( 0 < s_i < \tfrac{4\alpha}{3} - 1 \). The drop in the maximin share implies that $\valu_\agenti(\{\prgoodrat{1},\,\prgoodrat{x}\}) > 1$, by \cref{reductionlem2}. Meanwhile, because the reduction $\reductiontype^2$ does not apply here, we have $\valu_\agenti(\{\prgoodrat{x}\}) < \tfrac{\alpha}{3}$. Putting these together, we conclude that $\valu_\agenti(\{\prgoodrat{1}\}) > 1 - \tfrac{\alpha}{3}$, so $s_i > 0$.
	On the other hand, the fact that $\reductiontype^0$ does not apply means that $\valu_\agenti(\{\prgoodrat{1}\}) < \alpha$, which gives $s_i < \tfrac{4\alpha}{3} - 1$. Therefore,  $s_i$ is positive and lies in the desired range.
	Since all goods in $\goods$ are drawn from $\prgoods$, we have
	\begin{align*}
	\valu_\agenti(\{\good_1\}) &\le \valu_\agenti(\{\prgoodrat{1}\})\\ &= (1 - \tfrac{\alpha}{3}) + s_i.
\end{align*}
	This confirms that \Cref{conds1} holds.
	
	To show \Cref{conds2}, we use two properties of $\rstar^1$: the pair $\prgoodrat{1}$ and $\prgoodrat{x}$ has total value above 1, while replacing $\prgoodrat{x}$ with $\prgoodrat{x+1}$ \footnote{\label{foot:xm} In the boundary case \(x=\prit\), we define \(\prgoodrat{x+1}\) to be an auxiliary good that is assigned value \(0\) by every agent.} gives a total value below $\alpha$. Plugging the expression for $\valu_\agenti(\{\prgoodrat{1}\})$ into these inequalities, we get 
$	\valu_\agenti(\{\prgoodrat{x}\}) > \tfrac{\alpha}{3} - s_i,$ and 
	$\valu_\agenti(\{\prgoodrat{x+1}\}) < \tfrac{4\alpha}{3} - 1 - s_i.
	$
This implies that no good in $\prgoods$ has a value (according to agent~$\agent_i$) that falls within the interval $$\left[\frac{4\alpha}{3} - 1 - s_i, \frac{\alpha}{3} - s_i\right].$$ Since $\goods \subseteq \prgoods$, the same holds for all goods in $\goods$, implying \Cref{conds2}.
	
	To prove \Cref{conds3}, recall that $\hatmms{\valu_\agenti} \ge 1$. Then for any $s_i \le \lambda \le \tfrac{4\alpha}{3}-1$, we have $\valu_\agenti(\{\prgoodrat{1}\}) \le (1 - \tfrac{\alpha}{3}) + \lambda$, therefore applying \cref{lem:calib-reduction} implies:
	\[
	\dotmms{\auxfun{\LM}{\agenti}} \ge 1 - 3\lambda.
	\]
\end{proof}

	\clearpage
 \section{\boldmath \Cref{alg:allocation1}: Frequent Green Agents} \label{sec:alg-N1}
In this section, we present our algorithm for the case that \( |\Nyek| \geq \lbnone \). The pseudocode for this case is given in \Cref{alg:allocation1}.
This algorithm takes as input an instance that is irreducible with respect to  $\reductiontype^0,\reductiontype^1,\reductiontype^2,\rstar^1$. 
First we run a set of further reductions on the input instance. Next, we run a Bag-filling on the set of remaining agents. Throughout this section, whenever we need to choose between multiple agents, we prioritize agents in \( \Ndo \).

\begin{algorithm}[h]
	\caption{\textsc{Algorithm-Case1}}
	\label{alg:allocation1}
	\textbf{Input:} \(\instance, \Nyek,\Ndo\) \CComment{Assumption: $|\Nyek| \geq \lbnone$}\\
	\textbf{Output:} Allocation satisfying  $(\tfrac{10}{13})$-$\MMS$
	\begin{algorithmic}[1]
		\State $\Sinstance =$  \textsc{Secondary-Reductions}$( \instance)$  \CComment{See \cref{algo:N11}}
		\State  Run \textsc{Bag-filling1}$ ( \Sinstance, \Nyek,\Ndo)$ \CComment{See \cref{algo:N1}}
	\end{algorithmic}
\end{algorithm}

We organize this section in two parts. In \Cref{subsec:alg-N1-reductions}, we present the additional reductions used by our algorithm. Then, in \Cref{subsec:alg-N1-bag-filling}, we describe the Bag-filling process and prove the approximation guarantee of the resulting allocation.

\subsection{Secondary Reductions} \label{subsec:alg-N1-reductions}
In this case, we further apply a sequence of the secondary reductions, following the priority order
$$\reductiontype^1 \succ \reductiontype^2 \succ \reductiontype^3 \succ \reductiontype^4 \succ \widetilde\reductiontype^{2}.$$
The pseudocode for this step is provided in \cref{algo:N11}. We denote by $\Sinstance = (\Sagents, \Sgoods)$ the instance obtained after applying these reductions.

\begin{algorithm}[h]
	\caption{$\textsc{Secondary-Reductions}$}
	\label{algo:N11}
	\textbf{Input:} $\instance = (\agents, \goods)$ \\
	\textbf{Output:} $\Sinstance = (\Sagents, \Sgoods)$ 
	\begin{algorithmic}[1]
		\State 
		$
		\prreduc \leftarrow [\reductiontype^1 \succ \reductiontype^2 \succ \reductiontype^3 \succ \reductiontype^4 \succ \rstar^2] 
		$ \CComment{Reduction order}
		\State $\Sinstance \leftarrow \instance$ 
		\While{there exists a valid reduction from $\prreduc$ on $\Sinstance$}
		\State $\reductiontype \leftarrow$ the highest-priority valid reduction from $\prreduc$ on $\Sinstance$
		
		\State Apply a valid reduction $\rho = \riki{\Sinstance}{\reductiontype}{x}{\agent_i}{\Sinstance'}$ on $\Sinstance$
		\CComment{Priority is given to agents in $\Ndo$}
		\EndWhile
		\State \Return $\Sinstance = (\Sagents, \Sgoods)$ 
	\end{algorithmic}
\end{algorithm}


We now establish some useful bounds. Specifically, in \Cref{lem:N1-upper-bounds}, we show that the values of goods \( \good_1 \), \( \Sgood_2 \), and \( \Sgood_{2\Snumb+1} \) are bounded above under the functions \( \valu_\agenti \), \( (\FJ \star \valu_\agenti) \), and \( (\vG \star \FJ \star \valu_\agenti) \) for green agents. These bounds are later used to prove our  claims.

\newpage
\begin{observation}\label{lem:N1-upper-bounds} 
	Let $\agent_i$ be a green agent in $\Sagents$. Then, for goods \( \good_1 \), \( \Sgood_2 \), and \( \Sgood_{2\Snumb+1} \), the following upper bounds hold:
	\begin{table}[h]
		\centering
		\renewcommand{\arraystretch}{1.5}
		\begin{tabular}{>{\raggedright\arraybackslash}p{2cm}>{\raggedright\arraybackslash}p{4cm}>{\raggedright\arraybackslash}p{4cm}>{\raggedright\arraybackslash}p{5cm}}
			\toprule
			\rowcolor{headerblue!60} \textbf{Goods} & \textbf{$\valu_\agenti(\cdot)$} & \textbf{$\ff{\cdot}$} & \textbf{$(\vG \star \FJ \star \valu_\agenti)(\cdot)$} \\
			\midrule
			\rowcolor{rowblue1} \( \{\good_1\} \) & \( < 2\alpha - 1 \) & \( \le \tfrac{1}{2} \) & \( \le \tfrac{5}{2} - \tfrac{8\alpha}{3} \) \\
			\rowcolor{rowblue2} \( \{\Sgood_2\} \) & \( < \tfrac{\alpha}{2} \) & \( \le 1 - \tfrac{5\alpha}{6} \) & \( \le 3 - \frac{7\alpha}{2} \) \\
			\rowcolor{rowblue1} \( \{\Sgood_{2\Snumb+1}\} \) & \( < \tfrac{\alpha}{3} \) & \( \le 1 - \alpha \) & \( \le 2 - \tfrac{7\alpha}{3} \) \\
			\bottomrule
		\end{tabular}
		\label{tab:n1-upper-bounds}
	\end{table}
\end{observation}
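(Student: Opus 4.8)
The plan is to read off the three columns of the table in order. The first column records raw upper bounds on the values $\valu_\agenti(\{\good\})$; these will follow purely from the instance being irreducible under the relevant reduction rules, together with the ordering of the goods and the defining property of a green agent. The second and third columns will then be obtained by pushing each bound through the non-decreasing calibration functions $\FJ=\vF_{\frac{4\alpha}{3}-1}$ and $\vG$ and evaluating at the relevant breakpoints.

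For the first column, I would argue as follows. Since $\agent_i\in\Sagents\subseteq\agents$ and $\instance$ is irreducible under $\rstar^1$, whose dynamic index must be at least $2\numb+1$, we have $\valu_\agenti(\{\good_1,\good_{2\numb+1}\})<\alpha$; combined with $\valu_\agenti(\{\good_{2\numb+1}\})\ge 1-\alpha$ (the definition of a green agent), this gives $\valu_\agenti(\{\good_1\})<2\alpha-1$. Likewise, irreducibility of $\Sinstance$ under $\rstar^2$ (dynamic index $\ge 2$) yields $\valu_\agenti(\{\Sgood_1,\Sgood_2\})<\alpha$, and irreducibility of $\Sinstance$ under $\reductiontype^2$ (dynamic index $\ge 2\Snumb+1$) yields $\valu_\agenti(\{\Sgood_{2\Snumb-1},\Sgood_{2\Snumb},\Sgood_{2\Snumb+1}\})<\alpha$. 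Since the instances are ordered, $\valu_\agenti(\{\Sgood_1\})\ge\valu_\agenti(\{\Sgood_2\})$ and $\valu_\agenti(\{\Sgood_{2\Snumb-1}\}),\valu_\agenti(\{\Sgood_{2\Snumb}\})\ge\valu_\agenti(\{\Sgood_{2\Snumb+1}\})$, so the last two inequalities become $\valu_\agenti(\{\Sgood_2\})<\tfrac{\alpha}{2}$ and $\valu_\agenti(\{\Sgood_{2\Snumb+1}\})<\tfrac{\alpha}{3}$. (All referenced goods exist by the standing assumption that the number of goods is large enough.)

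For the second and third columns, since $\FJ$ and $\vG$ are non-decreasing, each entry is the previous column's bound fed into the function, so it only remains to evaluate $\FJ$ at $2\alpha-1,\tfrac{\alpha}{2},\tfrac{\alpha}{3}$ and then $\vG$ at the outputs. Substituting $\alpha=\tfrac{10}{13}$ (so that $\lambda:=\tfrac{4\alpha}{3}-1=\tfrac1{39}$) into \cref{def:F} and checking which linear piece is active, one finds: $2\alpha-1=\tfrac{7}{13}\in[\,1-\tfrac{2\alpha}{3},\,1-\tfrac{\alpha}{3}-\tfrac{\lambda}{2}\,)$ gives $\FJ(2\alpha-1)=\max(2-2\alpha,\tfrac12)=\tfrac12$; and $\tfrac{\alpha}{2},\tfrac{\alpha}{3}\in[\,\tfrac{\alpha}{3}-\lambda,\,1-\tfrac{2\alpha}{3}\,)$ give $\FJ(\tfrac{\alpha}{2})=\max(1-\alpha,\,1-\tfrac{5\alpha}{6})=1-\tfrac{5\alpha}{6}$ and $\FJ(\tfrac{\alpha}{3})=1-\alpha$. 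Then $\tfrac12$ and $1-\tfrac{5\alpha}{6}=\tfrac{14}{39}$ lie in $[\,2-\tfrac{13\alpha}{6},\,1\,]$, giving $\vG(\tfrac12)=\max(3-\tfrac{7\alpha}{2},\,\tfrac52-\tfrac{8\alpha}{3})=\tfrac52-\tfrac{8\alpha}{3}$ and $\vG(1-\tfrac{5\alpha}{6})=3-\tfrac{7\alpha}{2}$, while $1-\alpha=\tfrac{3}{13}\in[\,2-\tfrac{7\alpha}{3},\,2-\tfrac{13\alpha}{6}\,)$ gives $\vG(1-\alpha)=2-\tfrac{7\alpha}{3}$. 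Monotonicity of $\FJ$ and $\vG$ then converts each first-column bound into the corresponding entry of the other two columns.

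I do not expect a genuinely hard step: the argument is mostly bookkeeping. The two places needing care are (i) invoking irreducibility from the correct instance --- $\instance$ for $\rstar^1$, and $\Sinstance$ for $\rstar^2$ and $\reductiontype^2$ --- and using that $\agent_i\in\Sagents\subseteq\agents$ may witness inapplicability of $\rstar^1$ at $\instance$; and (ii) correctly pinning down which linear piece of each piecewise-linear calibration function is active at each evaluation point, which is precisely where the value $\alpha=\tfrac{10}{13}$ enters.
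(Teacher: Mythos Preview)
Your proposal is correct and follows essentially the same route as the paper: deduce the raw bounds from inapplicability of $\rstar^1$ on $\instance$ (together with the green-agent inequality) and of $\rstar^2$, $\reductiontype^2$ on $\Sinstance$, then push them through the monotone calibration functions $\FJ$ and $\vG$. The only cosmetic difference is that the paper keeps $\alpha$ parametric and records the range of $\alpha$ for which each piecewise branch is active, whereas you plug in $\alpha=\tfrac{10}{13}$ directly; both verifications are valid.
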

\begin{proof}
	For $\good_1$, since $\agent_i$ is a green agent, she values $\good_{2\numb+1}$ at least $1 - \alpha$. Moreover, since $\rstar^1$ is not applicable, $\valu_i(\{\good_1,\,\good_{2\numb+1}\})\le \alpha$. Together, these imply the first inequality. Noting that $\tfrac{3}{4} < \alpha < \tfrac{5}{6}$, we have
	\begin{align*}
		\ff{\{\good_1\}} &\leq \FJ(2\alpha-1)& \FJ \mbox{ is non-decreasing},
		\\&=\max\Bigl(2(1-\alpha),\, \frac{1}{2}\Bigr) &2\alpha - 1 \in [1 - \tfrac{2\alpha}{3}, 1 -\tfrac{\alpha}{3}-(\tfrac{2\alpha}{3}-\tfrac{1}{2})),
		\\
		&=\frac{1}{2}.
	\end{align*}
	By the definition of \( \vG \) and noting that \( \alpha\ge \tfrac{9}{13}\) we have   
	\begin{align*}
		\gf{\{\good_1\}} &\le
		\vG\left(\frac{1}{2}\right)&\vG\mbox{ is non-decreasing}, \\&= \max\Bigl(3-\frac{7\alpha}{2},\, \frac{5}{2} - \frac{8\alpha}{3}\Bigr) &  \tfrac{1}{2} \in [2 - \tfrac{13\alpha}{6}, 1], \\&=\frac{5}{2} - \frac{8\alpha}{3}.
	\end{align*}
	
	For \(\Sgood_2\), since reduction \(\rstar^2\) is not applicable we have $\valu_\agenti(\{\Sgood_2\}) < \tfrac{\alpha}{2}.$
	Since \(\tfrac{2}{3}<\alpha<\tfrac{6}{7}\)
	, we have
	\begin{align*}
		\ff{\{\Sgood_2\}} &\leq \FJ\Bigl(\frac{\alpha}{2}\Bigr)& \FJ \mbox{ is non-decreasing},
		\\&=\max\Bigl(1-\alpha,\;1-\frac{5\alpha}{6}\Bigr) &\tfrac{\alpha}{2}
		\;\in\;[\tfrac{\alpha}{3}-(\tfrac{4\alpha}{3}-1),\;1-\tfrac{2\alpha}{3}),
		\\
		&=1-\frac{5\alpha}{6}.
	\end{align*}
	By the definition of \( \vG \) and noting that \( \alpha \geq \tfrac{3}{4} \) we have   
	\begin{align*}
		\gf{\{\Sgood_2\}} &\le
		\vG\left(1 - \frac{5\alpha}{6}\right)&\vG\mbox{ is non-decreasing}, \\&= \max\Bigl(3 - \frac{7\alpha}{2},\, 1 - \frac{5\alpha}{6} - \frac{8\alpha}{3} + 2\Bigr) &  1 - \tfrac{5\alpha}{6} \in [2 - \tfrac{13\alpha}{6},\, 1], \\&=3 - \frac{7\alpha}{2}.
	\end{align*}
	For \( \Sgood_{2\Snumb+1} \), since reduction \( \reductiontype^2 \) is not applicable, we have $\valu_\agenti(\{\Sgood_{2\Snumb+1}\}) < \tfrac{\alpha}{3}.$  
	Since \(\alpha>\tfrac{3}{4}\), we have
	\begin{align*}
		\ff{\{\Sgood_{2\Snumb+1}\}} &\leq  \FJ\Bigl(\frac{\alpha}{3}\Bigr) & \FJ \mbox{ is non-decreasing},\\
		&=\max\left(\frac{\alpha}{3} - \Bigl(\frac{4\alpha}{3}-1\Bigr),\, \frac{\alpha}{3} - \Bigl(\frac{4\alpha}{3}-1\Bigr)\right)  &\tfrac{\alpha}{3}
		\in[\tfrac{\alpha}{3} - (\tfrac{4\alpha}{3}-1),\;1 - \tfrac{2\alpha}{3}),
		\\
		&=1-\alpha.
	\end{align*}
	Finally, by the definition of \( \vG \), and noting that \(\tfrac{3}{4} \le \alpha \le \tfrac{6}{7} \), we have  
	\begin{align*}
		\gf{\{\Sgood_{2\Snumb+1}\}} &\le \vG\left(1 - \alpha\right)&\vG\mbox{ is non-decreasing},\\ &= \max\Bigl(2 - \frac{7\alpha}{3},\, (1-\alpha) - \frac{4\alpha}{3} + 1\Bigr) &1 - \alpha \in [2 - \tfrac{7\alpha}{3}, 2 - \tfrac{13\alpha}{6}),\\&=2 - \frac{7\alpha}{3}.
	\end{align*} 
\end{proof}

Also, for \( \Sgood_{3\Snumb+1} \), we establish a strong upper bound in \cref{obs:3n-kam}.

\begin{observation}\label{obs:3n-kam}
	Let $\agent_i \in \Sagents$ be an agent with $\dotmms{\valu_\agenti} < 1$ after the primary reductions. Then $\valu_{\agenti}(\{\Sgood_{3\Snumb+1}\}) < \tfrac{4\alpha}{3}-1$.
\end{observation}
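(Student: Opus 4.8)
The plan is to argue by contradiction: assume $\valu_\agenti(\{\Sgood_{3\Snumb+1}\}) \ge \tfrac{4\alpha}{3}-1$ and show this forces the reduction pattern $\reductiontype^3$ to be applicable to $\Sinstance$, contradicting the fact that $\Sinstance$ is irreducible with respect to the secondary reductions (in particular $\reductiontype^3$). If $\Sinstance$ has fewer than $3\Snumb+1$ goods, $\Sgood_{3\Snumb+1}$ is the value-$0$ dummy good and the claim is trivial, so assume it exists; then $\Sgood_{3\Snumb-2},\Sgood_{3\Snumb-1},\Sgood_{3\Snumb},\Sgood_{3\Snumb+1}$ all exist. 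Since $\dotmms{\valu_\agenti}<1$, \Cref{lem:N-minus} provides a real $s_\agenti$ with $0<s_\agenti<\tfrac{4\alpha}{3}-1$ such that, by \Cref{conds2}, $\valu_\agenti$ assigns no good of $\goods$ — hence none of $\Sgoods\subseteq\goods$ — a value in the band $\bigl[\tfrac{4\alpha}{3}-1-s_\agenti,\ \tfrac{\alpha}{3}-s_\agenti\bigr]$.

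The crucial step is to use this forbidden band to upgrade the assumed lower bound. For $\alpha=\tfrac{10}{13}$ one has $\tfrac{4\alpha}{3}-1<1-\alpha$, hence $s_\agenti<1-\alpha$ and so $\tfrac{\alpha}{3}-s_\agenti>\tfrac{\alpha}{3}-(1-\alpha)=\tfrac{4\alpha}{3}-1$; since also $s_\agenti>0$, the value $\valu_\agenti(\{\Sgood_{3\Snumb+1}\})\ge\tfrac{4\alpha}{3}-1$ lies strictly above the lower endpoint of the band, so it cannot lie in the band and must exceed its upper endpoint: $\valu_\agenti(\{\Sgood_{3\Snumb+1}\})>\tfrac{\alpha}{3}-s_\agenti$. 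Moreover $\tfrac{4\alpha}{3}-1<\tfrac{\alpha}{12}$ for $\alpha=\tfrac{10}{13}$, so $s_\agenti<\tfrac{\alpha}{12}$ and therefore $\tfrac{\alpha}{3}-s_\agenti>\tfrac{\alpha}{3}-\tfrac{\alpha}{12}=\tfrac{\alpha}{4}$. Combining, $\valu_\agenti(\{\Sgood_{3\Snumb+1}\})>\tfrac{\alpha}{4}$. Because every instance is ordered, $\valu_\agenti(\{\Sgood_j\})\ge\valu_\agenti(\{\Sgood_{3\Snumb+1}\})>\tfrac{\alpha}{4}$ for all $j\le 3\Snumb+1$, in particular for $j\in\{3\Snumb-2,\,3\Snumb-1,\,3\Snumb,\,3\Snumb+1\}$, so adding the four values gives $\valu_\agenti(\{\Sgood_{3\Snumb-2},\Sgood_{3\Snumb-1},\Sgood_{3\Snumb},\Sgood_{3\Snumb+1}\})>4\cdot\tfrac{\alpha}{4}=\alpha$. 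But $\{3\Snumb-2,3\Snumb-1,3\Snumb\}$ is exactly the static part of $\reductiontype^3$ and $3\Snumb+1$ satisfies the lower bound $x\ge 3\Snumb+1$ on its dynamic index; since $\agent_i\in\Sagents$ values this set at least $\alpha$, the set of admissible dynamic indices is nonempty, and taking its largest element with a suitable agent yields a valid $\reductiontype^3$-reduction on $\Sinstance$, contradicting irreducibility. Hence $\valu_\agenti(\{\Sgood_{3\Snumb+1}\})<\tfrac{4\alpha}{3}-1$.

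I expect the only non-routine point to be the first move of the second paragraph: the naive per-good bound $\valu_\agenti(\{\Sgood_j\})\ge\tfrac{4\alpha}{3}-1$ is too weak, since four such goods sum to only $\tfrac{16\alpha}{3}-4<\alpha$ at $\alpha=\tfrac{10}{13}$; it is essential to invoke the forbidden band of \Cref{lem:N-minus} to lift the per-good bound above $\tfrac{\alpha}{4}$, after which $\reductiontype^3$-irreducibility closes the argument. Everything else reduces to the elementary chain $s_\agenti<\tfrac{4\alpha}{3}-1<\min\{1-\alpha,\ \tfrac{\alpha}{12}\}$, which holds because $\tfrac{10}{13}<\tfrac{4}{5}<\tfrac{6}{7}$.
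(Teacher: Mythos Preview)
Your proof is correct and uses the same two ingredients as the paper: the forbidden interval from \Cref{lem:N-minus} and the inapplicability of $\reductiontype^3$ on $\Sinstance$. The paper argues directly---first using that $\reductiontype^3$ is not applicable to obtain $\valu_\agenti(\{\Sgood_{3\Snumb+1}\})<\tfrac{\alpha}{4}$, then observing $\tfrac{\alpha}{4}\in[\tfrac{4\alpha}{3}-1,\,1-\alpha]\subseteq[\tfrac{4\alpha}{3}-1-s_i,\,\tfrac{\alpha}{3}-s_i]$ to push the value below $\tfrac{4\alpha}{3}-1$---whereas you run the contrapositive, lifting the value above $\tfrac{\alpha}{4}$ via the band and then exhibiting an applicable $\reductiontype^3$. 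The arithmetic checks $s_i<\tfrac{4\alpha}{3}-1<\tfrac{\alpha}{12}$ and $\tfrac{\alpha}{4}\in[\tfrac{4\alpha}{3}-1,1-\alpha]$ are equivalent, so the arguments are essentially identical.
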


\begin{proof}
	By \cref{lem:N-minus}, for every $\Sgood \in \Sgoods$ we have $\valu_{\agenti}(\{\Sgood\}) \notin [\tfrac{4\alpha}{3}-1-s_i,\, \tfrac{\alpha}{3}-s_i]$. From \cref{conds1}, $0 < s_i < \tfrac{4\alpha}{3}-1$, hence $[\tfrac{4\alpha}{3}-1,\, 1-\alpha] \subseteq [\tfrac{4\alpha}{3}-1-s_i,\, \tfrac{\alpha}{3}-s_i]$. Since $\reductiontype^3$ is not applicable, $\valu_{\agenti}(\{\Sgood_{3\Snumb+1}\}) < \tfrac{\alpha}{4}$, and as $\tfrac{\alpha}{4} \in [\tfrac{4\alpha}{3}-1,\, 1-\alpha]$, the claim follows. 
\end{proof}

As shown in \Cref{reductionlem}, applying \( \reductiontype^1, \reductiontype^2, \reductiontype^3, \reductiontype^4 \) does not reduce the \(\MMS\) value of any agent. However, this is not necessarily true for \( \rstar^2 \). In \cref{lem:plus-two,lem:minus-two}, we provide a set of bounds on the valuation of agents after the secondary reductions.

\begin{lemma}\label{lem:plus-two}
	Let $\agent_i$ be a green agent in $\Sagents$ such that  $\dotmms{\valu_\agenti} \ge 1$ after the primary reductions, and $\ddotmms{\valu_\agenti} < 1$ after the secondary reductions. Then, there exists a positive real number 
	$t_\agenti$ satisfying the following conditions:
	\begin{align}
		&\valu_\agenti(\{\Sgood_1\}) - \frac{1}{2} \le t_\agenti < 2\alpha - \frac{3}{2},\label[ineq]{cond1}\\
		\forall \Sgood \in \Sgoods,\qquad\qquad &\valu_\agenti(\{\Sgood\}) \notin \Bigl[\alpha - \frac{1}{2} - t_\agenti,\; \frac{1}{2} - t_\agenti\Bigr],\label{cond2}\\
		& \ddotmms{\auxfun{\AN{t_\agenti}}{\agenti}} \ge 1 - 2t_\agenti. \label[ineq]{cond3}
	\end{align}
\end{lemma}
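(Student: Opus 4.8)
The plan is to reproduce the argument of \cref{lem:N-minus}, now with the primary reductions replaced by the secondary ones, the rule $\rstar^1$ replaced by $\rstar^2$, and the calibration family $\LM$ replaced by $\AN{\lambda}=\vHA_\lambda$. Since $\dotmms{\valu_\agenti}\ge 1$ holds at the start of the secondary reductions while $\ddotmms{\valu_\agenti}<1$ holds at the end, and since by \cref{reductionlem} the only pattern among those in $\prreduc_2$ that can decrease an agent's maximin share is $\rstar^2$, there is a first secondary reduction after which $\agent_i$'s maximin share falls below $1$, and it uses pattern $\rstar^2$. Let $\prinstance=(\pragents,\prgoods)$ with $\prnumb=|\pragents|$ be the instance on which this reduction is performed, so $\mms^{\prnumb}_{\valu_\agenti}(\prgoods)\ge 1$, and let $\Sgood^\star_1,\Sgood^\star_x$ (with $x\ge 2$) be the two goods it allocates, sitting at positions $1$ and $x$ of $\prinstance$. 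I set
\[
t_\agenti \;:=\; \valu_\agenti(\{\Sgood^\star_1\}) - \tfrac12 .
\]

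For \cref{cond1}: the drop in $\agent_i$'s maximin share implies, by \cref{reductionlem2}, that $\valu_\agenti(\{\Sgood^\star_1,\Sgood^\star_x\})>\mms^{\prnumb}_{\valu_\agenti}(\prgoods)\ge 1$; since $\prinstance$ is ordered and $x\ge 2$, $\valu_\agenti(\{\Sgood^\star_x\})\le\valu_\agenti(\{\Sgood^\star_1\})$, hence $2\valu_\agenti(\{\Sgood^\star_1\})>1$ and $t_\agenti>0$. For the upper bound, since $\agent_i\in\Sagents$ is green and $\instance$ is irreducible with respect to $\rstar^1$, \cref{lem:N1-upper-bounds} gives $\valu_\agenti(\{\good_1\})<2\alpha-1$; as the secondary reductions only delete goods, $\valu_\agenti(\{\Sgood^\star_1\})\le\valu_\agenti(\{\good_1\})<2\alpha-1$, giving $t_\agenti<2\alpha-\tfrac32$. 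Finally $\Sgood_1$, the largest good of $\Sgoods\subseteq\prgoods$, satisfies $\valu_\agenti(\{\Sgood_1\})\le\valu_\agenti(\{\Sgood^\star_1\})=\tfrac12+t_\agenti$, which is the remaining part of \cref{cond1}.

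For \cref{cond2}: validity of the $\rstar^2$ reduction forces that either $x$ is the last index of $\prinstance$ or no agent values $\{\Sgood^\star_1,\Sgood^\star_{x+1}\}$ at least $\alpha$; in either case — using the convention that a nonexistent $\Sgood^\star_{x+1}$ has value $0$ — we get $\valu_\agenti(\{\Sgood^\star_1\})+\valu_\agenti(\{\Sgood^\star_{x+1}\})<\alpha$. Combined with $\valu_\agenti(\{\Sgood^\star_1,\Sgood^\star_x\})>1$ and $\valu_\agenti(\{\Sgood^\star_1\})=\tfrac12+t_\agenti$, this yields $\valu_\agenti(\{\Sgood^\star_x\})>\tfrac12-t_\agenti$ and $\valu_\agenti(\{\Sgood^\star_{x+1}\})<\alpha-\tfrac12-t_\agenti$; by the ordering of $\prinstance$ every good at a position $\le x$ has value exceeding $\tfrac12-t_\agenti$ and every good at a position $\ge x+1$ has value below $\alpha-\tfrac12-t_\agenti$, and since $\alpha\le 1$ the interval $[\alpha-\tfrac12-t_\agenti,\;\tfrac12-t_\agenti]$ is well-defined, so no good of $\prinstance$ — hence none of $\Sgoods$ — lies in it. For \cref{cond3}, I invoke \cref{lem:calib-reduction} (the row of \cref{tab:calibrated-mms-bounds} for $\vHA_\lambda$ under the order $\prreduc_2$) on $\prinstance$ together with the suffix of the secondary reduction sequence carrying $\prinstance$ to $\Sinstance$, which is itself a sequence of valid reductions using patterns from $\prreduc_2$. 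With $\lambda=t_\agenti$ the three preconditions hold: $t_\agenti<2\alpha-\tfrac32\le\tfrac12$, $\mms^{\prnumb}_{\valu_\agenti}(\prgoods)\ge 1$, and $\valu_\agenti(\{\Sgood^\star_1\})=\tfrac12+t_\agenti$; the conclusion $\checkmms{\auxfun{\vHA_{t_\agenti}}{\agenti}}\ge 1-2t_\agenti$, read on the final instance $\Sinstance$, is exactly $\ddotmms{\auxfun{\AN{t_\agenti}}{\agenti}}\ge 1-2t_\agenti$.

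I expect no deep obstacle: the proof is a structural mirror of \cref{lem:N-minus}. The one point that demands care is keeping straight the three distinct ``first goods'' — $\good_1$ of $\instance$, $\Sgood_1$ of $\Sinstance$, and the position-$1$ good $\Sgood^\star_1$ of the intermediate instance $\prinstance$ on which the critical $\rstar^2$ acts — and invoking monotonicity of good values along the reduction sequence to relate them; one must also confirm that \cref{lem:calib-reduction} is applied to a (not necessarily perfect) suffix of valid reductions with patterns from $\prreduc_2$. The numeric facts used ($\alpha\le 1$, and $2\alpha-\tfrac32<\tfrac12$ at $\alpha=\tfrac{10}{13}$) are immediate.
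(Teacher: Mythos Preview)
Your proposal is correct and follows essentially the same approach as the paper's proof: identify the first secondary reduction (necessarily of type $\rstar^2$) that drops $\agent_i$'s maximin share below~$1$, set $t_\agenti=\valu_\agenti(\{\prgoodrat{1}\})-\tfrac12$, derive \cref{cond1} from \cref{reductionlem2} and \cref{lem:N1-upper-bounds}, \cref{cond2} from the dynamic-index maximality of $\rstar^2$, and \cref{cond3} from \cref{lem:calib-reduction}. If anything, your write-up is slightly more careful than the paper's in explicitly distinguishing $\good_1$, $\Sgood_1$, and the intermediate instance's first good and in chaining the inequality $\valu_\agenti(\{\prgoodrat{1}\})\le\valu_\agenti(\{\good_1\})<2\alpha-1$ rather than invoking \cref{lem:N1-upper-bounds} directly on~$\prgoodrat{1}$.
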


\begin{proof}
Consider the first reduction $\rho = \riki{\prinstance}{\reductiontype}{x}{\agent_j}{\seinstance}$ such that $\agent_i$'s maximin share drops below $1$, that is, before the reduction, $\hatmms{v_\agenti} \geq 1$, but after the reduction, $\checkmms{v_\agenti} < 1$. By \cref{reductionlem}, among all reduction patterns, only $\rstar^2$ can decrease the maximin share of an agent. So we assume this reduction allocated goods $\{\prgoodrat{1},\prgoodrat{x}\}$, and define 
$
t_\agenti = \valu_\agenti(\{\prgood_{1}\}) - \tfrac{1}{2}.
$
The drop in the maximin share after this reduction implies  $\valu_\agenti(\{\prgood_{1},\prgood_{x}\}) > 1$, by \cref{reductionlem2}. Meanwhile,  since \( \valu_\agenti(\{\prgood_{1}\}) \geq \valu_\agenti(\{\prgood_{x}\}) \), we have \( \valu_\agenti(\{\prgood_{1}\}) > \mmsfrac{1}{2} \), which means \( t_i > 0 \). By \Cref{lem:N1-upper-bounds}, we have \( \valu_\agenti(\{\prgood_{1}\}) < 2\alpha - 1 \), which gives \( t_i < 2\alpha - \tfrac{3}{2} \).  Therefore,  $t_i$ is positive and lies in the desired range.
Since all goods in $\Sgoods$  are drawn from $\prgoods$, we have
$
\valu_\agenti(\{\Sgood_1\}) \leq \valu_\agenti(\{\prgood_{1}\})$ and hence 
$\valu_\agenti(\{\Sgood_1\})\leq \frac{1}{2} + t_\agenti.$
This proves \Cref{cond1}.

	To show \Cref{cond2}, note that by the construction of $\rstar^2$, we have $$\valu_\agenti(\{\prgoodrat{1},\prgoodrat{x}\})>1 \qquad \text{and} \qquad \valu_{\agenti}(\{\prgoodrat{1},\prgoodrat{x+1}\})<\alpha. \footref{foot:xm}$$  Plugging the bound for $\valu_\agenti(\{\prgoodrat{1}\})$ into these inequalities, we get 
$$
\valu_\agenti(\{\prgood_{x}\}) > \tfrac{1}{2} - t_\agenti 
\qquad \text{and} \qquad
\valu_\agenti(\{\prgood_{x+1}\}) < \alpha - \mmsfrac{1}{2} - t_\agenti.
$$
This implies that no good in $\prgoods$ has a value to agent~$\agent_i$ that falls within the interval $[\alpha - \tfrac{1}{2} - t_\agenti,\; \tfrac{1}{2} - t_\agenti]$. Since $\Sgoods \subseteq \prgoods$, the same holds for all goods in $\Sgoods$, implying \Cref{cond2}.

	To prove \Cref{cond3}, recall that $\hatmms{\valu_\agenti} \ge 1$. Since  $\valu_\agenti(\{\prgood_{1}\}) = \tfrac{1}{2} + t_\agenti$, applying \cref{lem:calib-reduction} implies $\ddotmms{\auxfun{\AN{t_\agenti}}{\agenti}} \ge 1 - 2t_\agenti.$
\end{proof}

\begin{lemma}\label{lem:minus-two}
	Let $\agent_i$ be a green agent in $\Sagents$ such that  $\dotmms{\valu_\agenti} < 1$ after the primary reductions, and $\ddotmms{\auxfun{\FJ}{\agenti}} < 4(1-\alpha)$ after the secondary reductions. Then, there exists a positive real number 
	$t_\agenti$ satisfying the following conditions:
	\begin{align}
		&\ff{\{\Sgood_1\}} - 2(1-\alpha) \le t_\agenti \le 2\alpha - \frac{3}{2},\label[ineq]{condd1}\\
		\forall \Sgood \in \Sgoods,\qquad\qquad &\ff{\{\Sgood\}} \notin \Bigl[\frac{5\alpha}{3} - 1 - t_\agenti,\; 2(1-\alpha) - t_\agenti\Bigr],\label{condd2}\\
		& \ddotmms{\auxfun{\vHB_{t_\agenti}\star\FJ}{\agenti}} \ge 4(1-\alpha) - 2t_\agenti. \label[ineq]{condd3}
	\end{align}
\end{lemma}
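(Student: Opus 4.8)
The plan is to replay the argument of \Cref{lem:plus-two}, but with the calibrated valuation $\auxfun{\FJ}{\agenti}$ in the role that $\valu_\agenti$ plays there, and with the target $4(1-\alpha)$ in place of $1$. First I would record a consequence of \Cref{lem:N-minus}: since $\dotmms{\valu_\agenti} < 1$, there is $s_\agenti$ with $0 < s_\agenti < \tfrac{4\alpha}{3}-1$, and instantiating \Cref{conds3} at $\lambda = \tfrac{4\alpha}{3}-1$ gives $\dotmms{\auxfun{\FJ}{\agenti}} \ge 1 - 3(\tfrac{4\alpha}{3}-1) = 4(1-\alpha)$, using $\FJ = \vF_{\tfrac{4\alpha}{3}-1}$. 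Combined with the hypothesis $\ddotmms{\auxfun{\FJ}{\agenti}} < 4(1-\alpha)$, the $\MMS$ value of $\agent_i$ under $\auxfun{\FJ}{\agenti}$ drops below $4(1-\alpha)$ somewhere during the secondary reductions. Since $\FJ$ is non-decreasing, $\auxfun{\FJ}{\agenti}$ is an additive valuation ranking the goods in the common order, so \Cref{reductionlem} applies to it and only a $\rstar^2$ reduction can cause such a drop. Let $\rho = \riki{\prinstance}{\rstar^2}{x}{\agent_j}{\seinstance}$ be the first reduction after which this value is $< 4(1-\alpha)$ (so $\prinstance$ is the instance immediately preceding $\rho$, and $\rho$ allocates $\{\prgoodrat 1, \prgoodrat x\}$), and set $t_\agenti := \ff{\{\prgoodrat 1\}} - 2(1-\alpha)$.

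For \Cref{condd1}: since $\prgoodrat 1$ is the top good of $\prinstance$ and the top good's value is non-increasing along any reduction sequence, $\valu_\agenti(\{\prgoodrat 1\}) \le \valu_\agenti(\{\good_1\}) < 2\alpha-1$ by the first row of \Cref{lem:N1-upper-bounds}. Because $\rho$ drives the $\MMS$ value of $\agent_i$ under $\auxfun{\FJ}{\agenti}$ from $\ge 4(1-\alpha)$ to $< 4(1-\alpha)$, \Cref{reductionlem2} applied to $\auxfun{\FJ}{\agenti}$ gives $\ff{\{\prgoodrat 1\}} + \ff{\{\prgoodrat x\}} > 4(1-\alpha)$; since $\FJ$ is non-decreasing and $\prgoodrat 1$ dominates $\prgoodrat x$, this forces $\ff{\{\prgoodrat 1\}} > 2(1-\alpha)$, that is, $t_\agenti > 0$. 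Plugging $\ff{\{\prgoodrat 1\}} > 2(1-\alpha)$ and $\valu_\agenti(\{\prgoodrat 1\}) < 2\alpha-1$ into the explicit formula for $\FJ = \vF_{\tfrac{4\alpha}{3}-1}$ (for $\tfrac34 < \alpha < \tfrac56$ one has $\FJ(y) \le 2(1-\alpha)$ for $y \le \tfrac12$, and $\FJ(y) = y - 2\alpha + \tfrac32$ on $(\tfrac12,\,2\alpha-1)$) pins down $\valu_\agenti(\{\prgoodrat 1\}) = \tfrac12 + t_\agenti$, hence $t_\agenti < 2\alpha - \tfrac32$. Lastly $\ff{\{\Sgood_1\}} \le \ff{\{\prgoodrat 1\}}$ because $\Sgood_1$ is weakly below $\prgoodrat 1$ in the common order, which gives the remaining inequality in \Cref{condd1}.

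For \Cref{condd2}: maximality of the dynamic index $x$ means replacing $\prgoodrat x$ by $\prgoodrat{x+1}$ breaks the $\rstar^2$ condition, so $\valu_\agenti(\{\prgoodrat 1, \prgoodrat{x+1}\}) < \alpha$ (taking $\prgoodrat{x+1}$ to be an auxiliary good of value $0$ when $x = \prit$), whence $\valu_\agenti(\{\prgoodrat{x+1}\}) < \alpha - \tfrac12 - t_\agenti$; meanwhile $\ff{\{\prgoodrat 1\}} + \ff{\{\prgoodrat x\}} > 4(1-\alpha)$ yields $\ff{\{\prgoodrat x\}} > 2(1-\alpha) - t_\agenti$. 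Each good of $\Sgoods$ is weakly below $\prgoodrat x$ or weakly below $\prgoodrat{x+1}$ in the common order, so by monotonicity of $\FJ$ its calibrated value is either $\ge \ff{\{\prgoodrat x\}} > 2(1-\alpha) - t_\agenti$ or $\le \FJ(\alpha - \tfrac12 - t_\agenti)$; a short case check on which piece of $\FJ$ contains $\alpha - \tfrac12 - t_\agenti$ (its flat piece of height $1-\alpha$ if $\alpha - \tfrac12 - t_\agenti \le \tfrac\alpha3$, otherwise the affine piece $y \mapsto y - \tfrac{4\alpha}{3}+1$), together with $\alpha > \tfrac34$ and $t_\agenti \le 2\alpha - \tfrac32$, gives $\FJ(\alpha - \tfrac12 - t_\agenti) < \tfrac{5\alpha}{3} - 1 - t_\agenti$, so no good of $\Sgoods$ has calibrated value in $[\tfrac{5\alpha}{3}-1-t_\agenti,\; 2(1-\alpha)-t_\agenti]$. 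Finally, for \Cref{condd3} I apply the $\vHB_\lambda$ row of \Cref{lem:calib-reduction} with base valuation $\auxfun{\FJ}{\agenti}$, parameter $\lambda = t_\agenti$, and initial instance $\prinstance$: Precondition~1 holds since $t_\agenti < 2\alpha - \tfrac32 \le 2(1-\alpha)$ (as $\alpha \le \tfrac78$), Precondition~2 holds since the $\MMS$ value of $\agent_i$ under $\auxfun{\FJ}{\agenti}$ is $\ge 4(1-\alpha)$ just before $\rho$, and Precondition~3 holds with equality by the definition of $t_\agenti$; the conclusion, instantiated at the final instance $\Sinstance$, is exactly $\ddotmms{\auxfun{\vHB_{t_\agenti}\star\FJ}{\agenti}} \ge 4(1-\alpha) - 2t_\agenti$.

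I expect \Cref{condd2} to be the main obstacle: unlike in \Cref{lem:plus-two}, where the forbidden interval is phrased in terms of the raw values $\valu_\agenti(\cdot)$, here it is phrased in terms of the calibrated values $\ff{\cdot}$, so the value-gap between $\prgoodrat x$ and $\prgoodrat{x+1}$ has to be pushed through the four-piece map $\FJ$ while keeping careful track of which piece is active (and checking the resulting inequalities at $\alpha = \tfrac{10}{13}$). A secondary point needing care is the licence to invoke \Cref{reductionlem} and \Cref{reductionlem2} with $\auxfun{\FJ}{\agenti}$ substituted for a raw valuation; this is precisely the manoeuvre already used in the appendix proof of \Cref{lem:calib-reduction}, and it is valid because $\FJ$ is non-decreasing, so $\auxfun{\FJ}{\agenti}$ is additive and ranks the goods in the common order.
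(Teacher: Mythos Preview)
Your proposal is correct and follows essentially the same route as the paper's own proof: identify the first $\rstar^2$ step at which the $\auxfun{\FJ}{\agenti}$-MMS drops below $4(1-\alpha)$, set $t_\agenti = \ff{\{\prgoodrat{1}\}} - 2(1-\alpha)$, and then verify \Cref{condd1}--\Cref{condd3} using \Cref{reductionlem2}, the dynamic-index gap, and \Cref{lem:calib-reduction}.

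The only real difference is in how the gap for \Cref{condd2} is obtained. The paper does \emph{not} compute $\valu_\agenti(\{\prgoodrat{1}\})$ exactly; it instead shows the weaker $\valu_\agenti(\{\prgoodrat{1}\}) \ge 1 - \tfrac{2\alpha}{3} + t_\agenti$ (by evaluating $\FJ$ at $1-\tfrac{2\alpha}{3}+t_\agenti$ and invoking monotonicity), which immediately gives $\valu_\agenti(\{\prgoodrat{x+1}\}) < \tfrac{5\alpha}{3}-1-t_\agenti$, and then uses the trivial bound $\ff{\cdot}\le \valu_\agenti(\cdot)$. Your inversion of $\FJ$ to pin down $\valu_\agenti(\{\prgoodrat{1}\}) = \tfrac12 + t_\agenti$ is valid and yields the tighter $\valu_\agenti(\{\prgoodrat{x+1}\}) < \alpha - \tfrac12 - t_\agenti$, but the subsequent ``case check on which piece of $\FJ$'' is unnecessary: since $\FJ(y)\le y$ and $\alpha - \tfrac12 < \tfrac{5\alpha}{3}-1$ for $\alpha>\tfrac34$, you already have $\ff{\{\prgoodrat{x+1}\}} \le \valu_\agenti(\{\prgoodrat{x+1}\}) < \tfrac{5\alpha}{3}-1-t_\agenti$ directly. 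So the obstacle you anticipated in \Cref{condd2} dissolves without any piecewise analysis.
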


\begin{proof}
	By setting \(\lambda = \tfrac{4\alpha}{3} - 1\) in \Cref{conds3}, we obtain $$\dotmms{\auxfun{\FJ}{\agenti}} \;\ge\; 4(1-\alpha).$$ Hence,
	consider the first reduction $\rho =\riki{\prinstance}{\reductiontype}{x}{\agent_j}{\seinstance}$ such that $\agent_i$'s maximin share under $\auxfun{\FJ}{\agenti}$ drops below $4(1-\alpha)$. By \cref{reductionlem}, among all reduction patterns, only $\rstar^2$ can decrease the maximin share value of an agent. Thus, we assume this reduction allocated two goods, $\prgoodrat{1}$ and $\prgoodrat{x}$, and define $t_\agenti = \ff{\{\prgood_{1}\}} - 2(1-\alpha).$
	
	We first show that  $t_\agenti$ is positive and less than $2\alpha - \tfrac{3}{2}$. The drop in the maximin share by $\rho$ implies that $\ff{\{\prgood_{1},\,\prgood_{x}\}} > 4(1 - \alpha),$ by \cref{reductionlem2}. Since $\prgood_{1}$ is more valuable than $\prgood_{x}$, we conclude that \( \ff{\{\prgood_{1}\}} > 2(1-\alpha) \), which means \( t_\agenti > 0 \).	
	On the other hand, by \Cref{lem:N1-upper-bounds}, we have \( \ff{\{\prgood_{1}\}} \le \tfrac{1}{2}\), which gives \( t_\agenti \le 2\alpha - \tfrac{3}{2} \). Therefore,  $t_i$ is positive and lies in the desired range.
	Since all goods in $\Sgoods$ are drawn from $\prgoods$, we have
	\begin{align*}
	\ff{\{\Sgood_1\}} &\leq \ff{\{\prgood_{1}\}} \\ &= 2(1-\alpha) + t_\agenti.
	\end{align*}
	This proves \Cref{condd1}. 
	
	To show \Cref{condd2}, we use two properties of $\rstar^2$:
	$$
	\ff{\{\prgood_{1},\,\prgood_{x}\}} > 4(1-\alpha) 
	\quad \text{and} \quad  
	\valu_\agenti(\{\prgood_{1},\,\prgood_{x+1}\}) < \alpha. \footref{foot:xm}
	$$
	Since $0 < t_\agenti \le 2\alpha - \tfrac{3}{2}$ and $\tfrac{3}{4}<\alpha<\tfrac{6}{7}$ we conclude $1 - \tfrac{2\alpha}{3} + t_\agenti \in [1 - \tfrac{2\alpha}{3}, 1 -\tfrac{\alpha}{3}-(\tfrac{2\alpha}{3}-\tfrac{1}{2}))$. Hence, 
	\begin{align*}
		\FJ(1-\frac{2\alpha}{3} + t_\agenti)&=\max\Bigl(2(1-\alpha),\,  \frac{5}{2} - \frac{8\alpha}{3} + t_{\agenti}\Bigr) &\mbox{\cref{def:F}},\\
		&< 2(1-\alpha) + t_\agenti& \alpha> \frac{3}{4},\\&=\ff{\{\prgood_{1}\}}.
	\end{align*}
	 Using monotonicity of $\FJ$, we obtain $\valu_\agenti(\{\prgood_{1}\}) \ge 1 - \tfrac{2\alpha}{3} + t_\agenti$.  Plugging $\ff{\{\prgood_{1}\}} = 2(1-\alpha) + t_\agenti$ and $\valu_\agenti(\{\prgood_{1}\}) \ge 1 - \tfrac{2\alpha}{3} + t_\agenti$ into these inequalities, we get
	$$\ff{\{\prgood_{x}\}} > 2(1-\alpha) - t_\agenti \qquad \text{and} \qquad 
	\ff{\{\prgood_{x+1}\}}\le \valu_\agenti(\{\prgood_{x+1}\}) < \tfrac{5\alpha}{3} - 1 - t_\agenti.$$
This implies that no good in $\prgoods$ has a value under $\auxfun{\FJ}{\agenti}$ that falls within the interval $[\tfrac{5\alpha}{3} - 1 - t_\agenti,\; 2(1-\alpha) - t_\agenti]$. Since $\Sgoods \subseteq \prgoods$, the same holds for all goods in $\Sgoods$, implying \Cref{condd2}.

To prove \Cref{condd3}, recall that $\hatmms{\auxfun{\FJ}{\agenti}} \ge 4(1-\alpha)$, and since  $\ff{\{\prgood_{1}\}} = 2(1-\alpha) + t_\agenti$, applying \cref{lem:calib-reduction} implies $\ddotmms{\auxfun{\vHB_{t_\agenti}\star\FJ}{\agenti}} \ge 4(1-\alpha) - 2t_\agenti.$
\end{proof}

\subsection{\boldmath Bag-filling} \label{subsec:alg-N1-bag-filling}

After the secondary reductions, we apply the Bag-filling method shown in \Cref{algo:1}. The algorithm begins by constructing \( \Snumb \) bundles \( \bundle_1, \bundle_2, \dots, \bundle_{\Snumb} \), where  
\[
\bundle_k = \{ \Sgood_k,\, \Sgood_{\Snumb+k},\, \Sgood_{3\Snumb-k+1} \}.
\]

\makeatletter

\tikzset{
	curve scale/.initial=1,
	curve scale/.get=\tikz@curvescale,
	curve scale/.store in=\tikz@curvescale
}

\newcommand\curvedraw[6][]{%
	\draw[#1] 
	(#2) 
	.. controls 
	+(#3,{\tikz@curvescale*#4}) 
	and +(-#3,{\tikz@curvescale*#4}) 
	.. (#5) #6;
}
\makeatother

\begin{figure}
	\centering
	\resizebox{\textwidth}{!}{
		\begin{tikzpicture}[x=1.25cm, y=1.3cm, curve scale=1.25, good/.style={draw, rectangle, rounded corners=3pt,minimum width=1cm,minimum height=1cm,inner sep=2pt, font=\small}]
	\foreach \i/\label/\x/\gcol in {
		1/$\Sgood_1$/1/blue,
		2/$\Sgood_2$/2/red,
		3/$\Sgood_k$/3.5/green,
		4/$\Sgood_{\Snumb-1}$/5/orange,
		5/$\Sgood_{\Snumb}$/6/yellow,
		6/$\Sgood_{\Snumb+1}$/7.2/blue,
		7/$\Sgood_{\Snumb+2}$/8.2/red,
		8/$\Sgood_{\Snumb+k}$/9.7/green,
		9/$\Sgood_{2\Snumb-1}$/11.2/orange,
		10/$\Sgood_{2\Snumb}$/12.2/yellow,
		11/$\Sgood_{2\Snumb+1}$/13.4/yellow,
		12/$\Sgood_{2\Snumb+2}$/14.4/orange,
		13/$\Sgood_{3\Snumb-k+1}$/15.9/green,
		14/$\Sgood_{3\Snumb-1}$/17.4/red,
		15/$\Sgood_{3\Snumb}$/18.4/blue} {
		\node[good, fill=\gcol!30] (good\i) at (\x,0) {\label};
	}
	
	\curvedraw[blue, thick]{good1}{1.2}{2.0}{good6}{;}
	\curvedraw[red,  thick]{good2}{1.2}{-2.0}{good7}{;}
	\curvedraw[green,thick]{good3}{1.2}{2.0}{good8}{;}
	\curvedraw[orange,thick]{good4}{1.2}{-2.0}{good9}{;}
	\curvedraw[yellow,thick]{good5}{1.2}{2.0}{good10}{;}
	
	\curvedraw[blue, thick]{good6}{1.2}{2.4}{good15}{;}
	\curvedraw[red,  thick]{good7}{1.2}{-2.2}{good14}{;}
	\curvedraw[green,thick]{good8}{1.2}{2.0}{good13}{;}
	
	\curvedraw[orange,thick]{good9}{0.7}{-1.5}{good12}{;}
	\curvedraw[yellow,thick]{good10}{0.2}{1.0}{good11}{;}

	\foreach \i/\label/\x in {
		16/$\Sgood_{3\Snumb+1}$/19.5,
		17/$\Sgood_{3\Snumb+2}$/20.4,
		18/$\Sgood_{4\Snumb}$/21.6
	} {
		\node[good, scale=0.8, fill=gray!30] (good\i) at (\x,0) {\label};
	}
	
	\foreach \i/\label/\x in {
		19/$\Sgood_{4\Snumb+1}$/22.6,
		20/$\Sgood_{4\Snumb+2}$/23.4,
		21/$\Sgood_{m}$/24.6
	} {
		\node[good, scale=0.6, fill=gray!30] (good\i) at (\x,0) {\label};
	}
	
	\node at (2.75,0) {\(\dots\)};  
	\node at (4.25,0) {\(\dots\)};  
	
	\node at (8.95,0) {\(\dots\)};  
	\node at (10.45,0) {\(\dots\)};  
	
	\node at (15.1,0) {\(\dots\)};  
	\node at (16.7,0) {\(\dots\)};  
	
	\node at (21.0,0) {\(\dots\)};  
	
	\node at (24.0,0) {\(\dots\)};  
	
	\draw[decorate, thick, decoration={brace,mirror}] 
	(19.1, -0.5) -- (22.05, -0.5)
	node[midway, below=6pt, font=\large]{Special Bag-filling goods};
	
	\draw[decorate, thick, decoration={brace}] 
	(22.2, 0.5) -- (24.85, 0.5)
	node[midway, above=6pt, font=\large]{Bag-filling goods};

	\draw[dashed] (6.6,-0.6) -- (6.6,0.6);
	
	\draw[dashed] (12.8,-0.6) -- (12.8,0.6);
	
	\draw[dashed] (19.0,-0.6) -- (19.0,0.6);
	
	\draw[dashed] (22.15,-0.6) -- (22.15,0.6);
\end{tikzpicture}
	}
	\caption{Structure of the bags in \Cref{algo:1}.}
\end{figure}

Then, starting from \( k = \Snumb \) and proceeding downwards to \( k = 1 \), the algorithm attempts to allocate bag \( \bundle_k \) to an agent. At each step, if no remaining agent values \( \bundle_k \) at least \( \alpha \), one additional remaining good is added to the bag until some agent finds its value at least \( \alpha \).  
\textbf{
	We select the next good according to the following priority:}
\begin{enumerate}[label=(\roman*)]
	\item \( \Sgood_{3\Snumb + k} \),
	\item any good from the set \( \{ \Sgood_{4\Snumb+1}, \Sgood_{4\Snumb+2}, \dots, \Sgood_{\Sit} \} \),
	\item the remaining good with the smallest index.
\end{enumerate}
 
\begin{algorithm}[h]
	\caption{$\textsc{Bag-Filling1}$}
	\label{algo:1}
	\textbf{Input:} $\Sinstance = (\Sagents, \Sgoods)$ \\
	\textbf{Output:} Allocation satisfying  $(\tfrac{10}{13})$-$\MMS$ 
	
	\begin{algorithmic}[1]
		\For {$k$ : $1 \rightarrow \Snumb$}
		\State $\finib_k \leftarrow \{\Sgood_k,\, \Sgood_{\Snumb+k},\, \Sgood_{3\Snumb-k+1}\}$
		\EndFor
		\For {$k$ : $\Snumb \rightarrow 1$}
		\While {There does not exist a remaining agent $\agent_i$ s.t. $\valu_\agenti(\bundle_k) \geq \alpha$}
		\If {$\Sgood_{3\Snumb + k}$ is remaining}
		\State Add $\Sgood_{3\Snumb + k}$ to $\bundle_k$
		\ElsIf {$\exists x \ge 4\Snumb+1$ s.t. $\Sgood_x$ is remaining}
		\State Add $\Sgood_x$ to $\bundle_k$ 
		\Else
		\State Add the remaining good with the smallest index to $\finib_k$ 
		\EndIf
		\EndWhile
		\State Allocate $\bundle_k$ to $\agent_i$ with $\valu_\agenti(\finib_k) \ge \alpha$ \CComment{Priority is given to agents in $\Ndo$}
		\EndFor
	\end{algorithmic}
	\label{algo:N1}
\end{algorithm}

To analyze our algorithm, we categorize green agents into four distinct groups based on their maximin share properties after the reduction phases: 
\begin{enumerate}
	\item Agents whose maximin share satisfies \( \dotmms{\valu_i} \geq 1 \) after the primary reductions and \( \ddotmms{\valu_i} \geq 1 \) after the secondary reductions.
	
	\item Agents for whom \( \dotmms{\valu_i} < 1 \) after the primary reductions, but the calibrated share satisfies \( \ddotmms{\auxfun{\FJ}{\agenti}} \geq 4(1-\alpha) \) after the secondary reductions.
	
	\item Agents with \( \dotmms{\valu_i} \geq 1 \) after the primary reductions, but \( \ddotmms{\valu_i} < 1 \) after the secondary reductions.
	
	\item Agents for whom \( \dotmms{\valu_i} < 1 \) after the primary reductions and \( \ddotmms{\auxfun{\FJ}{\agenti}} < 4(1-\alpha) \) after the secondary reductions.
\end{enumerate}

\cref{lem:way} provides general tools for analyzing the first two groups of agents, while \cref{lem:way2} helps with the analysis of the last two groups. In \cref{plus-one}, we show that the agents in the first group receive a bundle. \cref{minus-one} establishes the same for the second group, \cref{plus-two} for the third group, and \cref{minus-two} for the fourth group.

\begin{lemma}\label{lem:way}
	Let \( \agent_i \in \Sagents \) be an agent, and let \( \prvalu \) be a valuation function that ranks the goods in the same order as \( \valu_i \). Assume the following conditions hold:
	\begin{align}
		\forall_{\Sgood \in \Sgoods} \qquad\qquad&\label[ineq]{cond:sefr}\prvalu(\{\Sgood\}) \le \valu_i(\{\Sgood\}) \\ 
		& \alpha + \prvalu(\{\Sgood_{3\Snumb+1}\}) \le \ddotmms{\prvalu}, &\label[ineq]{cond:yek}\\
		\forall_{2 \le k \le \Snumb},\qquad\qquad & \prvalu(\{\Sgood_k,\,\Sgood_{\Snumb+k},\,\Sgood_{3\Snumb+1-k}\}) \le \ddotmms{\prvalu}, &\label[ineq]{cond:do}\\
		& \prvalu(\{\Sgood_1,\,\Sgood_{\Snumb+1},\,\Sgood_{3\Snumb}\}) + \alpha \le 2\,\ddotmms{\prvalu}. &\label[ineq]{cond:se}
	\end{align}
	Then, $\agent_i$ receives a bundle of value at least $\alpha$ in \cref{algo:1}.
\end{lemma}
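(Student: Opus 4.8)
The goal is to show that an agent $\agent_i$ satisfying the four hypotheses receives a bundle of value at least $\alpha$ during the Bag-filling procedure of \cref{algo:1}. The natural strategy is a proof by contradiction: suppose $\agent_i$ never gets a bundle and survives until the end. I would track the value that $\agent_i$ assigns (under the auxiliary valuation $\prvalu$, which by \cref{cond:sefr} lower-bounds $\valu_i$, so it suffices to work with $\prvalu$) to each bundle $\bundle_k$ at the moment it is allocated to some other agent. Since that other agent shouted "STOP" and no one (including $\agent_i$) shouted before the last good was added, each allocated bundle $\bundle_k$ is worth strictly less than $\alpha$ to $\agent_i$ at the time just before its final good was added, and hence worth at most $\alpha$ plus the value of a single good after. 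The crux will be to argue that the goods consumed across all $\Snumb$ bundles, together with whatever remains unallocated, account for \emph{strictly less} than the total $\Snumb \cdot \ddotmms{\prvalu}$ value that $\prvalu$ assigns to $\Sgoods$ — contradicting $\ddotmms{\prvalu}$ being a valid share, i.e.\ that $\prvalu(\Sgoods) \ge \Snumb \cdot \ddotmms{\prvalu}$ (additivity plus the partition achieving the maximin value).

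The key structural observation is the priority order for adding goods: good $\Sgood_{3\Snumb+k}$ first, then the "special" goods $\Sgood_{4\Snumb+1},\dots,\Sgood_{\Sit}$, then the smallest-index remaining good. I would split into cases based on how far the Bag-filling gets before (hypothetically) failing. The cleanest case is when every bundle $\bundle_k$ gets allocated using only its initial three goods $\{\Sgood_k,\Sgood_{\Snumb+k},\Sgood_{3\Snumb+1-k}\}$ plus possibly $\Sgood_{3\Snumb+k}$: here I want to pair up $\bundle_k$ for $2\le k\le \Snumb$ with hypothesis \cref{cond:do}, which says $\prvalu(\{\Sgood_k,\Sgood_{\Snumb+k},\Sgood_{3\Snumb+1-k}\}) \le \ddotmms{\prvalu}$ — so $\agent_i$'s value for the initial content of each such bundle is already at most the target, and the single extra good $\Sgood_{3\Snumb+k}$ (which has index larger than $3\Snumb$, hence small value, at most $\prvalu(\{\Sgood_{3\Snumb+1}\})$) can be absorbed. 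For $k=1$, hypothesis \cref{cond:se} handles the initial triple $\{\Sgood_1,\Sgood_{\Snumb+1},\Sgood_{3\Snumb}\}$, and hypothesis \cref{cond:yek} ($\alpha + \prvalu(\{\Sgood_{3\Snumb+1}\}) \le \ddotmms{\prvalu}$) is what lets the last-added good on the \emph{last-processed} bundle be charged cleanly when it is one of the small tail goods. I would carefully argue that on bundle $\bundle_1$ (processed last) the algorithm has already allocated $\Snumb-1$ bundles, so at most $\Snumb-1$ agents have departed, and since $|\Sagents| = \Snumb$, at least $\agent_i$ himself remains — so if $\bundle_1$'s value reaches $\alpha$ for \emph{some} remaining agent it could be $\agent_i$; and if it does not reach $\alpha$ for anyone, all of $\Sgoods$ has been distributed and $\bundle_1$ absorbs everything left, which combined with \cref{cond:se} gives the contradiction directly.

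When some bundle requires reaching into the "special" goods $\Sgood_{4\Snumb+1},\dots$ or further, the argument becomes a counting argument: each special good added to some $\bundle_k$ (beyond the first extra good $\Sgood_{3\Snumb+k}$) pushed that bundle's value below $\alpha$ for everyone including $\agent_i$ right before it was added, and each such good has value at most $\prvalu(\{\Sgood_{4\Snumb}\}) \le \prvalu(\{\Sgood_{3\Snumb+1}\})$. I would sum $\agent_i$'s values over all allocated bundles: each contributes at most $\alpha + (\text{value of its last-added good})$, and the last-added goods are distinct goods of $\Sgoods$ none of which $\agent_i$ received, so they are accounted for within $\prvalu(\Sgoods) \ge \Snumb\,\ddotmms{\prvalu}$. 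Combining $\sum_k \prvalu(\bundle_k) < \Snumb\alpha + (\text{sum of distinct tail goods})$ with $\prvalu(\Sgoods)\ge \Snumb\,\ddotmms{\prvalu}$ and the hypothesis bounds should force $\Snumb\,\ddotmms{\prvalu} < \Snumb\,\ddotmms{\prvalu}$, the contradiction. The main obstacle — and the step I expect to require the most care — is the bookkeeping at the boundary between the three priority classes of added goods, in particular ensuring that when a bundle "steals" the good $\Sgood_{3\Snumb+k}$ that a \emph{later} (smaller-$k$) bundle's formula in \cref{cond:do} was implicitly counting on, the loss is correctly compensated; handling the off-by-one at $k=1$ versus $k\ge 2$ and the last-processed bundle cleanly is where the four hypotheses must be invoked in exactly the right combination.
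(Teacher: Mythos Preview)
Your overall contradiction strategy is correct and matches the paper's, but you are over-complicating the argument and anticipating difficulties that do not arise for this lemma (they arise for \cref{lem:way2}, not here).

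The key simplification you are missing is this: \emph{any} good added to a bag during the filling process has index at least $3\Snumb+1$, simply because all goods $\Sgood_1,\dots,\Sgood_{3\Snumb}$ already sit in the initial bags. Hence the last-added good of any bag has $\prvalu$-value at most $\prvalu(\{\Sgood_{3\Snumb+1}\})$, regardless of which priority class it came from. So for each $k\ge 2$ you get the uniform bound $\prvalu(\finib_k)\le\ddotmms{\prvalu}$ in one line: if nothing was added, use \cref{cond:do}; if something was added, the bag was worth $<\alpha$ under $\valu_i$ (hence under $\prvalu$, by \cref{cond:sefr}) just before the last good, so $\prvalu(\finib_k)<\alpha+\prvalu(\{\Sgood_{3\Snumb+1}\})\le\ddotmms{\prvalu}$ by \cref{cond:yek}. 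There is no bookkeeping across priority classes and no ``stealing'' issue to track.

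The other point you have not cleanly identified is the role of $\finib_1$. Bags are processed from $\Snumb$ down to $1$, so if the algorithm halts at some $\halt\ge 2$, bag $\finib_1$ was never touched and still equals its initial triple $\{\Sgood_1,\Sgood_{\Snumb+1},\Sgood_{3\Snumb}\}$. You then pair it with $\finib_\halt$ (which satisfies $\prvalu(\finib_\halt)<\alpha$) and invoke \cref{cond:se}: $\prvalu(\finib_1)+\prvalu(\finib_\halt)<\prvalu(\{\Sgood_1,\Sgood_{\Snumb+1},\Sgood_{3\Snumb}\})+\alpha\le 2\,\ddotmms{\prvalu}$. Together with the uniform bound on the remaining $\Snumb-2$ bags this gives $\prvalu(\Sgoods)<\Snumb\,\ddotmms{\prvalu}$, the contradiction. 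The $\halt=1$ case is even simpler. Your proposed route via ``$\Snumb\alpha+\text{(sum of distinct tail goods)}$'' is a detour that would in any case collapse to the same per-bag bound.
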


\begin{proof}
	Suppose, for the sake of contradiction, that agent \(\agent_i\) receives no bundle, and let \(\halt\) be the index at which the algorithm halts while filling bag \(\finib_\halt\).
	
	We claim that for every $2 \le k \le \Snumb$, the value of bag $B_k$ satisfies $\prvalu(B_k) \le \ddotmms{\prvalu}$. 
	If no good is added to bag $k$ during the Bag-filling process—either because it already holds value at least \(\alpha\) for some remaining agent or its turn has not yet come—then this inequality follows directly from \Cref{cond:do}.
	Otherwise, if goods are added to bag $k$, we know that just before the last good was added, the bundle had value less than \(\alpha\) to agent \(\agent_i\). Furthermore, by the construction of the Bag-filling process, the last added good has an index at least \(3\Snumb + 1\). Therefore, its value under $\prvalu$ is at most \(\prvalu(\{\Sgood_{3\Snumb+1}\})\).
	By the additivity of \(\prvalu\) and using \Cref{cond:sefr,cond:yek}, it follows that the total value of \(\finib_k\) does not exceed \(\ddotmms{\prvalu}\), as claimed. Now, assume \(\halt \neq 1\). Since the bags \(\finib_1, \finib_2, \ldots, \finib_{\Snumb}\) form a partition of \(\Sgoods\), we have:
	\begin{align*}
		\prvalu(\Sgoods) &= \sum_{k=1}^{\Snumb} \prvalu(\finib_k) \\
		&= \prvalu(\finib_1) + \prvalu(\finib_{\halt}) + \sum_{k=2}^{\halt-1} \prvalu(\finib_k) + \sum_{k=\halt+1}^{\Snumb} \prvalu(\finib_k) \\
		&= \prvalu(\finib_1) + \prvalu(\finib_{\halt}) + (\Snumb - 2)\cdot \ddotmms{\prvalu} &\\
		&< \prvalu(\finib_1) + \alpha + (\Snumb - 2)\cdot \ddotmms{\prvalu} \\
		&= \prvalu(\{\Sgood_1,\,\Sgood_{\Snumb+1},\,\Sgood_{3\Snumb}\}) + \alpha + (\Snumb - 2)\cdot \ddotmms{\prvalu} &\mbox{$\halt \neq 1$},\\
		&\le 2\cdot \ddotmms{\prvalu} + (\Snumb - 2)\cdot \ddotmms{\prvalu} &\mbox{\cref{cond:se}},\\
		&= \Snumb \cdot \ddotmms{\prvalu}.
	\end{align*}
	which is a contradiction.
	If \(\halt = 1\), since for every $2 \le k \le \Snumb$ we have $\prvalu(B_k) \le \ddotmms{\prvalu}$,
	and $\prvalu(B_1) < \alpha \le \ddotmms{\prvalu}$, it follows that $\prvalu(\Sgoods) < \Snumb\;\ddotmms{\prvalu}$, which is a contradiction. 
\end{proof}

\begin{lemma}\label{lem:way2}
	Let \( \agent_i \in \Sagents \) be an agent, and let \( \prvalu \) be a valuation function that ranks the goods in the same order as \( \valu_i \). Assume the following conditions hold:
	\begin{align}
		\forall_{\Sgood \in \Sgoods}\quad &\prvalu(\{\Sgood\}) \le \valu_i(\{\Sgood\}),&\label[ineq]{cond:zero} \\ 
		& \alpha + \prvalu(\{\Sgood_{4\Snumb+1}\}) \le \ddotmms{\prvalu}, 
		&\label[ineq]{cond:one}\\
		& \alpha + \prvalu(\{\Sgood_1,\,\Sgood_{\Snumb+1},\,\Sgood_{3\Snumb}\}) \le 2\,\ddotmms{\prvalu}, 
		&\label[ineq]{cond:two}\\
		\forall_{2\le k\le\Snumb}\quad &\prvalu(\{\Sgood_k,\,\Sgood_{\Snumb+k},\,\Sgood_{3\Snumb+1-k},\,\Sgood_{3\Snumb+k}\}) \le \ddotmms{\prvalu},
		&\label[ineq]{cond:three}\\
		& 2\alpha + \prvalu(\{\Sgood_{3\Snumb+1}\}) \le 2\,\ddotmms{\prvalu},
		&\label[ineq]{cond:four}\\
		& 2\alpha + \prvalu(\{\Sgood_1,\,\Sgood_{\Snumb+1},\,\Sgood_{3\Snumb},\,\Sgood_{3\Snumb+1}\}) 
		\le 3\,\ddotmms{\prvalu}.
		&\label[ineq]{cond:five}
	\end{align}
	Then, $\agent_i$ receives a bundle of value at least $\alpha$ in \cref{algo:1}.
\end{lemma}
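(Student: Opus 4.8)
The plan is to mirror the proof of \cref{lem:way}, the only structural change being that the relevant bundle now carries a fourth good, namely $\Sgood_{3\Snumb+k}$ — the good that \cref{algo:1} adds first by priority~(i) — so that \cref{cond:three} plays the role that the three‑good bound played there, and \cref{cond:one} handles the priority‑(ii) tail. Assume for contradiction that $\agent_i$ receives no bundle. Since $|\Sagents|=\Snumb$ equals the number of bags and each successfully filled bag is handed to a distinct still‑present agent, a run in which $\agent_i$ is never served must \emph{fail}: while processing some bag $\finib_\halt$ the while‑loop exhausts every remaining good with no still‑present agent (in particular $\agent_i$) valuing $\finib_\halt$ at least $\alpha$. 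Hence $\prvalu(\finib_\halt)\le\valu_i(\finib_\halt)<\alpha$ by \cref{cond:zero}; moreover, since the algorithm stops there, the bags $\finib_{\halt-1},\dots,\finib_1$ still consist of exactly their three initial goods $\{\Sgood_k,\Sgood_{\Snumb+k},\Sgood_{3\Snumb+1-k}\}$, while $\finib_\Snumb,\dots,\finib_{\halt+1}$ have all been allocated.

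Next I bound $\prvalu(\finib_k)$ for each $k\ne 1,\halt$. If $\finib_k$ received no extra good (automatic when $k<\halt$) or its last added good was the priority‑(i) good $\Sgood_{3\Snumb+k}$, then $\finib_k\subseteq\{\Sgood_k,\Sgood_{\Snumb+k},\Sgood_{3\Snumb+1-k},\Sgood_{3\Snumb+k}\}$, so $\prvalu(\finib_k)\le\ddotmms{\prvalu}$ by \cref{cond:three}. If its last added good was a priority‑(ii) good $\Sgood_x$ with $x\ge 4\Snumb+1$, then just before adding it $\agent_i$ was present and unsatisfied, so $\prvalu(\finib_k\setminus\{\Sgood_x\})\le\valu_i(\finib_k\setminus\{\Sgood_x\})<\alpha$; since $\prvalu$ orders the goods like $\valu_i$ we have $\prvalu(\{\Sgood_x\})\le\prvalu(\{\Sgood_{4\Snumb+1}\})$, whence $\prvalu(\finib_k)<\alpha+\prvalu(\{\Sgood_{4\Snumb+1}\})\le\ddotmms{\prvalu}$ by \cref{cond:one}. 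The only remaining possibility for an allocated bag is that its last good came from priority~(iii); this occurs only after all priority‑(ii) goods are gone, so that good lies in $\{\Sgood_{3\Snumb+1},\dots,\Sgood_{4\Snumb}\}$ and the argument gives only the weaker bound $\prvalu(\finib_k)<\alpha+\prvalu(\{\Sgood_{3\Snumb+1}\})$.

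The crux is to control such ``overfull'' bags. Because priority~(iii) always takes the smallest remaining index, once $\Sgood_{3\Snumb+1}$ has been consumed every later priority‑(iii) addition is to a good of larger index and smaller value, while the bags that earlier absorbed the priority‑(ii) tail already satisfy $\ddotmms{\prvalu}$; the plan is to use this to show that the total overflow beyond the trivial per‑bag bound is covered by a single extra copy of $\alpha+\prvalu(\{\Sgood_{3\Snumb+1}\})$, i.e.\ at most one bag $\finib_j$ need be treated as overfull. Granting this, one concludes by summing $\prvalu(\Sgoods)=\sum_{k=1}^{\Snumb}\prvalu(\finib_k)$ and isolating the (at most three) special bags. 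If there is no overfull bag, the argument is exactly that of \cref{lem:way}: using $\prvalu(\finib_\halt)<\alpha$, the identity $\prvalu(\finib_1)=\prvalu(\{\Sgood_1,\Sgood_{\Snumb+1},\Sgood_{3\Snumb}\})$, and either \cref{cond:two} (when $\halt\ne 1$) or $\alpha\le\ddotmms{\prvalu}$, which follows from \cref{cond:one} (when $\halt=1$), together with \cref{cond:three}/\cref{cond:one} for the rest, one gets $\prvalu(\Sgoods)<\Snumb\,\ddotmms{\prvalu}$. If one overfull bag $\finib_j$ is present, one also isolates it: \cref{cond:five}, namely $\prvalu(\{\Sgood_1,\Sgood_{\Snumb+1},\Sgood_{3\Snumb},\Sgood_{3\Snumb+1}\})+2\alpha\le 3\,\ddotmms{\prvalu}$, absorbs $\prvalu(\finib_1)+\prvalu(\finib_j)+\prvalu(\finib_\halt)$ when $\halt\ne 1$, and \cref{cond:four}, namely $\prvalu(\{\Sgood_{3\Snumb+1}\})+2\alpha\le 2\,\ddotmms{\prvalu}$, absorbs $\prvalu(\finib_j)+\prvalu(\finib_\halt)$ when $\halt=1$; in both cases the total is again strictly below $\Snumb\,\ddotmms{\prvalu}$. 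This contradicts $\prvalu(\Sgoods)\ge\Snumb\cdot\mms^{\Snumb}_{\prvalu}(\Sgoods)=\Snumb\,\ddotmms{\prvalu}$, completing the proof.

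The main obstacle is exactly the bookkeeping in the third paragraph: one must verify, from the way \cref{algo:1} chooses the next good — initial bags occupying $\Sgood_1,\dots,\Sgood_{3\Snumb}$, priority~(i) reserving $\Sgood_{3\Snumb+1},\dots,\Sgood_{4\Snumb}$ one per bag, priority~(ii) using the tail $\Sgood_{4\Snumb+1},\dots$, and priority~(iii) always taking the least available index — that the cumulative overflow from priority‑(iii) additions never exceeds the single unit of slack deliberately built into \cref{cond:four} and \cref{cond:five}. The remaining per‑bag inequalities, once the correct condition is matched to each case, are entirely routine.
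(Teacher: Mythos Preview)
Your outline is structurally on track, but the step you flag as ``bookkeeping'' is in fact the heart of the argument, and your write-up leaves it unproved. The paper does not bound a cumulative overflow. Instead it freezes the algorithm at the \emph{first} moment priority~(iii) is entered --- say while filling bag~$\ell$ --- and works with the snapshot partition $\{C_k\cup\{\Sgood_{3\Snumb+k}\}\}_{k=1}^{\Snumb}$ of $\Sgoods$, where $C_k$ is the content of bag~$k$ at that instant (note $\Sgood_{3\Snumb+\ell}\in C_\ell$). The decisive step is a \emph{lower} bound: for every $j\notin\{1,\ell\}$ one shows $\prvalu(C_j\cup\{\Sgood_{3\Snumb+j}\})\ge\alpha$. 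This is by contradiction: if it failed for some such $j$, then combining $\prvalu(C_1\cup\{\Sgood_{3\Snumb+1}\})=\prvalu(\{\Sgood_1,\Sgood_{\Snumb+1},\Sgood_{3\Snumb},\Sgood_{3\Snumb+1}\})$ with $\prvalu(C_j\cup\{\Sgood_{3\Snumb+j}\})<\alpha$, $\prvalu(C_\ell)<\alpha$, \cref{cond:five}, and the per-bag upper bound (\cref{cond:three} or \cref{cond:one}) on the remaining indices yields $\prvalu(\Sgoods)<\Snumb\,\ddotmms{\prvalu}$. An analogous argument with \cref{cond:two} gives $\prvalu(C_\ell\cup\{\Sgood_{3\Snumb+1}\})\ge\alpha$.

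These lower bounds are exactly what certify that, after the snapshot, bag~$\ell$ becomes acceptable to $\agent_i$ upon the single priority-(iii) addition of $\Sgood_{3\Snumb+1}$, and bags $\ell-1,\dots,2$ become acceptable already after their priority-(i) good. Hence only bag~$\ell$ is ever ``overfull'' among bags $2,\dots,\Snumb$, the algorithm necessarily reaches bag~$1$, and the final contradiction uses \cref{cond:four}. In other words, your ``at most one overfull bag'' is a \emph{consequence} of this snapshot argument, and \cref{cond:two} and \cref{cond:five} are spent proving those lower bounds --- not (or not only) in the final summation you sketch. Without that step, the proposal is a sketch with its central lemma missing.
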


\begin{proof}
	Assume, to reach a contradiction, that agent \(\agent_i \in \Sagents\) receives no bundle. Then the algorithm must enter its third priority at least once. Let the first such moment occur while filling bag~\(\ell\). For each \(1 \le k \le \Snumb\), denote by \(C_k\) the contents of bag~\(k\) at that time. Observe that during the first priority, the algorithm added \(\Sgood_{3\Snumb+k}\) into bag~\(k\), and during the second priority it added \(\Sgood_{x}\) to bag~\(k\) for some \(x \ge 4\Snumb+1\). Moreover, since the algorithm only enters the third priority after exhausting all higher‐priority goods, none of 
	\[
	\Sgood_{4\Snumb+1},\dots,\Sgood_{\Sit}
	\]
	is remaining. And since this is the first time we reach the third priority, for each \(1\le k\le\Snumb\), the good \(\Sgood_{3\Snumb+k}\) has either been placed into bag \(k\) or is still remaining; in particular, for bag~\(\ell\) we have \(\Sgood_{3\Snumb+\ell}\in C_\ell\). Therefore, the sets
	\[
	C_k \;\cup\;\{\Sgood_{3\Snumb+k}\},\quad k=1,\dots,\Snumb,
	\]
	indeed form a partition of all goods in \(\Sgoods\). We will show that
	\[
	\sum_{k=1}^{\Snumb} \prvalu\bigl(C_k \cup \{\Sgood_{3\Snumb+k}\}\bigr)
	< \Snumb\;\ddotmms{\prvalu},
	\]
	contradicting the definition of the maximin share.
	
	First, we claim that for every $k\ge2$,
	$
	\prvalu\bigl(C_k \cup \{\Sgood_{3\Snumb+k}\}\bigr)
	\le \ddotmms{\prvalu}.
	$
	Indeed, if
	$C_k \cup \{\Sgood_{3\Snumb+k}\} = \{\Sgood_k,\Sgood_{\Snumb+k},\Sgood_{3\Snumb+1-k},\Sgood_{3\Snumb+k}\}$,
	then the desired bound follows immediately from \Cref{cond:three}. Otherwise, the algorithm must have reached the second priority: let $\Sgood_x$ be the last good added to~$C_k$. By \Cref{cond:zero} we have $\prvalu(C_k\setminus\{\Sgood_x\})<\alpha$, and since $x\ge4\Snumb+1$, it follows that
	\begin{align*}
		\prvalu\bigl(C_k \cup \{\Sgood_{3\Snumb+k}\}\bigr)
		&=\prvalu(C_k)\\
		&= \prvalu(C_k\setminus \{\Sgood_{x}\}) + \prvalu(\{\Sgood_x\}) \\
		&< \alpha + \prvalu(\{\Sgood_{4\Snumb+1}\}) &\\
		&\le \ddotmms{\prvalu} &\mbox{\cref{cond:one}}.
	\end{align*}
	
	Next, if \(\ell=1\), then since for every \(k\ge2\) we have $\prvalu\bigl(C_k\cup\{\Sgood_{3\Snumb+k}\}\bigr)\le\ddotmms{\prvalu}$, 
	we deduce $\prvalu\bigl(C_1\cup\{\Sgood_{3\Snumb+1}\}\bigr)\ge\ddotmms{\prvalu}$. 
	By \Cref{cond:one}, \(\ddotmms{\prvalu}\ge\alpha\), and since \(\Sgood_{3\Snumb+1}\) remains available, agent \(\agent_i\) would accept \(C_1\cup\{\Sgood_{3\Snumb+1}\}\).
 This contradicts our assumption, so $\ell\ge2$. We now show that for every $j\notin\{1,\ell\}$,
	\[
	\prvalu\bigl(C_j \cup \{\Sgood_{3\Snumb+j}\}\bigr)
	\ge \alpha.
	\]
	Suppose, towards a contradiction, that for some such~$j$ we have
	$\prvalu(C_j \cup \{\Sgood_{3\Snumb+j}\})<\alpha$. Noting that $\Sgood_{3\Snumb+\ell} \in C_\ell$, we have
	\begin{align*}
		&\prvalu\bigl(C_1 \cup \{\Sgood_{3\Snumb+1}\}\bigr)
		+ \prvalu\bigl(C_j \cup \{\Sgood_{3\Snumb+j}\}\bigr)
		+ \prvalu\bigl(C_\ell \cup \{\Sgood_{3\Snumb+\ell}\}\bigr) \\
		&< \prvalu(C_1 \cup \{\Sgood_{3\Snumb+1}\}) + 2\alpha \\
		&= \prvalu(\{\Sgood_1,\Sgood_{\Snumb+1},\Sgood_{3\Snumb},\Sgood_{3\Snumb+1}\}) + 2\alpha
		\le 3\,\ddotmms{\prvalu}&\mbox{\Cref{cond:five}}.
	\end{align*}
	Since each of the other bundles also has value at most $\ddotmms{\prvalu}$, summing yields
	$\prvalu(\Sgoods)<\Snumb\;\ddotmms{\prvalu}$,
	again a contradiction. Hence for all $j\notin\{1,\ell\}$ we have
	$\prvalu(C_j \cup \{\Sgood_{3\Snumb+j}\})\ge\alpha$.
	
	Note that this is the first time at which the algorithm enters the third priority. Hence goods $\Sgood_{3\Snumb+1},\,\Sgood_{3\Snumb+2},\,\dots,\,\Sgood_{3\Snumb+\ell-1}$ 
	are remaining. In the third priority the algorithm selects the remaining good with the smallest index, so it adds \(\Sgood_{3\Snumb+1}\) to bag $\ell$. We show that $\prvalu(C_\ell \cup \{\Sgood_{3\Snumb+1}\}) \ge \alpha$. Suppose, towards a contradiction, that $\prvalu(C_\ell \cup \{\Sgood_{3\Snumb+1}\}) < \alpha$. Then: 
	\begin{align*}
		\prvalu(C_1) +  \prvalu(C_\ell \cup \{\Sgood_{3\Snumb+1}\}) &< \prvalu(C_1) +  \alpha &\\
		&= \prvalu(\{\Sgood_{1},\,\Sgood_{2\Snumb+1},\,\Sgood_{3\Snumb}\}) +  \alpha & \mbox{$\ell \neq 1$},\\
		&\le \ddotmms{\prvalu} &\mbox{\cref{cond:two}}. 
	\end{align*}
	Since each of the other bundles also has value at most $\ddotmms{\prvalu}$, summing yields
	$\prvalu(\Sgoods)<\Snumb\;\ddotmms{\prvalu}$,
	again a contradiction. Therefore, agent \(\agent_i\) would accept bag \(\ell\) upon adding \(\Sgood_{3\Snumb+1}\). The algorithm then fills the remaining bags in descending order from \(\ell-1\) down to \(1\). As shown above, for each \(k\notin\{1,\ell\}\), agent \(\agent_i\) would accept bag \(k\) after adding \(\Sgood_{3\Snumb+k}\). Hence, if the algorithm reaches bag \(1\) with \(\prvalu(C_1)<\alpha\), then we have:
	
	\begin{align*}
		\prvalu(C_1) + \prvalu(C_\ell \cup \{\Sgood_{3\Snumb+1}\}) &< \alpha +  \prvalu(C_\ell \cup \{\Sgood_{3\Snumb+1}\}) &\\
		&\le 2\alpha + \prvalu(\{\Sgood_{3\Snumb+1}\}) &\\
		&\le 2\;\ddotmms{\prvalu} &\mbox{\cref{cond:four}}. 
	\end{align*}
	Since each of the other bundles also has value at most $\ddotmms{\prvalu}$, summing yields
	$\prvalu(\Sgoods)<\Snumb\;\ddotmms{\prvalu}$,
	again a contradiction.
\end{proof}

Using \cref{lem:way}, we analyze two groups of green agents. The first group consists of those who satisfy \(\dotmms{\valu_i} \geq 1\) after the primary reductions and \(\ddotmms{\valu_i} \geq 1\) after the secondary reductions (see \cref{plus-one}). The second group includes agents with \(\dotmms{\valu_i} < 1\) after the primary reductions and \(\ddotmms{\auxfun{\FJ}{\agenti}} \geq 4(1-\alpha)\) after the secondary reductions (see \cref{minus-one}).

\begin{lemrep}\label{plus-one}
	Every green agent $\agent_i \in \Sagents$ with $\dotmms{\valu_i} \ge 1$ after the primary reductions, and $\ddotmms{\valu_i} \ge 1$ after the secondary reductions, receives a bundle in \cref{algo:1}.
\end{lemrep}
\begin{proofsketch}
	We analyze two cases based on the value agent \( \agent_i \) assigns to good \(\Sgood_{3\Snumb}\). If this value is at most \(\tfrac{4\alpha}{3} - 1\), we show that  \(\auxwal{\vGF}\) satisfies all three conditions of \cref{lem:way}. Otherwise, we assume \(\valu_i(\{\Sgood_{3\Snumb}\}) > \tfrac{4\alpha}{3} - 1\), and consider whether for any  $2 \le k \le \Snumb$, the bundle \(\{\Sgood_k,\, \Sgood_{\Snumb+k},\,\Sgood_{3\Snumb-k+1}\}\) has value at least 1. If none do, we again verify all conditions of \cref{lem:way} for \(\valu_i\). If for some $2 \le k \le \Snumb$, the bundle $\{\Sgood_k,\, \Sgood_{\Snumb+k},\,\Sgood_{3\Snumb-k+1}\}$ has value at least 1, we prove that all such bundles have value at least \(\alpha\), ensuring the agent receives a bundle. 
\end{proofsketch}
\begin{proof}
	Note that by \cref{lem:virt-mms-before}, we have $\ddotmms{\auxfun{\FJ}{\agenti}} \ge 4(1-\alpha)$, therefore by \cref{lem:virtG-mms}, we conclude  $\ddotmms{\auxfun{\vGF}{\agenti}} \ge 4(2-\tfrac{7\alpha}{3})$. 
	
	If $\boldsymbol{\valu_\agenti(\{\Sgood_{3\Snumb}\}) \leq \tfrac{4\alpha}{3}-1}$, we show that function $\auxwal{\vGF}$ satisfies conditions of \cref{lem:way}: 
	\begin{itemize}
		\item \cref{cond:sefr} follows directly by \cref{def:adj}.
		\item For \cref{cond:yek} we have:
		\begin{align*}
			\alpha + \gf{\{\Sgood_{3\Snumb+1}\}} &< \alpha + \frac{4\alpha}{3} - 1 & \\
			&\le 4\left(2-\frac{7\alpha}{3}\right) &\alpha \le \frac{27}{35},\\
			&\le \ddotmms{\auxfun{\vGF}{\agenti}}. 
		\end{align*}
		\item For \cref{cond:do} we have:
		\begin{align*}
			&\gf{\{\Sgood_{k},\,\Sgood_{\Snumb+k} ,\,\Sgood_{3\Snumb-k+1}\}}\\
			&\le \Bigl(3-\frac{7\alpha}{2}\Bigr) + \Bigl(3-\frac{7\alpha}{2}\Bigr) + \Bigl(2-\frac{7\alpha}{3}\Bigr) &\mbox{\cref{lem:N1-upper-bounds}},\\
			&= 4\left(2-\frac{7\alpha}{3}\right).
		\end{align*}
		\item To prove \cref{cond:se}, first we show $\gf{\{\Sgood_{1}, \Sgood_{\Snumb+1}, \Sgood_{3\Snumb}\}} < \tfrac{7\alpha}{3}-1$:
		\begin{align*}
			&\gf{\{\Sgood_{1}, \Sgood_{\Snumb+1}, \Sgood_{3\Snumb}\}} &\\
			&< \alpha + \gf{\{\Sgood_{3\Snumb}\}} &\mbox{$\rstar^2$ is not applicable},\\
			&\le \alpha + \left(\frac{4\alpha}{3}-1\right) &\\
			&= \frac{7\alpha}{3}-1.
		\end{align*}
		Since $\alpha \le \tfrac{17}{22}$, we have $\tfrac{7\alpha}{3}-1 + \alpha \le 2(4(2-\tfrac{7\alpha}{3}))$, therefore we can verify \cref{cond:se}. 
	\end{itemize}
	Thus, we assume $\boldsymbol{\valu_\agenti(\{\Sgood_{3\Snumb}\}) > \tfrac{4\alpha}{3}-1}$, in the rest of the proof. 
	
	Now, consider for all $2 \le k \le \Snumb$, $\valu_\agenti(\{\Sgood_{k},\, \Sgood_{\Snumb+k},\, \Sgood_{3\Snumb-k+1}\}) < 1$ holds. Under this assumption, we show that function $\valu_i$ satisfies conditions of \cref{lem:way}: 
	\begin{itemize}
		\item \cref{cond:sefr} trivially holds.
		\item To prove \cref{cond:yek}, we have: 
		\begin{align*}
			\alpha + \valu_\agenti(\{\Sgood_{3\Snumb+1}\}) &< \alpha + \frac{\alpha}{4} & \mbox{$\reductiontype^3$ is not applicable},\\
			&\le 1 &\alpha\le \frac{4}{5}, \\
			&\le \ddotmms{\valu_\agenti}. 
		\end{align*}
		
		\item \cref{cond:do} holds by the assumption that for each $k > 1$ we have $\valu_i(\{\Sgood_{k}\, \Sgood_{\Snumb+k}\, \Sgood_{3\Snumb-k+1}\}) < 1$. 
		
		\item To prove \cref{cond:se}, first we show $\valu_\agenti(\{\Sgood_{1}\, \Sgood_{\Snumb+1}\, \Sgood_{3\Snumb}\}) < \tfrac{4\alpha}{3}$:
		\begin{align*}
			& \valu_\agenti(\{\Sgood_{1},\;\Sgood_{\Snumb+1},\;\Sgood_{3\Snumb}\}) &\\
			&= \valu_\agenti(\{\Sgood_{1},\,\Sgood_{\Snumb+1}\})+\valu_\agenti(\{\Sgood_{3\Snumb}\}) &\\
			&< \alpha + \valu_\agenti(\{\Sgood_{3\Snumb}\}) &\mbox{$\rstar^2$ is not applicable},\\
			&\le \alpha + \frac{\alpha}{3} &\mbox{\cref{lem:N1-upper-bounds}},\\
			&= \frac{4\alpha}{3}. 
		\end{align*}
		For $\alpha \le \tfrac{6}{7}$ we have $\tfrac{4\alpha}{3} + \alpha \le 2$. 
	\end{itemize}
	
	Next we show if for some $2 \le k \le \Snumb$, $\valu_\agenti(\{\Sgood_{k},\;\Sgood_{\Snumb+k},\;\Sgood_{3\Snumb-k+1}\}) \ge 1$ holds, then for all $1 \le k \le \Snumb$ we have $\valu_\agenti(\{\Sgood_{k},\;\Sgood_{\Snumb+k},\;\Sgood_{3\Snumb-k+1}\}) \ge \alpha$ which implies agent $\agent_i$ receives a bundle. 
	We first show that if $\valu_\agenti(\{\Sgood_{k}\, \Sgood_{\Snumb+k}\, \Sgood_{3\Snumb-k+1}\}) \geq 1$, then
	$$\valu_\agenti(\{\Sgood_{\Snumb+k}\}) \geq 1-\tfrac{5\alpha}{6} \quad \text{and} \quad \valu_\agenti(\{\Sgood_{3\Snumb-k+1}\}) \geq 1-\alpha.$$
	
	Since $\rstar^2$ is not applicable, we have 
	$\valu_\agenti(\{\Sgood_k,\,\Sgood_{\Snumb+k}\}) \leq \alpha.$
	Therefore,
	$\valu_\agenti(\{\Sgood_{3\Snumb-k+1}\}) \geq 1-\alpha.$
	By \cref{lem:N1-upper-bounds} we have 
	$$\valu_\agenti(\{\Sgood_k\}) \le \tfrac{\alpha}{2} \quad \text{and} \quad \valu_\agenti(\{\Sgood_{3\Snumb-k+1}\}) \le \tfrac{\alpha}{3}.$$
	Hence,
	$\valu_\agenti(\{\Sgood_{\Snumb+k}\}) \geq 1-\tfrac{5\alpha}{6}.$
	
	Next, we prove that for every index $j > k$ we have $\valu_\agenti(\{\Sgood_{j},\,\Sgood_{\Snumb+j},\,\Sgood_{3\Snumb-j+1}\})\ge \alpha$: 
	\begin{align*}
		&\valu_\agenti(\{\Sgood_{j},\;\Sgood_{\Snumb+j},\;\Sgood_{3\Snumb-j+1}\}) &\\
		&\geq \valu_\agenti(\{\Sgood_{\Snumb+k},\;\Sgood_{3\Snumb-k+1},\;\Sgood_{3\Snumb-k+1}\})  &\\
		&\geq \Bigl(1-\frac{5\alpha}{6}\Bigr) + (1-\alpha) + (1-\alpha)&\\
		&\geq \alpha &\alpha\le \frac{18}{23}.
	\end{align*}
	
	Finally, we show that for every index $j < k$ it holds that 
	\begin{align*}
		&\quad \valu_\agenti(\{\Sgood_{j},\,\Sgood_{\Snumb+j},\,\Sgood_{3\Snumb-j+1}\})\\
		&\geq \valu_\agenti(\{\Sgood_{k},\,\Sgood_{\Snumb+k},\,\Sgood_{3\Snumb}\})\\
		&= \Bigl(\valu_\agenti(\{\Sgood_{k},\, \Sgood_{\Snumb+k},\, \Sgood_{3\Snumb-k+1}\}) - \valu_\agenti(\{\Sgood_{3\Snumb+1-k}\})\Bigr) + \valu_\agenti(\{\Sgood_{3\Snumb}\})\\ 
		&\geq \Bigl(1 - \valu_\agenti(\{\Sgood_{3\Snumb+1-k}\})\Bigr) + \valu_\agenti(\{\Sgood_{3\Snumb}\})\\
		&\geq \Bigl(1 - \valu_\agenti(\{\Sgood_{3\Snumb+1-k}\})\Bigr) + \frac{4\alpha}{3}-1\\
		&\geq \Bigl(1 - \frac{\alpha}{3}\Bigr) + \frac{4\alpha}{3}-1 &\mbox{\cref{lem:N1-upper-bounds}},\\
		&= \alpha.
	\end{align*}
	Completing the proof. 
\end{proof}

\begin{lemrep}\label{minus-one}
	Every green agent $\agent_i \in \Sagents$ with $\dotmms{\valu_i} < 1$ after the primary reductions, and $\ddotmms{\auxfun{\FJ}{\agenti}} \ge 4(1-\alpha)$ after the secondary reductions, receives a bundle in \cref{algo:1}.
\end{lemrep}
\begin{proofsketch}
	We analyze two cases based on the value agent \( \agent_i \) assigns to good \(\Sgood_{3\Snumb}\). If this value is at most \(\tfrac{4\alpha}{3} - 1\), we show that  \(\auxwal{\vGF}\) satisfies all conditions of \cref{lem:way}. Otherwise, we assume \(\valu_i(\{\Sgood_{3\Snumb}\}) > \tfrac{4\alpha}{3} - 1\), and consider whether for any $2 \le k \le \Snumb$, the bundle \(\{\Sgood_k,\, \Sgood_{\Snumb+k},\,\Sgood_{3\Snumb-k+1}\}\) has value at least $4(1-\alpha)$, under the function $(\FJ\star\valu_\agenti)$. If none do, we again verify all conditions of \cref{lem:way} for \((\FJ\star\valu_\agenti)\). If for some $2 \le k \le \Snumb$, $$\auxagent{\FJ}{\{\Sgood_k,\, \Sgood_{\Snumb+k},\,\Sgood_{3\Snumb-k+1}\}}{\agenti} \ge 4(1-\alpha),$$ we prove that all such bundles have value at least \(\alpha\), ensuring the agent receives a bundle. 
\end{proofsketch}
\begin{proof}
	If $\boldsymbol{\valu_\agenti(\{\Sgood_{3\Snumb}\}) \leq \tfrac{4\alpha}{3}-1}$, we show that function $\auxwal{\vGF}$ satisfies conditions of \cref{lem:way}. 
	
	\begin{itemize}
		\item \cref{cond:sefr} follows directly by \cref{def:adj}.
		\item For \cref{cond:yek} we have:
		\begin{align*}
			\alpha + \gf{\{\Sgood_{3\Snumb+1}\}} &< \alpha + \frac{4\alpha}{3} - 1 & \\
			&\le 4\left(2-\frac{7\alpha}{3}\right) &\alpha \le \frac{27}{35},\\
			&\le \ddotmms{\auxfun{\vGF}{\agenti}} & \mbox{\cref{lem:virtG-mms}}. 
		\end{align*}
		\item For \cref{cond:do} we have:
		\begin{align*}
			&\quad \gf{\{\Sgood_{k},\,\Sgood_{\Snumb+k},\,\Sgood_{3\Snumb-k+1}\}} &\\
			&\le \Bigl(3-\frac{7\alpha}{2}\Bigr) + \Bigl(3-\frac{7\alpha}{2}\Bigr) + \Bigl(2-\frac{7\alpha}{3}\Bigr) &\mbox{\cref{lem:N1-upper-bounds}},\\
			&= 4\left(2-\frac{7\alpha}{3}\right)\\
			&\le \ddotmms{\auxfun{\vGF}{\agenti}} & \mbox{\cref{lem:virtG-mms}}.
		\end{align*}
		\item To prove \cref{cond:se}, first we show $\gf{\{\Sgood_{1},\,\Sgood_{\Snumb+1},\,\Sgood_{3\Snumb}\}} < \tfrac{7\alpha}{3}-1$:
		\begin{align*}
			&\quad \gf{\{\Sgood_{1},\,\Sgood_{\Snumb+1},\,\Sgood_{3\Snumb}\}}&\\
			&< \alpha + \gf{\{\Sgood_{3\Snumb}\}} &\mbox{$\rstar^2$ is not applicable},\\
			&\le \alpha + \left(\frac{4\alpha}{3}-1\right) &\\
			&= \frac{7\alpha}{3}-1.
		\end{align*}
		Since $\alpha \le \tfrac{17}{22}$ we have $\tfrac{7\alpha}{3}-1 + \alpha \le 2(4(2-\tfrac{7\alpha}{3}))$, therefore by \cref{lem:virtG-mms} we can verify \cref{cond:se}.
	\end{itemize}
	Thus, we assume $\boldsymbol{\valu_\agenti(\{\Sgood_{3\Snumb}\}) > \tfrac{4\alpha}{3}-1}$ in the rest of the proof. 
	
	Now, consider for all $2 \le k \le \Snumb$, $\auxagent{\FJ}{\{\Sgood_{k},\,\Sgood_{\Snumb+k},\,\Sgood_{3\Snumb}\}}{\agenti} < 4(1-\alpha)$ holds. Under this assumption, we show that function $\auxval{\FJ}$ satisfies conditions of \cref{lem:way}: 
	\begin{itemize}
		\item \cref{cond:sefr} follows directly by \cref{def:adj}.
		\item For \cref{cond:yek} we have:
		\begin{align*}
			&\quad \alpha+\auxagent{f}{\{\Sgood_{3\Snumb+1}\}}{\agenti} &\\
			&<\alpha+\frac{4\alpha}{3}-1
			&\mbox{\cref{obs:3n-kam}},\\
			&\le 4(1-\alpha)
			&\alpha\le\frac{15}{19},\\
			&\le \ddotmms{\auxfun{\FJ}{\agenti}}. 
		\end{align*}
		
		\item \cref{cond:do} holds by the assumption 
		$
		\auxagent{\FJ}{\{\Sgood_{k},\,\Sgood_{\Snumb+k},\,\Sgood_{3\Snumb-k+1}\}}{\agenti}<4(1-\alpha). 
		$
		
		\item To prove \cref{cond:se}, we show first that
		\[
		\ff{\{\Sgood_{1},\,\Sgood_{\Snumb+1},\,\Sgood_{3\Snumb}\}}
		=\ff{\{\Sgood_1,\,\Sgood_{\Snumb+1}\}}+\ff{\{\Sgood_{3\Snumb}\}}
		<\alpha+(1-\alpha)=1,
		\]
		where the inequality uses the fact that $\rstar^2$ is not applicable and \cref{lem:N1-upper-bounds}. Since $\alpha\le\tfrac{7}{9}$, it follows that
		\[
		\alpha + \ff{\{\Sgood_{1},\,\Sgood_{\Snumb+1},\,\Sgood_{3\Snumb}\}}\;\le\;8(1-\alpha).
		\]
	\end{itemize}
	
	Now, we show that if for some $2 \le k \le \Snumb$, $\auxagent{\FJ}{\{\Sgood_{k},\,\Sgood_{\Snumb+k},\,\Sgood_{3\Snumb-k+1}\}}{\agenti} \geq 4(1-\alpha)$ holds, then for all $1 \le j \le \Snumb$ we have $\auxagent{\FJ}{\{\Sgood_{j},\,\Sgood_{\Snumb+j},\,\Sgood_{3\Snumb-j+1}\}}{\agenti} \geq \alpha$ which implies agent $\agent_i$ receives a bundle. 
	We first show that 
	$\auxagent{\FJ}{\{\Sgood_{\Snumb+k}\}}{\agenti} \geq 2 - \tfrac{13\alpha}{6}.$ 
	By \cref{lem:N1-upper-bounds} we have 
	$$\auxagent{\FJ}{\{\Sgood_{k}\}}{\agenti} \le 1- \tfrac{5\alpha}{6} \quad \text{and} \quad \auxagent{\FJ}{\{\Sgood_{3\Snumb-k+1}\}}{\agenti} \le 1-\alpha.$$ 
	Therefore 
	\begin{align*}
		\auxagent{\FJ}{\{\Sgood_{\Snumb+k}\}}{\agenti}  &\geq 4(1-\alpha)-\Bigl(1- \frac{5\alpha}{6}\Bigr)-(1-\alpha) \\
		&= 2 - \frac{13\alpha}{6}.
	\end{align*}
	
	By the assumption  $\valu_\agenti(\{\Sgood_{3\Snumb}\})>\tfrac{4\alpha}{3}-1$, and by \cref{lem:N-minus}, we have $\valu_\agenti(\{\Sgood_{3\Snumb}\})>1-\alpha$, which by definition of $\FJ$ we can conclude that $\auxagent{\FJ}{\{\Sgood_{3\Snumb}\}}{\agenti}\ge 1-\alpha$. Thus, for any $1 \le j \le \Snumb$, we have:
	\begin{align*}
		&\quad  \ff{\{\Sgood_{j},\,\Sgood_{\Snumb+j},\,\Sgood_{3\Snumb-j+1}\}} \\
		&\geq \ff{\{\Sgood_{\Snumb+k},\,\Sgood_{3\Snumb},\,\Sgood_{3\Snumb}\}} \\
		&\geq \Bigl(2 - \frac{13\alpha}{6}\Bigr) + (1-\alpha) + (1-\alpha) &\\
		&\geq \alpha &&\alpha\le \frac{24}{31}.
	\end{align*}
	Completing the proof. 
\end{proof}

We handle the remaining two groups using similar arguments: agents with $\dotmms{\valu_i} \geq 1$ after the primary reductions but $\ddotmms{\valu_i} < 1$ after the secondary reductions (\cref{plus-two}), and those with $\dotmms{\valu_i} < 1$  after the primary reductions and $\ddotmms{\auxfun{\FJ}{\agenti}} < 4(1-\alpha)$ after the secondary reductions (\cref{minus-two}).

\begin{lemrep}\label{plus-two}
	Every green agent $\agent_i$ in $\Sagents$ such that  $\dotmms{\valu_\agenti} \ge 1$ after the primary reductions and $\ddotmms{\valu_\agenti} < 1$ after the secondary reductions, receives a bundle in \cref{algo:1}.
\end{lemrep}
\begin{proofsketch}
	Noting that by \cref{lem:plus-two}, there exists a number $t_\agenti$ satisfying \Cref{cond1}, \Cref{cond2}, and \Cref{cond3}, we verify that valuation function $\valu_\agenti$ satisfies all conditions required by \cref{lem:way2}. 
\end{proofsketch}
\begin{proof}
	Note that by \cref{lem:plus-two}, there exists a number $t_\agenti$ satisfying conditions \Cref{cond1}, \Cref{cond2}, and \Cref{cond3}. We verify that valuation function $\valu_\agenti$ satisfies all conditions required by \cref{lem:way2}:
	
	\begin{itemize}
		\item \cref{cond:zero} holds trivially.
		\item For \cref{cond:one}, we have: 
		\begin{align*}
			\alpha + \valu_\agenti(\{\Sgood_{4\Snumb+1}\}) 
			&< \alpha + \frac{\alpha}{5} &\mbox{ $\reductiontype^4$ is not applicable},\\[6pt]
			&\le 1 - 2\left(2\alpha - \frac{3}{2}\right) &\alpha \le \frac{10}{13},\\[6pt]
			&\le 1 - 2 t_\agenti &\mbox{\cref{cond1}},\\[6pt]
			&\le \ddotmms{\valu_\agenti} &\mbox{ \cref{cond3}}.
		\end{align*}
		
		\item For \cref{cond:two}, first, we show that 
		$\valu_\agenti(\{\Sgood_{2}\}) \le \alpha - \tfrac{1}{2} - t_\agenti.$
		
		\begin{align*}
			\valu_\agenti(\{\Sgood_{2}\}) &\le \frac{\alpha}{2}&\mbox{\cref{lem:N1-upper-bounds}},\\
			&\le \frac{1}{2} - \Bigl(2\alpha-\frac{3}{2}\Bigr) & \alpha \le \frac{4}{5},\\
			&\le \frac{1}{2} - t_\agenti &\mbox{\cref{cond1}}.
		\end{align*}
		
		\Cref{cond2} ensures no good's value lies in interval 
		$\left[\alpha - \tfrac{1}{2} - t_\agenti,\, \tfrac{1}{2} - t_\agenti\right].$
		Thus, since $\tfrac{\alpha}{2}$ is within this interval, we conclude $\valu_\agenti(\{\Sgood_{2}\}) \le \alpha - \tfrac{1}{2} - t_\agenti$.
		
		Now we bound $\valu_\agenti(\{\Sgood_{1},\,\Sgood_{\Snumb+1},\,\Sgood_{3\Snumb}\})$:
		\begin{align*}
			&\quad \valu_\agenti(\{\Sgood_{1},\,\Sgood_{\Snumb+1},\,\Sgood_{3\Snumb}\}) &\\
			&= \valu_\agenti(\{\Sgood_{1}\}) + \valu_\agenti(\{\Sgood_{\Snumb+1},\,\Sgood_{3\Snumb}\}) &\\[6pt]
			&\le \valu_\agenti(\{\Sgood_{1}\}) + 2\,\valu_\agenti(\{\Sgood_{2}\}) &\\[6pt]
			&\le \left(t_\agenti + \frac{1}{2}\right) + 2\left(\alpha - \frac{1}{2} - t_\agenti \right) &\mbox{\cref{cond1}},\\[6pt]
			&= 2\alpha - \frac{1}{2} - t_\agenti.
		\end{align*}
		Therefore
		\begin{align*}
			&\quad \alpha+\valu_\agenti(\{\Sgood_{1},\,\Sgood_{\Snumb+1},\,\Sgood_{3\Snumb}\}) &\\
			&\le 3\alpha - \frac{1}{2} - t_\agenti\\
			&\le 2-3\Bigl(2\alpha-\frac{3}{2}\Bigr)- t_\agenti & \alpha \le \frac{7}{9},\\
			&\le 2(1-2t_i)&\mbox{\cref{cond1}},\\
			&\le 2\ddotmms{\valu_\agenti}&\mbox{\cref{cond3}}.
		\end{align*}

		\item To prove \cref{cond:three}, for all $2 \le k \le \Snumb$, we have:
		\begin{align*}
			&\quad \valu_\agenti(\{\Sgood_{k},\,\Sgood_{\Snumb+k},\,\Sgood_{3\Snumb-k+1},\,\Sgood_{3\Snumb+k}\}) &\\
			&\le 3\left(\alpha - \frac{1}{2} - t_\agenti\right) + \valu_\agenti(\{\Sgood_{3\Snumb+k}\}) &\\[6pt]
			&\le 3\left(\alpha - \frac{1}{2} - t_\agenti\right) + \frac{\alpha}{4} &\mbox{$\reductiontype^3$ is not applicable},\\[6pt]
			&= 1 - 2 t_\agenti - \left(\frac{5}{2} - \frac{13\alpha}{4} + t_\agenti\right) &\\[6pt]
			&< 1 - 2 t_\agenti - \left(\frac{5}{2} - \frac{13\alpha}{4} \right) &t_\agenti>0,\\[6pt]
			&\le 1 - 2 t_\agenti &\alpha \le \frac{10}{13},\\
			&\le \ddotmms{\valu_\agenti} &\mbox{ \cref{cond3}}.
		\end{align*}
		
		\item For \cref{cond:four}, we have: 
		\begin{align*}
			2\alpha + \valu_\agenti(\{\Sgood_{3\Snumb+1}\}) &< 2\alpha + \frac{\alpha}{4} & \mbox{$\reductiontype^3$ is not applicable},\\
			&= 2(1 - 2t_\agenti) - \Bigl(2(1 - 2t_\agenti) - \frac{9\alpha}{4}\Bigr) & \\
			&\leq 2(1 - 2t_\agenti) - \Bigl(2\Bigl(1 - 2\Bigl(2\alpha - \frac{3}{2}\Bigr)\Bigr) - \frac{9\alpha}{4}\Bigr) & \mbox{\cref{cond1}},\\
			&\leq 2(1 - 2t_\agenti) & \alpha \le \frac{32}{41}, &\\
			&\le 2\;\ddotmms{\valu_\agenti} &\mbox{ \cref{cond3}}.
		\end{align*}
		
		\item To prove \cref{cond:five}, note that we have already shown that $\valu_i(\{\Sgood_{1},\,\Sgood_{\Snumb+1},\,\Sgood_{3\Snumb}\}) \le 2\alpha - \tfrac{1}{2} - t_\agenti$. Therefore we have: 
		\begin{align*}
			&\quad 2\alpha + \valu_\agenti(\{\Sgood_{1},\,\Sgood_{\Snumb+1},\,\Sgood_{3\Snumb},\,\Sgood_{3\Snumb+1}\})&\\
			&\le 2\alpha + (2\alpha - \frac{1}{2} - t_\agenti) + \valu_\agenti(\{\Sgood_{3\Snumb+1}\}) &\\
			&\le 2\alpha + (2\alpha - \frac{1}{2} - t_\agenti) + \frac{\alpha}{4} &\mbox{$\reductiontype^3$ is not applicable},\\
			&= 3(1 - 2t_\agenti) - \Bigl(\frac{7}{2} - 5t_\agenti - \frac{17\alpha}{4}\Bigr) \\
			&\leq 3(1 - 2t_\agenti) - \Bigl(\frac{7}{2} - 5\Bigl(2\alpha - \frac{3}{2}\Bigr) - \frac{17\alpha}{4}\Bigr) & \mbox{\cref{cond1}}, \\
			&\leq 3(1 - 2t_\agenti) & \alpha \le \frac{44}{57}, &\\
			&\le 3\;\ddotmms{\valu_\agenti} &\mbox{ \cref{cond3}}.
		\end{align*}
	\end{itemize}
	
	Since all conditions of \cref{lem:way2} are satisfied, the proof is complete.
\end{proof}

\begin{lemrep}\label{minus-two}
	Every green agent $\agent_i$ in $\Sagents$ such that  $\dotmms{\valu_i} < 1$ after the primary reductions and $\ddotmms{\auxfun{\FJ}{\agenti}} < 4(1-\alpha)$ after the secondary reductions, receives a bundle in \cref{algo:1}.
\end{lemrep}
\begin{proofsketch}
Noting that by \cref{lem:minus-two}, there exists a number $t_\agenti$ satisfying \cref{condd1}, \Cref{condd2}, and \cref{condd3}, we verify that valuation function $\auxval{\vHB_{t_\agenti}\star\FJ}$ satisfies all conditions required by \cref{lem:way2}. 
\end{proofsketch}
\begin{proof}
	Note that by \cref{lem:plus-two}, there exists a number $t_\agenti$ satisfying conditions \cref{condd1}, \Cref{condd2}, and \cref{condd3}. We verify that valuation function $\auxval{\vHB_{t_\agenti}\star\FJ}$ satisfies all conditions required by \cref{lem:way2}:
	
	\begin{itemize}
		\item \cref{cond:zero} follows directly by \cref{def:adj}.
		\item To prove \cref{cond:one}, we show: 
		\begin{align*}
			&\quad \alpha + \auxwal{\vHB_{t_\agenti}\star\FJ}(\{\Sgood_{4\Snumb+1}\})&\\\ 
			&< \alpha + \frac{4\alpha}{3}-1 & \mbox{\cref{obs:3n-kam}},\\
			&\le 4(1-\alpha)-2\Bigl(2\alpha-\frac{3}{2}\Bigr) &\alpha \le \frac{24}{31},\\
			&\le 4(1-\alpha)-2t_i &\mbox{\cref{condd1}},\\
			&\le \ddotmms{\auxfun{\vHB_{t_\agenti}\star\FJ}{\agenti}} &\mbox{\cref{condd3}}.
		\end{align*}
		
		\item To prove \cref{cond:two}, first, we show that 
		$\auxwal{\FJ}(\{\Sgood_{2}\}) \le \tfrac{5\alpha}{3} - 1 - t_\agenti$. 
		
		\begin{align*}
			\auxwal{\FJ}(\{\Sgood_{2}\}) &\le 1-\frac{5\alpha}{6}&\mbox{\cref{lem:N1-upper-bounds}},\\
			&\le 2(1-\alpha) - \Bigl(2\alpha-\frac{3}{2}\Bigr) & \alpha \le \frac{15}{19},\\
			&\le 2(1-\alpha) - t_\agenti &\mbox{\cref{condd1}}.
		\end{align*}

		\Cref{condd2} ensures no good's value lies in interval 
		$\Bigl[\tfrac{5\alpha}{3} - 1 - t_\agenti,\; 2(1-\alpha) - t_\agenti\Bigr].$
		Thus, since $1-\tfrac{5\alpha}{6}$ is within this interval, we conclude $\auxwal{\FJ}(\{\Sgood_{2}\}) \le \tfrac{5\alpha}{3} - 1 - t_\agenti$.
		
		Now we bound $\auxwal{\vHB_{t_\agenti}\star\FJ}(\{\Sgood_{1},\,\Sgood_{\Snumb+1},\,\Sgood_{3\Snumb}\})$:
		\begin{align*}
			&\quad \auxwal{\vHB_{t_\agenti}\star\FJ}(\{\Sgood_{1},\,\Sgood_{\Snumb+1},\,\Sgood_{3\Snumb}\})&\\
			&\le \auxwal{\vHB_{t_\agenti}\star\FJ}(\{\Sgood_{1}\}) + \auxwal{\FJ}(\{\Sgood_{\Snumb+1}\}) + \auxwal{\FJ}(\{\Sgood_{3\Snumb}\})\\[6pt]
			&\le \auxwal{\vHB_{t_\agenti}\star\FJ}(\{\Sgood_{1}\}) + \auxwal{\FJ}(\{\Sgood_{2}\})+(1-\alpha)& \mbox{\cref{lem:N1-upper-bounds}},\\[6pt]
			&\le \auxwal{\vHB_{t_\agenti}\star\FJ}(\{\Sgood_{1}\}) + \Bigl(\frac{5\alpha}{3} - 1 - t_\agenti\Bigr)+(1-\alpha) &\\[6pt]
			&\le \vHB_{t_\agenti}(2(1-\alpha)+t_\agenti) + \Bigl(\frac{5\alpha}{3} - 1 - t_\agenti\Bigr)+(1-\alpha) &\mbox{\cref{condd1}},\\[6pt]
			&= (2(1-\alpha)-t_\agenti) + \Bigl(\frac{5\alpha}{3} - 1 - t_\agenti\Bigr)+(1-\alpha) &\\[6pt]
			&= 2-\frac{4\alpha}{3}-2t_i.
		\end{align*}
		Therefore
		\begin{align*}
			&\quad \alpha + \auxwal{\vHB_{t_\agenti}\star\FJ}(\{\Sgood_{1},\,\Sgood_{\Snumb+1},\,\Sgood_{3\Snumb}\})&\\
			&\le 2-\frac{\alpha}{3}-2t_i\\
			&=2(4(1-\alpha)-2t_i)+\Bigl(-6 + \frac{23\alpha}{3} + 2t_i\Bigr)\\
			&\le 2(4(1-\alpha)-2t_i)+\Bigl(-6 + \frac{23\alpha}{3} + 2\Bigl(2\alpha - \frac{3}{2}\Bigr)\Bigr)&\mbox{\cref{condd1}},\\
			&\le2(4(1-\alpha)-2t_i) &\alpha \le \frac{27}{35},\\
			&\le2\ddotmms{\auxfun{\vHB_{t_\agenti}\star\FJ}{\agenti}} &\mbox{\cref{condd3}}.
		\end{align*}
		
		\item To prove \cref{cond:three}, for all $2 \le k \le \Snumb$, we show:
		\begin{align*}
			&\quad \auxwal{\vHB_{t_\agenti}\star\FJ}(\{\Sgood_{k},\,\Sgood_{\Snumb+k},\,\Sgood_{3\Snumb-k+1},\,\Sgood_{3\Snumb+k}\})&\\
			&\le 2\ff{\{\Sgood_{2}\}} + \ff{\{\Sgood_{3\Snumb-k+1}, \Sgood_{3\Snumb+k}\}} & k \ge 2, \\
			&\le 2\Bigl(\frac{5\alpha}{3} - 1 - t_\agenti\Bigr) + \ff{\{\Sgood_{3\Snumb-k+1}, \Sgood_{3\Snumb+k}\}} &\\
			&\le 2\Bigl(\frac{5\alpha}{3} - 1 - t_\agenti\Bigr) + 1-\alpha + \ff{\{\Sgood_{3\Snumb+k}\}} & \mbox{\cref{lem:N1-upper-bounds}}, \\
			&\le 2\Bigl(\frac{5\alpha}{3} - 1 - t_\agenti\Bigr) + 1-\alpha + \frac{4\alpha}{3}-1 & \mbox{\cref{obs:3n-kam}}, \\
			&\le 4(1-\alpha) - 2t_\agenti &\alpha \le \frac{18}{23},\\
			&\le \ddotmms{\auxfun{\vHB_{t_\agenti}\star\FJ}{\agenti}}&\mbox{ \cref{condd3}}.
		\end{align*}
		
		\item To prove \cref{cond:four}, we show: 
		\begin{align*}
			&\quad 2\alpha + \auxwal{\vHB_{t_\agenti}\star\FJ}(\{\Sgood_{3\Snumb+1}\})&\\
			&< 2\alpha + \frac{4\alpha}{3}-1 & \mbox{\cref{obs:3n-kam}},\\
			&= 2(4(1-\alpha) - 2t_\agenti) +\frac{34\alpha}{3} - 9 + 4\,t_{\agenti}  & \\
			&\le2(4(1-\alpha) - 2t_\agenti) +\frac{34\alpha}{3} - 9 + 4\Bigl(2\alpha - \frac{3}{2}\Bigr)& \mbox{\cref{condd1}},\\
			&\le 2(4(1-\alpha) - 2t_\agenti) & \alpha \le \frac{45}{58}, \\
			&\le 2\ddotmms{\auxfun{\vHB_{t_\agenti}\star\FJ}{\agenti}} &\mbox{ \cref{condd3}}.
		\end{align*}
		
		\item To prove \cref{cond:five}, note that we have already shown that $\auxwal{\vHB_{t_\agenti}\star\FJ}(\{\Sgood_{1},\,\Sgood_{\Snumb+1},\,\Sgood_{3\Snumb}\}) \le 2-\tfrac{4\alpha}{3}-2t_i$. Therefore we have:
		\begin{align*}
			&\quad 2\alpha+\auxwal{\vHB_{t_\agenti}\star\FJ}(\{\Sgood_{1},\,\Sgood_{\Snumb+1},\,\Sgood_{3\Snumb} ,\,\Sgood_{3\Snumb+1}\})&\\
			&\le 2\alpha + (2-\frac{4\alpha}{3}-2t_i) + \valu_\agenti(\{\Sgood_{3\Snumb+1}\}) &\\
			&\le 2\alpha + (2-\frac{4\alpha}{3}-2t_i) + \frac{4\alpha}{3}-1 &\mbox{\cref{obs:3n-kam}},\\
			&=(1-\alpha)-6t_\agenti+3\alpha+4t_\agenti&\\
			&\le (1-\alpha)-6t_\agenti+3\alpha+4(2\alpha - \frac{3}{2})&\mbox{\cref{condd1}},\\
			&= 3(4(1-\alpha) - 2t_\agenti)+(22\alpha-17)\\
			&\leq 3(4(1-\alpha) - 2t_\agenti) & \alpha \le \frac{17}{22},\\
			&\le 3\;\ddotmms{\auxfun{\vHB_{t_\agenti}\star\FJ}{\agenti}} &\mbox{ \cref{condd3}}.
		\end{align*}
	\end{itemize}
	
	Since all conditions of \cref{lem:way2} are satisfied, the proof is complete.
\end{proof}

\begin{lemma}\label{lem:N2-in-N1}
	Every red agent \(\agent_i \in \Sagents\) receives a bundle in \cref{algo:1}.
\end{lemma}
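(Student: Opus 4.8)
The plan is to prove \cref{lem:N2-in-N1} along exactly the same lines as the four green cases: for each red agent $\agent_i\in\Sagents$ we produce a valuation $\prvalu$ that ranks the goods in the same order as $\valu_\agenti$ and satisfies every hypothesis of \cref{lem:way} or \cref{lem:way2}, which then forces $\agent_i$ to receive a bundle in \cref{algo:1}. (The priority given to $\Ndo$ in \cref{algo:1} is not actually needed for this lemma, since \cref{lem:way} and \cref{lem:way2} reach a contradiction purely from the assumption that $\agent_i$ never gets a bundle, regardless of the order in which bundles are handed out.) The single new ingredient is a red-agent counterpart of \cref{lem:N1-upper-bounds}. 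Inspecting its proof, the bounds on $\Sgood_2$ and $\Sgood_{2\Snumb+1}$ (and their images under $\FJ$ and $\vG\star\FJ$) use only the inapplicability of $\rstar^2$ and $\reductiontype^2$ in $\Sinstance$, hence are color-free; likewise $\valu_\agenti(\{\Sgood_{3\Snumb}\})\le\valu_\agenti(\{\Sgood_{2\Snumb+1}\})<\tfrac{\alpha}{3}$, and \cref{obs:3n-kam} applies verbatim. Only the bound on $\valu_\agenti(\{\good_1\})$ uses greenness; for a red agent, inapplicability of $\rstar^1$ in $\instance$ together with $\valu_\agenti(\{\good_{2\numb+1}\})<1-\alpha$ gives instead the weaker $\valu_\agenti(\{\good_1\})<\alpha$.

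With these bounds in hand, I would case on whether the reductions preserve the relevant maximin-share threshold, paralleling the green analysis. If $\dotmms{\valu_\agenti}\ge1$ and $\ddotmms{\valu_\agenti}\ge1$, the argument of \cref{plus-one} goes through unchanged, since it invokes only the color-free rows of \cref{lem:N1-upper-bounds}, so $\agent_i$ receives a bundle. If $\dotmms{\valu_\agenti}<1$ but $\ddotmms{\auxfun{\FJ}{\agenti}}\ge4(1-\alpha)$ — using \cref{lem:N-minus} with $\lambda=\tfrac{4\alpha}{3}-1$ in \cref{conds3} to get $\dotmms{\auxfun{\FJ}{\agenti}}\ge4(1-\alpha)$ — the argument of \cref{minus-one} likewise applies verbatim. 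So the only remaining cases are those where the \emph{secondary} reductions decrease the (calibrated) maximin share, i.e., the red analogues of \cref{plus-two} and \cref{minus-two}. Here I would argue that this decrease cannot occur for a red agent at all: when $\rstar^2$ is the active reduction in a perfect secondary sequence, all of $\reductiontype^1,\dots,\reductiontype^4$ are inapplicable, which — in particular through inapplicability of $\reductiontype^1$ — forces strong upper bounds on the goods $\rstar^2$ could remove; combined with the lower bound $x\ge 2$ on its dynamic index and with $\agent_i$ being red, this should yield $\ddotmms{\valu_\agenti}\ge 1$ (resp. $\ddotmms{\auxfun{\FJ}{\agenti}}\ge4(1-\alpha)$), collapsing these cases back onto \cref{plus-one} and \cref{minus-one}.

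The main obstacle is precisely this last step. The red bound $\valu_\agenti(\{\good_1\})<\alpha$ is far weaker than the green bound $\valu_\agenti(\{\good_1\})<2\alpha-1$, so imitating \cref{lem:plus-two}/\cref{lem:minus-two} directly would only bound the auxiliary parameter $t_\agenti$ by $\alpha-\tfrac12$ rather than $2\alpha-\tfrac32$, and the numerical inequalities underlying conditions such as \cref{cond:one}, \cref{cond:four} and \cref{cond:five} of \cref{lem:way2} would then fail at $\alpha=\tfrac{10}{13}$. Hence the real content is to rule out a secondary-phase maximin-share drop for red agents outright, exploiting that a red agent's value is concentrated on the top $2\numb$ goods of $\instance$ while a maximin partition must spread value across all bundles — this is the delicate part, after which everything reduces to the (color-free) arguments already established for green agents.
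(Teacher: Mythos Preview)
Your approach diverges from the paper's in a fundamental way, and the divergence contains a genuine gap.

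The paper does \emph{not} try to push each red agent through \cref{lem:way} or \cref{lem:way2}. Instead it gives a global counting argument: assuming some red $\agent_i\in\Sagents$ is left unsatisfied, it bounds $v_\agenti(\Pgoods)$ by summing over all allocated bundles. Because red agents are prioritized throughout (primary reductions, secondary reductions, and bag-filling), every bundle handed to a \emph{green} agent has value $<\alpha$ to $\agent_i$; every bundle handed to a \emph{red} agent (or any bag) is shown directly to have value $<2\alpha$. Then the case assumption $|\Nyek|\ge \Pnumb/\sqrt{2}$ gives
\[
v_\agenti(\Pgoods) \;<\; \alpha\,|\Nyek| + 2\alpha\,|\Ndo| \;\le\; \Bigl(\tfrac{\alpha}{\sqrt2}+2\alpha\bigl(1-\tfrac{1}{\sqrt2}\bigr)\Bigr)\Pnumb \;<\; \Pnumb
\]
for $\alpha<\tfrac{4+\sqrt2}{7}$, contradicting $\allmms{\valu_\agenti}=1$. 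So both the red-agent priority and the inequality $|\Nyek|\ge \Pnumb/\sqrt{2}$ are load-bearing here --- exactly the two ingredients your plan discards.

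The gap in your plan is the ``delicate part'' you flag but do not prove: that a red agent's (calibrated) maximin share cannot drop during the secondary reductions. This is not true in general, and the argument you sketch does not establish it. When $\rstar^2$ fires, inapplicability of $\reductiontype^1$ only bounds $\valu_\agenti(\{\prgoodrat{\prnumb+1}\})<\tfrac{\alpha}{2}$; it says nothing about $\prgoodrat{2},\ldots,\prgoodrat{\prnumb}$. Since the dynamic index of $\rstar^2$ may be as small as $x=2$, the removed pair $\{\prgoodrat{1},\prgoodrat{x}\}$ can have value up to (just under) $2\alpha>1$ for a red agent, so \cref{reductionlem2} does not apply and the drop is not excluded. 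Redness only tells you $\valu_\agenti(\{\good_{2\numb+1}\})<1-\alpha$, which constrains goods beyond position $2\numb$ in $\instance$, not the top two goods of the current secondary instance. Consequently the analogues of \cref{plus-two}/\cref{minus-two} cannot be sidestepped, and as you yourself note, running them with $t_\agenti<\alpha-\tfrac12$ instead of $t_\agenti<2\alpha-\tfrac32$ breaks \cref{cond:one}, \cref{cond:four}, and \cref{cond:five} at $\alpha=\tfrac{10}{13}$. The paper's counting argument avoids this entirely by never attempting a per-agent maximin analysis for red agents.
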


\begin{proof}
	We argue by contradiction. Suppose there exists a red agent \(\agent_i \in \Sagents\) who does not receive any bundle in \cref{algo:1}. Let the algorithm terminate while filling bag \(\finib_\halt\). This means $\Pnumb - \halt$ agents have received a bundle of value at least $\alpha$, and bags $\finib_1, \finib_2, \ldots, \finib_\halt$ are unallocated. Note that by \Cref{plus-one}, \cref{minus-one}, \cref{plus-two} and \cref{minus-two}, all green agents have received a bundle, and only red agents remain. 
	Our goal is to show $v_\agenti(\Pgoods) < \Pnumb$, 
	contradicting \(\allmms{\valu_\agenti}=1\).
	
	For every agent $\agent_j$ such that $\agent_j$ received a bundle during the primary reductions, secondary reductions, or Bag-filling, we denote $A_j$ by the bundles allocated to her. By the prioritization of red agents in \cref{algo:pr,algo:N11,algo:1}, every bundle allocated to a green agent satisfies
	$v_\agenti(A_j) < \alpha$. 
	We now show that every bundle allocated to a red agent satisfies
	$v_\agenti(A_j) \le 2\alpha$. 
	
\paragraph{Primary Reductions.}
Consider a red agent $\agent_j$ such that we allocated a bundle to her, during the primary reductions, and let $\prinstance = (\pragents, \prgoods)$ be the instance before applying the reduction. 
If \(\agent_j\) receives a bundle via reduction \(\reductiontype^0\), then \(v_\agenti(A_j) \le 1 < 2\alpha\). 
If the bundle is assigned via \(\reductiontype^1\), note that \(\reductiontype^0\) was not applicable at that point, meaning \(v_\agenti(\{\prgood_1\}) < \alpha\). Since \(\reductiontype^1\) allocates two goods, it follows that \(v_\agenti(A_j) < 2\alpha\). The same reasoning applies to \(\rstar^1\).
Finally, if \(\agent_j\) receives a bundle via \(\reductiontype^2\), the inapplicability of \(\reductiontype^1\) at that time implies \(v_\agenti(\{\prgood_{\prnumb+1}\}) < \tfrac{\alpha}{2}\). Since \(\reductiontype^2\) allocates three goods, we get
$
v_\agenti(A_j) < \tfrac{3\alpha}{2} < 2\alpha.
$

\paragraph{Secondary Reductions.} 
Consider a red agent $\agent_j$ such that we allocated a bundle to her, during the secondary reductions, and let $\prinstance = (\pragents, \prgoods)$ be the instance before applying the reduction. 
The bounds for \(\reductiontype^1\) and \(\reductiontype^2\) were already established in the primary reductions. 
For \(\rstar^2\), the same reasoning as in \(\reductiontype^1\) applies, since the bundle size is also 2, leading to \(v_\agenti(A_j) < 2\alpha\). For \(\reductiontype^3\), since \(\reductiontype^2\) was not applicable at the time of allocation, we have $
v_\agenti(\{\prgood_{2\prnumb+1}\}) < \tfrac{\alpha}{3},
$
and hence,
$
v_\agenti(A_j) < \tfrac{4\alpha}{3} < 2\alpha.
$
Similarly, for \(\reductiontype^4\), the inapplicability of \(\reductiontype^3\) implies
$
v_\agenti(\{\prgood_{3\prnumb+1}\}) < \tfrac{\alpha}{4},
$
which leads to
$
v_\agenti(A_j) < \tfrac{5\alpha}{4} < 2\alpha.
$
 \paragraph{Bag‐filling.} For each \(1\le k\le \Snumb\), $\finib_k$ is either $\{\Sgood_k,\,\Sgood_{\Snumb+k},\,\Sgood_{3\Snumb-k+1}\}$ or at least one good is added to it. In the first case, since $\rstar^2$ is not applicable, $\valu_\agenti(\{\Sgood_k,\,\Sgood_{\Snumb+k}\}) < \alpha$, so $\valu_\agenti(\{\Sgood_k,\,\Sgood_{\Snumb+k},\,\Sgood_{3\Snumb-k+1}\}) < 2\alpha$. In the second case, let $\Sgood_x$ be the last good added to it. Since both $\valu_\agenti(\finib_k \setminus \{\Sgood_x\}) < \alpha$ and $\valu_\agenti(\{\Sgood_x\}) < \alpha$, we have \(v_\agenti(\finib_k)<2\alpha\). 
	
	Now we calculate sum of all bundles $\finib_1, \finib_2, \ldots, \finib_\halt$ and $A_j$ for all satisfied agents $\agent_j$. Since all green agents are satisfied, we have: 
	\[
	v_\agenti(\Pgoods) < \alpha\,|\Nyek| + 2\alpha\,(|\Ndo| - \halt) + 2\alpha\,\halt.
	\]
	Using $|\Nyek| \ge \tfrac{n}{\sqrt{2}}$
	we obtain
	\begin{align*}
		v_\agenti(\Pgoods) &< \left(\alpha\,\frac{1}{\sqrt{2}} + 2\alpha\left(1-\frac{1}{\sqrt{2}}\right)\right)\Pnumb &\\
		&< \Pnumb & \alpha < \frac{4+\sqrt{2}}{7} \approx 0.7735.
	\end{align*} 
	This contradicts $\allmms{\valu_\agenti} = 1$.
	Therefore, every red agent in \(\Sagents\) receives a bundle.
\end{proof}

\clearpage
\section{\boldmath \cref{algo:N2}: Less Frequent Green Agents}\label{sec:alg-N2}

In this section, we consider the case where \( |\Nyek| < \lbnone \). To handle this scenario, we use \cref{algo:N2}, which gives priority to green agents. We first show that every red agent receives a bundle. Then, we prove that, due to the prioritization of green agents, each green agent also receives a bundle.

\begin{algorithm}[h]
	\caption{$\textsc{Algorithm-Case2}$}
	\label{algo:N2}
	\textbf{Input:} $\instance, \Nyek, \Ndo$\\
	\textbf{Output:} Allocation satisfying  $(\tfrac{10}{13})$-$\MMS$
	\begin{algorithmic}[1]
		\For {$k$ : $1 \rightarrow \numb$}
		\State $\finib_k \leftarrow \{\good_k,\, \good_{\numb+k}\}$
		\EndFor
		\For {$k : 1 \rightarrow \numb$}
		\While {There does not exist a remaining agent $\agent_i$ s.t. $v_\agenti(B_k) \geq \alpha$}
		\State Add an arbitrary remaining good to $B_k$ 
		\EndWhile
		\State Allocate $B_k$ to $\agent_i$ with $\valu_\agenti(B_k) \ge \alpha$ \CComment{Priority is given to agents in $\Nyek$}
		\EndFor
	\end{algorithmic}
\end{algorithm}

\begin{figure}[t]
	\centering
	\resizebox{\textwidth}{!}{
		\begin{tikzpicture}[x=1.25cm, y=1.3cm, curve scale=1.25, good/.style={draw, rectangle, rounded corners=3pt,minimum width=1cm,minimum height=1cm,inner sep=2pt, font=\small}]
	\foreach \i/\label/\x/\gcol in {
		1/$\good_1$/1/blue,
		2/$\good_2$/2/red,
		3/$\good_k$/3.5/green,
		4/$\good_{\numb-1}$/5/orange,
		5/$\good_{\numb}$/6/yellow,
		6/$\good_{\numb+1}$/7.2/blue,
		7/$\good_{\numb+2}$/8.2/red,
		8/$\good_{\numb+k}$/9.7/green,
		9/$\good_{2\numb-1}$/11.2/orange,
		10/$\good_{2\numb}$/12.2/yellow} {
		\node[good, fill=\gcol!30] (good\i) at (\x,0) {\label};
	}
	
	\curvedraw[blue, thick]{good1}{1.2}{2.0}{good6}{;}
	\curvedraw[red,  thick]{good2}{1.2}{-2.0}{good7}{;}
	\curvedraw[green,thick]{good3}{1.2}{2.0}{good8}{;}
	\curvedraw[orange,thick]{good4}{1.2}{-2.0}{good9}{;}
	\curvedraw[yellow,thick]{good5}{1.2}{2.0}{good10}{;}

	\foreach \i/\label/\x in {
		11/$\good_{2\numb+1}$/13.3,
		12/$\good_{2\numb+2}$/14.2,
		13/$\good_{m-1}$/15.6,
		14/$\good_{m}$/16.5
	} {
		\node[good, scale=0.75, fill=gray!30] (good\i) at (\x,0) {\label};
	}
	
	\node at (2.75,0) {\(\dots\)};  
	\node at (4.25,0) {\(\dots\)};  
	
	\node at (8.95,0) {\(\dots\)};  
	\node at (10.45,0) {\(\dots\)};  
	
	\node at (14.9,0) {\(\dots\)};  
	
	\draw[decorate, thick, decoration={brace,mirror}] 
	(12.9, -0.5) -- (16.9, -0.5)
	node[midway, below=4pt, font=\small]{Bag-filling goods};
	
	\draw[dashed] (6.6,-0.6) -- (6.6,0.6);
	
	\draw[dashed] (12.8,-0.6) -- (12.8,0.6);
\end{tikzpicture}
	}
	\caption{Structure of the bags in \Cref{algo:N2}.}
\end{figure}

In \cref{algo:N2}, we use a simple Bag-filling algorithm. We begin by forming initial bags of the form 
\[
\{ \good_{k},\, \good_{\numb + k} \} \quad \text{for } k = 1, \dots, \numb.
\]
Next, we sequentially add the remaining goods \(\good_{2\numb+1}, \good_{2\numb+2}, \dots, \good_{\It}\) to the bags, one good at a time. As soon as the total value of a bag reaches at least \(\alpha\) for some agent \(\agent_i\), we allocate that bag to \(\agent_i \in \agents\). If multiple agents are eligible at the same time, priority is given to those in \(\Nyek\).

To analyze our algorithm, we categorize red agents into two groups based on their maximin share after the primary reductions:

\begin{enumerate}
	\item Agents with \( \dotmms{\valu_i} \geq 1 \) after the primary reductions.
	
	\item Agents with \( \dotmms{\valu_i} < 1 \) after the primary reductions.
\end{enumerate}

\cref{lem:N2bag} provides general tools for analyzing these agents. In \cref{lem:N21J}, we show that the agents in the first group receive a bundle. \cref{lem:N22J} establishes the same for the second group.

\begin{lemma}\label{lem:N2bag}
	Let $\agent_i$ be an agent in $\agents$, consider integers $0 \le x \le y \le \numb$ and define
	$$\Jgoods=\goods \setminus \bigcup_{k=x+1}^{y} \{\good_k,\,\good_{\numb+k}\}\quad  \text{and}\quad  \Jnumb=\numb-(y-x).$$
	Let $\prvalu$ be a valuation function on $\Jgoods$ that ranks the goods in the same order as \( \valu_i \). Assume the following conditions hold:
	\begin{align}
		&\forall {\good \in \Jgoods}\qquad &\prvalu(\{\good\}) \le \valu_i(\{\good\}),\label[ineq]{wcond} \\ 
		&\forall 1 \le k \le y \qquad &\valu_i(\{\good_k,\,\good_{\numb+k}\}) \ge \alpha, 
		\label[ineq]{firstcond}\\
		&\forall y < k \le \numb \qquad &\prvalu(\{\good_k,\,\good_{\numb+k}\}) < \alpha, 
		\label[ineq]{secondcond}\\
		&&x\prvalu(\{\good_1\}) + \sum_{k=1}^{x} \prvalu(\{\good_{\numb+k}\}) + (\Jnumb-x)(\alpha + \prvalu(\{\good_{2\numb+1}\})) < \Jnumb\mms^{\Jnumb}_{\prvalu}(\Jgoods).\label[ineq]{thirdcond}
	\end{align}
	Then $\agent_i$ receives a bundle in \cref{algo:N2}.
\end{lemma}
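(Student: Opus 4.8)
The plan is to argue by contradiction. Suppose $\agent_i$ receives no bundle in \cref{algo:N2}. I will bound the total $\prvalu$-value of the final bags that constitute $\Jgoods$ and show it is strictly below $\Jnumb\cdot\mms^{\Jnumb}_{\prvalu}(\Jgoods)$, which contradicts the fact that, by definition of the maximin share, $\Jgoods$ admits a partition into $\Jnumb$ parts each of $\prvalu$-value at least $\mms^{\Jnumb}_{\prvalu}(\Jgoods)$, hence $\prvalu(\Jgoods)\ge \Jnumb\,\mms^{\Jnumb}_{\prvalu}(\Jgoods)$.

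First I would handle the bags $B_1,\dots,B_y$. By \cref{firstcond}, $\valu_\agenti(\{\good_k,\good_{\numb+k}\})\ge\alpha$ for every $k\le y$. Since $\agent_i$ is, by assumption, never removed, whenever the algorithm begins processing one of these bags the while-loop condition already fails (as $\agent_i$ is a remaining agent witnessing the bound), so $B_k$ is allocated as the bare pair $\{\good_k,\good_{\numb+k}\}$, and to some agent other than $\agent_i$. In particular there are $y$ distinct agents $\ne\agent_i$ receiving $B_1,\dots,B_y$, so $y\le\numb-1$; thus among $B_{y+1},\dots,B_\numb$ at least one bag is never allocated. Let $B_\halt$ with $y<\halt\le\numb$ be the first such bag. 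The algorithm then keeps adding the remaining goods to $B_\halt$ while $\agent_i$ — still present — values it below $\alpha$; so in the final configuration $\valu_\agenti(B_\halt)<\alpha$, and every bag-filling good that was ever moved has index at least $2\numb+1$.

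Next I would observe that, since $B_{x+1},\dots,B_y$ remained equal to $\{\good_k,\good_{\numb+k}\}$, the set $\Jgoods$ is precisely the union of the final bags $B_1,\dots,B_x,B_{y+1},\dots,B_\numb$, which is a partition of $\Jgoods$ into $\Jnumb=x+(\numb-y)$ parts. I then bound each part under $\prvalu$. For $1\le k\le x$ we have $B_k=\{\good_k,\good_{\numb+k}\}$, and since $\prvalu$ ranks the goods in the same order as $\valu_\agenti$, $\prvalu(B_k)\le\prvalu(\{\good_1\})+\prvalu(\{\good_{\numb+k}\})$. For $y<k\le\numb$: if $B_k$ is still a pair then $\prvalu(B_k)<\alpha$ by \cref{secondcond}; if $k=\halt$ then $\prvalu(B_\halt)\le\valu_\agenti(B_\halt)<\alpha$; and otherwise the last good added is some $\good_z$ with $z\ge 2\numb+1$, so just before it was added $\valu_\agenti(B_k\setminus\{\good_z\})<\alpha$ (else the loop would have stopped), giving $\prvalu(B_k)\le\valu_\agenti(B_k\setminus\{\good_z\})+\prvalu(\{\good_z\})<\alpha+\prvalu(\{\good_{2\numb+1}\})$ by \cref{wcond} and the common ordering. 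Summing the $x$ bags of the first kind and the $\Jnumb-x=\numb-y$ bags of the second kind yields
\[
\prvalu(\Jgoods) < x\,\prvalu(\{\good_1\}) + \sum_{k=1}^{x}\prvalu(\{\good_{\numb+k}\}) + (\Jnumb-x)\bigl(\alpha+\prvalu(\{\good_{2\numb+1}\})\bigr),
\]
and by \cref{thirdcond} the right-hand side is strictly less than $\Jnumb\,\mms^{\Jnumb}_{\prvalu}(\Jgoods)$ — the desired contradiction.

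The step I expect to require the most care is the bookkeeping around $B_\halt$ together with the claim that $B_1,\dots,B_x,B_{y+1},\dots,B_\numb$ partition $\Jgoods$ (not $\goods$): one must check that neither any added good nor the two base goods of a bag with index $>y$ can lie among the removed pairs $\good_{x+1},\dots,\good_y,\good_{\numb+x+1},\dots,\good_{\numb+y}$, and that $\prvalu$ on any bag-filling good is dominated by $\prvalu(\{\good_{2\numb+1}\})$ because such goods occupy position $\ge 2\numb+1$ in the common order. Everything else is routine; in the intended applications (\cref{lem:N21J}, \cref{lem:N22J}) this lemma is invoked with $\prvalu=\mathrm{normalized}^{\Jnumb}_{\lambda}(\cdot,\Jgoods)$ from \cref{def:norm,lem:norm}, for which \cref{wcond}, the ordering hypothesis, and $\prvalu(\Jgoods)=\Jnumb\,\mms^{\Jnumb}_{\prvalu}(\Jgoods)$ hold by construction.
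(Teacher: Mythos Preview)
Your proposal is correct and follows essentially the same approach as the paper's proof: assume $\agent_i$ is never allocated, observe that bags $B_1,\dots,B_y$ stay as bare pairs (so $B_1,\dots,B_x,B_{y+1},\dots,B_{\numb}$ partition $\Jgoods$), bound each bag's $\prvalu$-value by $\prvalu(\{\good_1\})+\prvalu(\{\good_{\numb+k}\})$ for $k\le x$ and by $\alpha+\prvalu(\{\good_{2\numb+1}\})$ for $k>y$, and sum to contradict \cref{thirdcond}. Your separate handling of $B_\halt$ is slightly more explicit than the paper's, but the argument is the same.
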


\begin{proof}
	Assume, for contradiction, that agent \(\agent_i\) does not receive any bundle and \cref{algo:N2} terminates while filling bag \(\finib_\halt\).
	Note that the bags \(\finib_1, \finib_2, \ldots, \finib_{\numb}\) form a partition of \(\goods\). For \(1 \le k \le y\), we know from \cref{firstcond} that \(\valu_i(\{\good_k,\, \good_{\numb+k}\}) \ge \alpha\), so the algorithm does not add any additional goods to these bags. Thus, \(\finib_k = \{\good_k,\, \good_{\numb+k}\}\) for all \(k \le y\).
	It follows that the following collection of bundles forms a partition of \(\Jgoods\):
	\[
	\{\good_1,\, \good_{\numb+1}\},\, \{\good_2,\, \good_{\numb+2}\},\, \ldots,\, \{\good_x,\, \good_{\numb+x}\},\, \finib_{y+1},\, \finib_{y+2},\, \ldots,\, \finib_{\numb}.
	\]
	Therefore,
	\[
	\prvalu(\Jgoods) = \sum_{k=1}^{x} \prvalu(\{\good_k,\, \good_{\numb+k}\}) + \sum_{k=y+1}^{\numb} \prvalu(\finib_k).
	\]
	
	We now show that
	\[
	\sum_{k=1}^{x} \prvalu(\{\good_k,\, \good_{\numb+k}\}) + \sum_{k=y+1}^{\numb} \prvalu(\finib_k) < \Jnumb \cdot \mms^{\Jnumb}_{\prvalu}(\Jgoods),
	\]
	which contradicts the assumption that \(\prvalu(\Jgoods) \ge \Jnumb \cdot \mms^{\Jnumb}_{\prvalu}(\Jgoods)\).

	For each \( y < k \le \Jnumb \), the bag \(\finib_k\) is either left as \(\{\good_k,\, \good_{\numb+k}\}\), or at least one additional good is added to it during the Bag-filling process. In the first case, by \cref{secondcond}, we have
	$
	\prvalu(\{\good_k,\, \good_{\numb+k}\}) < \alpha.
	$
	In the second case, let \(\good_x\) denote the last good added to the bag (with \(x \ge 2\numb + 1\)). Then by \cref{wcond},
	$
	\prvalu(\finib_k \setminus \{\good_x\}) < \alpha.
	$
	Thus, in both cases,
	\[
	\prvalu(\finib_k) < \alpha + \prvalu(\{\good_{2\numb+1}\}).
	\]
	
	We can now upper bound the total value of \(\Jgoods\) as follows:
	\begin{align*}
		\prvalu(\Jgoods) &= \sum_{k=1}^{x} \prvalu(\{\good_k,\, \good_{\numb+k}\}) + \sum_{k=y+1}^{\numb} \prvalu(\finib_k) \\
		&\le \sum_{k=1}^{x} \prvalu(\{\good_k\}) + \sum_{k=1}^{x} \prvalu(\{\good_{\numb+k}\}) + (\Jnumb - x)(\alpha + \prvalu(\{\good_{2\numb+1}\})) \\
		&\le x \cdot \prvalu(\{\good_1\}) + \sum_{k=1}^{x} \prvalu(\{\good_{\numb+k}\}) + (\Jnumb - x)(\alpha + \prvalu(\{\good_{2\numb+1}\})) \\
		&< \Jnumb \cdot \mms^{\Jnumb}_{\prvalu}(\Jgoods) &\text{ \cref{thirdcond}}.
	\end{align*}
	
	This contradicts the assumption that \(\prvalu(\Jgoods) \ge \Jnumb \cdot \mms^{\Jnumb}_{\prvalu}(\Jgoods)\), completing the proof.
\end{proof}

We now prove Lemmas \ref{lem:N21J} and \ref{lem:N22J}, which together establish that the red agents receive a bundle in \cref{algo:N2}. To maintain the flow of the paper, we present only a brief proof sketch here and defer the full proofs to the appendix.

\begin{lemrep}\label{lem:N21J}
	Every red agent $\agent_i$ in $\agents$ with $\dotmms{\valu_i} \ge 1$ after the primary reductions, receives a bundle in \cref{algo:N2}.
\end{lemrep}
\begin{proofsketch}
	We begin by defining an index \(0 \le y \le \numb\) such that:
	\begin{itemize}
		\item \(v_\agenti(\{\good_k,\,\good_{\numb+k}\}) \ge \alpha\) for all \(1 \le k \le y\), and
		\item \(v_\agenti(\{\good_k,\,\good_{\numb+k}\}) < \alpha\) for all \(y < k \le \numb\).
	\end{itemize}

	Next, we identify the smallest index \(x \le y\) such that the maximin share with respect to the remaining \(\numb - (y - x)\) bundles is at least 1:
	\[
	\mms^{\numb-(y-x)}_{v_\agenti}\left(\goods \setminus \bigcup_{k=x+1}^{y} \{\good_k,\,\good_{\numb+k}\}\right) \ge 1.
	\]
	Such an \(x\) must exist because \(\dotmms{\valu_i} \ge 1\) by assumption.
	Now, define:
	\[
	\Jnumb = \numb - (y - x), \quad \Jgoods = \goods \setminus \bigcup_{k=x+1}^{y} \{\good_k,\,\good_{\numb+k}\},
	\]
	and let by \cref{def:norm} \(\Jvalu = \text{normalized}^{\Jnumb}_{1}(v_\agenti, \Jgoods)\), the normalized valuation of agent \(\agent_i\) over \(\Jgoods\) for \(\Jnumb\) bundles.
	
	We observe that \(0 \le x \le y \le \numb\) and verify that \(\Jvalu\) satisfies all the conditions required in \cref{lem:N2bag}.  \cref{firstcond,secondcond} follow directly from the definition of \(y\).
	 To verify \cref{thirdcond}, we estimate the sum
		\[
		\sum_{k=1}^{x} \Jvalu\left(\{\good_{\numb+k}\}\right),
		\]
		and provide separate bounds depending on the ratio \(\tfrac{x}{\Jnumb}\).
\end{proofsketch}

\begin{proof}
	Considering the following setup. Let \(0 \le y \le \numb\) be index such that
	\begin{itemize}
		\item $v_\agenti(\{\good_k,\,\good_{\numb+k}\}) \ge \alpha \quad \text{for } 1 \le k \le y.$
		\item $v_\agenti(\{\good_k,\,\good_{\numb+k}\}) < \alpha \quad \text{for } y < k \le \numb.$
	\end{itemize}
	
	Next, let \(x\le y\) be the smallest index satisfying
	$
	\mms^{\numb-(y-x)}_{v_\agenti}\Bigl(\goods \setminus \bigcup_{k=x+1}^{y} \{\good_k,\,\good_{\numb+k}\}\Bigr) \ge 1.
	$
	Since $\dotmms{\valu_i} \ge 1$ such $x$ exist. Let $$\Jnumb=\numb - (y - x) \quad\text{and} \quad\Jgoods = \goods \setminus \bigcup_{k=x+1}^{y} \{\good_k,\,\good_{\numb+k}\},$$ define
	$
	\Jvalu = \text{normalized}^{\Jnumb}_{1}(\valu_\agenti, \Jgoods),
	$
	We verify that \(0 \le x \le y \le \numb\) and valuation function $\Jvalu$ satisfies all conditions required by \cref{lem:N2bag}. \cref{firstcond} and \cref{secondcond} hold by definition of \(y\). Before verifying \cref{thirdcond}, we establish the following claim:
	\[
	\Jvalu(\{\good_k,\,\good_{\numb+k}\}) \;>\; 1
	\quad\text{for all }1 \le k \le x.
	\]
	Indeed, if for some \(k\le x\) we had \(\Jvalu(\{\good_k,\,\good_{\numb+k}\}) \le 1\), then \(\Jvalu(\{\good_x,\,\good_{\numb+x}\})\le 1 \), since 
	$\mms^{\Jnumb}_{\Jvalu}(\Jgoods)\;\ge 1,$
	we would get
	$
	\mms^{\Jnumb-1}_{\Jvalu}\Bigl(\Jgoods\setminus\{\good_x,\,\good_{\numb+x}\}\Bigr)\;\ge\;1,
	$
	therefore
	$
	\mms^{\numb-(y-(x-1))}_{\valu_i}\Bigl(\goods\setminus\bigcup_{k=x}^{y}\{\good_k,\,\good_{\numb+k}\}\Bigr)\;\ge\;1,
	$
	contradicting the minimality of \(x\). Now we are ready to verify \cref{thirdcond}.  We distinguish four cases according to the ratio $\tfrac{x}{\Jnumb}$:
	\begin{itemize}
		\item \textbf{Case 1.} $x = 0$: 
		\begin{align*}
			&x\,\Jvalu(\{\good_1\}) + \sum_{k=1}^{x} \Jvalu(\{\good_{\numb+k}\}) + (\Jnumb-x)(\alpha + \Jvalu(\{\good_{2\numb+1}\}))\\
			&= \Jnumb(\alpha + \Jvalu(\{\good_{2\numb+1}\}))\\
			&< \Jnumb(\alpha + (1 - \alpha)) & \agent_i \in \Ndo, \\
			&= \Jnumb\mms^{\Jnumb}_{\Jvalu}\Bigl(\Jgoods\Bigr).
		\end{align*}
		
		\item \textbf{Case 2.} $0 < \tfrac{x}{\Jnumb} \le \tfrac{3}{5}$:
		\begin{align*}
			&x\,\Jvalu(\{\good_1\}) + \sum_{k=1}^{x} \Jvalu(\{\good_{\numb+k}\}) + (\Jnumb-x)(\alpha + \Jvalu(\{\good_{2\numb+1}\}))\\
			&< x(\Jvalu(\{\good_1\}) + \frac{\alpha}{2}) + (\Jnumb-x)(\alpha + \Jvalu(\{\good_{2\numb+1}\}))& \text{$\reductiontype^1$ is not applicable}, \\
			&\le x(\alpha-\Jvalu(\{\good_{2\numb+1}\}) + \frac{\alpha}{2}) + (\Jnumb-x)(\alpha + \Jvalu(\{\good_{2\numb+1}\}))& \text{$\rstar^1$ is not applicable},\\[1ex]
			&=(\Jnumb-2x)\Jvalu(\{\good_{2\numb+1}\}) + \frac{\alpha(2\Jnumb+x)}{2}.
		\end{align*} 
		First, consider the case where \( x < \tfrac{\Jnumb}{2} \). Since the coefficient of \(\Jvalu(\{\good_{2\numb+1}\})\) is \((\Jnumb - 2x) > 0\), we have:
		\begin{align*}
			&x\,\Jvalu(\{\good_1\}) + \sum_{k=1}^{x} \Jvalu(\{\good_{\numb+k}\}) + (\Jnumb-x)(\alpha + \Jvalu(\{\good_{2\numb+1}\}))\\
			&< (\Jnumb-2x)\Jvalu(\{\good_{2\numb+1}\}) + \frac{\alpha(2\Jnumb+x)}{2} \\
			&\le (\Jnumb-2x)(1-\alpha) + \frac{\alpha(2\Jnumb+x)}{2} & \agent_i \in \Ndo, \\
			&= x\Bigl(\frac{5\alpha}{2} - 2\Bigr) + \Jnumb \\
			&\le \Jnumb &\alpha \le \frac{4}{5},\\
			&= \Jnumb\mms^{\Jnumb}_{\Jvalu}\Bigl(\Jgoods\Bigr).
		\end{align*}
		Next, consider the case where \(x \ge \tfrac{\Jnumb}{2}\):
		\begin{align*}
			&x\,\Jvalu(\{\good_1\}) + \sum_{k=1}^{x} \Jvalu(\{\good_{\numb+k}\}) + (\Jnumb-x)(\alpha + \Jvalu(\{\good_{2\numb+1}\}))\\
			&< (\Jnumb-2x)\Jvalu(\{\good_{2\numb+1}\}) + \frac{\alpha(2\Jnumb+x)}{2} \\
			&\le \frac{\alpha(2\Jnumb+x)}{2} \\
			&\le \frac{\alpha\Bigl(2\Jnumb+\frac{3\Jnumb}{5}\Bigr)}{2} & x \le \frac{3\Jnumb}{5}, \\
			&= \frac{13\alpha \Jnumb}{10} \\
			&\le \Jnumb &\alpha \le \frac{10}{13},\\
			&= \Jnumb\mms^{\Jnumb}_{\Jvalu}\Bigl(\Jgoods\Bigr).
		\end{align*}
		\item \textbf{Case 3.} $\tfrac{3}{5} < \tfrac{x}{\Jnumb} \le \tfrac{2}{3}$: Let \((P_1,\dots,P_{\Jnumb})\) be a partition of \(\Jgoods\) with 
		\[
		\Jvalu(P_k)=1\quad\text{for all }k.
		\]  Note that since 
		$\Jvalu(\{\good_x,\,\good_{\numb+x}\}) > 1,$  
		the goods \(\good_{1}\) to \(\good_{x}\) lie in \(x\) distinct bundles and the goods \(\good_{y+1}\) to \(\good_{\numb+x}\) lie in \(\Jnumb-x\) remaining bundles. We claim that 
		\begin{align}
			\sum_{k=1}^{x} \Jvalu(\{\good_{\numb+k}\}) \le \frac{x}{3}.\label[ineq]{sum3}
		\end{align}
		To prove this claim, we aim to show that among the goods \(\good_{y+1}\) to \(\good_{\numb+x}\), at least \(x\) goods belong to at most \(x/3\) bundles; therefore, that sum is at most \(\tfrac{x}{3}\)—since the goods \(\good_{\numb+1}\) to \(\good_{\numb+x}\) are the smallest \(x\) goods among these \(\Jnumb\) goods, the claim follows. Denote the \(\Jnumb-x\) bundles by \(P_1, P_2, \ldots, P_{\Jnumb-x}\) and suppose that in \(P_k\) there are \(c_k\) goods among \(\good_{y+1}\) to \(\good_{\numb+x}\) (with \(c_1 \ge c_2 \ge \cdots \ge c_{\Jnumb-x}\)). Let \(l\) be the largest index such that 
		$
		\sum_{k=1}^{l} c_k \ge 3l.
		$
		
		Note that by definition, \(l\) is the largest index such that the first \(l\) bundles satisfy \(\sum_{k=1}^{l} c_k \ge 3l\). This means that for the first \(l\) bundles, the total number of goods among \(\good_{y+1}\) to \(\good_{\numb+x}\), is at least \(3l\), but when we include the \((l+1)\)-th bundle, we no longer have this property; in fact, the total number of goods among \(\good_{y+1}\) to \(\good_{\numb+x}\), in the first \(l+1\) bundles is less than \(3(l+1)\). Moreover, for every bundle from index \(l+1\) onward, each bundle can have at most 2 goods—if any such bundle contained 3 or more goods, then we could increase \(l\), contradicting its maximality.

		Thus, an upper bound for the total number of goods is given by assuming that the first \(l+1\) bundles contain at most \(3(l+1)-1\) goods among \(\good_{y+1}\) to \(\good_{\numb+x}\), and that each of the remaining \((\Jnumb-x-l-1)\) bundles contains at most 2 goods among \(\good_{y+1}\) to \(\good_{\numb+x}\). This leads to the inequality
		$
		3(l+1)-1 + 2(\Jnumb-x-l-1) \ge \Jnumb,
		$
		which represents an upper bound on the total number of goods among \(\good_{y+1}\) to \(\good_{\numb+x}\). 
		We simplify it to obtain
		\begin{align*}
			l &\ge 2x - \Jnumb\\
			&>2x-\frac{5x}{3} & x>\frac{3\Jnumb}{5},\\
			&=\frac{x}{3}.
		\end{align*}
		
		Thus, we conclude that the sum of the values of the \(x\) smallest goods among \(\good_{y+1}\) to \(\good_{\numb+x}\), is at most \(\tfrac{x}{3}\), which implies \cref{sum3}. We have: 
		\begin{align*}
			&x\,\Jvalu(\{\good_1\}) + \sum_{k=1}^{x} \Jvalu(\{\good_{\numb+k}\}) + (\Jnumb-x)(\alpha + \Jvalu(\{\good_{2\numb+1}\}))\\
			&\le x\,\Jvalu(\{\good_1\}) + \frac{x}{3} + (\Jnumb-x)(\alpha + \Jvalu(\{\good_{2\numb+1}\}))&\mbox{\cref{sum3}}, \\
			&< x(\alpha-\Jvalu(\{\good_{2\numb+1}\})) + \frac{x}{3} + (\Jnumb-x)(\alpha + \Jvalu(\{\good_{2\numb+1}\}))&\text{$\rstar^1$ is not applicable}, \\
			&= \Jvalu(\{\good_{2\numb+1}\})(\Jnumb-2x) + \Bigl(\Jnumb\alpha + \frac{x}{3}\Bigr)\\[1ex]
			&\le \Jnumb\alpha + \frac{x}{3} & \Jnumb-2x<0,\\[1ex]
			&\le \Jnumb\alpha +\frac{2\Jnumb}{9} & x<\frac{2\Jnumb}{3},\\[1ex]
			&\le \Jnumb &\alpha \le \frac{7}{9},\\
			&= \Jnumb\mms^{\Jnumb}_{\Jvalu}\Bigl(\Jgoods\Bigr).
		\end{align*}
		\item \textbf{Case 4.} $\tfrac{2}{3} < \tfrac{x}{\Jnumb} \le 1$:
		Let \((P_1,\dots,P_{\Jnumb})\) be a partition of \(\Jgoods\) with 
		\[
		\Jvalu(P_k)=1\quad\text{for all }k.
		\]  Note that since 
		$\Jvalu(\{\good_x,\,\good_{\numb+x}\}) > 1,$  
		the goods \(\good_{1}\) to \(\good_{x}\) lie in \(x\) distinct bundles and the goods \(\good_{y+1}\) to \(\good_{\numb+x}\) lie in \(\Jnumb-x\) remaining bundles. Since the goods \(\good_{\numb+1}\) to \(\good_{\numb+x}\) form the smallest \(x\)-element subset among these, their total sum under $\Jvalu$ is at most \(\tfrac{x(\Jnumb-x)}{\Jnumb}.\) We have: 
		\begin{align*}
			&x\,\Jvalu(\{\good_1\}) + \sum_{k=1}^{x} \Jvalu(\{\good_{\numb+k}\}) + (\Jnumb-x)(\alpha + \Jvalu(\{\good_{2\numb+1}\}))\\
			&\le x\,\Jvalu(\{\good_1\}) + \frac{x(\Jnumb-x)}{\Jnumb} + (\Jnumb-x)(\alpha + \Jvalu(\{\good_{2\numb+1}\}))\\
			&< x(\alpha-\Jvalu(\{\good_{2\numb+1}\})) + \frac{x(\Jnumb-x)}{\Jnumb} + (\Jnumb-x)(\alpha + \Jvalu(\{\good_{2\numb+1}\}))&\text{$\rstar^1$ is not applicable}, \\
			&= \Jvalu(\{\good_{2\numb+1}\})(\Jnumb-2x) + \Bigl(\Jnumb\alpha + \frac{x(\Jnumb-x)}{\Jnumb}\Bigr)\\[1ex]
			&\le \Jnumb\alpha + \frac{x(\Jnumb-x)}{\Jnumb} & \Jnumb-2x<0,\\[1ex]
			&\le \Jnumb\alpha + \frac{\frac{2\Jnumb}{3}\Bigl(\Jnumb-\frac{2\Jnumb}{3}\Bigr)}{\Jnumb} &x>\frac{2\Jnumb}{3},\\[1ex]
			&= \Jnumb\left(\alpha + \frac{2}{9}\right)\\[1ex]
			&< \Jnumb &\alpha < \frac{7}{9},\\
			&=\Jnumb\mms^{\Jnumb}_{\Jvalu}\Bigl(\Jgoods\Bigr).
		\end{align*} 
	\end{itemize}
	Completing the proof in all 4 cases. 
\end{proof}
By \cref{lem:N-minus}, for every red agent \(\agent_i \in \agents\) with \(\dotmms{\valu_i} < 1\), there exists a value \(s_\agenti\) such that conditions \cref{conds1}, \Cref{conds2}, and \cref{conds3} are satisfied. Fixing these values \(s_i\), we now proceed to prove \cref{lem:N22values}.

\begin{observation}\label{lem:N22values}
	For every red agent $\agent_i$ in $\agents$ with $\dotmms{\valu_i} < 1$,  we have 
		\begin{align}
		&\auxwal{\vF_{s_\agenti}}({\{\good_1\}}) \le 1-\frac{\alpha}{3} - 2 s_\agenti,
		&\label[ineq]{fcond}\\
		& \auxwal{\vF_{s_\agenti}}(\{\good_{\numb+1}\})  \le \frac{\alpha}{2} - s_\agenti,
		&\label[ineq]{scond}\\
		&\auxwal{\vF_{s_\agenti}}(\{\good_{2\numb+1}\})  < \frac{4\alpha}{3}-1 - s_\agenti.&\label[ineq]{thcond}
	\end{align}
\end{observation}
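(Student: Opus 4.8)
The plan is to fix a red agent $\agent_i \in \agents$ with $\dotmms{\valu_\agenti} < 1$ together with the number $s_\agenti$ guaranteed by \cref{lem:N-minus} (so that \cref{conds1}, \cref{conds2} and \cref{conds3} all hold for it), and then to prove each of the three inequalities by invoking monotonicity of the calibration function $\vF_{s_\agenti}$ of \cref{def:F} after determining which of its four branches the relevant value lands in. Three elementary facts will be used throughout: from \cref{conds1} we have $0 < s_\agenti < \tfrac{4\alpha}{3} - 1$, which together with $\alpha < 1$ gives $s_\agenti < \tfrac{\alpha}{3}$; the instance $\instance$ is ordered and irreducible with respect to $\reductiontype^0, \reductiontype^1, \reductiontype^2, \rstar^1$; and $\agent_i$ being red means exactly $\valu_\agenti(\{\good_{2\numb+1}\}) < 1 - \alpha$.

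For \cref{fcond}, \cref{conds1} gives $\valu_\agenti(\{\good_1\}) \le 1 - \tfrac{\alpha}{3} + s_\agenti$, and since $s_\agenti < \tfrac{\alpha}{3}$ this argument lies in the interval $[\,1 - \tfrac{\alpha}{3} - \tfrac{s_\agenti}{2},\, 1\,]$, the last branch of $\vF_{s_\agenti}$, where $\vF_{s_\agenti}(x) = \max(1 - \tfrac{\alpha}{3} - 2 s_\agenti,\, x - 3 s_\agenti)$; evaluating at $x = 1 - \tfrac{\alpha}{3} + s_\agenti$ yields precisely $1 - \tfrac{\alpha}{3} - 2 s_\agenti$, so by monotonicity $\auxwal{\vF_{s_\agenti}}(\{\good_1\}) = \vF_{s_\agenti}\bigl(\valu_\agenti(\{\good_1\})\bigr) \le 1 - \tfrac{\alpha}{3} - 2 s_\agenti$. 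For \cref{scond}, the inapplicability of $\reductiontype^1$ on $\instance$ means that no agent values $\{\good_\numb, \good_{\numb+1}\}$ at least $\alpha$, so $\valu_\agenti(\{\good_\numb\}) + \valu_\agenti(\{\good_{\numb+1}\}) < \alpha$; since $\instance$ is ordered, $\valu_\agenti(\{\good_\numb\}) \ge \valu_\agenti(\{\good_{\numb+1}\})$ and hence $\valu_\agenti(\{\good_{\numb+1}\}) < \tfrac{\alpha}{2}$. Because $\tfrac{\alpha}{3} - s_\agenti < \tfrac{\alpha}{2} < 1 - \tfrac{2\alpha}{3}$ (the right-hand inequality is just $\alpha < \tfrac{6}{7}$), the point $\tfrac{\alpha}{2}$ lies in the second branch of $\vF_{s_\agenti}$, where $\vF_{s_\agenti}(\tfrac{\alpha}{2}) = \max(\tfrac{\alpha}{3} - s_\agenti,\, \tfrac{\alpha}{2} - s_\agenti) = \tfrac{\alpha}{2} - s_\agenti$, and monotonicity gives $\auxwal{\vF_{s_\agenti}}(\{\good_{\numb+1}\}) \le \tfrac{\alpha}{2} - s_\agenti$.

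For \cref{thcond}, the bound $s_\agenti < \tfrac{4\alpha}{3} - 1$ from \cref{conds1} is exactly equivalent to $1 - \alpha < \tfrac{\alpha}{3} - s_\agenti$, so since $\agent_i$ is red we obtain $\valu_\agenti(\{\good_{2\numb+1}\}) < 1 - \alpha < \tfrac{\alpha}{3} - s_\agenti$; because \cref{conds2} forbids $\valu_\agenti(\{\good_{2\numb+1}\})$ from the closed interval $[\,\tfrac{4\alpha}{3} - 1 - s_\agenti,\, \tfrac{\alpha}{3} - s_\agenti\,]$, this forces $\valu_\agenti(\{\good_{2\numb+1}\}) < \tfrac{4\alpha}{3} - 1 - s_\agenti$. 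Since $\tfrac{4\alpha}{3} - 1 - s_\agenti < \tfrac{\alpha}{3} - s_\agenti$, the value lies in the first branch of $\vF_{s_\agenti}$, on which $\vF_{s_\agenti}$ is the identity, so $\auxwal{\vF_{s_\agenti}}(\{\good_{2\numb+1}\}) = \valu_\agenti(\{\good_{2\numb+1}\}) < \tfrac{4\alpha}{3} - 1 - s_\agenti$. I do not anticipate a real obstacle: the only delicate point is the branch bookkeeping for $\vF_{s_\agenti}$, which rests entirely on the three numeric comparisons $s_\agenti < \tfrac{\alpha}{3}$, $\tfrac{\alpha}{2} < 1 - \tfrac{2\alpha}{3}$, and $\tfrac{4\alpha}{3} - 1 - s_\agenti < \tfrac{\alpha}{3} - s_\agenti$, all immediate from $\alpha = \tfrac{10}{13}$ and $0 < s_\agenti < \tfrac{4\alpha}{3} - 1$. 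The one step one must not overlook is invoking the definition of a red agent, which places $\valu_\agenti(\{\good_{2\numb+1}\})$ strictly below $1 - \alpha$, hence below the entire forbidden gap of \cref{conds2}, rather than merely below $\tfrac{\alpha}{3}$ as the inapplicability of $\reductiontype^2$ alone would give.
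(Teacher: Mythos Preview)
Your proof is correct and follows essentially the same approach as the paper's: both bound $\valu_\agenti$ on $\good_1$, $\good_{\numb+1}$, $\good_{2\numb+1}$ using \cref{conds1}, inapplicability of $\reductiontype^1$, and the red-agent definition combined with the forbidden gap of \cref{conds2}, then push through $\vF_{s_\agenti}$. Your version is a bit more explicit about identifying which branch of $\vF_{s_\agenti}$ is active at each evaluation point, whereas the paper is terser; otherwise the arguments are the same.
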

\begin{proof}
	We prove it one by one. 

		By \cref{lem:N-minus}, $\valu_\agenti(\{\good_{1}\}) \le 1-\tfrac{\alpha}{3} + s_\agenti$, and by the definition of $\vF_{s_\agenti}$, we have
		\begin{align*}
			\auxwal{\vF_{s_\agenti}}(\{\good_1\}) &\le \fs{1-\frac{\alpha}{3} +  s_\agenti}\\
			&=1-\frac{\alpha}{3} - 2 s_\agenti.
		\end{align*}
		
		Since $\reductiontype^1$ is not applicable, it follows that $\valu_\agenti(\{\good_{\numb+1}\}) < \tfrac{\alpha}{2}.$
		Now, by the definition of $\vF_{s_\agenti}$, since $\tfrac{\alpha}{2} \ge \tfrac{\alpha}{3}$
		it follows that
		\begin{align*}
			\fs{\{\good_{\numb+1}\}} & \le\auxwal{\vF_{s_\agenti}}\Bigl(\frac{\alpha}{2}\Bigr)\\
			&\le \frac{\alpha}{2} - s_\agenti.
		\end{align*}
		
		By definition of $\Ndo$, we have $\valu_i(\{\good_{2\numb+1}\})  \le 1-\alpha$ and by \cref{lem:N-minus} there is no good in $\goods$ with value in $$\Bigl[\frac{4\alpha}{3}-1 - s_\agenti, 1-\alpha\Bigr]$$. Therefore  $\fs{\{\good_{2\numb+1}\}}< \tfrac{4\alpha}{3}-1 - s_\agenti$. 
\end{proof}

\begin{lemrep}\label{lem:N22J}
	Every red agent $\agent_i$ in $\agents$ with $\dotmms{\valu_i} < 1$ after the primary reductions, receives a bundle in \cref{algo:N2}.
\end{lemrep}

\begin{proofsketch}
		We begin by defining an index \(0 \le y \le \numb\) such that:
	\begin{itemize}
		\item \(\auxwal{\vF_{s_\agenti}}(\{\good_k,\,\good_{\numb+k}\}) \ge \alpha\) for all \(1 \le k \le y\), and
		\item \(\auxwal{\vF_{s_\agenti}}(\{\good_k,\,\good_{\numb+k}\}) < \alpha\) for all \(y < k \le \numb\).
	\end{itemize}

	Next, we identify the smallest index \(x \le y\) such that the maximin share with respect to the remaining \(\numb - (y - x)\) bundles is at least $1-3s_i$:
	\[
	\mms^{\numb-(y-x)}_{\auxwal{\vF_{s_\agenti}}}\Bigl(\goods \setminus \bigcup_{k=x+1}^{y} \{\good_k,\,\good_{\numb+k}\}\Bigr) \ge 1-3s_i.
	\]
	Such an \(x\) must exist because \(\dotmms{\auxwal{\vF_{s_\agenti}}} \ge 1-3s_i\) by assumption.
	Now, define:
	\[
	\Gnumb=\numb - (y - x), \quad\quad\Ggoods = \goods \setminus \bigcup_{k=x+1}^{y} \{\good_k,\,\good_{\numb+k}\},
	\]
	and let by \cref{def:norm} \(\Gvalu = \text{normalized}^{\Gnumb}_{1-3s_i}(\auxwal{\vF_{s_\agenti}}, \Ggoods)\), the normalized valuation of agent \(\agent_i\) over \(\Ggoods\) for \(\Gnumb\) bundles.
	We observe that \(0 \le x \le y \le \numb\) and verify that \(\Gvalu\) satisfies all the conditions required in \cref{lem:N2bag}.  \cref{firstcond,secondcond} follow directly from the definition of \(y\).
	To verify \cref{thirdcond}, we estimate the sum
	\[
	\sum_{k=1}^{x} \Gvalu\left(\{\good_{\numb+k}\}\right),
	\]
	and provide separate bounds depending on the ratio \(\tfrac{x}{\Gnumb}\).	
\end{proofsketch}

\begin{proof}
	
	Considering the following setup. Let \(0 \le y \le \numb\) be index such that
	\begin{itemize}
		\item $\auxwal{\vF_{s_\agenti}}(\{\good_k,\,\good_{\numb+k}\}) \ge \alpha \quad \text{for } 1 \le k \le y.$
		\item $\auxwal{\vF_{s_\agenti}}(\{\good_k,\,\good_{\numb+k}\}) < \alpha \quad \text{for } y < k \le \numb.$
	\end{itemize}
	
	Next, let \(x\le y\) be the smallest index satisfying
	$
	\mms^{\numb-(y-x)}_{\auxwal{\vF_{s_\agenti}}}\Bigl(\goods \setminus \bigcup_{k=x+1}^{y} \{\good_k,\,\good_{\numb+k}\}\Bigr) \ge 1-3s_i.
	$
	Since $\dotmms{\auxwal{\vF_{s_\agenti}}} \ge 1-3s_i$ such $x$ exist. Let $$\Gnumb=\numb - (y - x) \quad\text{and} \quad\Ggoods = \goods \setminus \bigcup_{k=x+1}^{y} \{\good_k,\,\good_{\numb+k}\}$$ define
	$
	\Gvalu = \text{normalized}^{\Gnumb}_{1-3s_i}(\auxwal{\vF_{s_\agenti}}, \Ggoods),
	$
	We verify that \(0 \le x \le y \le \numb\) and valuation function $\Gvalu$ satisfies all conditions required by \cref{lem:N2bag}. \cref{firstcond} and \cref{secondcond} hold by definition of \(y\). Before verifying \cref{thirdcond}, we establish the following claim:
	\[
	\Gvalu(\{\good_k,\,\good_{\numb+k}\}) \;>\; 1-3s_i
	\quad\text{for all }1 \le k \le x.
	\]
	Indeed, if for some \(k\le x\) we had \(\Gvalu(\{\good_k,\,\good_{\numb+k}\}) \le 1-3s_i\), then \(\Gvalu(\{\good_x,\,\good_{\numb+x}\})\le 1-3s_i \), since 
	$\mms^{\Gnumb}_{\Gvalu}(\Ggoods)\;\ge 1-3s_i,$
	we would get
	$
	\mms^{\Gnumb-1}_{\Gvalu}\Bigl(\Ggoods\setminus\{\good_x,\,\good_{\numb+x}\}\Bigr)\;\ge\;1-3s_i,
	$
	therefore
	$
	\mms^{\numb-(y-(x-1))}_{\auxwal{\vF_{s_\agenti}}}\Bigl(\goods\setminus\bigcup_{k=x}^{y}\{\good_k,\,\good_{\numb+k}\}\Bigr)\;\ge\;1-3s_i,
	$
	contradicting the minimality of \(x\). Now we are ready to verify \cref{thirdcond}.  We distinguish four cases according to the ratio $\tfrac{x}{\Gnumb}$:
	\begin{itemize}
		\item \textbf{Case 1.} $\tfrac{x}{\Gnumb} \le \tfrac{1}{2}$: 
		\begin{align*}
			&x\,\Gvalu(\{\good_1\}) + \sum_{k=1}^{x} \Gvalu(\{\good_{\numb+k}\})
			\\& + (\Gnumb-x)(\alpha + \Gvalu(\{\good_{2\numb+1}\}))\\
			&\le x\Gvalu(\{\good_1,\,\good_{\numb+1}\}) + (\Gnumb-x)(\alpha + \Gvalu(\{\good_{2\numb+1}\}))\\
			&\le x\Bigl(1-\frac{\alpha}{3} - 2 s_\agenti + \frac{\alpha}{2}-s_\agenti\Bigr) + (\Gnumb-x)\Bigl(\alpha + \frac{4\alpha}{3}-1 - s_\agenti\Bigr) &\mbox{ \cref{lem:N22values}},\\
			&= \Gnumb(1-3s_\agenti) + (2\Gnumb-2x)s_\agenti - \frac{2\Gnumb(6-7\alpha) + x(13\alpha-12)}{6}\\
			&\le \Gnumb(1-3s_\agenti) + (2\Gnumb-2x)\Bigl(\frac{4\alpha}{3}-1\Bigr) - \frac{2\Gnumb(6-7\alpha) + x(13\alpha-12)}{6} & \mbox{\cref{conds1}},\\
			&= \Gnumb(1-3s_\agenti)+x\Bigl(4-\frac{29\alpha}{6}\Bigr) + \Gnumb(5\alpha-4)\\
			&\le \Gnumb(1-3s_\agenti)+\Gnumb\Bigl(2-\frac{29\alpha}{12}\Bigr) + \Gnumb(5\alpha-4) & x\le \frac{\Gnumb}{2},\\
			&= \Gnumb(1-3s_\agenti)+\frac{\Gnumb(31\alpha-24)}{12}\\
			&< \Gnumb(1-3s_\agenti) & \alpha < \frac{24}{31},\\
			&=\Gnumb\mms^{\Gnumb}_{\Gvalu}\Bigl(\Ggoods\Bigr).
		\end{align*}
		\item \textbf{Case 2.} $\tfrac{1}{2} < \tfrac{x}{\Gnumb} \le \tfrac{3}{5}$: Let \((P_1,\dots,P_{\Gnumb})\) be a partition of \(\Ggoods\) with 
		\[
		\Gvalu(P_k)=1-3s_i\quad\text{for all }k.
		\]  Note that since 
		$\Gvalu(\{\good_x,\,\good_{\numb+x}\}) > 1-3s_i,$  
		the goods \(\good_{1}\) to \(\good_{x}\) lie in \(x\) distinct bundles and the goods \(\good_{y+1}\) to \(\good_{\numb+x}\) lie in \(\Gnumb-x\) remaining bundles. We now show that it is impossible for any of these $\Gnumb-x$ bundles to contain 4 goods among the $\Gnumb$ goods. Suppose that one of these bundles does contain at least 4 goods among \(\good_{y+1}\) to \(\good_{\numb+x}\). Then
		$
		\Gvalu{\{\good_{\numb+x}\}} \le \tfrac{1-3 s_\agenti}{4}.
		$
		It follows that:
		\begin{align*}
			1-3s_\agenti&< \Gvalu{\{\good_x,\,\good_{\numb+x}\}}\\
			&\le (1 - \frac{\alpha}{3} - 2 s_\agenti) + \frac{1-3 s_\agenti}{4}&\mbox{\cref{lem:N22values}}.\\
		\end{align*}
		Which implies $s_\agenti > \tfrac{4\alpha}{3}-1,$ that's a contradiction. Hence, it is impossible for any of the \(\Gnumb-x\) bundles to contain 4 goods among  \(\good_{y+1}\) to \(\good_{\numb+x}\).
		Assume that among $\Gnumb-x$ bundles, exactly \(r\) of them contains exactly 3 goods among \(\good_{y+1}\) to \(\good_{\numb+x}\). Then we must have
		$
		2(\Gnumb-x-r)+3r\ge \Gnumb,
		$
		which implies
		$
		r\ge 2x-\Gnumb.
		$
		Since the total value of each bundle is exactly \(1-3 s_\agenti\), it follows that the total value of the $3(2x - \Gnumb)$ minimum goods (among  \(\good_{y+1}\) to \(\good_{\numb+x}\)) is at most 
		$
		(1-3 s_\agenti)(2x-\Gnumb).
		$
		On the other hand, since \(x\le \tfrac{3\Gnumb}{5}\) we have
		$3(2x-\Gnumb)\le x$, hence these \(3(2x-\Gnumb)\) goods fall in the range from $\good_{\numb+1}$ to $\good_{\numb+x}$. 
		
		Therefore we can obtain 
		\begin{align*}
			\sum_{k=1}^{x} \Gvalu({\{\good_{\numb+k}\}})&\le(2x - \Gnumb)(1 - 3 s_\agenti) + \Bigl(x - 3(2x - \Gnumb)\Bigr)\Bigl(\frac{\alpha}{2}-s_\agenti\Bigr) &\mbox{\cref{lem:N22values}},\\[1ex]
			& =(2x - \Gnumb)(1 - 3 s_\agenti) + \Bigl(3\Gnumb-5x\Bigr)\Bigl(\frac{\alpha}{2}-s_\agenti\Bigr)\\[1ex]
			&=\frac{\alpha(3\Gnumb-5x)}{2} + (2x-\Gnumb) - x\, s_\agenti.
		\end{align*}
		Therefore
		\begin{align*}
			&x\,\Gvalu(\{\good_1\}) + \sum_{k=1}^{x} \Gvalu(\{\good_{\numb+k}\})\\
			& + (\Gnumb-x)(\alpha + \Gvalu(\{\good_{2\numb+1}\}))\\
			&\le
			x(\Gvalu(\{\good_1\})+\Bigl(\frac{\alpha(3\Gnumb-5x)}{2} + (2x-\Gnumb) - x\, s_\agenti\Bigr)\\
			&\quad + (\Gnumb-x)(\alpha + \Gvalu(\{\good_{2\numb+1}\}))\\
			&\le x\Bigl(1 - \frac{\alpha}{3} - 2 s_\agenti\Bigr) + \frac{\alpha(3\Gnumb-5x)}{2} \\
			&\quad+ (2x-\Gnumb) - x\, s_\agenti + (\Gnumb - x)\Bigl(\alpha + \frac{4\alpha}{3} - 1 - s_\agenti\Bigr)&\mbox{\cref{lem:N22values}},\\[1ex]
			&= \Gnumb(1-3s_\agenti) + (2\Gnumb-2x)s_\agenti - \frac{\Gnumb(18-23\alpha)+x(31\alpha-24)}{6}\\[1ex]
			&\le \Gnumb(1-3s_\agenti) + (2\Gnumb-2x)\Bigl(\frac{4\alpha}{3}-1\Bigr) - \frac{\Gnumb(18-23\alpha)+x(31\alpha-24)}{6} & \mbox{\cref{conds1}},\\[1ex]
			&= \Gnumb(1-3s_\agenti) + \frac{39\Gnumb-47x}{6}\,\alpha + \bigl(6x-5\Gnumb\bigr)\\[1ex]
			&\le \Gnumb(1-3s_\agenti) + \frac{39\Gnumb-47x}{6}\,\frac{10}{13} + \bigl(6x-5\Gnumb\bigr) & \alpha\le\frac{10}{13},\\[1ex]
			&= \Gnumb(1-3s_\agenti)-\frac{x}{39}\\[1ex]
			&< \Gnumb(1-3s_\agenti)\\
			&=\Gnumb\mms^{\Gnumb}_{\Gvalu}\Bigl(\Ggoods\Bigr).
		\end{align*}
		
		\item \textbf{Case 3.} $\tfrac{3}{5} < \tfrac{x}{\Gnumb} \le \tfrac{2}{3}$:
		Let \((P_1,\dots,P_{\Gnumb})\) be a partition of \(\Ggoods\) with 
		\[
		\Gvalu(P_k)=1-3s_i\quad\text{for all }k.
		\] 
		In Case 2, we showed that among \(\good_{y+1}\) to \(\good_{\numb+x}\) there are at least 
		\(3(2x - \Gnumb)\) goods, which are grouped into at most \((2x - \Gnumb)\) bundles (each bundle containing exactly 3 goods among \(\good_{y+1}\) to \(\good_{\numb+x}\)). Since $x > \tfrac{3\Gnumb}{5}$, we have $3(2x-\Gnumb) > x$, it follows that the total value of the goods 
		\(\good_{\numb+1}\) to \(\good_{\numb+x}\) is at most 
		$
		\frac{x(1-3s_\agenti)}{3}.
		$
		We have: 
		
		\begin{align*}
			&x\,\Gvalu(\{\good_1\}) + \sum_{k=1}^{x} \Gvalu(\{\good_{\numb+k}\})\\ 
			&+ (\Gnumb-x)(\alpha + \Gvalu(\{\good_{2\numb+1}\}))\\
			&\le x(\Gvalu(\{\good_1\}) + \frac{x(1-3 s_\agenti)}{3} + (\Gnumb-x)(\alpha + \Gvalu(\{\good_{2\numb+1}\})) \\
			&< x\Bigl(1-\frac{\alpha}{3}-2 s_\agenti\Bigr) + \frac{x(1-3 s_\agenti)}{3} + (\Gnumb-x)\Bigl(\alpha+\frac{4\alpha}{3}-1-s_\agenti\Bigr) &\mbox{\cref{lem:N22values}},\\[1ex]
			&= \Gnumb(1-3s_\agenti) + (2\Gnumb-2x)s_\agenti - \left[\Gnumb - \frac{\Gnumb(7\alpha-3)+x(7-8\alpha)}{3}\right]\\[1ex]
			&\le \Gnumb(1-3 s_\agenti) + (2\Gnumb-2x)\Bigl(\frac{4\alpha}{3}-1\Bigr) - \left[\Gnumb - \frac{\Gnumb(7\alpha-3)+x(7-8\alpha)}{3}\right] & \mbox{\cref{conds1}},\\[1ex]
			&= \Gnumb(1-3s_\agenti)+\frac{13-16\alpha}{3}\,x + \Gnumb(5\alpha-4)\\[1ex]
			&\le \Gnumb(1-3s_\agenti)+\frac{13-16\alpha}{3}\,\frac{2\Gnumb}{3} + \Gnumb(5\alpha-4) & x\le\frac{2\Gnumb}{3},\\[1ex]
			&= \Gnumb(1-3s_\agenti)+\frac{\Gnumb(13\alpha-10)}{9}\\[1ex]
			&\le \Gnumb(1-3s_\agenti) & \alpha \le \frac{10}{13},\\
			&=\Gnumb\mms^{\Gnumb}_{\Gvalu}\Bigl(\Ggoods\Bigr).
		\end{align*}
		\item \textbf{Case 4.} $\tfrac{2}{3} < \tfrac{x}{\Gnumb} \le 1$:
		Let \((P_1,\dots,P_{\Gnumb})\) be a partition of \(\Ggoods\) with 
		\[
		\Gvalu(P_k)=1-3s_i\quad\text{for all }k.
		\] As established in case 2, among  \(\good_{y+1}\) to \(\good_{\numb+x}\), no four of them can be placed together in any of the \(\Gnumb-x\) bundles. This implies that the total number of goods (among  \(\good_{y+1}\) to \(\good_{\numb+x}\)) in these bundles must satisfy
		$
		3(\Gnumb - x) \geq \Gnumb.
		$
		Rearranging gives
		$
		x \leq \tfrac{2\Gnumb}{3}.
		$
		However, this contradicts the assumption of Case 4 that \(x > 2\Gnumb/3\). Hence, this case is impossible.
	\end{itemize}
	‌Completing the proof. 
\end{proof}
Finally, in \cref{lem:N1-in-N2}, we show that every green agent is allocated a bundle during the Bag-Filling process.
\begin{lemma}\label{lem:N1-in-N2}
	Every green agent $\agent_i \in \agents$ receives a bundle in \cref{algo:N2}.
\end{lemma}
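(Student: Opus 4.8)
The plan is to argue by contradiction, following the template of \cref{lem:N2-in-N1} but with the roles of green and red agents exchanged. Suppose some green agent $\agent_i\in\agents$ receives no bundle in \cref{algo:N2}. By \cref{lem:N21J} and \cref{lem:N22J} every red agent receives a bundle, so every unsatisfied agent, $\agent_i$ among them, is green. Let $\bundle_\halt$ be the bag whose Bag-filling loop does not terminate: after $\bundle_\halt$ has absorbed every good not placed in another bag, still no remaining agent---in particular not $\agent_i$---values it at $\alpha$. Note $\halt\le\numb$, and the bags $\bundle_{\halt+1},\dots,\bundle_\numb$ are never processed, hence remain $\{\good_{\halt+1},\good_{\numb+\halt+1}\},\dots,\{\good_\numb,\good_{2\numb}\}$. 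The goal is to show $\valu_\agenti(\bundle_k)<\alpha$ for \emph{every} $1\le k\le\numb$, so that $\valu_\agenti(\goods)=\sum_{k=1}^{\numb}\valu_\agenti(\bundle_k)<\alpha\,\numb$, and then to contradict this with a lower bound on $\valu_\agenti(\goods)$ coming from $\agent_i$'s maximin share over $\goods$, using $\alpha<1$.

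The easy bags come first. By the choice of $\halt$, $\valu_\agenti(\bundle_\halt)<\alpha$; since $\{\good_\halt,\good_{\numb+\halt}\}\subseteq\bundle_\halt$ and the goods are ordered, every untouched bag $\bundle_k$ with $\halt<k\le\numb$ has $\valu_\agenti(\bundle_k)=\valu_\agenti(\good_k)+\valu_\agenti(\good_{\numb+k})\le\valu_\agenti(\good_\halt)+\valu_\agenti(\good_{\numb+\halt})\le\valu_\agenti(\bundle_\halt)<\alpha$. For $k<\halt$, the bag $\bundle_k$ was allocated during Bag-filling; if it went to a red agent then, since \cref{algo:N2} gives priority to $\Nyek\ni\agent_i$ and $\agent_i$ was still unsatisfied at that moment, $\agent_i$ cannot have valued $\bundle_k$ at $\alpha$, so $\valu_\agenti(\bundle_k)<\alpha$.

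The main obstacle is the last case: a bag $\bundle_k$ ($k<\halt$) allocated to a \emph{different} green agent. Because tie-breaking among green agents is arbitrary, it is not immediate that $\valu_\agenti(\bundle_k)<\alpha$---this is precisely the asymmetry with \cref{lem:N2-in-N1}, where an unsatisfied red agent automatically cannot value a bag handed to a green agent. The plan here is an exchange argument: regarding the whole allocation as a matching between (the final contents of) the bags and the agents, and using that Bag-filling prioritizes $\Nyek$ while the primary-reduction matching is a maximum matching favoring $\Nyek$ (\cref{thm:main}), one reroutes bags among green agents along alternating paths so that, without loss of generality, $\agent_i$ is given every bag it is eligible for; in particular $\valu_\agenti(\bundle_k)<\alpha$ for all such $k$ as well. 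Verifying that these swaps are legitimate---in particular that the goods the algorithm adds to later bags are unchanged, so that the same $\halt$ and the same final bag contents result---is the delicate technical point of this lemma.

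With $\valu_\agenti(\bundle_k)<\alpha$ for all $k$, I would finish by distinguishing two cases according to $\dotmms{\valu_\agenti}$ after the primary reductions. If $\dotmms{\valu_\agenti}\ge1$, then $\valu_\agenti(\goods)\ge\numb$, contradicting $\valu_\agenti(\goods)<\alpha\,\numb<\numb$. If $\dotmms{\valu_\agenti}<1$, then by \cref{lem:N-minus} there is a value $s_\agenti\in(0,\tfrac{4\alpha}{3}-1)$ with $\dotmms{\auxwal{\vF_{s_\agenti}}}\ge1-3s_\agenti$; since $\auxwal{\vF_{s_\agenti}}(\bundle_k)\le\valu_\agenti(\bundle_k)<\alpha$ and $\alpha<4(1-\alpha)<1-3s_\agenti$ (the first inequality because $\alpha=\tfrac{10}{13}<\tfrac45$, the second by \cref{conds1}), summing over $k$ gives $\auxwal{\vF_{s_\agenti}}(\goods)=\sum_{k=1}^{\numb}\auxwal{\vF_{s_\agenti}}(\bundle_k)<\alpha\,\numb<(1-3s_\agenti)\,\numb$, again a contradiction. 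Hence every green agent in $\agents$ receives a bundle.
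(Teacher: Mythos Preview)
The exchange argument is where your proof breaks down. You claim that by rerouting bags among green agents one can arrange that ``$\agent_i$ is given every bag it is eligible for''---but if $\agent_i$ is ever given a bag, $\agent_i$ is satisfied and there is nothing left to prove; the contradiction hypothesis is about a run in which $\agent_i$ is \emph{never} allocated. If instead you mean to rerun the algorithm with a tie-breaking that favors $\agent_i$, note that swapping who receives $\bundle_k$ changes the set of remaining agents at step $k+1$, hence when the while-loop for $\bundle_{k+1}$ terminates, hence the contents of $\bundle_{k+1}$ and the pool of goods for all later bags; you cannot assume ``the same $\halt$ and the same final bag contents result.'' Unlike the matching of \cref{thm:main}, where the reduction bundles are fixed by the perfect sequence and only the assignment varies, here contents and assignment are entangled. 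Bags handed to other green agents can genuinely satisfy $\valu_\agenti(\bundle_k)\ge\alpha$, and no relabeling removes this. A structural red flag is that your argument never invokes the case hypothesis $|\Nyek|<\Pnumb/\sqrt2$.

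The paper takes a different route that uses this hypothesis essentially. Rather than forcing $\valu_\agenti(\bundle_k)<\alpha$ for green-allocated bags, it proves only the weaker bound $\valu_\agenti(A_j)<4\alpha-2$ for every bundle $A_j$ given to a green agent---both primary-reduction bundles (by a case analysis on $\reductiontype^0,\reductiontype^1,\reductiontype^2,\rstar^1$, exploiting that $\agent_i$ is green so $\valu_\agenti(\{\good_{2\numb+1}\})\ge 1-\alpha$ to sharpen the usual bounds) and Bag-filling bags (bounding the initial pair $\{\good_k,\good_{\numb+k}\}$ and the last added good). Bundles given to red agents are bounded by $\alpha$ via the green-priority in \cref{thm:main} and \cref{algo:N2}. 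The contradiction is then against $\allmms{\valu_\agenti}=1$ on the \emph{full} good set $\Pgoods$: summing over all satisfied agents' bundles together with the unallocated bags yields
\[
\valu_\agenti(\Pgoods)\;<\;\alpha\,|\Ndo|\;+\;(4\alpha-2)\,|\Nyek|\;\le\;\Pnumb,
\]
the last step using $|\Nyek|<\Pnumb/\sqrt2$ and $\alpha\le(4+\sqrt2)/7$. So the paper accepts a looser per-bag bound on green-allocated bundles, compensated by their scarcity; your route tries for the uniform bound $\alpha$ and therefore never needs the scarcity, but that uniform bound is not available.
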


\begin{proof}
We argue by contradiction. Suppose there exists a green agent \(\agent_i \in \agents\) who does not receive any bundle in \cref{algo:N2}. Let the algorithm terminate while filling bag \(\finib_\halt\). This means $\Pnumb - (\numb-\halt+1)$ agents have received a bundle of value at least $\alpha$, and bags \(\finib_\halt, \finib_{\halt+1}, \ldots, \finib_{\numb}\) are unallocated. Note that by \Cref{lem:N21J} and \cref{lem:N22J}, all red agents have received a bundle, and only green agents remain. 
Our goal is to show $v_\agenti(\Pgoods) < \Pnumb$, 
contradicting \(\allmms{\valu_\agenti}=1\).
For every agent $\agent_j$ such that $\agent_j$ received a bundle during the primary reductions or Bag-filling, we denote $A_j$ by the bundles allocated to her. By the prioritization of green agents in \cref{algo:pr,algo:N2}, every bundle allocated to a red agent satisfies
$v_\agenti(A_j) < \alpha$. 
We now show that every bundle allocated to a green agent satisfies
$v_\agenti(A_j) \le 4\alpha-2$. 
	
	\paragraph{Primary Reductions.} Consider a green agent \(\agent_j \in \Nyek\), and let \(A_j\) be the bundle allocated to \(\agent_j\) through a reduction on the instance \(\prinstance = (\pragents, \prgoods)\).  
	Let \(y\) be the index such that \(\prgoodrat{y} = \good_{2\numb+1}\) (i.e., \(\good_{2\numb+1}\) is the \(y\)-th good in $\prgoods$).
	By the definition of green agents, the value of $\prgoodrat{y}$ for \(\agent_j\) satisfies
	\[
	1 - \alpha \le v_\agenti(\{\prgoodrat{y}\}) \le \frac{\alpha}{3}.
	\]
We now consider several cases based on the pattern of reduction applied in this step.

If the reduction is $\rho = \riki{\prinstance} {\reductiontype^0}{x}{\agent_j}{\seinstance}$, then the allocated bundle satisfies \(v_\agenti(A_j) \le 1 < 4\alpha - 2\), since \(\alpha > \tfrac{3}{4}\).
Now, suppose the reduction is $\rho = \riki{\prinstance} {\reductiontype^1}{x}{\agent_j}{\seinstance}$, so the allocated bundle is \(A_j = \{\prgoodrat{\prnumb}, \prgoodrat{x}\}\). If the second good in the bundle has value at most \(\tfrac{\alpha}{3}\), then the total value is at most \(\alpha + \tfrac{\alpha}{3} \le 4\alpha - 2\).

Otherwise, \(v_\agenti(\{\prgoodrat{x}\}) > \tfrac{\alpha}{3}\), which implies \(x < y\). By definition of $\reductiontype^1$, we have \(v_\agenti(\{\prgoodrat{\prnumb}, \prgoodrat{y}\}) < \alpha\), and since \(\prgoodrat{y}\) has value at least \(1 - \alpha\), we get \(v_\agenti(\{\prgoodrat{\prnumb}\}) < 2\alpha - 1\). Therefore, the value of the bundle is at most
\[
v_\agenti(A_j) \le 2\,v_\agenti(\{\prgoodrat{\prnumb}\}) < 4\alpha - 2.
\]

Now consider the case where the reduction is $\rho = \riki{\prinstance} {\reductiontype^2}{x}{\agent_j}{\seinstance}$, which assigns the bundle \(A_j = \{\prgoodrat{2\prnumb-1}, \prgoodrat{2\prnumb}, \prgoodrat{x}\}\). Since $\reductiontype^1$ is not applicable, we have 
\[
v_\agenti(\{\prgoodrat{2\prnumb-1}\}) + v_\agenti(\{\prgoodrat{2\prnumb}\}) < \alpha.
\]
If the value of the third good is small, say \(v_\agenti(\{\prgoodrat{x}\}) \le \tfrac{\alpha}{3}\), then the total value is bounded by \(\alpha + \tfrac{\alpha}{3} \le 4\alpha - 2\).
Otherwise,  \(v_\agenti(\{\prgoodrat{x}\}) > \tfrac{\alpha}{3}\), and must have \(x < y\), by definition of $\reductiontype^2$ we can conclude
\[
v_\agenti(\{\prgoodrat{2\prnumb-1}\}) + v_\agenti(\{\prgoodrat{2\prnumb}\}) + v_\agenti(\{\prgoodrat{y}\}) < \alpha,
\quad\text{and}\quad
v_\agenti(\{\prgoodrat{y}\}) \ge 1 - \alpha.
\]
This implies
\[
v_\agenti(\{\prgoodrat{2\prnumb-1}\}) + v_\agenti(\{\prgoodrat{2\prnumb}\}) < 2\alpha - 1.
\]
Since \(x > 2\prnumb\), we get
\[
v_\agenti(A_j) \le \tfrac{3}{2}\bigl(v_\agenti(\{\prgoodrat{2\prnumb-1}\}) + v_\agenti(\{\prgoodrat{2\prnumb}\})\bigr) < 3\alpha - \tfrac{3}{2} < 4\alpha - 2.
\]

Finally, if the reduction is $\rho = \riki{\prinstance} {\rstar^1}{x}{\agent_j}{\seinstance}$, then \(\reductiontype^0\) and \(\reductiontype^2\) are not applicable, which means
\[
v_\agenti(\{\prgoodrat{1}\}) < \alpha \quad \text{and} \quad v_\agenti(\{\prgoodrat{2\prnumb+1}\}) < \tfrac{\alpha}{3}.
\]
Hence, the total value of the bundle is at most
\[
v_\agenti(A_j) \le \alpha + \tfrac{\alpha}{3} \le 4\alpha - 2, \quad\text{since }\alpha > \tfrac{3}{4}.
\]
	
	\paragraph{Bag-filling}
	For each \(1\le k\le \numb\), $\finib_k$ is either $\{\good_k,\,\good_{\numb+k}\}$ or at least one good is added to it. In the first case, by definition of green agents, $v_i(\{\good_{2\numb+1}\}) \ge 1-\alpha$, and since $\rstar^1$ is not applicable, we have $v_i(\{\good_1\}) < 2\alpha - 1$. Therefore 
	\begin{align*}
		v_i(\{\good_k,\,\good_{\numb+k}\}) &< (2\alpha - 1) + \frac{\alpha}{2} < 4\alpha-2 & \alpha \ge \tfrac{2}{3}. 
	\end{align*}
	For the second case, we have 
	\begin{align*}
		v_i(\finib_k) &= v_i(\finib_k \setminus \{\good_x\}) + v_i(\{\good_x\}) &\\
		&< \alpha + \tfrac{\alpha}{3} & \mbox{$\reductiontype^2$ is not applicable},\\
		&< 4\alpha - 2 & \alpha > \tfrac{3}{4}. 
	\end{align*}
	
	In both cases $v_i(B_k) < 4\alpha-2$ holds. 
	
Now we calculate sum of all bundles $\finib_{\halt}, \ldots,\finib_{\numb}$ 
and $A_j$ for all satisfied agents $\agent_j$. Since all red agents are satisfied, we have:
	$
	\Pnumb < \alpha\,|\Ndo| + (4\alpha-2)\,(|\Nyek|-(\numb-\halt+1))+(4\alpha-2)\,(\numb-\halt+1).
	$
	Using $|\Nyek| < \tfrac{n}{\sqrt{2}}$
	we obtain
	\begin{align*}
		\Pnumb &< \alpha\,|\Ndo| + (4\alpha-2)\,|\Nyek| &\\[1mm]
		&\le (4\alpha-2)\,\frac{1}{\sqrt2}\,\Pnumb + \alpha\left(1-\frac{1}{\sqrt2}\right)\Pnumb &\\[1mm]
		&\le \Pnumb & \alpha\le\frac{4+\sqrt{2}}{7}\approx 0.7735.
	\end{align*}
	a contradiction. Therefore, every green agent in \(\agents\) receives a bundle.
\end{proof}

\begin{table}[H]
	\centering 
	\begin{tabular}{|l|l|l|l|}
		\hline
		\textbf{Case} & \textbf{Agent} & \textbf{Condition} & \textbf{Lemma} \\ \hline
		\multirow{5}{*}{Case 1: $|\Nyek| \ge \lbnone$} 
		& \multirow{4}{*}{green agents}
		& $\dotmms{v_i} \ge 1, \ddotmms{v_i} \ge 1$ & \cref{plus-one} \\ \cline{3-4}
		& & $\ddotmms{v_i} < 1, \ddotmms{\auxfun{\FJ}{\agenti}} \ge 4(1-\alpha)$ & \cref{minus-one} \\ \cline{3-4}
		& & $\dotmms{v_i} \ge 1, \ddotmms{v_i} < 1$ & \cref{plus-two} \\ \cline{3-4}
		& & $\ddotmms{v_i} < 1, \ddotmms{\auxfun{\FJ}{\agenti}} < 4(1-\alpha)$ & \cref{minus-two} \\ \cline{2-4}
		& red agents & — & \cref{lem:N2-in-N1} \\  \hline
		\multirow{3}{*}{Case 2: $|\Nyek| < \lbnone$} 
		& \multirow{2}{*}{red agents}
		& $\dotmms{v_i} \ge 1$ & \cref{lem:N21J} \\ \cline{3-4}
		& & $\dotmms{v_i} < 1$ & \cref{lem:N22J} \\ \cline{2-4}
		& green agents & — & \cref{lem:N1-in-N2} \\ \hline
	\end{tabular}
	\caption{Categorization for \cref{thm:mms}.}
	\label{table:main}
\end{table}

\newpage
\section{Putting the Pieces Together}
Finally, in this section, we bring all the components together to show that our algorithm guarantees a \((\mmsfrac{10}{13})\)-\(\MMS\) allocation.

\maintheorem*
\begin{proof} 
	
	First, by \cref{order_reduction}, without loss of generality, we can transform any instance into an ordered and normalized instance. After running the primary reductions, the algorithm branches into two cases. We analyze each case separately.
	
\textbf{Case 1.} When \(|N^g| \ge n/\sqrt{2}\), we first apply the secondary reductions from \Cref{algo:N11}, and then run the Bag-filling process from \Cref{algo:1}. By \Cref{plus-one}, \Cref{minus-one}, \Cref{plus-two},and \Cref{minus-two}, every green agent receives a bundle worth at least \(\alpha\). In addition, \Cref{lem:N2-in-N1} ensures that the red agents also recieve a bundle in this case. Since these groups together include all agents, it follows that every agent receives a bundle of value at least \(\alpha\).
	
	\textbf{Case 2.} When \(|N^g| < n/\sqrt{2}\), we perform the Bag-filling procedure in \cref{algo:N2}. By \cref{lem:N21J} and \cref{lem:N22J}, we have that all red receive a bundles of value at least \(\alpha\). Moreover, in \cref{lem:N1-in-N2}, we show that the green agents also recieve a bundle of value at least \(\alpha\) in this case.
	
	In both cases, the lemmas collectively ensure that every agent is allocated a bundle of value at least \(\alpha\). Moreover, all the constraints and assumptions imposed on \(\alpha\) throughout the analysis are indeed satisfied when \(\alpha = \mmsfrac{10}{13}\), and no larger value of \(\alpha\) satisfies all these conditions simultaneously. Table~\ref{table:main} summarizes the lemmas that cover all agent categories and conditions. Therefore, the algorithm guarantees a \((\mmsfrac{10}{13})\)-\(\MMS\) allocation.
\end{proof}
Finally, as a consequence of \Cref{thm:mms}, we show that for some constant \(\varepsilon > 0\), a \((\mmsfrac{10}{13} - \varepsilon)\)-\(\MMS\) allocation can be computed in polynomial time.

\begin{theorem}
	For every constant \(\varepsilon > 0\), we can find a \(\bigl(\mmsfrac{10}{13} - \varepsilon \bigr)\)-\(\MMS\) allocation in polynomial time.
\end{theorem}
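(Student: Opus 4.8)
The plan is to derive the polynomial-time result from \Cref{thm:mms} by the standard technique of combining a constant additive loss with a rescaling argument. First I would recall that \Cref{thm:mms} guarantees the \emph{existence} of a $(\tfrac{10}{13})$-$\MMS$ allocation, but the algorithm as stated relies on two operations that are not obviously polynomial: (i) computing the exact maximin share values $\allmms{\valu_i}$ (needed for the scaling assumption $\allmms{\valu_i}=1$), which is NP-hard in general, and (ii) finding perfect sequences of reductions, Bag-filling, and the auxiliary maximin partitions $(P_1,\dots,P_d)$ invoked in the calibration lemmas. The key observation is that all of these can be replaced by polynomial-time approximations at the cost of an arbitrarily small additive slack.

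The main steps would be as follows. Step 1: replace the true $\MMS$ value of each agent by a $(1-\delta)$-approximation $\widetilde{\mms}_i \le \allmms{\valu_i}$ computable in polynomial time via the PTAS for the max-min partition problem (this is the classical result of Woeginger-style / Bezáková--Dani dynamic-programming schemes, used already in \cite{procaccia2014fair,ghodsi2018fair}). Rescale each agent's valuation so that $\widetilde{\mms}_i = 1$; then the true (rescaled) maximin share is at least $1$ and at most $\tfrac{1}{1-\delta}$. Step 2: rerun the algorithm of \Cref{alg:main} with this rescaling. Every place the analysis used "$\allmms{\valu_i}=1$" now uses "$\dotmms{\valu_i}\ge 1-O(\delta)$'' after accounting for the scaling discrepancy, and every reduction/Bag-filling threshold is still $\alpha$. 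Since all the inequalities proved in the lemmas (e.g.\ \Cref{lem:way}, \Cref{lem:way2}, \Cref{lem:N2bag}, and the calibration bounds of \Cref{apx:adjusting}) are strict with a constant margin in $\alpha$, they remain valid when the right-hand sides are perturbed by $O(\delta)$, provided $\delta$ is chosen small enough as a function of the (absolute constant) slack; this yields an allocation in which each agent $\agent_i$ receives value at least $\alpha(1-O(\delta))$ times $\widetilde{\mms}_i$, hence at least $(\tfrac{10}{13}-\eps)$ times $\allmms{\valu_i}$ for $\delta = \delta(\eps)$ small. Step 3: argue that the combinatorial subroutines are polynomial: perfect sequences of reductions are computed greedily (pick the highest-priority applicable reduction with the largest dynamic index, at most $n$ of them); the reallocation step of \Cref{thm:main} is a maximum-weight bipartite matching; Bag-filling is a linear scan; and wherever a maximin partition of an auxiliary valuation is needed (in \Cref{def:norm} and the calibration lemmas), a $(1-\delta)$-approximate partition suffices and is computed by the same PTAS — note these partitions are used only in the \emph{analysis}, not in the algorithm's actual allocation decisions, except for the normalized valuations in \Cref{sec:alg-N2}, which can equally be built from an approximate partition.

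I expect the main obstacle to be bookkeeping the propagation of the $\delta$-slack through the chain of calibrations. The algorithm composes several calibration functions ($\FJ$, $\vG$, $\vHA$, $\vHB$, normalized valuations) and the auxiliary quantities $s_i, t_i$ are defined via exact equalities tied to the reduction thresholds; one has to check that replacing exact $\MMS$ values by $(1-\delta)$-approximations does not break the case distinctions (for instance the dichotomies "$\dotmms{\valu_i}\ge 1$ vs.\ $<1$'' become "$\ge 1-\delta$ vs.\ $<1$,'' and an agent could fall in a genuinely new regime). The clean way around this is to observe that the target $\alpha=\tfrac{10}{13}$ is \emph{strictly} below the maximum $\alpha$ satisfying all the constraints in \Cref{thm:mms}, so there is a fixed gap $\alpha^\star - \tfrac{10}{13} > 0$; running the whole machinery with target $\alpha' \in (\tfrac{10}{13}, \alpha^\star)$ and then absorbing a $(1-O(\delta))$ factor keeps us above $\tfrac{10}{13}-\eps$ while giving every inequality a uniform constant buffer to absorb the perturbations. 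This reduces the argument to: choose $\delta$ small enough relative to that buffer, and everything goes through verbatim. I would present this as a short proof that cites the PTAS for $\MMS$-partition and \Cref{thm:mms}, and states that each subroutine is polynomial, rather than re-deriving the inequalities.
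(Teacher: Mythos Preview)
Your overall approach matches the paper's: invoke a PTAS for the maximin-partition value (the paper cites \cite{woeginger1997polynomial}) to replace the exact normalization, and observe that every other subroutine (reductions, matching, Bag-filling) is polynomial. However, the ``main obstacle'' you identify and the fix you propose are both off. The bound $\alpha=\tfrac{10}{13}$ is \emph{tight} in this analysis---several inequalities (e.g.\ in the proofs of \Cref{plus-two} and \Cref{lem:N21J}) require $\alpha\le\tfrac{10}{13}$, and the paper states explicitly that no larger $\alpha$ satisfies all constraints---so there is no gap $\alpha^\star-\tfrac{10}{13}>0$ to exploit. Fortunately, no slack-propagation is needed: the PTAS returns $\widetilde{\mms}_i$ with $(1-\delta)\allmms{\valu_i}\le\widetilde{\mms}_i\le\allmms{\valu_i}$, so after scaling by $\widetilde{\mms}_i$ the true (scaled) $\MMS$ is $\ge 1$, and every lemma in the analysis uses only $\allmms{\valu_i}\ge 1$ (the equality is never needed---the contradictions of the form ``$v_i(\Pgoods)<\Pnumb$'' and the calibration bounds all go through with $\ge$). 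Hence the algorithm, run verbatim with target $\alpha=\tfrac{10}{13}$, gives each agent value $\ge\alpha$ in scaled units, i.e.\ $\ge\alpha\,\widetilde{\mms}_i\ge(\tfrac{10}{13}-\varepsilon)\allmms{\valu_i}$ for $\delta$ small. No perturbation of the internal inequalities is required, and the auxiliary partitions in \Cref{def:norm} and the calibration lemmas live purely in the analysis and need not be computed at all.

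One smaller point: your description ``pick the highest-priority applicable reduction with the largest dynamic index'' is not quite enough to produce a \emph{perfect} sequence, because removing the wrong agent at step $k$ can make a higher-priority reduction inapplicable at step $k{+}1$. The paper handles this by, at each step, choosing the highest-priority bundle whose addition still admits a bundle-saturating bipartite matching between the bundles selected so far and the agents; this check is polynomial. The priority-respecting reallocation of \Cref{thm:main} is then a single maximum-weight matching, as you say.
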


\begin{proof}
All steps of our algorithm run in polynomial time, except for the normalization step, which requires computing the exact \(\MMS\) of each agent which is NP-hard. However, a PTAS due to \cite{woeginger1997polynomial} provides a \((1 - \varepsilon)\)-approximation for \(\MMS\) for a constant $\varepsilon$ in polynomial time. Using this approximation, we can estimate each agent’s \(\MMS\) value closely enough to ensure an overall \((\mmsfrac{10}{13} - \varepsilon)\)-approximation guarantee in polynomial time for constant $\varepsilon$.

The polynomial time implementation of the rest of the algorithm is mostly straightforward, except for identifying a perfect sequence of reductions. In this step, we iteratively select reductions one by one, always choosing the highest-priority reduction that still allows for a perfect matching between the resulting bundles and agents. The existence of such a matching can be verified in polynomial time using standard bipartite matching algorithms. Moreover, we can enforce agent priorities by treating the problem as a weighted matching: assign weight \(n\) to ordinary edges and \(n+1\) to prioritized ones. A maximum-weight matching under this scheme maximizes the number of matched prioritized agents and is computable in polynomial time.

Therefore, for any constant \(\varepsilon > 0\), we can find a \((\mmsfrac{10}{13} - \varepsilon)\)-\(\MMS\) allocation in polynomial time.
\end{proof}

	\newpage
	\bibliographystyle{alpha}
	\bibliography{ref}

	\newpage
	\appendix
	
\end{document}